\def\softO{\ensuremath{{O}{\,\tilde{ }\,}}}
\def\P {\ensuremath{\mathbb{P}}}
\def\Q {\ensuremath{\mathbb{Q}}}
\def\KKbar {\ensuremath{\overline{\mathbf{K}}}}
\def\KKCbar {\ensuremath{\overline{\mathbf{K}(\mathfrak{G})}}}
\def\KKGbar {\ensuremath{\overline{\mathbf{K}(\mathfrak{L})}}}
\def\KKGpbar {\ensuremath{\overline{\mathbf{K}(\mathfrak{L}')}}}
\def\LL {\ensuremath{\mathbf{L}}}
\def\LLbar {\ensuremath{\overline{\mathbf{L}}}}
\def\KK {\ensuremath{\mathbf{K}}}
\def\SS {\ensuremath{\mathbf{S}}}
\def\d {\ensuremath{\mathbf{d}}}
\def\mA{\ensuremath{{A}}}
\def\mU{\ensuremath{{U}}}
\def\mL{\ensuremath{{L}}}
\def\mF{\ensuremath{{F}}}
\def\mG{\ensuremath{{G}}}
\def\mH{\ensuremath{{H}}}
\def\mM{\ensuremath{{M}}}
\def\mN{\ensuremath{{N}}}
\def\jac{\ensuremath{{\rm jac}}}
\def\scrS{\ensuremath{\mathscr{S}}}
\def\scrR{\ensuremath{\mathscr{R}}}
\def\f {\ensuremath{\mathbf{f}}}
\def\X {\ensuremath{\mathbf{X}}}
\def\m {\ensuremath{\mathfrak{m}}}
\def\bh{\mbox{\boldmath$h$}}
\def\bell{\mbox{\boldmath$\ell$}}
\def\KKbar {\ensuremath{\overline{\mathbf{K}}}}
\DeclareBoldMathCommand{\bM}{M}
\DeclareBoldMathCommand{\bD}{D}
\DeclareBoldMathCommand{\bB}{B}
\DeclareBoldMathCommand{\bA}{A}
\DeclareBoldMathCommand{\bC}{C}
\DeclareBoldMathCommand{\bX}{X}
\DeclareBoldMathCommand{\bi}{i}
\DeclareBoldMathCommand{\bk}{k}
\DeclareBoldMathCommand{\bY}{Y}
\DeclareBoldMathCommand{\bn}{n}
\DeclareBoldMathCommand{\bbb}{b}
\DeclareBoldMathCommand{\d}{d}
\DeclareBoldMathCommand{\bc}{c}
\DeclareBoldMathCommand{\bj}{j}
\DeclareBoldMathCommand{\be}{e}
\DeclareBoldMathCommand{\bh}{h}
\DeclareBoldMathCommand{\br}{r}
\DeclareBoldMathCommand{\bu}{u}
\DeclareBoldMathCommand{\bv}{v}
\DeclareBoldMathCommand{\bx}{x}
\DeclareBoldMathCommand{\by}{y}
\DeclareBoldMathCommand{\btheta}{\vartheta}
\DeclareBoldMathCommand{\c}{c}
\DeclareBoldMathCommand{\f}{f}
\DeclareBoldMathCommand{\g}{g}
\DeclareBoldMathCommand{\h}{h}
\DeclareBoldMathCommand{\x}{x}
\DeclareBoldMathCommand{\bell}{\ell}
\DeclareBoldMathCommand{\bbeta}{\beta}
\DeclareBoldMathCommand{\balpha}{\alpha}
\DeclareBoldMathCommand{\bgamma}{\gamma}
\def\rdeg{\ensuremath{\mathrm{rdeg}}}
\def\cdeg{\ensuremath{\mathrm{cdeg}}}
\def\assH{\ensuremath{\mathsf{A}}}
\def\assA{\ensuremath{\mathsf{B}}}
\def\assG{\ensuremath{\mathsf{C}}}
\def\assD{\ensuremath{\mathsf{D}}}
\def\assI{\ensuremath{\mathsf{J}}}
\def\assJ{\ensuremath{\mathsf{K}}}
\newcommand{\VpF}[2]{V_{#1}(#2)}
\newcommand{\VpFG}[3]{V_{#1}(#2,#3)}
\def\rc{\ensuremath{{c'}{}}}
\def\re{\ensuremath{{e'}{}}}
\title{Solving determinantal systems using \\
homotopy techniques}
\author{J.D. Hauenstein$^{1}$, M. {Safey El Din}$^{2}$, \'E. Schost$^{3}$, T. X. Vu$^{2,3}$}
\newtheorem{pbm}{Problem}
\newtheorem{definition}{Definition}
\newtheorem{theorem}[definition]{Theorem}
\newtheorem{corollary}[definition]{Corollary}
\newtheorem{proposition}[definition]{Proposition}
\newtheorem{lemma}[definition]{Lemma}
\newtheorem{remark}[definition]{Remark}
\begin{document}

\maketitle
\footnotetext[1]{Department of Applied and Computational Mathematics and Statistics,
University of Notre Dame, USA}
\footnotetext[2]{Sorbonne Universit\'e, CNRS, INRIA, Laboratoire d'Informatique de Paris 6, PolSys, Paris, France}
\footnotetext[3]{David Cheriton School of Computer Science, University of Waterloo, ON, Canada}

\begin{abstract}
  Let $\KK$ be a field of characteristic zero and $\KKbar$ be an
  algebraic closure of $\KK$. Consider a sequence of polynomials
  $G=(g_1,\dots,g_s)$ in $\KK[X_1,\dots,X_n]$, a polynomial matrix
  $\mF=[f_{i,j}] \in \KK[X_1,\dots,X_n]^{p \times q}$, with $p \leq q$,
  and the algebraic set $\VpFG{p}{\mF}{G}$ of points in $\KKbar$ at
  which all polynomials in $\mG$ and all $p$-minors of $\mF$
  vanish. Such polynomial systems appear naturally in e.g. polynomial
  optimization, computational geometry.

  We provide bounds on the number of isolated points in
  $\VpFG{p}{\mF}{G}$ depending on the maxima of the degrees in rows
  (resp. columns) of $\mF$. Next, we design homotopy algorithms for
  computing those points. These algorithms take advantage of the
  determinantal structure of the system defining
  $\VpFG{p}{\mF}{G}$. In particular, the algorithms 
  run in time that is 
  polynomial in the bound on the number of isolated points.
\end{abstract}

\section{Introduction}\label{sec:intro}

Throughout, $\KK$ is a field of characteristic zero with algebraic
closure $\KKbar$, $(X_1, \ldots, X_n)$ is a set of $n$ variables, and
$\KK[X_1,\dots,X_n]$ is the multivariate polynomial ring in $n$
variables with coefficients in $\KK$.  With this setup, let
$\mF=[f_{i,j}] \in \KK[X_1,\dots,X_n]^{p \times q}$ be a polynomial
matrix, with $p \leq q$. The first question which will interest us in
this paper is to describe the set of points $\x \in \KKbar{}^n$ at
which the evaluation of the matrix $\mF$ has rank less than $p$.  In
the particular case $p=1$, this simply means finding all common
solutions of $f_{1,1},\dots,f_{1,q}$.

For any matrix $\mF$ over a ring $R$, and for any integer $r$,
$M_r(\mF)$ will denote the set of $r$-minors of $\mF$, and $I_r(\mF)$
will denote the ideal they generate in $R$. For any subset $I$ in
$\KK[X_1,\dots,X_n]$, $V(I)$ will denote the zero-set of $I$ in
$\KKbar{}^n$, and for a matrix $\mF$ with entries in
$\KK[X_1,\dots,X_n]$, we will write $V_r(\mF)=V(I_r(\mF))$. In
particular, for $\mF$ of size $p \times q$, with $p \le q$, the set of
points introduced in the previous paragraph is
$$\VpF{p}{\mF}=\{\bx \in \KKbar{}^n \mid \mathrm{rank}(\mF({\bx})) < p\}.$$
This is an algebraic set, since it is defined by the vanishing of
all maximal minors of $\mF$. 

We will discuss below dimension properties of $\VpF{p}{\mF}$.  Recall
that any algebraic set $V$ is the finite union of its
\emph{irreducible components}: these are the maximal irreducible
algebraic sets contained in it (an algebraic set is irreducible if it
is not the union of two proper algebraic sets). The {\em dimension} of
an algebraic set $V$ is the largest integer $d$ such that intersecting
$V$ with $d$ generic hyperplanes yields finitely many points; those
algebraic sets with all irreducible components of the same dimension
are called {\em equidimensional}. We refer to
e.g.~\cite[Chap.\ I and II]{Shafarevich77} for these notions.

For the problem above, it is natural to consider the case where $n =
q-p+1$.  Indeed, results due to Macaulay~\cite{Macaulay16} and Eagon
and Northcott~\cite{EN62} imply that all irreducible components of
$\VpF{p}{\mF}$ have dimension at least $n-(q-p+1)$; furthermore, in
the case $n = q-p+1$, $\VpF{p}{\mF}$ has dimension zero for a generic
choice of the entries of $\mF$ (this is proved for instance
in~\cite{Spa14}). Of course, even if we assume $n = q-p+1$,
$\VpF{p}{\mF}$ may have components of positive
dimension; in this case, we will be interested in describing only its
{\em isolated points}, that is, the points in the irreducible
components of $\VpF{p}{\mF}$ of dimension zero (this notion makes
sense for any field $\KK$; when $\KK=\mathbb{R}$, these points are
indeed isolated for the metric topology).


Studying the set $\VpF{p}{\mF}$ is a particular case of a slightly
more general question. In addition to matrix $\mF$, we may indeed take
into account further equations of the form $g_1 =\cdots=g_s=0$, for
some $G=(g_1,\dots,g_s)$ in $\KK[X_1,\dots,X_n]$. In this setting, the
natural relation between the number $n$ of variables, the size of
$\mF$ and the number $s$ of polynomials in $G$ is now
$n=q-p+s+1$. Then, we define the algebraic set
$$\VpFG{p}{\mF}{G} = \{\bx \in \KKbar{}^n \mid
\mathrm{rank}(\mF({\bx})) < p \text{~and~} g_1(\bx)=\cdots=g_s(\bx)=0
\};$$ this is thus the zero-set of the ideal
$I_p(\mF) + \langle g_1,\dots,g_s\rangle$ (here
$\langle g_1,\dots,g_s\rangle$ denotes the ideal generated by
$g_1, \ldots, g_s$).  Our main problem is the following.
\begin{pbm} \label{problem2} 
  For a field $\KK$, a matrix $\mF \in \KK[X_1,\dots,X_n]^{p \times q}$ and
  polynomials $G=(g_1,\dots,g_s)$ in $\KK[X_1,\dots,X_n]$ such that 
  $p \leq q$ and   $n = q-p+s+1$, compute the isolated points of $\VpFG{p}{\mF}{G}$.
\end{pbm}
This problem appears in a variety of context; prominent examples are
optimization problems~\cite{GSZ10,JP14,BGHS14,GS14,NDS06}, and related
questions in real algebraic
geometry~\cite{ARS,BaGiHeMb01,BaGiHePa05,BGHSS,BRSS,RealDecompICMS,
  CellDecompSurface,BertiniReal,RealNumerical, SaSc03,SaSc11,SaSc17},
where $\mF$ consists of the Jacobian matrix of $G$, together with one
extra row, corresponding to the gradient of a function that we want to
optimize on $V(G)$. Because they show up several times in this introduction, we will refer to this particular class of inputs 
as systems {\em coming from optimization}.

In several of these situations, we are only interested in the
solutions of the system made of minors $M_p(\mF)$ and
$G=(g_1,\dots,g_s)$ at which the associated Jacobian matrix has full
rank. This subset of solutions is finite and is always a subset of the
set of isolated points of $\VpFG{p}{\mF}{G}$ \cite[Theorem
  16.19]{Eisenbud95}; we call these points {\em simple 
  points}. The set of simple  points coincides with
$\VpFG{p}{\mF}{G}$ when the system $M_p(\mF),G$ generates a radical
ideal of dimension zero; this case appears frequently in the context of
algorithms in real algebraic geometry~\cite{BGHSS}.  

Hence, it also makes sense to look at the following slight variant of
Problem~\eqref{problem2}.

\begin{pbm} \label{problem3} For a field $\KK$, a matrix
  $\mF \in \KK[X_1,\dots,X_n]^{p \times q}$ and polynomials
  $G=(g_1,\dots,g_s)$ in $\KK[X_1,\dots,X_n]$ with $p \leq q$ and
  $n = q-p+s+1$, compute the simple  points of~$\VpFG{p}{\mF}{G}$.
\end{pbm}

We will represent the output of our algorithm using univariate
polynomials. Let $V \subset \KKbar{}^n$ be a zero-dimensional
algebraic set defined over $\KK$. A \emph{zero-dimensional
  parametrization} $\scrR = ((w,v_1, \ldots, v_n), \lambda)$ of $V$
consists of polynomials $(w,v_1, \ldots, v_n)$ such that $w \in
\KK[Y]$ is monic and squarefree, all $v_i$'s are in $\KK[Y]$ and satisfy
$\deg(v_i) < \deg(w)$, and $\lambda$ is a $\KK$-linear form in $n$
variables, such that
\begin{itemize}
\item $\lambda(v_1, \ldots, v_n) = Yw'$ mod $w$ with $w'=\frac{\partial w}{\partial Y}$;
\item we have $V = Z(\scrR)$, with $$Z(\scrR)= \left\{\left(\frac{v_1(\tau)}{w'(\tau)}, \ldots, \frac{v_n(\tau)}{w'(\tau)}\right) \ | \ w(\tau) = 0\right\}.$$
\end{itemize}
The constraint on $\lambda$ then says that the root of $w$ are the
values taken by $\lambda$ on $V$. This representation was introduced
in~\cite{Kronecker82,Macaulay16}, and has been used in a variety of
algorithms, such as those
in~\cite{GiMo89,GiHeMoPa95,ABRW,GiHeMoMoPa98,Rouillier99,GiLeSa01}.
The reason why we use a rational parametrization, with $w'$ as a
denominator, goes back to~\cite{ABRW, Rouillier99, GiLeSa01}: when
$\KK=\Q$, this allows us to control precisely the bit-size of the
coefficients, using bounds such as those
in~\cite{Schost03,DaSc04}. The same phenomenon holds with $\KK=k(T)$,
for a field $k$, in which case we want to control degrees in $T$ of
the numerators and denominators of the coefficients of $\scrR$.

Our first result gives a bound on the multiplicities of the solutions
of $\VpFG{p}{\mF}{G}$. To state it, we need the following notation.
Take $\mF=[f_{i,j}]_{1 \le i \le p, 1 \le j \le q}$ in
$\KK[X_1,\dots,X_n]^{p \times q}$.  We will consider two degree
measures for matrix $\mF$; these have been used before for
determinantal ideals, see for instance~\cite{NieRan09,MiSt04}. For
$i=1,\dots,p$, we will write $\rdeg(\mF,i)$ for the degree of the
$i$th row of $\mF$, that is,
$\rdeg(\mF,i)=\max(\deg(f_{i,j}))_{1 \le j \le q}$; similarly, for
$j=1,\dots,q$, we write $\cdeg(\mF,j)$ for the degree of the $j$th
column of $\mF$, that is,
$\cdeg(\mF,j)=\max(\deg(f_{i,j}))_{1 \le i \le p}$. For $k \ge 0$,
$$E_k(\delta_1,\dots,\delta_q)=\sum_{1\leq i_1 < \cdots < i_k \leq
  n}\delta_{i_1} \cdots \delta_{i_k}$$ is the elementary symmetric
polynomial of degree $k$ in $(\delta_1, \ldots, \delta_q)$ and
$$S_k(\alpha_1,\dots,\alpha_p) = \sum_{i_1+\cdots+i_p=k, i_j \geq
  0}\alpha_1^{i_1}\cdots\alpha_p^{i_p}$$ is the $k$th complete
symmetric polynomial in $(\alpha_1,\dots,\alpha_p)$.

Finally, we recall the notion of multiplicity of a point $\bx$ with
respect to an ideal $I$ in $\KKbar[X_1,\dots,X_n]$; this notion
extends to ideals in $\KK[X_1, \ldots, X_n]$ by considering their
extension in $\KKbar[X_1, \ldots, X_n]$. We refer
to~\cite[Chap.\ 3]{Eisenbud95} and~\cite[Chap.\ 4]{CLO_UAG} for more
details on the following notions.

The ideal $I$ can be written as the intersection of finitely many
primary components, that is, $I=Q_1\cap\cdots \cap Q_r$ for some
primary ideals $Q_1,\dots,Q_r$; this decomposition is said to be
minimal when $V(Q_i)\neq V(Q_j)$ for $i\neq j$. Take $\bx$ isolated in
$V(I)$; then there exists a unique primary component $Q_i$, which must
has dimension zero, such that $\bx$ is in $V(Q_i)$; because we take
a primary decomposition over $\KKbar$, we actually have
$V(Q_i)=\{\bx\}$. Although minimal primary decompositions are not
unique, the fact that $\bx$ is isolated implies that $Q_i$ does not
depend on the primary decomposition of $I$ we consider; then, the
\emph{multiplicity} of $\bx$ is defined as the dimension of
$\KKbar[X_1,\dots,X_n]/Q_i$. When $\bx=0\in\KKbar{}^n$,
the dimension of $\KKbar[X_1,\dots,X_n]/Q_i$ is the same as that of
$\KKbar[[X_1, \ldots, X_n]]/I$, where $\KK[[X_1, \ldots, X_n]]$ denotes
the formal power series ring in $X_1, \ldots, X_n$ with coefficients
in $\KKbar$ (this follows from~\cite[Theorem 4.2.2]{CLO_UAG}). 


The following is our first~result.

\begin{theorem}\label{theo:1}
  Let $\mF$ be in $\KK[X_1,\dots,X_n]^{p \times q}$ and let
  $G=(g_1,\dots,g_s)$ be in $\KK[X_1,\dots,X_n]$, with $p \le q$ and
  $n=q-p+s+1$. Then, the sum of the multiplicities of the isolated
  points of $I_p(\mF) + \langle g_1,\dots,g_s \rangle$ is at most
  $\min(c,c')$ with
$$c=\deg(g_1) \cdots \deg(g_s) E_{n-s}(\cdeg(\mF,1), \ldots, \cdeg(\mF,q))$$
and
$$c'=\deg(g_1) \cdots \deg(g_s) S_{n-s}(\rdeg(\mF,1), \ldots, \rdeg(\mF,p)).$$
\end{theorem}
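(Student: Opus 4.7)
The plan is to bound each quantity first for a generic choice of $(\mF, G)$ with the same degree data (so that $\VpFG{p}{\mF}{G}$ is zero-dimensional and the sum of multiplicities is its total degree), and then to transfer the bound to the given system by upper semi-continuity of fiber length in a flat one-parameter deformation.

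For the bound $c$, I would use a direct multi-homogeneous Bezout argument on $\A^n \times \P^{p-1}$. Introduce $p$ new projective variables $y = [y_1 : \cdots : y_p] \in \P^{p-1}$ and form the bi-homogeneous square system
$$\sum_{i=1}^p y_i f_{i,j}(X) = 0 \;\; (j=1,\dots,q), \qquad g_\ell(X) = 0 \;\; (\ell=1,\dots,s),$$
with $n + (p-1) = q + s$ unknowns and equations, of bi-degrees $(\cdeg(\mF,j),1)$ and $(\deg(g_\ell),0)$. The Bezout bound on $\A^n \times \P^{p-1}$ is the coefficient of $a^n b^{p-1}$ in $\prod_{\ell=1}^s (\deg(g_\ell) a)\prod_{j=1}^q (\cdeg(\mF,j) a + b)$, which evaluates to $\prod_\ell \deg(g_\ell) \cdot E_{n-s}(\cdeg(\mF,1),\ldots,\cdeg(\mF,q)) = c$. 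For generic $(\mF, G)$, every isolated $x^* \in \VpFG{p}{\mF}{G}$ has $\rank(\mF(x^*)) = p-1$, so it lifts uniquely to $(x^*, [y^*])$ in the enlarged system via the one-dimensional left kernel, and the projection $\A^n \times \P^{p-1} \to \A^n$ is locally an isomorphism there, so multiplicities match.

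For the bound $c'$, I would use the Thom-Porteous degeneracy locus formula. After homogenization, view $\mF$ as a morphism of vector bundles $\phi : \mathcal{O}_{\P^n}^{\,q} \to \bigoplus_{i=1}^p \mathcal{O}_{\P^n}(\rdeg(\mF,i))$ on $\P^n$. The class of the generic degeneracy locus $D_{p-1}(\phi) \subset \P^n$ is the Schur determinant $\det(c_{1+j-i}(F-E))_{1 \le i,j \le q-p+1}$. Since $c(F-E) = \prod_i (1 + \rdeg(\mF,i)\,H)$, one has $c_k(F-E) = E_k(\rdeg) \cdot H^k$, and the Jacobi-Trudi identity identifies the Schur determinant with $S_{q-p+1}(\rdeg) \cdot H^{q-p+1}$. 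Intersecting with the $s$ hypersurfaces $V(g_\ell)$ of degrees $\deg(g_\ell)$ yields a zero-cycle of degree $\prod_\ell \deg(g_\ell) \cdot S_{n-s}(\rdeg) = c'$ in $\P^n$.

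The main obstacle is the passage from the generic case back to arbitrary $(\mF, G)$. I would construct a flat one-parameter family $(\mF_t, G_t)$ with $(\mF_0, G_0) = (\mF, G)$ and with $(\mF_t, G_t)$ generic for $t \neq 0$ (for example by adding $t$ times a generic matrix and generic polynomials of the prescribed degree profile), and argue upper semi-continuity of the sum of multiplicities of isolated zeros. The delicate points are (i) ensuring no isolated zero escapes to infinity under the family and (ii) handling isolated $x^* \in \VpFG{p}{\mF}{G}$ where $\rank(\mF(x^*))$ drops strictly below $p-1$, in which case the fiber of the projection $\A^n \times \P^{p-1} \to \A^n$ is positive-dimensional and the lift used for the column bound is no longer injective; at such points one must compare the local length at $x^*$ with the total length along the fiber via a local flatness or Cohen-Macaulay argument. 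Both issues are typically handled by passing to projective closures and working with the relative Hilbert polynomial of the total space of the family.
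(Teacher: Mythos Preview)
Your generic-fiber computations are correct and take a different route from the paper. The paper does not use the incidence variety in $\A^n\times\P^{p-1}$ or Thom--Porteous; instead it builds explicit start systems whose zeros it can count by hand --- a Cauchy-like matrix whose columns are products of random linear forms for the column bound (Section~\ref{sec:columndegree}, giving $c$ via Lemma~\ref{lemma:column:c_estimate}), and a diagonal-plus-block matrix for the row bound (Sections~\ref{sec:prel-row}--\ref{sec:rowdegree}, giving $c'$ via Corollary~\ref{coro:complete}). Your arguments recover the same numbers more conceptually; the $c'$ computation is essentially Nie--Ranestad's, which the paper cites. The paper's explicit start systems are not a detour, though: the homotopy \emph{algorithm} of Section~\ref{sec:homotopy} needs a concrete radical start system with known solutions, so it must construct one regardless.

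The transfer from generic to arbitrary $(\mF,G)$ is where the theorem actually lives --- the generic bounds were already in the literature --- and here your proposal has a genuine gap. Upper semi-continuity of the total multiplicity of \emph{isolated} points is false for arbitrary overdetermined families: the family $\langle x^2,\,xy,\,y^2-t\rangle$ in $\KK[x,y]$ has isolated multiplicity $2$ for $t\ne0$ but $3$ at $t=0$, because the ideal it generates in $\KK[t,x,y]$ has an embedded prime at the origin. What rescues the determinantal case is that this cannot happen. If one sets $\mU=(1-T)\mL+T\mF$ and $V=(1-T)K+TG$, the ideal $J=I_p(\mU)+\langle V\rangle\subset\KKbar[T,\bX]$ is locally unmixed of height $n$ wherever it has height $n$: this is Eagon--Northcott for $I_p(\mU)$, followed by the fact that cutting a Cohen--Macaulay ring by $s$ further equations preserves unmixedness when the height is correct. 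The paper isolates this as hypotheses $\assA_1,\assA_2$, verifies them in Proposition~\ref{prop:KH1H2}, and then proves the semi-continuity inequality as Proposition~\ref{prop:degree_fiber}: every isolated point of the fiber $T=1$ lies on a one-dimensional component of $V(J)$ dominating the $T$-line and on no embedded component (Lemma~\ref{lemma:vPi}), so its multiplicity is bounded by the generic-fiber degree (Lemma~\ref{lemma:19}). Your phrase ``local flatness or Cohen--Macaulay argument'' is aimed at the right target, but the specific input --- Eagon--Northcott unmixedness for $I_p(\mU)$ over the ring $\KK[T,\bX]$, not merely fiberwise --- must be named and used; without it the argument does not close.

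A smaller point: your obstacle~(ii), about the fiber of $\A^n\times\P^{p-1}\to\A^n$ becoming positive-dimensional when $\rank\mF(x^*)<p-1$, is an obstruction to comparing the incidence variety with $\VpFG{p}{\mF}{G}$ at a \emph{fixed} $(\mF,G)$. Once you deform the matrix itself and work directly with the family $I_p(\mU)+\langle V\rangle$, as you propose, that projection plays no role; the only remaining issue is the unmixedness of $J$ just described.
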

When $\rdeg(\mG,i)=\cdeg(\mF,j)=d$ for all $i,j$, the two bounds given
above coincide, with common value $\deg(g_1) \cdots \deg(g_s) {q \choose {p-1}} d^{n-s}$; otherwise, either of the two expressions
$E_{n-s}(\cdeg(\mF,1), \ldots, \cdeg(\mF,q))$ and
$S_{n-s}(\rdeg(\mF,1), \ldots, \rdeg(\mF,p))$ can be the minimum. For 
instance, consider the case $s=0$ (so there are no equations $G$),
and where the degrees of the entries in $\mF$ are 
$$ \begin{bmatrix}
    2 & 1 & 5 & 7 \\
    2 & 1 & 5 & 7 \\
    2 & 1 & 5 & 7 
  \end{bmatrix}.$$
Here, we have $p=3, q=4, s=0$ and $n=2$. Then, 
the quantity $c$ is $c=E_2(2,1,5,7) = 2\cdot1+2\cdot5+2\cdot7+1\cdot5+1\cdot7+5\cdot7 = {73}$,
whereas $c'=6 \cdot 7^2=294.$ On the other hand, if we 
take $\mF$ with degree profile
$$ \begin{bmatrix}
    2 & 2 & 2 & 2 \\
    1 & 1 & 1 & 1 \\
    5 & 5 & 5 & 5 
  \end{bmatrix},$$
with the same values of $p,q,s,n$, we get $c=6 \cdot 7^2=294$ and
$c'=S_2(2,1,5) = 2^2+2\cdot 1 + 2\cdot 5 + 1^2 + 1 \cdot 5 + 5^2 =
{47}$.  For systems coming from optimization, where $F$ is a
Jacobian matrix, we are in a
situation similar to the second example, where the $i$th row degree of
$F$ is simply the degree of the corresponding equation, minus one.

Previous work by Miller and Sturmfels~\cite[Chapter~15]{MiSt04} proved
very general results on the multi-degrees of determinantal ideals
built from matrices with indeterminate entries (in which case we have
$s=0$, but the assumption $n=q-p+1$ does not hold); in particular,
they obtain analogues (and generalizations) of the result in
Theorem~\ref{theo:1} in that context.

Nie and Ranestad proved in~\cite{NieRan09} that the bounds in
Theorem~\ref{theo:1} are tight for two families of polynomials
(in a similar context where the polynomials are homogeneous in
$n+1$ variables):
\begin{itemize}
\item when entries of $\mF$ are generic and homogeneous, and
 such that $\deg(f_{i,j}) = \cdeg(\mF,j)$ for all $i,j$, the ideal
 $I_p(\mF)$ has degree $E_{n}(\cdeg(\mF,1), \ldots, \cdeg(\mF,q))$;
\item when entries of $\mF$ are  generic and homogeneous, and
  such that $\deg(f_{i,j}) = \rdeg(\mF,i)$ for all $i,j$, the
  ideal  $I_p(\mF)$ has degree $S_{n}(\rdeg(\mF,1), \ldots, \rdeg(\mF,p))$.
\end{itemize}
From this, they deduce that the degree of the ideal $I_p(\mF) +
\langle g_1,\dots,g_s \rangle$ is at most \sloppy $\deg(g_1) \cdots
\deg(g_s) S_{n-s}(\rdeg(\mF,1), \ldots, \rdeg(\mF,p))$, for systems
coming from optimization problems, assuming that this ideal has
dimension zero. In this context, Spaenlehauer gave in~\cite{Spa14} an
explicit expression for the Hilbert function of the ideal $I_p(\mF) +
\langle g_1,\dots,g_s \rangle$, for a generic input.

\medskip

Our second result gives bounds on the cost of computing a
zero-dimensional parametrization of the isolated solutions of
$\VpFG{p}{\mF}{G}=V(I_p(\mF) + \langle g_1,\dots,g_s \rangle)$. Our
algorithms take as input a \emph{straight-line program} (that is, a
sequence of elementary operations $+, -, \times$) that computes the
entries of $\mF$ and $G$ from the input variables $X_1,\dots,X_n$; the
\emph{length $\sigma$} of the input is the number of operations it
performs. This assumption is not restrictive, since any matrix $\mF$
and polynomials $G$ can be computed by a straight-line program (a
naive solution would consist in computing and adding all monomials in
$\mF$ and $G$).

\begin{theorem}\label{theo:2}
  Suppose that matrix $\mF \in \KK[X_1,\dots,X_n]^{p \times q}$ and
  polynomials $G=(g_1,\dots,g_s)$ in $\KK[X_1,\dots,X_n]$ are given by
  a straight-line program of length $\sigma$. Assume that
  $\deg(g_1),\dots,\deg(g_s)$, as well as
  $\cdeg(\mF,1), \ldots, \cdeg(\mF,q)$ and
  $\rdeg(\mF,1), \ldots, \rdeg(\mF,p)$ are all at least equal to $1$.

  Then, there exist randomized algorithms that solve
  Problem~\eqref{problem2} in either
   $$\softO\left (
     {q \choose p} c(e+c^5 )\big(\sigma + q \delta + \gamma  \big )
   \right)$$
  operations in $\KK$, with
  \begin{align*}
    c&=\deg(g_1)\cdots\deg(g_s)\ E_{n-s}(\cdeg(\mF,1), \ldots, \cdeg(\mF,q))\\
    e&=(\deg(g_1)+1)\cdots(\deg(g_s)+1)\ E_{n-s}(\cdeg(\mF,1)+1, \ldots, \cdeg(\mF,q)+1),\\
    \gamma&= \max(\deg(g_i), 1\leq i \leq s)\\
    \delta &= \max(\cdeg(\mF,i), 1\leq i \leq q)
  \end{align*}
  or 
   $$\softO\left (
     {q \choose p} c'(e'+{c'}^5 )\big(\sigma + p \alpha  +\gamma \big )
   \right)$$
  operations in $\KK$, with 
\begin{align*}
  c'&=\deg(g_1)\cdots\deg(g_s)\ S_{n-s}(\rdeg(\mF,1), \ldots, \rdeg(\mF,p))\\
  e'&=(\deg(g_1)+1)\cdots(\deg(g_s)+1)\ S_{n-s}(\rdeg(\mF,1)+1, \ldots, \rdeg(\mF,p)+1),\\
    \gamma&= \max(\deg(g_i), 1\leq i \leq s)\\
    \alpha &= \max(\rdeg(\mF,j), 1\leq j \leq p).
\end{align*}
\end{theorem}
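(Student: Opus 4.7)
The plan is to design a symbolic homotopy algorithm in which a generic, easy-to-solve start system is continuously deformed into the target $(\mF, G)$, with all isolated solutions tracked via a zero-dimensional parametrization over $\KK(T)$. To exploit the determinantal structure, I would first reformulate the condition $\mathrm{rank}(\mF(\x)) < p$ using the existence of a nonzero left-kernel vector $\z \in \KKbar^p$ satisfying $\z^T \mF(\x) = 0$. Together with $G(\x) = 0$ and an affine normalization of $\z$ accounting for the scaling ambiguity, this yields a square polynomial system in $n + p$ unknowns (using $n = q - p + s + 1$). Since different irreducible components of the determinantal locus may be detected by different normalizations, I would cover $\VpFG{p}{\mF}{G}$ by ${q \choose p}$ charts, one per $p$-subset of columns of $\mF$; this is where the ${q \choose p}$ factor in the final cost comes from.

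For each chart, I would build a one-parameter family $(\mF_T, G_T)$ preserving the row-degree (respectively column-degree) profile of $\mF$ and the degrees of the $g_i$'s, with $(\mF_1, G_1) = (\mF, G)$ and $(\mF_0, G_0)$ a random generic start. By Theorem~\ref{theo:1} applied to $(\mF_0, G_0)$, combined with a Bertini/Sard argument ensuring the generic bound is attained with simple roots, the kernel system at $T = 0$ has exactly $c$ (respectively $c'$) smooth isolated solutions. These start solutions can be obtained by an explicit structured subroutine or recursively from a simpler instance. The $c$-factor in the total cost then corresponds to tracking these $c$ paths from $T = 0$ to $T = 1$.

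Path tracking itself would follow the geometric-resolution / symbolic-Newton paradigm: each Newton step lifts a zero-dimensional parametrization by one more digit in $T$, evaluating the system and its Jacobian from the straight-line program (using Baur--Strassen for the gradient) at per-step arithmetic cost $\softO(\sigma + q\delta + \gamma)$ (or $\softO(\sigma + p\alpha + \gamma)$ in the row-degree variant). The degree in $T$ of the curve swept by the solutions is bounded by $e$ (respectively $e'$) via a B\'ezout-style estimate on the perturbed system with shifted degrees, which controls both the number of Newton steps and the $T$-degree of the intermediate parametrizations. This yields a per-chart tracking cost of $\softO(c \cdot e \cdot (\sigma + q\delta + \gamma))$.

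The remaining $c^5$ term comes from a cleanup / deflation step: at $T = 1$, tracked limits may acquire multiplicity or land on positive-dimensional components of $\VpFG{p}{\mF}{G}$, and these must be detected and either deflated or discarded to yield precisely the isolated points. Standard deflation on a zero-dimensional parametrization of degree $c$ is polynomial in $c$, and the exponent $5$ is a comfortable conservative bound. The hardest part will be controlling the homotopy globally: guaranteeing that the start system has exactly the expected number of simple roots, that no path escapes to infinity or collapses at intermediate $T$, and that the union of outputs across the ${q \choose p}$ charts coincides exactly with the isolated locus of $\VpFG{p}{\mF}{G}$. These obstacles are handled by genericity of the random start (which is where the randomization enters) and of an auxiliary linear form $\lambda$ used to define the parametrization; a uniform complexity accounting, summing the per-chart cost over the ${q \choose p}$ charts, then gives the stated $\softO$ bound.
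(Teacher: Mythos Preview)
Your approach differs substantially from the paper's, and in ways that create real gaps.

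The paper does \emph{not} introduce kernel variables $\z$ to obtain a square system. It works directly with the overdetermined system consisting of the $p$-minors of $\mF$ together with $G$, and the homotopy is built on that system. The key technical ingredient (Section~\ref{sec:check}) is that determinantal ideals satisfy the height and unmixedness properties $\assA_1,\assA_2$ required by the homotopy framework of Section~\ref{sec:homotopy}, via Eagon--Northcott. The introduction explicitly notes that the Lagrange/kernel reformulation you propose was used in earlier work (\cite{SaSc16}) and yields only multihomogeneous B\'ezout bounds in the \emph{maxima} of the row or column degrees, not the finer $E_{n-s}$ and $S_{n-s}$ expressions defining $c,e,c',e'$. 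So your square-system route does not deliver the stated bounds without a further argument you have not supplied.

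Two of your cost-factor explanations are also off. The ${q\choose p}$ is not a count of charts: in the paper it is simply the number of $p$-minors, hence $m=s+{q\choose p}$, and it enters through the length $\sigma'$ of the straight-line program evaluating $\bB$ and through the linear-algebra cost of the local dimension test. (Your chart story is also internally inconsistent: normalizing a left-kernel vector $\z\in\KKbar^p$ gives at most $p$ affine charts, not ${q\choose p}$.) The $c^5$ term is not deflation; it comes from the local dimension test of Section~\ref{sec:isolated}, an adaptation of Mourrain's dual-space algorithm that, given the multiplicity bound $\mu=c$ from Proposition~\ref{prop:degree_fiber}, decides in time polynomial in $c$ whether a candidate limit point is isolated in $V(\bC)$ or lies on a positive-dimensional component. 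Deflation alone does not make that distinction. Finally, invoking Theorem~\ref{theo:1} to count the start solutions is circular in this paper's development: the bound $c$ (resp.\ $c'$) is \emph{established} by exhibiting an explicit start matrix (Vandermonde-like for the column case, block-diagonal with a recursive solve for the row case) whose solutions are enumerated directly, and Theorem~\ref{theo:1} is then a consequence via Proposition~\ref{prop:degree_fiber}.
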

The assumption that all degrees are at least $1$ is not a restriction.
If $\deg(g_i)=0$ for some $i$, $g_i$ is a constant, so either the
system is inconsistent (if $g_i \ne 0$) or $g_i$ can be
discarded. Similarly, if say $\cdeg(\mF,i)=0$, the $i$th column of
$\mF$ consists of constants; after applying linear combinations with
coefficients in $\KK$ to the rows of $\mF$, we may assume that all
entries in the $i$th column, except at most one, are non-zero without
changing the column degrees. The $i$th column of $\mF$ (and the row of
the non-zero entry, if there is one) can then be discarded.

Remark further that in the common situation where all degrees 
$\deg(g_i)$, $\rdeg(\mF,i)$ and $\cdeg(\mF,j)$ involved
in the formulas above are at least equal to $2$, we have the
inequalities $e \le c^2$, $e' \le {c'}{}^2$ and $\binom{q}{p}\leq c$,
$\binom{q}{p}\leq c'$; as a result, the runtimes become {\em
  polynomial} in respectively $c,\sigma$ and $c',\sigma$. This is to
be compared with Theorem~\ref{theo:1}, which shows that $\min(c,c')$
is a natural upper bound for the output size of such~algorithms.

For solving Problem~\eqref{problem3}, one obtains slightly better
complexity estimates. 

\begin{theorem}\label{theo:3}
  Suppose that the matrix $\mF \in \KK[X_1,\dots,X_n]^{p \times q}$
  and polynomials $G=(g_1,\dots,g_s)$ in $\KK[X_1,\dots,X_n]$ are
  given by a straight-line program of length $\sigma$. Assume that
  $\deg(g_1),\dots,\deg(g_s)$, as well as
  $\cdeg(\mF,1), \ldots, \cdeg(\mF,q)$ and
  $\rdeg(\mF,1), \ldots, \rdeg(\mF,p)$ are all at least equal to $1$.

  Then, there exist randomized algorithms that solve
  Problem~\eqref{problem3} in either
   $$\softO\left (
     {q \choose p} ce\big(\sigma + q \delta + \gamma  \big )
   \right)$$
or 
   $$\softO\left (
     {q \choose p} c'e'\big(\sigma + p \alpha  +\gamma \big )
   \right)$$
  operations in $\KK$, 
  all notation being as in Theorem~\ref{theo:3}.
\end{theorem}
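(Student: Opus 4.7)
The plan is to reuse the homotopy infrastructure developed for Theorem~\ref{theo:2} and to specialize it to the simple-point case, where the $c^5$ (resp.\ ${c'}{}^5$) factor can be eliminated. Recall that a simple point $\bx$ of $\VpFG{p}{\mF}{G}$ is by definition a regular zero of the full system consisting of the $p$-minors of $\mF$ together with $G$; hence, once we have reached a candidate power-series approximation of $\bx$, ordinary Newton iteration over $\KK[[t]]$ converges quadratically and no deflation, Macaulay inverse system or higher-order endgame is required. This is exactly the step whose cost drove the $c^5$ (resp.\ ${c'}{}^5$) factor in the proof of Theorem~\ref{theo:2}, and removing it is the whole source of the improvement.

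Concretely, I would proceed in three phases. First, set up a one-parameter deformation $(\mF_t,G_t)$, with a generic parameter $t$, whose value at $t=0$ is a structured start system built from the determinantal shape of $\mF$; by a B\'ezout-type count based on the degree profile, together with the bound of Theorem~\ref{theo:1} applied to the perturbed degrees $\deg(g_i)+1$ and $\cdeg(\mF,j)+1$ (resp.\ $\rdeg(\mF,i)+1$), the total number of paths is $\softO\bigl(\binom{q}{p}e\bigr)$ (resp.\ $\softO\bigl(\binom{q}{p}e'\bigr)$), the factor $\binom{q}{p}$ arising from the enumeration of $p$-subsets of columns used to encode the rank-drop condition by a single square determinantal equation. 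Second, for each such start solution, track a power-series path $\bx(t) \in \KK[[t]]^n$ symbolically by Newton iteration over $\KK[[t]]$ with doubling precision, up to order $N = \softO(c)$ (resp.\ $\softO(c')$), the bound being dictated by Theorem~\ref{theo:1} via the degree of the rational parametrization ultimately produced. Third, specialize at $t=1$ and apply rational reconstruction to recover the zero-dimensional parametrization $\scrR$ of the simple points of $\VpFG{p}{\mF}{G}$.

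The cost of one Newton step is dominated by evaluating $\mF$, $G$ and their Jacobian, then solving a linear system of size $n$ over truncated power series: the straight-line program contributes $\sigma$, the row/column degrees of the Jacobian contribute $q\delta$ (resp.\ $p\alpha$), and the $G$-part contributes $\gamma$. Combined with the doubling strategy, the cost per path is $\softO\bigl(N(\sigma + q\delta + \gamma)\bigr)$, i.e.\ $\softO\bigl(c(\sigma + q\delta + \gamma)\bigr)$ (resp.\ $\softO\bigl(c'(\sigma + p\alpha + \gamma)\bigr)$); multiplying by the path count $\softO\bigl(\binom{q}{p}e\bigr)$ yields the announced bound.

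The hardest part, in my view, is certifying that \emph{every} simple point of $\VpFG{p}{\mF}{G}$ is the endpoint of a tracked path, and that the spurious or divergent paths can be detected and discarded without adding more than polylogarithmic overhead. This requires a Zariski-open genericity condition on the deformation parameters, together with a separation argument (for instance, after a generic $\KK$-linear change of coordinates so that the chosen linear form $\lambda$ separates the simple points), and a deflation-free filter at $t=1$ that uses the Jacobian rank to keep only the simple target points while safely dropping those that landed on components of positive dimension or on isolated points of higher multiplicity. Lining up these genericity choices with the precision bound dictated by Theorem~\ref{theo:1} is where most of the technical care will be needed.
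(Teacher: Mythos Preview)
Your high-level strategy matches the paper: reuse the homotopy from Theorem~\ref{theo:2} and replace the expensive multiplicity/local-dimension endgame by a simple Jacobian rank check at $t=1$, which is precisely what the paper does (Propositions~\ref{prop:compute_regular}, \ref{prop:coldeg_simple}, \ref{prop:rowdegree_simple}). However, the cost accounting in your proposal has the roles of the quantities swapped, and the origin of the $\binom{q}{p}$ factor is misidentified.

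In the paper's analysis, $c$ (resp.\ $c'$) is the \emph{number of start solutions}, hence the number of paths to track; $e$ (resp.\ $e'$) is an upper bound on the degree of the one-dimensional part of the homotopy variety $V(J)\subset\KKbar{}^{n+1}$, and therefore governs the \emph{power-series precision} (the lift is done to order $2e$, after which rational reconstruction succeeds). You have these reversed: you take $\softO\bigl(\binom{q}{p}e\bigr)$ paths and lift each to precision $\softO(c)$. Theorem~\ref{theo:1} applied to the ``$+1$'' degrees does yield the quantity $e$, but that is a bound on the number of isolated points of a \emph{generic hyperplane section} of $V(J)$ (hence on $\deg V(J')$), not on the number of start solutions. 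The final product $\binom{q}{p}ce$ happens to be the same, but the argument as written does not justify it.

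The factor $\binom{q}{p}$ also does not arise from running one homotopy per $p$-subset of columns and reducing to a square system. In the paper there is a \emph{single} determinantal homotopy whose target is the full ideal $I_p(\mF)+\langle G\rangle$; the factor $\binom{q}{p}$ enters only through the length $\sigma'$ of the straight-line program that evaluates all $p$-minors of the deformed matrix $\mU=(1-T)\mL+T\mF$. Reducing to one square system per minor would not, in general, cover all simple points of $\VpFG{p}{\mF}{G}$ exactly once (points can satisfy several minors simultaneously, and a single $p\times p$ minor vanishing does not encode the rank-drop condition), so that alternative route would need its own correctness argument that you have not supplied.
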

As above, in the common situation where all degrees involved are at
least $2$, the runtimes become {\it polynomial} in $c, \sigma$ and
$c',\sigma'$.

The probabilistic aspects are as follows: at several steps, the
algorithms on which Theorems~\ref{theo:2} and~\ref{theo:3} rely will
draw elements from the base field at random. In all cases, there
exists an algebraic hypersurface $\cal H$ of the parameter space such
that success is guaranteed for all choices of parameters not
in~$\cal H$.


As already said, our algorithms are based on a {\em symbolic homotopy
  continuation}.  Homotopy continuation algorithms have become a
foundational tools for numerical algorithms, either in continuation of
Shub and Smale's early work~\cite{ShSm93}, or along the lines of work
by Morgan, Sommese, Wampler (as summarized, for instance,
in~\cite{BertiniBook,SoWa05}), with an emphasis on the algebraic
geometry underlying these techniques. In this context, dedicated
numerical homotopy algorithms has also been developed to take into
account sparsity in polynomial systems (see
e.g. \cite{Ver94,Ver09,AdVe13}).

By contrast, their usage in symbolic contexts is more
recent. Early references are~\cite{HeKrPuSaWa99,BoMaWaWa04}, which
deal with systems with no particular structure; further work extended
this idea to sparse systems (in the polyhedral
sense)~\cite{JeMaSoWa09,HeJeSa10,HeJeSa13,HeJeSa14} and
multihomogeneous systems~\cite{HeJeSaSo02,SaSc16}.  In~\cite{SaSc16},
these techniques are used to solve Problem~\eqref{problem3}, but the
complexity estimates obtained there depend on multi-homogeneous
B\'ezout bounds involving the maxima of $\rdeg(\mF, 1), \ldots,
\rdeg(\mF, p)$ or $\cdeg(\mF,1), \ldots, \cdeg(\mF, q)$.

Most algorithms in the previous references have in common that they
solve {\em square} systems, that is, systems with as many equations as
unknowns; extensions of these methods can deal with systems of
positive dimension by essentially using variants of the algorithm for
square systems.  One notable exception is given in \cite{SVV10} where
dedicated homotopy algorithms are given to solve Schubert problems
which consist in determining linear spaces of prescribed dimension
which meet a set of fixed linear subspaces in specified
dimensions. Observe that such problems are formulated with rank
conditions on some special matrices (see e.g. \cite{LDSVV18}). These
algorithms strongly exploit and are dedicated to the structure of the
Schubert problem through the Littlewood Richardson rule and an
associated combinatorial construction. Hence, as far as we know, they
cannot be used to solve determinantal systems of equations expressing
that a given matrix with polynomial entries is rank deficient. 

One of the contributions in this paper is to deal with determinantal
systems of equations, which are in essence over-determined; this is
made possible by the algebraic properties of determinantal ideals.

It is well-known that Gr\"obner bases behave rather well on
over-determined systems. Starting from the determination of the
Hilbert function of a determinantal ring due to Conca and
Herzog~\cite{CH94}, complexity estimates are given in \cite{FSS13,
  FSS12} for computing Gr\"obner basis of ideals generated by either
$M_r(\mF)$ when $r\leq p\leq q$, or $G, M_{p}(\mF)$ (for inputs coming
from optimization problems), but under some genericity assumptions on
the entries of $\mF$ or $G$, which are also assumed to all have the
same degree. This series of works culminated with the result obtained
by Spaenlehauer in \cite{Spa14}, where he removes this latter degree
assumption and provides sharp complexity statements, still under
genericity~assumptions.

Systems encoding rank defects in polynomial matrices have also been
studied in the scope of the so-called geometric resolution algorithm
in \cite{BaGiHeLeMaSo15}, with a slight generalization in \cite{SaSp16}
computing simple  solutions (those isolated solutions which
are not simple are not considered in this line of work). As our
algorithm here, these algorithms take as input straight-line programs
but instead of using deformation techniques to build a global
homotopy, determinantal systems are solved incrementally in some
chart.  Hence, the complexity of these algorithms depends here on the
maximum degrees of the varieties defined by the considered
intermediate systems. Even without taking into account the dimension
assumption, additional results are needed to compare these
intermediate degrees with the quantities involved in our complexity
estimates.

\medskip In the following paragraphs, we describe our results in more
detail.  As a preliminary, we will need an algorithm which takes as
input polynomials $\bC=(c_1,\dots,c_m)$ and a point $\bx$ in the
zero-set of $\bC$, and which decides whether $\bx$ is an isolated
points of $V(\bC)$. This  will be used to solve
Problem~\eqref{problem2}.

Without any other information, this decision problem is difficult to
solve in a good complexity. However, when a bound $\mu$ is known on
the multiplicity of $\bx$ as a root of $\bC$, it becomes possible to
solve this problem in time polynomial in the number of equations $m$,
the number of variables $n$, the bound $\mu$, and the complexity of
evaluation $\sigma$ of $\bC$. This is detailed in
Section~\ref{sec:isolated}, where we explain how to modify an
algorithm by Mourrain~\cite{Mourrain97} and adapt it to our context.

In Section~\ref{sec:homotopy}, we give an algorithm which takes as
input a sequence of polynomials $\bC$ and computes a zero-dimensional
parametrization of the isolated points of $V(\bC)$, assuming the
existence of a suitable homotopy deformation. Explicitly, we suppose
that $\bC$ involves variables $\bX=(X_1,\dots,X_n)$, we let $T$ be a
new variable, and we suppose that we know a family of polynomials
$\bB$ in $\KK[T,\bX]$ such that $\bB(1,\bX)=\bC$. Let then $\bA$ be
the polynomials $\bB(0,\bX)$, and suppose that $V(\bA)$ is finite, and
that we are able to find a zero-dimensional parametrization of it
efficiently. We will actually need a few further conditions (for
instance, at all points in $V(\bA)$, the Jacobian matrix of these
polynomials must have rank $n$).

We will see in Section~\ref{sec:homotopy} that when all these
conditions hold, we can divise a homotopy algorithm that lifts the
points of $V(\bA)$ (that correspond to $T=0$) into a curve $\cal C$
parametrized by $T$. The isolated points of $V(\bC)$ all belong to the
fiber of $\cal C$ above $T=1$, but some points in this fiber can
actually lie in positive dimensional components of $V(\bC)$; the
algorithm of Section~\ref{sec:isolated} will filter out such
points. The complexity we obtain depends linearly on the complexity of
evaluating $\bC$ and polynomially on a bound on the sum of the
multiplicites of the isolated points of $V(\bC)$ and the degree of the
homotopy curve. When one only wants to compute simple solutions, a
variant of the homotopy algorithm is given: this is actually
simpler since we replace the algorithm of
Section~\ref{sec:isolated} with a simple criterion allowing us to
identify the simple solutions.

We will apply these results to our determinantal problems as
follows. Given $\mF \in \KK[\bX]^{p \times q}$ and $G=(g_1,\dots,g_s)$,
we will build a matrix
\[\mU = (1-T)\cdot \mL + T \cdot \mF \in \KK[T, \X]^{p \times q}\]
that connects a \emph{start matrix} $\mL$ to the target matrix $\mF$,
together with a homotopy of the form
\[V = (1-T) \cdot K + T \cdot G,\]
that connects a start system $K=(k_1,\dots,k_s)$ to the target system
$G$.  In Section~\ref{sec:check}, we prove that several assumptions of
the algorithm of Section~\ref{sec:homotopy} are satisfied for such
systems, independently of the choice of $\mL$ and $K$.

The actual construction of the system $K$ will be rather
straightforward; the difficulty lies in the definition of a matrix
$\mL$ that will respect either the column-degree or the row-degree of
$\mF$ (while satisyfing all assumptions for the algorithm of
Section~\ref{sec:homotopy}).  The column-degree case is treated in
Section~\ref{sec:columndegree} in a rather straightforward way,
whereas the row-degree case is more delicate, and is treated in
Sections~\ref{sec:prel-row} and~\ref{sec:rowdegree}. In both cases, we
bound the sum of the multiplicities of the isolated points in
$\VpFG{p}{\mF}{G}$ (thereby establishing Theorem~\ref{theo:1}), as
well as the degree of the homotopy curve.


\section{A local dimension test} \label{sec:isolated}

Let $\LL$ be a field containing the field $\KK$ and $\LLbar$ be an
algebraic closure of $\LL$.  Let $\bC=(c_1,\dots,c_m)$ be polynomials
in $\KK[\bX]$, with $\bX=(X_1,\dots,X_n)$. Given a point $\bx$ with
coordinates in $\LL$ that belongs to the zero-set $V(\bC)\subset
\LLbar{}^n$, we discuss here how to decide whether $\bx$ is an
isolated point in $V(\bC)$. We make the following assumption in the
rest of this section:
\begin{description}[leftmargin=*]
\item [$\assH.$] We are given as input an integer $\mu$ such that
  either $\bx$ is isolated in $V(\bC)$, with multiplicity at most
  $\mu$ with respect to the ideal $\langle \bC\rangle$, or $\bx$
  belongs to a positive-dimensional component of $V(\bC)$.
\end{description}
Without loss of generality, we also assume that $m\ge n$ (otherwise, $\bx$
cannot be an isolated solution). 
\begin{proposition}\label{prop:testisolated}
  Suppose that $\bC$ is given by a straight-line program of length $\sigma$.
  If assumption $\assH$ is satisfied, we can decide whether $\bx$ is an
  isolated point of $V(\bC)$ using 
$$O(n^4 \mu^4 + n^2 m \mu^3 + n \sigma \mu^4) \subset (\mu\,\sigma\,m)^{O(1)}$$ operations in~$\LL$.
\end{proposition}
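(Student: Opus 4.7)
The plan is to adapt Mourrain's incremental construction of the local inverse system at $\bx$. After translating so that $\bx=\mathbf{0}$, consider the local algebra $\scrA=\LLbar[[\bX]]/\langle\bC\rangle$, whose $\LLbar$-dimension equals the multiplicity of $\bx$ in $V(\bC)$ when $\bx$ is isolated, and is infinite otherwise. I will not compute $\scrA$ directly but rather its dual $D\subset\LLbar[\partial_1,\dots,\partial_n]$, the space of differential functionals at the origin that annihilate every element of $\langle\bC\rangle$. Filter $D$ by the order of differentiation, writing $D^{(k)}$ for the subspace of functionals of order at most $k$; then $\dim D=\sum_k \dim (D^{(k)}/D^{(k-1)})$, and this equals the multiplicity when $\bx$ is isolated.

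The main computational step is Mourrain's integration lemma: a candidate order-$(k{+}1)$ functional $\Lambda$ lies in $D^{(k+1)}$ if and only if (i) each partial $\partial_i\Lambda$ lies in $D^{(k)}$, with compatibility relations $\partial_i\partial_j\Lambda=\partial_j\partial_i\Lambda$ holding in $D^{(k-1)}$, and (ii) $\Lambda(c_\ell)=0$ for $\ell=1,\dots,m$. Starting from $D^{(0)}=\LLbar\cdot\mathrm{ev}_{\bx}$, I would incrementally enlarge a basis degree by degree by solving, at each step, a linear system whose unknowns parametrize the "new" coordinates at order $k{+}1$ (of which there are at most $n\cdot\dim D^{(k)}\le n\mu$) and whose equations express the commutation relations together with the $m\cdot\dim D^{(k)}$ constraints $\Lambda(c_\ell)=0$ obtained from Taylor expansions of $\bC$ at $\bx$.

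The stopping criterion is the structural fact proved by Mourrain: $D^{(k+1)}=D^{(k)}$ implies $D=D^{(k)}$, so $\dim D^{(k)}$ is the multiplicity. Under $\assH$, we run this construction with a hard ceiling: at the first $k$ for which $\dim D^{(k)}>\mu$, we output "non-isolated" (safely, by $\assH$); if stabilization $\dim D^{(k+1)}=\dim D^{(k)}$ occurs first, we output "isolated." Since each strict increase contributes at least one to a total bounded by $\mu$ in the isolated case, the loop terminates in at most $\mu$ iterations. For the complexity, each iteration performs: (a) evaluation of $\bC$ and of its derivatives through order $\mu$ at $\bx$, which via reverse-mode differentiation on the straight-line program costs $O(\sigma\mu^3)$ and contributes $O(n\sigma\mu^4)$ overall; (b) the assembly of the constraint matrix from these derivatives, of cost $O(n^2 m\mu^2)$ per step and $O(n^2m\mu^3)$ total; and (c) Gaussian elimination on a system of size $O(n\mu)\times O(n\mu)$, of cost $O(n^3\mu^3)$ per step and $O(n^4\mu^4)$ over the whole run. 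Summing gives the bound $O(n^4\mu^4+n^2m\mu^3+n\sigma\mu^4)$.

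The principal difficulty, and the reason one cannot just quote Mourrain's algorithm verbatim, is twofold. First, one must keep the size of the dual basis controlled by $\mu$ throughout the loop rather than by the naive $\binom{n+k}{k}$ — this is handled by parametrizing only the "new" coordinates modulo $D^{(k)}$, but requires a clean bookkeeping of the commutation constraints. Second, Mourrain's original presentation assumes dense polynomial input, whereas here $\bC$ is given by a straight-line program; extracting the required high-order partials of $\bC$ at $\bx$ efficiently, without unfolding the SLP into monomials, is what forces the use of iterated automatic differentiation and drives the $n\sigma\mu^4$ term in the complexity.
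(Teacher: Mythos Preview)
Your overall strategy matches the paper's: run Mourrain's incremental dual construction with a cap at $\mu$, and use $\assH$ to decide. The stopping logic and the $\le\mu$ iteration bound are correct; the paper phrases the reduction via the ideal $\langle\bC\rangle+\m^{\mu+1}$, but this is equivalent to your direct formulation with the filtration $D^{(k)}$.

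There is, however, a genuine gap in your step~(a). You assert that the values $\Lambda(c_\ell)$ needed for the linear system come from ``reverse-mode differentiation'' of the SLP at cost $O(\sigma\mu^3)$, but reverse-mode AD delivers first-order gradients; iterating it to obtain all partials through order $\mu$ produces $\binom{n+\mu}{\mu}$ values per polynomial, which is not polynomial in $(n,\mu)$. You correctly flag this evaluation as the principal difficulty, but ``iterated automatic differentiation'' does not substantiate the stated bound. The paper closes this gap with two specific devices. First, the current dual basis $(\beta_1,\dots,\beta_s)$ is stored via commuting $s\times s$ multiplication matrices $\bM_{1,s},\dots,\bM_{n,s}$; for any $h$ given by an SLP of length $\sigma$, the vector $(\beta_i(h))_{i\le s}$ equals $h(\bM_{1,s},\dots,\bM_{n,s})\cdot(1,0,\dots,0)^T$, computed by running the SLP on matrix arguments in $O(s^3\sigma)\le O(\mu^3\sigma)$ operations. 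Second, the entries of the linear system actually involve the shifted functionals $X_k^{-1}\cdot\beta_i$ applied to $c_j(X_1,\dots,X_k,0,\dots,0)$; the paper handles this by a divided-difference transform of the SLP (building, for each $k$, a length-$O(\sigma+n)$ SLP for $\delta_k(c_j)$), then evaluating on the multiplication matrices as before. These two lemmas are precisely what drives the $O(n\sigma\mu^4)$ term and are missing from your sketch. A minor correction on the linear algebra: the system at each step has $n(n-1)e_\ell/2+m$ rows (commutation relations plus one row per generator~$c_\ell$, not $m\cdot\dim D^{(k)}$ rows) and $n e_\ell$ columns, so its size is $(n^2\mu+m)\times(n\mu)$ rather than $(n\mu)\times(n\mu)$; this is what yields the $n^4\mu^3+n^2m\mu^2$ per-step cost.
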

Reference~\cite{BaHaPeSo09} gives an algorithm to compute the
dimension of $V(\bC)$ at $\bx$, but its complexity is not known to us,
as it relies on linear algebra with matrices of potentially large size
(not necessarily polynomial in $\mu,\sigma,m$).  Instead, we use an
adaptation of a prior result by Mourrain~\cite{Mourrain97}, which
allows us to control the size of the matrices we handle. We only give
detailed proofs for new ingredients that are specific to our context,
a key difference being the cost analysis in the straight-line program
model: Mourrain's original result depends on the number of monomials
appearing when we expand $\bC$, which would be too high for the
applications we will make of this result. Remark that the assumption
that $\KK$ (and thus $\LL$) have characteristic zero is needed for 
Mourrain's algorithm.

We assume henceforth that $\bx=0$; this is done by replacing $\bC$ by
the polynomials $\bC(\bX+\bx)$, which have complexity of evaluation
$\sigma'=\sigma+n$.  The basis of our algorithm is the following
remark.

\begin{lemma}
  Let $I$ be the zero-dimensional ideal
  $\langle \bC \rangle + \m^{\mu+1}$, where
  $\m=\langle X_1,\dots,X_n\rangle$ is the maximal ideal at the
  origin. Then, $0$ is isolated in $V(\bC)$ if and only if the
  multiplicity $d$ of $I$ at the origin is at most $\mu$.
\end{lemma}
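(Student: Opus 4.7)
The plan is to move to the completed local ring $\hat R = \KKbar[[X_1,\ldots,X_n]]$ at the origin; by the remark recalled earlier in the paper, the multiplicity at $0$ of an $\m$-primary ideal (viewed in $\KKbar[\bX]$) equals the $\KKbar$-dimension of the quotient of $\hat R$ by its extension. Since $I=\langle\bC\rangle+\m^{\mu+1}$ obviously satisfies $\m^{\mu+1}\subseteq I$, $I$ is $\m$-primary, so the statement to prove becomes
\[
d := \dim_{\KKbar}\hat R/I\hat R \le \mu \iff 0\text{ is isolated in }V(\bC).
\]
Given assumption $\assH$, it is enough to check the two implications that correspond to the two alternatives of $\assH$.

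For the direction where $0$ is isolated with multiplicity $\mu_0\le\mu$, I would argue that $\langle\bC\rangle\hat R$ is already $\m$-primary with $\dim_{\KKbar}\hat R/\langle\bC\rangle\hat R=\mu_0$. A standard chain argument in $\hat R/\langle\bC\rangle\hat R$ (the descending chain $\m^j(\hat R/\langle\bC\rangle\hat R)$ must strictly decrease until it hits $0$, otherwise Nakayama's lemma would force it to vanish earlier) shows that $\m^{\mu_0}\subseteq\langle\bC\rangle\hat R$. Hence $\m^{\mu+1}\subseteq\langle\bC\rangle\hat R$, so $I\hat R=\langle\bC\rangle\hat R$ and $d=\mu_0\le\mu$.

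For the other direction, suppose $0$ lies on a positive-dimensional component of $V(\bC)$. Then $\langle\bC\rangle\hat R$ admits a minimal prime $\p\subseteq\m$ with $\dim\hat R/\p\ge 1$. In the local ring $\hat R/\p$ of dimension $\ge 1$, no power of the maximal ideal can equal the next (otherwise Nakayama would make it zero, forcing dimension~$0$), so the chain
\[
\hat R/\p \;\supsetneq\; \m/\p \;\supsetneq\; \m^2/\p \;\supsetneq\; \cdots \;\supsetneq\; \m^{\mu+1}/\p
\]
has $\mu+2$ distinct terms, giving $\ell(\hat R/(\p+\m^{\mu+1}))\ge\mu+1$. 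Since $\langle\bC\rangle\subseteq\p$, there is a surjection $\hat R/I\hat R\twoheadrightarrow \hat R/(\p+\m^{\mu+1})$, and therefore $d\ge\mu+1>\mu$.

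The only real delicate point is the Nakayama step together with justifying existence of a minimal prime of $\langle\bC\rangle\hat R$ of positive Krull dimension passing through $0$; both are standard consequences of the primary decomposition in $\hat R$ and the hypothesis that $0$ belongs to a positive-dimensional irreducible component of $V(\bC)$. Everything else is bookkeeping: combining the two implications with assumption $\assH$ gives the equivalence claimed.
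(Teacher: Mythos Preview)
Your proof is correct and takes a somewhat different route from the paper's. The paper invokes \cite[Theorem~A.1]{BaHaPeSo09} as a black box: that result states that the sequence $\nu_k=\dim_{\KKbar}\KKbar[\bX]/(\langle\bC\rangle+\m^k)$ is non-decreasing and eventually stabilizes if and only if $0$ is isolated in $V(\bC)$. From this the paper reads off $d=\nu_{\mu+1}\le\mu$ in the isolated case (the multiplicity with respect to $\langle\bC\rangle$ is at most $\mu$ by $\assH$, and $d$ cannot exceed it) and $d=\nu_{\mu+1}\ge\mu+1$ otherwise (since then $\nu_{k+1}>\nu_k$ for all $k$, whence $\nu_k\ge k$). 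Your argument in the completion $\hat R=\KKbar[[\bX]]$ is essentially a self-contained proof of the same dichotomy: the Nakayama-based chain argument in $\hat R/\langle\bC\rangle\hat R$ (respectively in $\hat R/\p$) is precisely what underlies the stabilization (respectively the strict growth) of the $\nu_k$. What your version buys is independence from the external reference and a more transparent mechanism; what the paper's version buys is brevity. The one step you flag as delicate---that a positive-dimensional component of $V(\bC)$ through $0$ produces a minimal prime of $\langle\bC\rangle\hat R$ of positive Krull dimension---follows from the standard fact that completion of a Noetherian local ring preserves Krull dimension, so your proof is complete as written.
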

\begin{proof}
  This follows from the following
  result~\cite[Theorem~A.1]{BaHaPeSo09}.  For $k \ge 1$, let $I_k$ be
  the zero-dimensional ideal $\langle \bC \rangle + \m^{k}$, and let
  $\nu_k$ be the multiplicity of the origin with respect to this
  ideal. Then, the reference above proves that the sequence
  $(\nu_k)_{k \ge 1}$ is non-decreasing, and that $0$ is isolated in
  $V(\bC)$ if and only if there exists $k\ge 1$ such that
  $\nu_k=\nu_{k+i}$ for any $i\geq 0$.
  \begin{itemize}
  \item If $0$ is isolated in $V(\bC)$, then by assumption $\assH$ 
    its multiplicity with respect to $\langle \bC\rangle$ is at most $\mu$,
    and its multiplicity $d$ with respect to $I$ cannot be larger.
  \item Otherwise, by the result above, $\nu_{k+1} > \nu_k$ holds for
    all $k \ge 1$, so that $\nu_k \ge k$ holds for all such $k$ (since
    $\nu_1=1$). In particular, the multiplicity $d$ of 
 $I$ at the origin, which is $\nu_{\mu+1}$, is at least $\mu+1$.
    \qedhere
  \end{itemize}
\end{proof}

Hence, we are left with deciding whether the multiplicity $d$ of the
ideal $I$ at the origin is at most $\mu$; remark that this
multiplicity is equal to the dimension of $\LL[\bX]/I$, since $I$ is
$\m$-primary.  We do this by following and slightly modifying
Mourrain's algorithm for the computation of the orthogonal
$I^{\perp}$, that is, the set of $\LL$-linear forms $\LL[\bX] \to \LL$
that vanish on $I$; this is a $\LL$-vector space naturally identified
with the dual of $\LL[\bX]/I$, so it has dimension $d$, the
multiplicity of $I$ at the origin.

We do not need to give all details of the algorithm, let alone proof
of correctness; we just mention the key ingredients for the cost
analysis in our setting. 

The algorithm represents the elements in $I^{\perp}$ by means of {\em
  multiplication matrices}. An important feature of $I^{\perp}$ is
that it admits the structure of a $\LL[\bX]$-module: for $k$ in
$\{1,\dots,n\}$ and $\beta$ in $I^{\perp}$, the $\LL$-linear form $X_k
\cdot \beta: f \mapsto \beta(X_k f)$ is easily seen to still lie in
$I^{\perp}$.  In particular, if
$\bbeta=(\beta_1,\dots,\beta_d)$ is an $\LL$-basis of
$I^{\perp}$, then for all $k$ as above, and all $i$ in
$\{1,\dots,d\}$, $X_k \cdot \beta_i$ is a linear combination of
$\beta_1,\dots,\beta_d$. Mourrain's algorithm computes a basis
$\bbeta=(\beta_1,\dots,\beta_d)$ with the following features:
\begin{itemize}
\item for $i$ in $\{1,\dots,d\}$ and $k$ in $\{1,\dots,n\}$, we have
  $X_k \cdot \beta_i=\sum_{0 \le j < i} \lambda^{(k)}_{i,j} \beta_j$
  (hence $\lambda^{(k)}_{i,j}$ may be non-zero 
  only for $j<i$);
\item $\beta_1$ is the evaluation at $0$, $f \mapsto f(0)$;
\item for $i$ in $\{2,\dots,d\}$, $\beta_i(1)=0$.
\end{itemize}
The following lemma shows that the coefficients $(\lambda^{(k)}_{i,j})$
are sufficient to evaluate  the linear forms $\beta_i$ at any $f$ in
$\LL[\bX]$. More precisely, knowing only their values for $j < i \le s$,
for any $s \le d$, allows us to evaluate $\beta_1,\dots,\beta_s$ at such an $f$.
The following lemma follows~\cite{Mourrain97} in its description
of the matrices $\bM_{k,s}$; the (rather straightforward) complexity analysis 
in the straight-line program model is new.
\begin{lemma}\label{lemma:evalbeta}
   Let $s$ be in $1,\dots,d$, and suppose that the coefficients
  $\lambda^{(k)}_{i,j}$ are known for $i=1,\dots,s$, $j=0,\dots,i-1$
  and $k=1,\dots,n$. Given a straight-line program $\Gamma$ of length
  $\sigma$ that computes $\h=(h_1,\dots,h_R)$, one can compute
  $\beta_i(h_r)$, for all $i=1,\dots,s$ and $r=1,\dots,R$, using
  $O(s^3\,\sigma)$ operations in $\LL$.
\end{lemma}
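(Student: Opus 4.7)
The plan is to process the straight-line program $\Gamma$ bottom-up, maintaining at each intermediate value $h$ not just the $s$ scalars $(\beta_1(h),\dots,\beta_s(h))$ but the full $s \times s$ matrix $M(h)$ that represents multiplication by $h$ on the subspace $W := \mathrm{span}_{\LL}(\beta_1,\dots,\beta_s) \subset (\LL[\bX])^*$, in the basis $(\beta_1,\dots,\beta_s)$; explicitly, $h \cdot \beta_i = \sum_{j=1}^{s} M(h)_{i,j}\,\beta_j$ for each $i \le s$, where $h \cdot \beta$ denotes the linear form $f \mapsto \beta(hf)$. The crucial point is that $W$ is stable under multiplication by each $X_k$: this is exactly the content of the strictly lower-triangular relation $X_k \cdot \beta_i = \sum_{0 \le j < i} \lambda^{(k)}_{i,j} \beta_j$, in which only $\beta_j$ with $j < i \le s$ appear. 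Hence $W$ is stable under multiplication by any polynomial and $M(h)$ is a well-defined element of $\LL^{s \times s}$.

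With this invariant, the base cases and the propagation rules are immediate. One has $M(1) = \mI_s$, $M(\alpha) = \alpha \mI_s$ for a constant $\alpha \in \LL$, and $M(X_k) = \bM_{k,s}$ is precisely the lower-triangular matrix of entries $\lambda^{(k)}_{i,j}$. Since $M$ is an algebra homomorphism $\LL[\bX] \to \mathrm{End}_{\LL}(W) \cong \LL^{s \times s}$, each addition node $h = h_1 \pm h_2$ in $\Gamma$ is handled by $M(h) = M(h_1) \pm M(h_2)$ at cost $O(s^2)$, each multiplication node $h = h_1 \cdot h_2$ by the matrix product $M(h) = M(h_1)\,M(h_2)$ at cost $O(s^3)$, and scalar multiplications similarly at cost $O(s^2)$. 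Once all $\sigma$ nodes have been processed, each required value is read off as
\[
\beta_i(h) \;=\; (h \cdot \beta_i)(1) \;=\; \sum_{j=1}^{s} M(h)_{i,j}\,\beta_j(1) \;=\; M(h)_{i,1},
\]
using the normalization $\beta_1(1) = 1$ and $\beta_j(1) = 0$ for $j \ge 2$; so the answers are the first columns of the matrices $M(h_r)$.

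Summing over the $\sigma$ operations, the multiplication nodes dominate and give a total cost of $O(s^3 \sigma)$ operations in $\LL$, with the $O(sR)$ cost of extracting the $R$ outputs absorbed. The main (and essentially only) obstacle is to pick this enriched invariant and to verify the stability of $W$ under the action of the $X_k$, because otherwise the truncation to $(\beta_1,\dots,\beta_s)$ would not be closed under multiplication and one would be forced to carry the full $d \times d$ multiplication matrices, inflating the cost to $O(d^3 \sigma)$. Both ingredients come from the strictly lower-triangular form of Mourrain's basis, which is built into the hypotheses of the lemma, so once the invariant is identified the bookkeeping in the straight-line program model is routine.
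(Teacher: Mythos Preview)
Your proof is correct and follows essentially the same approach as the paper: both evaluate the straight-line program over the ring of $s\times s$ matrices, replacing each variable $X_k$ by its multiplication matrix on $W=\mathrm{span}(\beta_1,\dots,\beta_s)$, and read off the answers from the first column. The only difference is presentational: you package the correctness as ``$W$ is an $\LL[\bX]$-submodule, hence $M$ is an algebra map,'' whereas the paper writes down the matrices $\bM_{k,s}$ explicitly and checks their pairwise commutativity directly before invoking $h(\bM_{1,s},\dots,\bM_{n,s})$; the two arguments are equivalent.
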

\begin{proof}
  By definition, for $h$ in $\LL[\bX]$ and $k=1,\dots,n$, the following equality
  holds:
$$
  \begin{bmatrix}
    \beta_1(X_k h)\\
    \vdots\\
    \beta_s(X_k h)
  \end{bmatrix}=
\bM_{k,s}
  \begin{bmatrix}
    \beta_1(h)\\
    \vdots\\
    \beta_s(h)
  \end{bmatrix},
\quad\text{with}\quad
\bM_{k,s}= \begin{bmatrix}
    \lambda^{(k)}_{1,1} & \cdots & \lambda^{(k)}_{s,1}\\
    \vdots && \vdots \\
    \lambda^{(k)}_{1,s} & \cdots & \lambda^{(k)}_{s,s}
  \end{bmatrix}.
$$ 
 Remark that the matrices $\bM_{k,s}$ all commute with each other. Indeed, 
for any $k,k'$ in $\{1,\dots,n\}$, and $h$ as above, the relation above implies
that 
$$
\Delta_{k,k',s}
  \begin{bmatrix}
    \beta_1(h)\\
    \vdots\\
    \beta_s(h)
  \end{bmatrix} =
  \begin{bmatrix}
0\\ \vdots \\ 0 
  \end{bmatrix},
$$
where $\Delta_{k,k',s} = \bM_{k,s}\bM_{k',s}-\bM_{k',s}\bM_{k,s}.$ Because 
the linear forms $\beta_1,\dots,\beta_s$ are linearly independent, this implies
that all rows of $\Delta_{k,k',s}$ must be zero, as claimed.
We then deduce that for any polynomial $h$ in $\LL[\bX]$, we have
the equality
$$  \begin{bmatrix}
    \beta_1(h)\\
    \vdots\\
    \beta_s(h)
  \end{bmatrix} =
h(\bM_{1,s},\dots,\bM_{n,s})   \begin{bmatrix}
    \beta_1(1)\\
    \vdots\\
    \beta_s(1)
  \end{bmatrix}. $$ On the other hand, our assumptions imply that the
  sequence $(\beta_1(1),\dots,\beta_s(1))$ is simply $(1,0,\dots,0)$.
  To prove the lemma, it is then enough to note that the evaluations \sloppy
  $h_1(\bM_{1,s},\dots,\bM_{n,s}),\dots,h_R(\bM_{1,s},\dots,\bM_{n,s})$
  can be computed using the straight-line program doing
  $O(s^3\,\sigma)$ operations.
\end{proof}

Mourrain's algorithm proceeds in an iterative manner, starting from
$\bbeta^{(1)}=(\beta_{1})$ (and setting $e_1=1$), and computing
successively $\bbeta^{(2)}=(\beta_{e_1+1},\dots,\beta_{e_2})$,
$\bbeta^{(3)}=(\beta_{e_2+1},\dots,\beta_{e_3})$, \dots for some
integers $e_1 \le e_2 \le e_3 \dots$ Mourrain's algorithm stops when
$e_{\ell+1}=e_{\ell}$, in which case $\beta_1,\dots,\beta_{e_\ell}$ is
an $\LL$-basis of $I^\perp$, and $e_\ell=d$. In our case, we
are not interested in computing this multiplicity, but only in
deciding whether it is less than or equal to the parameter $\mu$. We do it as follows: assume that we have
computed $\bbeta^{(1)},\bbeta^{(2)},\dots,\bbeta^{(\ell)}$, together
with the corresponding integers $e_1,e_2,\dots,e_\ell$, with $e_1 <
\cdots < e_\ell \le \mu$. We compute $\bbeta^{(\ell+1)}$ and $e_{\ell+1}$,
and continue according to the following:
\begin{itemize}
\item if $e_{\ell+1}=e_{\ell}$, we conclude that the multiplicity
  $d$ of $I$ at the origin is $e_\ell \le \mu$; we stop the
  algorithm;
\item if $e_{\ell+1} > \mu$, we conclude that this multiplicity is greater 
  than $\mu$; we stop the algorithm;
\item else, when $e_\ell < e_{\ell+1} \le \mu$, we do another loop.
\end{itemize}
Because the $e_\ell$'s are an increasing sequence of integers, they
satisfy $e_\ell \ge \ell$; hence, every time we enter the loop above we
have $\ell \le \mu$. To finish the analysis of the algorithm, it
remains to explain how to compute $\bbeta^{(\ell+1)}$ from
$(\bbeta^{(1)},\bbeta^{(2)},\dots,\bbeta^{(\ell)})=(\beta_{1},\dots,\beta_{e_\ell})$.

As per our description above, at any step of the algorithm,
$\beta_{1},\dots,\beta_{e_\ell}$ are represented by means of the
coefficients $\lambda^{(k)}_{i,j}$, for $0 \le j < i \le e_{\ell}$ and
$1 \le k \le n$.  At step $\ell$, Mourrain's algorithm solves a homogeneous linear system
$T_\ell$ with $n(n-1) e_\ell/2+m'$ equations and $n e_\ell$ unknowns,
where $m'$ is the number of generators of the ideal $I= \langle \bC
\rangle + \m^{\mu+1}$. Remark that $m'$ is not polynomial in $\mu$ 
and $n$, so the size of $T_\ell$ is {\em a priori} too large to 
fit our cost bound; we will explain below how to resolve this issue.

The nullspace dimension of this linear system gives us the cardinality
$e_{\ell+1}-e_{\ell}$ of $\bbeta^{(\ell+1)}$. Similarly, the coordinates of
the $e_{\ell+1}-e_{\ell}$ vectors in a nullspace basis are precisely
the coefficients $\lambda^{(k)}_{i,j}$ for
$i=e_{\ell}+1,\dots,e_{\ell+1}$, $j=1,\dots,e_\ell$ and $k=1,\dots,n$
(we have $\lambda^{(k)}_{i,j}=0$ for $j=e_{\ell}+1,\dots,i-1$). For
all $\ell \ge 2$, all linear forms $\beta$ in $\bbeta^{(\ell)}$ are
such that for all $k$ in $\{1,\dots,n\}$, $X_k \cdot \beta$ belongs to
the span of $\bbeta^{(1)},\dots,\bbeta^{(\ell-1)}$; in particular, a
quick induction shows that all linear forms in
$\bbeta^{(1)},\dots,\bbeta^{(\ell)}$ vanish on all monomials of degree
at least $\ell$.

There remains the question of setting up the system $T_\ell$. For $k$
in $\{1,\dots,n\}$ and an $\LL$-linear form $\beta$, we denote by
$X_k^{-1} \cdot \beta$ the $\LL$-linear form defined by $\LL$-linearity as follows:
\begin{itemize}
\item $(X_k^{-1} \cdot \beta)(X_k f) = \beta(f)$ for any monomial $f$ in $\LL[\bX]$,
\item $(X_k^{-1} \cdot \beta)(f)=0$ if $f\in \LL[\bX]$ is a monomial which does not depend on $X_k$.
\end{itemize}
In other words,
$(X_k^{-1} \cdot \beta)(f)=\beta(\delta_k(f))$ holds for all $f$,
where $\delta_k:\LL[\bX] \to \LL[\bX]$ is the $k$th divided difference
operator
$$f\mapsto \frac
{f(X_1,\dots,X_n)-f(X_1,\dots,X_{k-1},0,X_{k+1},\dots,X_n)}{X_k}.$$
One verifies that, as the notation suggests, $X_k \cdot (X_k^{-1}
\cdot \beta)$ is equal to $\beta$. This being said, we can then
describe what the entries of $T_\ell$ are:
\begin{itemize}
\item the first $n(n-1) e_\ell/2$ equations involve only the coefficients 
  $\lambda^{(k)}_{i,j}$ previously computed (we refer to~\cite[Section~4.4]{Mourrain97} for details of how exactly 
these entries are distributed in $T_\ell$, as we do not need such details here).
\item each of the other $m'$ equations has coefficient vector
{\small
$$v_f = \big (\
 (X_k^{-1} \cdot \beta_1)(f(X_1,\dots,X_k,0,\dots,0)),\dots,\ (X_k^{-1} \cdot \beta_{e_\ell})(f(X_1,\dots,X_k,0,\dots,0))\
\big )_{1 \le k \le n},$$}
where $f$ is a generator of $I=\langle \bC \rangle +\m^{\mu+1}$.
\end{itemize}
We claim that only those equations corresponding to generators
$c_1,\dots,c_m$ of the input system $\bC$ are useful, as all others are identically
zero.

We pointed out above that any linear form $\beta_i$ in
$\beta_1,\dots,\beta_{e_\ell}$ vanishes on all monomials of degree at
least $\ell$. Since we saw that we must have $\ell \le \mu$, all
$\beta_i$ as above vanish on monomials of degree $\mu$; this implies
that $X_k^{-1}\cdot \beta_i$ vanishes on all monomials of degree
$\mu+1$. The generators $f$ of $\m^{\mu+1}$ have degree $\mu+1$, and
for any such $f$, $f(X_1,\dots,X_k,0,\dots,0)$ is either zero, or of
degree $\mu+1$ as well. Hence, for any $k$, $\beta_i$ in
$\beta_1,\dots,\beta_{e_\ell}$ and $f$ as above, $(X_k^{-1} \cdot
\beta_i)(f(X_1,\dots,X_k,0,\dots,0))$ vanishes. This implies that the
vector $v_f$ is identically zero for such an $f$, and that the
corresponding equation can be discarded.

Altogether, as claimed above, we see that we have to compute the
values
$$(X_k^{-1} \cdot \beta_i)(c_j(X_1,\dots,X_k,0,\dots,0)),$$ for
$k=1,\dots,n$, $i=1,\dots,e_\ell$ and $j=1,\dots,m$.  Fixing $k$, we
let $\bC_k = (c_{j,k})_{1 \le j \le m}$, where $c_{j,k}$ is the
polynomial $c_j(X_1,\dots,X_k,0,\dots,0)$; note that the system
$\bC_k$ can be computed by a straight-line program of length
$\sigma'= \sigma+n$. Then, applying the following lemma with
$s=e_\ell \le \mu$ and $\h = \bC_k$, we deduce that the values
$(X_k^{-1} \cdot \beta_i)(c_j(X_1,\dots,X_k,0,\dots,0))$, for $k$
fixed, can be computed in time $O(\mu^3 (\sigma+n))$.

\begin{lemma}
  Let $s$ be in $1,\dots,d$, and suppose that the coefficients
  $\lambda^{(k)}_{i,j}$ are known for $i=1,\dots,s$, $j=0,\dots,i-1$
  and $k=1,\dots,n$. Given a straight-line program $\Gamma$ of length
  $\sigma$ that computes $\h=(h_1,\dots,h_R)$ and given $k$ in
  $\{1,\dots,n\}$, one can compute $(X_k^{-1}\cdot \beta_i)(h_r)$, for
  all $i=1,\dots,s$ and $r=1,\dots,R$, using $O(s^3 (\sigma+n))$
  operations in $\LL$.
\end{lemma}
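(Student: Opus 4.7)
The plan is to reduce the statement directly to Lemma~\ref{lemma:evalbeta}. By the very definition of $X_k^{-1}\cdot\beta$ recalled just before the statement, we have $(X_k^{-1}\cdot\beta_i)(h_r)=\beta_i(\delta_k(h_r))$ for all $i,r$, where $\delta_k$ is the $k$th divided difference operator. So the whole task is to produce a straight-line program $\Gamma'$ of length $O(\sigma+n)$ that, on inputs $X_1,\dots,X_n$, outputs $\delta_k(h_1),\dots,\delta_k(h_R)$; once this is done, Lemma~\ref{lemma:evalbeta} applied to $\Gamma'$ immediately yields all values $\beta_i(\delta_k(h_r))$ in $O(s^3(\sigma+n))$ operations in $\LL$.

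To build $\Gamma'$, I would follow a Baur--Strassen-style forward transformation of $\Gamma$. For each wire of $\Gamma$ carrying a polynomial $P\in\LL[\bX]$, I maintain in $\Gamma'$ the triple $(P,P^{\circ},\delta_k P)$, where $P^{\circ}=P(X_1,\dots,X_{k-1},0,X_{k+1},\dots,X_n)$. The initialization costs $O(n)$ operations: the input wire carrying $X_k$ gives the triple $(X_k,0,1)$, and for $i\neq k$ the input wire carrying $X_i$ gives $(X_i,X_i,0)$. Each addition gate $P=Q+R$ of $\Gamma$ is replaced by three componentwise additions. Each multiplication gate $P=Q\cdot R$ of $\Gamma$ is replaced using the two identities $(QR)^{\circ}=Q^{\circ}R^{\circ}$ and
\[
\delta_k(QR)=(\delta_k Q)\cdot R+Q^{\circ}\cdot(\delta_k R),
\]
both of which follow from the telescoping
\[
QR-Q^{\circ}R^{\circ}=(Q-Q^{\circ})R+Q^{\circ}(R-R^{\circ})
\]
after dividing by $X_k$. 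This replaces each gate of $\Gamma$ by a constant number of new arithmetic operations, so $\Gamma'$ has total length $O(\sigma+n)$ and, in particular, computes $\delta_k(h_1),\dots,\delta_k(h_R)$ among its outputs.

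Feeding $\Gamma'$ into Lemma~\ref{lemma:evalbeta} with the same $s$ then produces $\beta_i(\delta_k(h_r))=(X_k^{-1}\cdot\beta_i)(h_r)$ for all $i=1,\dots,s$ and $r=1,\dots,R$ in $O(s^3(\sigma+n))$ operations in $\LL$, which is the claimed bound. There is no genuinely delicate step: the reduction to Lemma~\ref{lemma:evalbeta} is immediate from the definition of $X_k^{-1}\cdot\beta_i$, and the only thing to check is the Leibniz-type identity for $\delta_k$, which is an elementary polynomial manipulation valid in any characteristic and, in particular, fits in the characteristic-zero setting already assumed in this section.
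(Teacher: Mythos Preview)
Your proof is correct and follows essentially the same approach as the paper: both reduce to Lemma~\ref{lemma:evalbeta} via the identity $(X_k^{-1}\cdot\beta)(f)=\beta(\delta_k(f))$, and both build an $O(\sigma+n)$ straight-line program for the $\delta_k(h_r)$ by a forward pass maintaining, for each intermediate value, its evaluation at $X_k=0$ together with its divided difference. The only cosmetic difference is that you carry the triple $(P,P^\circ,\delta_k P)$ and use the Leibniz rule $\delta_k(QR)=(\delta_k Q)R+Q^\circ(\delta_k R)$, whereas the paper carries only the pair $(P^\circ,\delta_k P)$ and writes the equivalent product rule $\nu_{ab}=\eta_a\nu_b+\nu_a\eta_b+X_k\nu_a\nu_b$.
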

\begin{proof}
  In view of the formula $(X_k^{-1} \cdot
  \beta)(f)=\beta(\delta_k(f))$, and of Lemma~\ref{lemma:evalbeta}, it is
  enough to prove the existence of a straight-line program of length
  $O(\sigma+n)$ that computes $(\delta_k(h_1),\dots,\delta_k(h_R))$.

  To do this, we replace all polynomials
  $\gamma_{-n+1},\dots,\gamma_\sigma$ computed by $\Gamma$ by terms
  $\eta_{-n+1},\dots,\eta_\sigma$ and $\nu_{-n+1},\dots,\nu_\sigma$,
  with
  $\eta_\ell=\gamma_\ell(X_1,\dots,X_{k-1},0,X_{k+1},\dots,X_n)$
  and $\nu_\ell$ in $\LL[\bX]$ such that
  $\gamma_\ell= \eta_\ell+X_k \nu_\ell$ holds for all $\ell$, so
  that in particular $\nu_\ell=\delta_k(\gamma_\ell)$.  To compute
  $\eta_\ell$ and $\nu_\ell$, assuming all previous
  $\eta_{\ell'}$ and $\nu_{\ell'}$ are known, we proceed as
  follows:
  \begin{itemize}
  \item if $\gamma_\ell=X_k$, we set $\eta_\ell=0$ and $\nu_\ell=1$;
  \item if $\gamma_\ell=X_{k'}$, with $k' \ne k$, we set $\eta_\ell=X_{k'}$ and $\nu_\ell=0$;
  \item if $\gamma_\ell =c_\ell$, with $c_\ell \in \LL$,
    then we set $\eta_\ell=c_\ell$ and  $\nu_\ell=0$;
  \item if $\gamma_\ell = \gamma_{a_\ell} \pm \gamma_{b_\ell}$,
    for some indices $a_\ell,b_\ell < \ell$, 
    then we set $\eta_\ell=\eta_{a_\ell}\pm\eta_{b_\ell}$
    and $\nu_\ell=\nu_{a_\ell}\pm\nu_{b_\ell}$;
\item if $\gamma_\ell = \gamma_{a_\ell} \gamma_{b_\ell}$,
      for some indices $a_\ell,b_\ell < \ell$,
    then we set $\eta_\ell=\eta_{a_\ell} \eta_{b_\ell}$
    and $$\nu_\ell=
\eta_{a_\ell} \nu_{b_\ell}
+
\nu_{a_\ell} \eta_{b_\ell}
+
X_k\nu_{a_\ell} \nu_{b_\ell}.$$
\end{itemize}
One verifies that in all cases, the relation $\gamma_\ell=
\eta_\ell+X_k \nu_\ell$ still holds. Since the previous
construction allows us to compute $\eta_\ell$ and $\nu_\ell$ in
$O(1)$ operations from the knowledge of all previous $\eta_{\ell'}$
and $\nu_{\ell'}$, we deduce that all $\eta_\ell$ and $\nu_\ell$,
for $\ell=-n+1,\dots,\sigma$, can be computed by a straight-line program of
length $O(\sigma+n)$.
\end{proof}

Taking all values of $k$ into account, we see that we can compute all
entries we need to set up the linear system $T_\ell$ using $O(\mu^3
n(\sigma+n))$ operations in $\LL$. After discarding the useless equations
described above, the numbers of equations and unknowns in the system
$T_\ell$ are respectively at most $n^2 \mu+m$ and $n \mu$; this
implies that we can find a nullspace basis of it in time $O(n^4 \mu^3
+ n^2 m \mu^2)$. Altogether, the time spent to find
$\bbeta^{(\ell+1)}$ from
$(\bbeta^{(1)},\bbeta^{(2)},\dots,\bbeta^{(\ell)})=(\beta_{1},\dots,\beta_{e_\ell})$
is $O(n^4 \mu^3 + n^2 m \mu^2 + n \sigma \mu^3)$.

Since we saw that we do at most $\mu$ such loops, the cumulated time
is $O(n^4 \mu^4 + n^2 m \mu^3 + n \sigma \mu^4)$, and
Proposition~\ref{prop:testisolated} is proved.


\section{Symbolic homotopies}\label{sec:homotopy}

In this section, we work over our field $\KK$, still using $n$ variables
$\bX=(X_1,\dots,X_n)$. Given polynomials $\bC=(c_1,\dots,c_m)$ in
$\KK[\bX]^m$, we give algorithms to compute a zero-dimensional
parametrization of the isolated points (or simple  points) of
$V(\bC)$, assuming the existence of a suitable {\em homotopy
  deformation} of $\bC$. We assume $m\ge n$, otherwise no isolated
point exists in $V(\bC)$.

Let $T$ be a new variable and consider polynomials
$\bB=(b_1,\dots,b_m)$ in $\KK[T,\bX]$; for $\tau$ in $\KKbar$,
we write $\bB_\tau=(b_{\tau,1},\dots,b_{\tau,m})=\bB(\tau,\bX)\subset
\KKbar[\bX]$ and we assume that $\bB$ is such that $\bB_1=\bC$.
Define further the ideal $J=\langle \bB \rangle \subset \KKbar[T,\bX]$ and
consider the folllowing assumptions.
\begin{description}[leftmargin=*]
\item[$\assA_1.$] Any irreducible component of $V(J) \subset
  \KKbar{}^{n+1}$ has dimension at least one.
\item[$\assA_2.$] For any maximal ideal $\m \subset\KKbar[T,\bX]$, if the
  localization $J_\m \subset \KKbar[T,\bX]_\m$ has height $n$, then it is
  unmixed (that is, all associated primes have height $n$).
\end{description}
An obvious example where such properties hold is when $m=n$. Then,
$\assA_1$ is Krull's theorem, and $\assA_2$ is Macaulay's unmixedness
theorem in the Cohen-Macaulay ring
$\KKbar[T,\bX]_\m$~\cite[Corollary~18.14]{Eisenbud95}. More generally,
these properties hold when $\bB$ is the sequence of $p$-minors of a $p
\times q$ matrix with entries in $\KK[T,\bX]$, with $p \le q$ and $n=q-p+1$; we
discuss this, and a slightly more general situation, in Section~\ref{sec:check}.  

For $\tau$ in $\KKbar$, we further denote by $\assG(\tau)$ the
following three properties.
\begin{description}[leftmargin=*]
\item[$\assG_1(\tau).$] For $k=1,\dots,m$,
  $\deg_\bX(b_k)=\deg_\bX(b_{\tau,k})$ (where $\deg_\bX$ denotes the
  degree in $\bX$).
\item[$\assG_2(\tau).$] The only common solution to
  $b_{\tau,1}^H(\tau,\bX)=\cdots=b_{\tau,m}^H(\tau,\bX)=0$ is
  $(0,\dots,0)\in\KKbar{}^n$, where for $k=1,\dots,m$, $b_{\tau,k}^H$ is
  the polynomial in $\KKbar[X_0,\bX]$ obtained by homogenizing
  $b_{\tau,k}$ using a new variable $X_0$. In particular, $V(\bB_\tau)
  \subset \KKbar{}^n$ is finite.
\item[$\assG_3(\tau).$] The ideal $\langle \bB_\tau \rangle$ is
  radical in $\KKbar[\bX]$.
\end{description}

The first result in this section is the following.
\begin{proposition}\label{prop:degree_fiber}
  Suppose that assumptions $\assA_1$ and $\assA_2$ hold. Then, there
  exists an integer $c$ such that for all $\tau$ in $\KKbar$, the sum
  of the multiplicities of the isolated solutions of $\bB_\tau$ is at
  most $c$, and is equal to $c$ if $\assG(\tau)$ holds.
\end{proposition}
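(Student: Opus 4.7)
The plan is to define $c$ directly from a primary decomposition of $J \subset \KKbar[T,\bX]$, then handle the two claims separately: the upper bound via a local Cohen--Macaulay argument using $\assA_2$, and the equality via a projective compactification controlled by $\assG(\tau)$. Write a minimal primary decomposition $J = Q_1 \cap \cdots \cap Q_s$ with associated primes $P_i = \sqrt{Q_i}$; call $P_i$ \emph{horizontal} when $P_i \cap \KKbar[T] = \{0\}$, equivalently when the projection $V(P_i) \to \A^1$ is dominant. By $\assA_1$ every minimal $P_i$ satisfies $\dim V(P_i) \ge 1$, so I will set
$$c \;:=\; \sum_i m_i\, d_i,$$
where the sum ranges over those minimal $P_i$ that are horizontal and of dimension $1$, $m_i = \mathrm{length}\bigl(\KKbar[T,\bX]_{P_i}/Q_{i,P_i}\bigr)$ is the local multiplicity of $Q_i$, and $d_i = [\mathrm{Frac}(\KKbar[T,\bX]/P_i) : \KKbar(T)]$ is the generic fiber count of the generically finite morphism $V(P_i) \to \A^1$.

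For the upper bound, I fix $\tau \in \KKbar$ and an isolated point $\bx$ of $V(\bB_\tau)$, and localize at $\m = (T-\tau, \bX-\bx)$, writing $A = \KKbar[T,\bX]_\m$. Isolation of $\bx$ means $J_\m + (T-\tau)A$ is $\m$-primary, hence of height $n+1$, which forces $\mathrm{ht}(J_\m) \ge n$; combined with the upper bound $\mathrm{ht}(J_\m) \le n$ coming from $\assA_1$ applied to the minimal primes of $J$ through $(\tau,\bx)$, I obtain $\mathrm{ht}(J_\m) = n$. Then $\assA_2$ applies and $R := A/J_\m$ is a one-dimensional Cohen--Macaulay local ring. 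Isolation of $\bx$ also forbids $T-\tau$ from any associated prime of $J_\m$ (else a local component of $V(J)$ through $(\tau,\bx)$ would sit inside $T=\tau$ and prevent isolation), so $T-\tau$ is $R$-regular and the multiplicity of $\bx$ equals $\dim_\KKbar R/(T-\tau)R$. By the associativity formula for lengths on the Cohen--Macaulay ring $R$, this decomposes as $\sum_P \mathrm{length}(R_P)\cdot \dim_\KKbar R/(P+(T-\tau)R)$ over the minimal primes $P$ of $R$, each of which is the localization of a global minimal horizontal prime $P_i$ of $J$ with $\dim V(P_i)=1$. Summing over all isolated $\bx$ and regrouping by $i$, the inner sum becomes the total intersection multiplicity of the curve $V(P_i)$ with the hyperplane $T=\tau$, which is bounded above by the generic degree $d_i$ since $V(P_i) \to \A^1$ is generically finite of degree $d_i$; this yields $c(\tau) \le \sum_i m_i d_i = c$.

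For the equality under $\assG(\tau)$, I plan to homogenize each $b_k$ in $\bX$ using a new variable $X_0$ and work inside $\A^1 \times \P^n$. Assumption $\assG_1(\tau)$ ensures that $\bX$-degrees are preserved upon specialization to $T=\tau$, so homogenization commutes with specialization; assumption $\assG_2(\tau)$ then asserts that the fiber at $\tau$ has no zero at infinity. Upper-semicontinuity of fiber dimension combined with $\assG_2$ forbids any horizontal component of $V(J)$ of dimension $>1$, so the horizontal part of $V(J)$ is pure of dimension $1$ and its closure in $\A^1\times\P^n$ is a proper one-dimensional Cohen--Macaulay family over $\A^1$, hence flat by miracle flatness, with constant fiber length $c$. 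Since $V(\bB_\tau)$ is finite and lies entirely in the affine chart, every point is isolated and the total affine fiber length equals $c$, giving $c(\tau)=c$; assumption $\assG_3(\tau)$ is auxiliary to this identity and is mostly used in the radicality arguments appearing elsewhere in the paper.

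The main obstacle will be the equality step, specifically proving rigorously that no intersection multiplicity is lost at infinity under $\assG_2(\tau)$ and cleanly handling the possible presence of vertical components, which contribute to neither $c$ nor $c(\tau)$. The upper bound, by contrast, is a direct consequence of the local structure imposed by $\assA_2$ together with standard length additivity on one-dimensional Cohen--Macaulay local rings.
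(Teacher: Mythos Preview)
Your approach is correct and takes a genuinely different route from the paper's. The paper defines $c$ as $\dim_{\KKbar(T)}\KKbar(T)[\bX]/\tilde{J'}$ (which coincides with your $\sum_i m_i d_i$), proves the upper bound by a single global linear-independence argument on monomial bases (Lemma~\ref{lemma:19}), and obtains equality by working with Puiseux-series roots of $\mathfrak{J}'$: boundedness from $\assG_1,\assG_2$, radicality of $\mathfrak{J}'$ and distinctness of limits from $\assG_3$. Your route---local Cohen--Macaulayness from $\assA_2$, the associativity formula, and flatness of a projective compactification---is more structural, and your remark that $\assG_3$ is not actually needed for the bare equality $c(\tau)=c$ is correct. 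What the paper's Puiseux argument buys, however, is an explicit power-series description of the lifted points, which is precisely what the algorithm in Section~\ref{proof:prop2} exploits; your argument proves the proposition but does not furnish that description.

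Two steps should be tightened. First, in the upper bound, the inequality ``fiber length of $V(P_i)$ over $\tau$ is at most $d_i$'' is not a formal consequence of generic finiteness, since $V(P_i)\to\A^1$ is not proper; the clean justification is that $\KKbar[T,\bX]/P_i$ is a domain, hence $\KKbar[T]$-torsion-free, and then the same lifting-of-independence argument as in Lemma~\ref{lemma:19} (applied with $P_i$ in place of $J'$) gives the bound. Second, for the equality, ``Cohen--Macaulay family, hence flat by miracle flatness'' is not the right mechanism: you have not shown the projective closure is Cohen--Macaulay, and miracle flatness would require checking fiber dimension at \emph{every} point of $\A^1$, which $\assG(\tau)$ does not give. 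The argument that works is simpler: since all associated primes of $J'$ are horizontal, $\KKbar[T,\bX]/J'$ is $\KKbar[T]$-torsion-free, hence flat; and because $\assG_1(\tau)+\assG_2(\tau)$ force the fiber of $\bar{X'}$ over $\tau$ to lie entirely in the affine chart, any $(T-\tau)$-torsion in $\mathcal{O}_{\bar{X'}}$ would be supported there, where there is none. This yields flatness of $\bar{X'}$ over $\mathcal{O}_{\A^1,\tau}$, and proper $+$ quasi-finite $+$ flat then gives the constant fiber length $c$.
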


We next give our algorithms for
\begin{itemize}
\item computing the isolated solutions of the polynomial system
  $\bC=(c_1,\dots,c_m)$;
\item computing the simple solutions of the polynomial system
  $\bC$.
\end{itemize}
In order to control the cost of the algorithm, we introduce the
following assumptions.
\begin{description}[leftmargin=*]
\item[${\assD}_1$.] We are given $\tau$ in $\KK$ such that $\assG(\tau)$
  holds; without loss of generality, we assume that $\tau=0$. We also
  suppose that we know a description of $V(\bB_{0})$ by means of a
  zero-dimensional parametrization  $\scrR_0 =((w_{0},v_{0,1},\dots,v_{0,n}),\lambda)$ with
  coefficients in $\KK$. The linear form $\lambda$ needs to satisfy
  some genericity requirements, that are described in
  Subsection~\ref{proof:prop2}.
\item[${\assD}_2$.] We know an integer $e$ such that the union of the
  one-dimensional components of $V(J)$ in $\KKbar{}^{n+1}$ has degree
  at most $e$ (we prove that $e \ge c$ in Lemma~\ref{lemma:e-geq-c}).
\item[${\assD}_3$.] We can compute $\bB$ using a straight-line program
  of length $\sigma$.
\end{description}
Then, the second main result in this section is the following.
\begin{proposition}\label{prop:compute_isolated}
  Assume that ${\assD}_1, {\assD}_2$ and ${\assD}_3$ hold. Let $c$
  be as in Proposition~\ref{prop:degree_fiber}. There exists a
  randomized algorithm ${\sf Homotopy}$ which computes a zero-dimensional
  parametrization of the isolated points of $V(\bC)$ using
  $$\softO(c^5 m n^2  + c(e+c^5) n(\sigma + n^3)) \subset (e\,\sigma\,m)^{O(1)}$$
  operations in~$\KK$. 
\end{proposition}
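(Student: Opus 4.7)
The plan is to prove Proposition~\ref{prop:compute_isolated} via a symbolic homotopy continuation algorithm that tracks the points of $V(\bB_0)$ (encoded by the input parametrization $\scrR_0$ from $\assD_1$) along the parameter $T$ up to $T=1$, and then filters the resulting candidates using the local dimension test of Proposition~\ref{prop:testisolated}. The two bulk contributions in the complexity bound then come, respectively, from the power series lift and from the filtering of spurious points.

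\textbf{Lifting phase.} Starting from $\scrR_0$, I would apply global symbolic Newton iteration to compute, modulo $T^N$ for a precision $N$ to be determined, a zero-dimensional parametrization $\scrR_T = ((w_T, v_{T,1}, \ldots, v_{T,n}), \lambda)$ with coefficients in $\KK[[T]]/T^N$ that represents the fiber of the homotopy curve above the formal parameter $T$. The iteration is carried out globally modulo $w_0(Y)$, so that all $c$ branches are handled simultaneously while the arithmetic stays over $\KK$. Assumption $\assG(0)$ together with the genericity of $\lambda$ in $\assD_1$ guarantees that the Jacobian of $\bB$ with respect to $\bX$ is invertible at the points of $V(\bB_0)$; this makes Newton iteration well-defined and quadratically convergent, and shows in particular that the $c$ analytic branches of the homotopy curve at $T=0$ are exactly the lifts of $V(\bB_0)$. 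Using the straight-line program of $\assD_3$ and Baur-Strassen's theorem, one Newton doubling step at precision $N'$ costs $\softO(c N'(\sigma + n^3))$ operations in $\KK$, covering the evaluation of $\bB$ and its Jacobian modulo $(w_0, T^{N'})$ and the solution of the resulting $n\times n$ linear system.

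\textbf{Recovery at $T=1$.} By $\assD_2$, the one-dimensional components of $V(J) \subset \KKbar{}^{n+1}$ have total degree at most $e$, so the coefficients of $\scrR_T$, viewed as elements of $\KK(T)$, are rational functions with numerator and denominator degrees of order $O(e)$ in $T$. Setting $N = O(e + c^5)$ and applying Pad\'e-type rational reconstruction to each coefficient yields a parametrization $\scrR_T^{\star}$ over $\KK(T)$. A random linear shift $T \mapsto T + t_0$, whose bad locus lies in a finite subset of $\KK$, ensures that no denominator vanishes at $T=1$; specializing at $T=1$ then yields a zero-dimensional parametrization $\scrR_1$ with coefficients in $\KK$ whose associated set $Z(\scrR_1)$ contains every isolated point of $V(\bC)$ by Proposition~\ref{prop:degree_fiber}, possibly together with spurious candidates coming from positive-dimensional components of $V(\bC)$ that are met by the homotopy curve.

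\textbf{Filtering phase and cost.} For each of the at most $c$ points $\bx \in Z(\scrR_1)$ I would run the local dimension test of Proposition~\ref{prop:testisolated} with multiplicity bound $\mu = c$, which is legitimate by Proposition~\ref{prop:degree_fiber}. Each test costs $O(n^4 c^4 + n^2 m c^3 + n \sigma c^4)$, for a cumulative $\softO(c^5 n^4 + c^4 m n^2 + c^5 n \sigma)$ absorbed in the stated bound; the lifting phase, run to precision $N = O(e + c^5)$ at per-step cost $\softO(c(\sigma+n^3))$, contributes the $\softO(c(e+c^5) n(\sigma+n^3))$ term. Removing the bad roots from $w_1$ by standard univariate arithmetic over $\KK$ then yields the desired parametrization of the isolated points of $V(\bC)$. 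The main obstacle is the joint genericity and precision bookkeeping: one must simultaneously justify that, for generic $\lambda$ and $t_0$, the $c$ branches of the homotopy curve are separated by $\lambda$, $\scrR_T^{\star}$ has no pole at $T=1$, and the Jacobian stays invertible along the lift, and that the precision $N = O(e + c^5)$ really suffices both for rational reconstruction (requiring $N \gtrsim e$) and for the downstream filtering step, whose power series arithmetic propagates the bound $c$ into the precision budget.
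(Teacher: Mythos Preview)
Your high-level plan (lift from $T=0$, reconstruct over $\KK(T)$, specialize at $T=1$, then filter with Proposition~\ref{prop:testisolated}) matches the paper's, but several steps are not quite right as written.

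\textbf{Newton on an overdetermined system.} The system $\bB$ has $m\ge n$ equations, so its Jacobian with respect to $\bX$ is $m\times n$; saying it is ``invertible'' and then solving an ``$n\times n$ linear system'' hides the real issue. There need not exist a single choice of $n$ equations whose square Jacobian is invertible simultaneously at all $c$ points of $V(\bB_0)$, so a global Newton iteration modulo $w_0$ is not well-defined. The paper first \emph{decomposes} $\scrR_0$ (via dynamic evaluation, at cost $\softO(c^2(mn^2+n\sigma))$) into pieces $\scrR_{0,j}$, each carrying a fixed index set $\bi_j$ of size $n$ with invertible Jacobian, and lifts each piece separately before recombining by CRT. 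This step is missing from your outline.

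\textbf{Specialization at $T=1$.} Pad\'e reconstruction gives a parametrization $\scrS$ over $\KK(T)$ of $V(\mathfrak{J}')$, but some of the $c$ branches may be \emph{unbounded} at $T=1$ (the curve may escape to infinity there). A random affine shift of $T$ does not remove these poles: they are intrinsic to the geometry, not artifacts of the parametrization. The paper handles this with the \emph{well-separating element} machinery of~\cite{RRS,SaSc16} (Lemma~4.4 in~\cite{RRS}), which extracts exactly the limit set $V(J'+\langle T-1\rangle)$ of the bounded branches. Your proposal does not explain how the bounded and unbounded branches are separated.

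\textbf{Cost of the filtering step.} The candidate points $\bx\in Z(\scrR_1)$ have coordinates in $\KK[Y]/\langle w_1\rangle$, not in $\KK$, so each call to the local test costs an extra factor $\softO(c)$ per arithmetic operation; on top of that, dynamic evaluation may split $w_1$ up to $c$ times, for a total overhead $\softO(c^2)$. This is why the paper obtains $\softO(c^6 n^4 + c^5 m n^2 + c^6 n\sigma)$ for this step, and in particular why the stated bound carries $c^5 m n^2$ rather than your $c^4 m n^2$. Relatedly, the lifting precision need only be $2e$: the $c^5$ in the final bound comes entirely from the filtering cost, not from the power-series precision, so inflating $N$ to $O(e+c^5)$ is unnecessary and your justification for it (``power series arithmetic propagates the bound $c$'') does not correspond to anything in the algorithm.
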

The variant below focuses on the computation of simple 
points. We reuse the notations introduced above.
\begin{proposition}\label{prop:compute_regular}
  Under the assumptions of Proposition~\ref{prop:compute_isolated},
  there exists a randomized algorithm ${\sf Homotopy\_simple}$  which computes a
  zero-dimensional parametrization of the simple  points 
  of~$V(\bC)$ using
  $$\softO( c^2\, m \,n^2 + \,c\,e\,n (\sigma+n^2) )\subset (e\,\sigma\,m)^{O(1)}$$
  operations in~$\KK$.
\end{proposition}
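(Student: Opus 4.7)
The plan is to adapt the algorithm underlying Proposition~\ref{prop:compute_isolated}, keeping the homotopy lifting verbatim but replacing the expensive local dimension test of Section~\ref{sec:isolated} by a direct Jacobian rank criterion. By \cite[Theorem~16.19]{Eisenbud95} (recalled in the introduction), a point $\bx \in V(\bC)$ is a simple point exactly when the $m\times n$ Jacobian matrix $\jac(\bC)(\bx)$ has rank $n$; this is one linear-algebra check per candidate, independent of any multiplicity bound, and this is the source of the $c^5 \to c^2$ saving in the first term of the cost bound.

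First, I would lift the input parametrization $\scrR_0$ of $V(\bB_0)$ along the one-dimensional component $\cal C$ of $V(J)$, using the Newton/Hensel scheme designed in the proof of Proposition~\ref{prop:compute_isolated}. By Proposition~\ref{prop:degree_fiber}, the fiber of $\cal C$ over $T=0$ has degree exactly $c$, and by assumption $\assD_2$ the curve $\cal C$ has degree at most $e$; lifting to $T$-precision $O(e)$ followed by rational reconstruction therefore produces a zero-dimensional parametrization $\scrR_1$ of the fiber of $\cal C$ over $T=1$. This set contains every isolated point of $V(\bC)$, hence in particular every simple point. This is exactly the intermediate object produced by \textsf{Homotopy} just before the dimension test is invoked, so its cost is $\softO(c\,e\,n(\sigma+n^2))$, matching the second term of the claimed bound.

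Given $\scrR_1 = ((w,v_1,\dots,v_n),\lambda)$ with $\deg(w) \le c$, I would then evaluate the Jacobian matrix $\jac(\bC)$ along $\scrR_1$, i.e., with entries in the residue algebra $\KK[Y]/\langle w \rangle$. Baur-Strassen's theorem gives a straight-line program of length $O(n\sigma)$ for $\jac(\bC)$ from the length-$\sigma$ program for $\bC$, and each arithmetic operation modulo $w$ costs $\softO(c)$. A Gaussian elimination on the resulting $m\times n$ matrix over $\KK[Y]/\langle w\rangle$ then isolates the factors of $w$ on which the rank equals $n$; whenever a prospective pivot turns out to be a zero divisor modulo $w$, one splits $w$ and recurses on each factor, a standard dynamic evaluation strategy whose cumulative cost stays within $\softO(c\,m n^2)$. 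The factors of $w$ on which the rank is $n$ describe exactly the simple points of $V(\bC)$, and recombining the corresponding parametrizations yields the output.

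The main obstacle is to justify that every simple point of $V(\bC)$ already appears in $\scrR_1$, so that none is silently discarded by specialization at $T=1$. This rests on assumptions $\assA_1$ and $\assA_2$ together with Proposition~\ref{prop:degree_fiber}, in the same way as in the proof of Proposition~\ref{prop:compute_isolated}: the simple points of $V(\bC)=V(\bB_1)$ lie in the fiber above $T=1$ of the union of the one-dimensional components of $V(J)$, which is precisely $\cal C$. The secondary technical point is the management of zero divisors in the rank test over $\KK[Y]/\langle w\rangle$, but the dynamic evaluation and splitting strategy outlined above resolves this within the claimed complexity once the linear form $\lambda$ from $\assD_1$ is chosen generically so that $w$ is squarefree.
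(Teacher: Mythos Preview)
Your proposal is correct and matches the paper's approach: reuse Steps~1--5 of Algorithm \textsf{Homotopy} verbatim and replace the local dimension test of Step~\ref{step:homot:final} by a Jacobian rank check modulo $w_1$ with dynamic evaluation, exactly as the paper does. One minor bookkeeping point: the $c^2\,m\,n^2$ term in the bound already appears in the decomposition of $\scrR_0$ (Step~1 of \textsf{Homotopy}, which you subsume into ``the Newton/Hensel scheme''), and the paper's naive analysis of dynamic evaluation gives an extra factor $c$ in the cleaning step (so $\softO(c^2\,m\,n^2)$ rather than your $\softO(c\,m\,n^2)$), but neither point affects the validity of the stated bound.
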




\subsection{Proof of Proposition~\ref{prop:degree_fiber}}

This subsection is devoted to prove
Proposition~\ref{prop:degree_fiber}. In the course of the proof, we
will give a precise characterization of the integer $c$ mentioned in
the proposition, although the statement given in the proposition will
actually be enough for our further purposes. {\em In all the rest of
  this subsection, we assume that $\assA_1$ and $\assA_2$ hold.}

Consider an irredundant primary decomposition of the ideal $J=\langle \bB\rangle$ in
$\KKbar[T,\bX]$, of the form $J=Q_1 \cap \cdots
\cap Q_r$, and let $P_1,\dots,P_r$ be the associated primes, that is,
the respective radicals of $Q_1,\dots,Q_r$. We assume that
$P_1,\dots,P_s$ are the minimal primes, for some $s \le r$, so that
$V(P_1),\dots,V(P_s)$ are the (absolutely) irreducible components of
$V(J)\subset \KKbar{}^{n+1}$. By $\assA_1$, these irreducible
components all have dimension at least one. Refining further, we
assume that $t \le s$ is such that $V(P_1),\dots,V(P_t)$ are the
 irreducible components of $V(J)$ of dimension one whose
image by $\pi_T: (\tau,x_1,\dots,x_n) \mapsto \tau$ is Zariski dense in
$\KKbar$.

\begin{lemma}\label{lemma:vPi}
  Let $\tau$ be in $\KKbar$ and let $\bx \in \KKbar{}^n$ be an isolated
  solution of the system $\bB_\tau$. Then, $(\tau,\bx)$ belongs to $V(P_i)$
  for at least one index $i$ in $\{1,\dots,t\}$, and does not belong
  to $V(P_i)$ for any index $i$ in $\{t+1,\dots,r\}$.
\end{lemma}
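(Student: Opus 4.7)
The plan is to establish the two parts in sequence, first using purely geometric (fiber dimension) arguments for the minimal primes, then invoking the unmixedness assumption $\assA_2$ to rule out the embedded primes.

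Since $(\tau, \bx) \in V(J)$, it lies on $V(P_i)$ for some minimal prime index $i \in \{1, \ldots, s\}$. To show that such an $i$ can be chosen in $\{1, \ldots, t\}$, and simultaneously to exclude $(\tau, \bx) \in V(P_j)$ for $j \in \{t+1, \ldots, s\}$, I would argue by contradiction using the fiber dimension theorem applied to the restriction of $\pi_T$ to each irreducible variety $V(P_j)$. By the definition of $t$, any $j \in \{t+1, \ldots, s\}$ falls into one of two cases: either (a) $V(P_j)$ dominates $\KKbar$ via $\pi_T$ with $\dim V(P_j) \ge 2$, in which case every non-empty fiber of $\pi_T|_{V(P_j)}$ has dimension $\ge \dim V(P_j) - 1 \ge 1$; or (b) $\pi_T(V(P_j))$ is not Zariski dense, so (being irreducible) it reduces to a single point $\tau_j$, hence $V(P_j) \subseteq \{T=\tau_j\}$, which forces $\tau=\tau_j$ and gives $V(P_j) \subseteq V(\bB_\tau)$ with $\dim V(P_j) \ge 1$ by $\assA_1$. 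In either case, since $V(P_j) \subseteq V(J)$ and $V(J) \cap \{T=\tau\}$ is identified with $V(\bB_\tau)$, the point $\bx$ lies in a positive-dimensional irreducible subset of $V(\bB_\tau)$, contradicting that $\bx$ is isolated there. This establishes both the first statement of the lemma and the exclusion for $j \in \{t+1, \ldots, s\}$.

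It remains to rule out $(\tau, \bx) \in V(P_j)$ for embedded primes, i.e.\ $j \in \{s+1,\ldots,r\}$. Let $\m \subset \KKbar[T,\bX]$ be the maximal ideal of $(\tau, \bx)$. From the previous step, every minimal prime $P_i$ of $J$ contained in $\m$ satisfies $i \in \{1,\ldots,t\}$, and each such $P_i$ has height $n$ (since $\dim V(P_i)=1$ in an $(n+1)$-dimensional ambient space). Consequently $J_\m$ has height exactly $n$, so $\assA_2$ applies and $J_\m$ is unmixed: all associated primes of $J_\m$ have height $n$. Suppose toward contradiction that $(\tau, \bx) \in V(P_j)$ for some embedded prime $P_j$; then $P_j \subseteq \m$, so $(P_j)_\m$ is an associated prime of $J_\m$. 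Being embedded, $P_j$ properly contains some minimal prime $P_i$ of $J$; this minimal prime also lies in $\m$ and is, by the previous paragraph, one of $P_1,\ldots,P_t$, hence of height $n$. The strict inclusion $(P_i)_\m \subsetneq (P_j)_\m$ in the Noetherian ring $\KKbar[T,\bX]_\m$ forces $\mathrm{height}((P_j)_\m) > n$, contradicting unmixedness.

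The main obstacle is the embedded-prime case, since the exclusion does not follow from a direct geometric argument — it requires translating the algebraic hypothesis $\assA_2$ into a constraint on primary decomposition via localization, and pairing it with the fact (established in the first step) that only the ``good'' minimal primes pass through $(\tau, \bx)$ so that $J_\m$ indeed has the height required to trigger the unmixedness assumption. The rest is essentially bookkeeping with the fiber dimension theorem and the behavior of heights under localization.
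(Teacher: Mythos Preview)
Your proof is correct and follows essentially the same strategy as the paper: a fiber-dimension argument to exclude the non-dominating or higher-dimensional minimal primes $P_{t+1},\dots,P_s$, followed by localization and $\assA_2$ to exclude the embedded primes.

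One small difference worth noting: for the embedded-prime step you localize directly at the maximal ideal $\m$ of $(\tau,\bx)$, whereas the paper first argues (via irredundancy) that at most one embedded prime can pass through $(\tau,\bx)$, shows that this prime is then maximal, and localizes at \emph{that} prime. Your route is slightly more streamlined, since localizing at $\m$ handles all embedded primes through $(\tau,\bx)$ in one stroke and avoids the auxiliary irredundancy argument; the paper's route makes more explicit the structure of the primary decomposition near $(\tau,\bx)$. Both reach the same contradiction with unmixedness.
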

\begin{proof}
  Because $(\tau,\bx)$ cancels $\bB$, it belongs to at least one of
  $V(P_1),\dots,V(P_r)$. It remains to rule out the possibility that
  $(\tau,\bx)$ belongs to $V(P_i)$ for some index $i$ in
  $\{t+1,\dots,r\}$.

  We first deal with indices $i$ in $\{t+1,\dots,s\}$. These are those
  primary components with minimal associated primes $P_i$ that either
  have dimension at least two, or have dimension one but whose image
  by $\pi_T$ is a single point. In both cases, all irreducible
  components of the intersection $V(P_i)\cap V(T-\tau)$ have dimension
  at least one. Since $\bx$ is isolated in $V(\bB_\tau)$, $(\tau,\bx)$ is
  isolated in $V(\bB)\cap V(T-\tau)$, so it cannot belong to
  $V(P_i)\cap V(T-\tau)$ for any $i$ in $\{t+1,\dots,s\}$.
  
  We conclude by proving that $(\tau,\bx)$ does not belong to $V(P_i)$,
  for any of the embedded primes $P_{s+1},\dots,P_r$. We proceed by
  contradiction, assuming for definiteness that $(\tau,\bx)$ belongs to
  $V(P_{s+1})$. Because $P_{s+1}$ is an embedded prime, $V(P_{s+1})$
  is contained in (at least) one of $V(P_1),\dots,V(P_s)$. In view of
  the previous paragraph, it cannot be one of
  $V(P_{t+1}),\dots,V(P_s)$.  Now, all of $V(P_1),\dots,V(P_t)$ have
  dimension one, so $V(P_{s+1})$ has dimension zero (so it is the point $\{(\tau,\bx)\}$). For the same
  reason, if $(\tau,\bx)$ belonged to another $V(P_i)$, for some $i >
  s+1$, $V(P_i)$ would also be zero-dimensional, and thus equal to $\{(\tau,\bx)\}$; as a result, $V(P_i)$
  would be equal to $V(P_{s+1})$, and this would contradict the
  irredundancy of our decomposition.
  
  To summarize, $(\tau,\bx)$ belongs to $V(P_{s+1})$, together with
  $V(P_i)$ for some indices $P_i$ in $\{1,\dots,t\}$ (say
  $P_1,\dots,P_u$, up to reordering, for some $u \ge 1$), and avoids
  all other associated primes.  Let us localize the decomposition
  $J=Q_1 \cap \cdots \cap Q_r$ at
  $P_{s+1}$. By~\cite[Proposition~4.9]{AtMc},
  $J_{P_{s+1}}={Q_1}_{P_{s+1}} \cap \cdots \cap {Q_u}_{P_{s+1}}\cap
  {Q_{s+1}}_{P_{s+1}}$ is an irredundant primary decomposition of
  $J_{P_{s+1}}$ in $\KKbar[T,\bX]_{P_{s+1}}$; the minimal primes are
  ${P_1}_{P_{s+1}},\dots,{P_u}_{P_{s+1}}$.

  By Corollary~4 p.24 in~\cite{Matsumura86}, for any prime
  ${P_i}_{P_{s+1}}$, $i=1,\dots,u$ or $i=s+1$, the localization of
  $\KKbar[T,\bX]_{P_{s+1}}$ at ${P_i}_{P_{s+1}}$ is equal to
  $\KKbar[T,\bX]_{P_{i}}$. In particular, the height of ${P_i}_{P_{s+1}}$
  in $\KKbar[T,\bX]_{P_{s+1}}$ is equal to that of $P_i$ in
  $\KKbar[T,\bX]_{P_{i}}$, that is, $n$ if $i=1,\dots,u$, since then
  $V(P_i)$ has dimension $1$, or $n+1$ if $i=s+1$. Since $u \ge 1$,
  this proves that $J_{P_{s+1}}$ has height $n$. As a result, $\assA_2$ implies that $J_{P_{s+1}}$ is unmixed, a contradiction.
\end{proof}

Let us write $J=J' \cap J''$, with $J'=Q_1 \cap \cdots \cap Q_t$ and
$J''=Q_{t+1} \cap \cdots \cap Q_r$. For $\tau$ in $\KKbar$, we denote
by $J_\tau \subset \KKbar[T,\bX]$ the ideal $J + \langle T-\tau \rangle$,
and similarly for $J'_\tau$ and $ J''_\tau$.

\begin{lemma}\label{lemma:JJprime}
  Let $\tau$ and $\bx$ be as in Lemma~\ref{lemma:vPi}. Then, the
  multiplicities of the ideals $J_\tau$ and $J'_\tau$ at $(\tau,\bx)$
  are the same.
\end{lemma}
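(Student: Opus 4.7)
The plan is to localize at the maximal ideal $\m\subset\KKbar[T,\bX]$ corresponding to the point $(\tau,\bx)$, and to show that $J_\m=J'_\m$; adding $T-\tau$ on both sides will immediately yield $(J_\tau)_\m=(J'_\tau)_\m$, and the two multiplicities will therefore agree.

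First I would recall that for any ideal $I\subset\KKbar[T,\bX]$ such that $(\tau,\bx)$ is isolated in $V(I)$, the multiplicity of $I$ at $(\tau,\bx)$ equals $\dim_{\KKbar}\bigl(\KKbar[T,\bX]_\m/I_\m\bigr)$. This definition applies to $J_\tau$ by hypothesis, because $\bx$ is isolated in $V(\bB_\tau)\subset\KKbar{}^n$ so $(\tau,\bx)$ is isolated in $V(J_\tau)\subset\KKbar{}^{n+1}$. It also applies to $J'_\tau$: by Lemma~\ref{lemma:vPi}, $(\tau,\bx)$ belongs to $V(P_i)=V(Q_i)$ for at least one $i\le t$, so $(\tau,\bx)\in V(J'_\tau)$; and since $J\subseteq J'$ gives $V(J'_\tau)\subseteq V(J_\tau)$, the point $(\tau,\bx)$ remains isolated in $V(J'_\tau)$.

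The core of the argument uses the other half of Lemma~\ref{lemma:vPi}: $(\tau,\bx)\notin V(P_i)$ for any $i\in\{t+1,\dots,r\}$. Since $V(J'')=V(Q_{t+1})\cup\cdots\cup V(Q_r)=V(P_{t+1})\cup\cdots\cup V(P_r)$, this implies $(\tau,\bx)\notin V(J'')$, hence $J''\not\subset\m$. Thus some element of $J''$ is a unit in the local ring $\KKbar[T,\bX]_\m$, which means $J''_\m=\KKbar[T,\bX]_\m$. Using that localization commutes with finite intersections of ideals, we obtain
\[
J_\m \;=\; (J'\cap J'')_\m \;=\; J'_\m\cap J''_\m \;=\; J'_\m.
\]
Adding the ideal generated by $T-\tau$ on both sides gives $(J_\tau)_\m=(J'_\tau)_\m$, so the dimensions of the local quotients coincide, as desired.

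Essentially no real obstacle remains at this point: the argument is a routine application of primary decomposition together with the fact that a component avoiding a given point disappears after localization at the maximal ideal of that point. All the genuine work was already carried out in Lemma~\ref{lemma:vPi}, where assumption $\assA_2$ was used to rule out that any higher-dimensional minimal prime, any dimension-one minimal prime with $\pi_T$-image a point, or any embedded prime can pass through $(\tau,\bx)$.
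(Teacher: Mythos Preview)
Your proof is correct and follows essentially the same approach as the paper: both arguments use Lemma~\ref{lemma:vPi} to conclude that $J''$ becomes the unit ideal in a suitable local ring at $(\tau,\bx)$, whence $J$ and $J'$ agree there and so do $J_\tau$ and $J'_\tau$. The only cosmetic difference is that the paper passes to the formal power series ring $\KKbar[[T,\bX]]$ (after translating $(\tau,\bx)$ to the origin) rather than to the localization $\KKbar[T,\bX]_\m$, but this is immaterial for computing the multiplicity of a zero-dimensional primary component.
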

\begin{proof}
  Without loss of generality, assume that $\tau=0 \in \KKbar$ and
  $\bx=0 \in \KKbar{}^n$. We start from the equality $J=J' \cap J''$,
  which holds in $\KKbar[T,\bX]$, and we see it in the formal power
  series ring $\KKbar[[T,\bX]]$.  The previous lemma implies that
  there exists a polynomial in $J''$ that does not vanish at
  $(\tau,\bx)=0 \in \KKbar{}^{n+1}$.  This polynomial is a unit in
  $\KKbar[[T,\bX]]$, which implies that the extension of $J''$ in
  $\KKbar[[T,\bX]]$ is the trivial ideal $\langle 1 \rangle$, and
  finally that the equality of extended ideals $J=J'$ holds in
  $\KKbar[[T,\bX]]$. This implies the equality
  $J+\langle T \rangle =J'+\langle T \rangle $ in $\KKbar[[T,\bX]]$,
  and the conclusion follows.
\end{proof}

Our goal is now to give a bound on the sum of the multiplicites of
$\bB_\tau$ at all its isolated roots, for any $\tau$ in $\KKbar$.

To achieve this, we consider the Puiseux series field
$\SS = \KKbar\langle \langle T\rangle \rangle$ in $T$ with
coefficients in $\KKbar$.  Since $\KKbar$ is algebraically closed and
of characteristic $0$, $\SS$ is algebraically closed (actually, it is
an algebraic closure of $\KKbar(T)$) and hence a perfect field.

Next, we consider the extension $\frak{J}$ of $J$ in $\SS[\bX]$, and
similarly $\frak{J}'$ and ${\frak J}''$ denote extensions of $J'$ and
$J''$ in $\SS[\bX]$.

\begin{lemma}\label{lemma:dimJprime}
  The ideal $\frak{J}'$ has dimension zero and $V(\frak{J}') \subset
  \SS^n$ is the set of isolated solutions of
  $V(\frak{J}) \subset \SS^n$.
\end{lemma}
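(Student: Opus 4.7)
The plan is to use the factorization $J=J'\cap J''$ and show that after extension to $\SS[\bX]$ the two pieces become coprime, with $\frak{J}'$ of dimension zero and $\frak{J}''$ contributing only positive-dimensional components to $V(\frak{J})$.

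First, I would bound $\dim\frak{J}'$. For each $i\le t$ the curve $V(P_i)$ projects dominantly to $\mathbb{A}^1_{\KKbar}$, equivalently $P_i\cap\KKbar[T]=\{0\}$. Its generic fiber is $0$-dimensional by the fiber-dimension theorem, so $\KKbar(T)\otimes_{\KKbar[T]}(\KKbar[T,\bX]/P_i)$ is a finite-dimensional $\KKbar(T)$-algebra; this finiteness is preserved under base change to $\SS=\overline{\KKbar(T)}$, giving $V(\frak{P}_i)=V(P_i\SS[\bX])\subset\SS^n$ finite, and hence $V(\frak{J}')=\bigcup_{i\le t}V(\frak{P}_i)$ finite.

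Next, I would show $\frak{J}'+\frak{J}''=\SS[\bX]$. For any $i\le t$ and any associated prime $P_j$ of $J$ with $j>t$, the intersection $V(P_i)\cap V(P_j)=V(P_i+P_j)$ is a proper closed subset of the irreducible $1$-dimensional $V(P_i)$, hence has dimension at most $0$. Its $T$-coordinates therefore form a finite set $\{\tau_1,\dots,\tau_N\}\subset\KKbar$, and the Nullstellensatz places some power of $(T-\tau_1)\cdots(T-\tau_N)$ inside $P_i+P_j$. Each factor $T-\tau_\ell$ is a unit in the field $\SS$, so $(P_i+P_j)\SS[\bX]=\SS[\bX]$, i.e., $V(\frak{P}_i)\cap V(\frak{P}_j)=\emptyset$. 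Taking unions yields $V(\frak{J}')\cap V(\frak{J}'')=\emptyset$, whence coprimality.

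Finally, flatness of $\SS[\bX]$ over $\KKbar[T,\bX]$ gives $\frak{J}=\frak{J}'\cap\frak{J}''$, so $V(\frak{J})=V(\frak{J}')\sqcup V(\frak{J}'')$. Every point of the finite set $V(\frak{J}')$ is therefore isolated in $V(\frak{J})$. To see these are the only isolated points, for $\bx\in V(\frak{P}_j)\subset V(\frak{J}'')$ I would argue: if $\pi_T$ is non-dominant on $V(P_j)$ then $P_j$ contains some $T-\tau$ and $V(\frak{P}_j)$ is empty by the unit-ideal argument above; if $P_j$ is minimal with dominant $\pi_T$ then $\dim V(P_j)\ge 2$ and the fiber-dimension argument of the first step gives $\dim V(\frak{P}_j)\ge 1$; and if $P_j$ is embedded inside some minimal $P_k$, a short check shows $k>t$ (otherwise $V(P_j)$ would be a proper $0$-dimensional subset of the $1$-dimensional $V(P_k)$, forcing non-dominant $\pi_T$), so $V(\frak{P}_j)\subseteq V(\frak{P}_k)$ has dimension at least $1$. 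The main subtlety is this last case analysis for embedded primes; everything else flows from the clean dichotomy that $\pi_T$-dominant components drop one dimension upon extension to $\SS$, while non-dominant ones disappear.
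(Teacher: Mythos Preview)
Your approach is correct and follows the same strategy as the paper's (very terse) proof, which just cites Atiyah--Macdonald for $\frak{J}=\frak{J}'\cap\frak{J}''$ and then asserts that ``the properties of $J'$ imply our claim.'' You supply the details the paper omits, in particular the disjointness of $V(\frak{J}')$ and $V(\frak{J}'')$ over~$\SS$, which is genuinely needed for the $\supseteq$ inclusion.

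One wording issue in your embedded-prime case: the sentence ``a short check shows $k>t$'' is not literally true---your parenthetical is really showing that if $k\le t$ then $\pi_T$ is non-dominant on $V(P_j)$, so that case is already covered and you may \emph{assume} $k>t$. And once $k>t$, the conclusion you want is not that $V(\frak{P}_j)$ itself has dimension $\ge 1$ (it need not), but that any $\bx\in V(\frak{P}_j)$ lies in $V(\frak{P}_k)$, which is equidimensional of dimension $\ge 1$ (as the generic fiber of an irreducible variety dominating a curve), hence $\bx$ is not isolated in $V(\frak{J})$. With those two clarifications your argument is complete.
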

\begin{proof}
 From the equality $J=J' \cap J''$ and Corollary~3.4 in~\cite{AtMc},
 we deduce that $\frak{J}=\frak{J'} \cap \frak{J''}$. The properties
 of $J'$ (that the irreducible components of $V(J')$ are precisely those
 irreducible components of $V(J)$ that have dimension one and with a
 dense image by $\pi_T$) imply our claim.
\end{proof}

Let us write $c=\dim_{\SS}(\SS[\bX]/{\frak J}')$. Because $\SS$ is an
algebraic closure of $\KKbar(T)$, one has
$\dim_{\KKbar(T)} ( \KKbar(T)[\bX]/\tilde{J'} ) = c$ where
$\tilde{J'}$ is the extension of $J'$ in $\KKbar(T)[\bX]$.

The following lemma relates this quantity to the multiplicities of the
solutions in any fiber $\bB_\tau$. This proves the first statement in
Proposition~\ref{prop:degree_fiber}.

\begin{lemma}\label{lemma:19}
  Let $\tau$ be in $\KKbar$. The sum of the multiplicities of the
  isolated solutions of $\bB_\tau$ is at most equal to $c$.
\end{lemma}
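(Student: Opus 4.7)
The plan is to identify each multiplicity $\mu_i$ with the number (counted with local multiplicity) of Puiseux-series branches of $V(J')$ specialising to $\bx_i$ at $T=\tau$, and then sum over $i$. Assume $\tau=0$ and let $\bx_1,\dots,\bx_k$ be the isolated solutions of $\bB_0$ with respective multiplicities $\mu_1,\dots,\mu_k$ in $J_0$. By Lemma~\ref{lemma:JJprime} each $\mu_i$ also equals the multiplicity of $J'+\langle T\rangle$ at $(0,\bx_i)$. A first observation is that $R:=\KKbar[T,\bX]/J'$ is one-dimensional and Cohen--Macaulay: by construction, the only associated primes of $J'$ are $P_1,\ldots,P_t$, each minimal of height $n$ (their zero sets are one-dimensional components dominating the $T$-line), so $J'$ has no embedded component. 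Hence $R$ is purely one-dimensional (thus CM) and $T$ is a non-zero-divisor on $R$.

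Fix $i$ and set $\m_i=(T,\bX-\bx_i)$. The completion $\hat R_i:=\widehat{R_{\m_i}}$ is a one-dimensional complete local CM ring in which $T$ is a parameter; since $\hat R_i/T\hat R_i$ has $\KKbar$-length exactly $\mu_i$, Nakayama's lemma makes $\hat R_i$ finite over $\KKbar[[T]]$, and Auslander--Buchsbaum (in the DVR $\KKbar[[T]]$) then forces it to be free of rank $\mu_i$. Let $\Omega$ denote the algebraic closure of $\KKbar((T))$, which contains both $\KKbar[[T]]$ and $\SS$ (as $\SS$ is algebraic over $\KKbar(T)\subset\KKbar((T))$). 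Then $\hat R_i\otimes_{\KKbar[[T]]}\Omega$ is Artinian of $\Omega$-dimension $\mu_i$, and its maximal ideals are in bijection with Puiseux tuples $(\xi_1,\dots,\xi_n)\in\Omega^n$ satisfying $J'(T,\xi_1,\dots,\xi_n)=0$ and $\xi_j(0)=x_{i,j}$ for all $j$, with local lengths matching the local multiplicities at the corresponding points of $V(J'\Omega[\bX])$.

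Because Puiseux tuples specialising to distinct $\bx_i$ are distinct, summing yields
\[
\sum_{i=1}^k \mu_i \;\le\; \dim_\Omega\bigl(\Omega[\bX]/J'\Omega[\bX]\bigr) \;=\; \dim_{\KKbar(T)}\bigl(\KKbar(T)[\bX]/\tilde J'\bigr) \;=\; c,
\]
the equalities holding because base change of the finite-dimensional algebra $\KKbar(T)[\bX]/\tilde J'$ preserves dimension. The main obstacle is establishing the bijection in the previous paragraph and the matching of multiplicities: it relies on faithful flatness of completion and the fact that any $\KKbar[[T]]$-algebra map from the module-finite local ring $\hat R_i$ into $\Omega$ must factor through the valuation ring of $\Omega$, so each $\xi_j$ is integral at $T=0$ with value forced to be $x_{i,j}$ by locality of the map.
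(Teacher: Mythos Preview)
Your proof is correct and takes a genuinely different route from the paper's.

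The paper argues directly at the level of linear algebra: it takes monomials $m_1,\dots,m_k$ forming a $\KKbar$-basis of $\KKbar[T,\bX]/J'_\tau$ and shows they remain $\KKbar(T)$-linearly independent modulo $\tilde J'$. The key step is that a putative relation, after clearing denominators and dividing out the highest power of $(T-\tau)$, must lie in each primary component $Q_i$ (since no $Q_i$ contains a nonzero element of $\KKbar[T]$), hence in $J'$; evaluating at $T=\tau$ then yields a contradiction. This uses only the primary decomposition and the fact that each $V(P_i)$ dominates the $T$-line.

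Your approach extracts the same underlying fact---that $T-\tau$ is a non-zero-divisor on $\KKbar[T,\bX]/J'$---but packages it as Cohen--Macaulayness of the one-dimensional ring $R$, and then uses freeness of $\hat R_i$ over the DVR $\KKbar[[T]]$ to read off $\mu_i$ as a rank. This is more structural and connects nicely to the Puiseux-branch picture that the paper develops in the lemmas following this one.

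One point to tighten: the claim that local lengths in $\hat R_i\otimes_{\KKbar[[T]]}\Omega$ \emph{match} the local multiplicities in $\Omega[\bX]/J'\Omega[\bX]$ is stronger than you need, and the appeal to faithful flatness does not by itself establish it. What suffices (and is straightforward) is that for each $i$ the natural $\Omega$-algebra map $\Omega[\bX]/J'\Omega[\bX]\to \hat R_i\otimes_{\KKbar[[T]]}\Omega$ is surjective, and that the supports of the targets for distinct $i$ are disjoint in $\mathrm{Spec}(\Omega[\bX]/J'\Omega[\bX])$. The Chinese Remainder Theorem then gives a surjection onto the product, whence $c=\dim_\Omega(\Omega[\bX]/J'\Omega[\bX])\ge\sum_i\dim_\Omega(\hat R_i\otimes\Omega)=\sum_i\mu_i$. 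With this adjustment your argument is complete.
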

\begin{proof}
  The sum in the lemma is also the sum of the multiplicities of the
  ideal $J_\tau$ at all $(\tau,\bx)$, for $\bx$ an isolated solution
  of $\bB_\tau$.  By Lemma~\ref{lemma:JJprime}, this is also the sum
  of the multiplicities of $J'_\tau$ at all $(\tau,\bx)$, for $\bx$ an
  isolated solution of $\bB_\tau$. We prove below that the sum of the
  multiplicities of $J'_\tau$ at all $(\tau,\bx)$, for $\bx$ such that
  $(\tau,\bx)$ cancels $J'_\tau$, is at most $c$; this will be enough
  to conclude (for any isolated solution $\bx$ of $\bB_\tau$,
  $(\tau,\bx)$ is a root of $J'_\tau$, though the converse may not be
  true). Remark that the latter sum is simply the dimension of
  $\KKbar[T,\bX]/J'_\tau$.
  
  Let $m_1,\dots,m_k$ be monomials that form a $\KKbar$-basis of
  $\KKbar[T,\bX]/J'_\tau$; since $T-\tau$ is in $J'_\tau$, these
  monomials can be assumed not to involve $T$.  We will prove that
  they are still $\KKbar(T)$-linearly independent in
  $\KKbar(T)[\bX]/\tilde{J'}$; this will imply that $k \le c$,
  and finish the proof.
  
  Suppose that there exists a linear combination $A_1 m_1 + \cdots +
  A_k m_k$ in $\tilde{J}'$, with all $A_i$'s in $\KKbar(T)$, not
  all of them zero. Thus, we have an equality $a_1/d_1\, m_1 + \cdots
  + a_k/d_k\, m_k = a/d$, with $a_1,\dots,a_k$ and
  $d,d_1,\dots,d_k$ in $\KKbar[T]$ and $a$ in the ideal
  $J'$. Clearing denominators, we obtain a relation of the form $e_1
  m_1 +\cdots+ e_k m_k \in J'$, with not all $e_i$'s zero. Let
  $(T-\tau)^u$ be the highest power of $T-\tau$ that divides all
  $e_i$'s (this is well-defined, since not all $e_i$'s vanish) so that
  we can rewrite the above as $(T-\tau)^u (f_1 m_1 +\cdots+ f_k
  m_k) \in J'$, with $f_i=e_i/(T-\tau)^u \in \KKbar[T]$ for all $i$.
  In particular, our definition of $e_i$ implies that the values
  $f_i(\tau)$ are not all zero.

  Recall that the ideal $J'$ has the form $J'=Q_1 \cap \cdots \cap
  Q_t$. For $i=1,\dots,t$, since $Q_i$ is primary, the membership
  equality $(T-\tau)^u (f_1 m_1 +\cdots +f_k m_k) \in J'$ implies
  that either $f_1 m_1 +\cdots +f_k m_k$ or some power
  $(T-\tau)^{uv}$, for some $v > 0$, is in $Q_i$. Since $Q_i$ does not
  contain non-zero polynomials in $\KKbar[T]$, $f_1 m_1 +\cdots+ f_k
  m_k$ belongs to all $Q_i$'s, that is, to $J'$. We can then
  evaluate this relation at $T=\tau$. We saw that the values
  $f_i(\tau)$ do not all vanish on the left, which is a contradiction
  with the independence of the monomials $m_1,\dots,m_k$ modulo
  $J'_\tau$.
\end{proof}

We now take $\tau$ in $\KKbar$ and we discuss the geometry of $V(J)$
near $\tau$; without loss of generality, we suppose that $\tau=0$.  We
already emphasized that the field $\SS$ is an algebraic closure of
$\KKbar(T)$; we thus let $\Phi_1,\dots,\Phi_{c'}$ be the points of
$V(\mathfrak{J}')$, with coordinates taken in $\SS$. In particular, we
see that $c' \le c$; we prove below that if $\assG(0)$ holds, we
actually have $c'=c$ (that is, that $\mathfrak{J}'$ is radical).

Any non-zero series $\varphi$ in $\SS$ admits a well-defined {\em
  valuation} $\nu(\varphi)$, which is the smallest exponent that
appears in its expansion support; we also set $\nu(0)=\infty$. The
valuation $\nu(\Phi)$, for a vector $\Phi=(\varphi_1,\dots,\varphi_s)$
with entries in $\SS$, is the minimum of the valuations of its
exponents. We say that $\Phi$ is {\em bounded} if it has non-negative
valuation; in this case, $\lim_0(\Phi)$ is defined as the vector
$(\lim_0(\varphi_1),\dots,\lim_0(\varphi_s))$, with
$\lim_0(\varphi_i)={\rm coeff}(\varphi_i,T^0)$ for all $i$.

Without loss of generality, we assume that
$\Phi_1,\dots,\Phi_\kappa$ are bounded, and
$\Phi_{\kappa+1},\dots,\Phi_{c'}$ are not, for some $\kappa$ in
$\{0,\dots,c'\}$, and we define $\varphi_1,\dots,\varphi_\kappa$ by
$\varphi_i=\lim_0(\Phi_i)\in\KKbar{}^n$ for
$i=1,\dots,\kappa$.

\begin{lemma}\label{lemma:Z1}
  The equality $V(J' +\langle T \rangle)=\{\varphi_i \mid i=1,\dots,\kappa\}$ holds.
\end{lemma}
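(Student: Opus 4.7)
The plan is to establish the set equality by proving both inclusions; one is a direct evaluation argument and the other requires Puiseux-series lifting.

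For the inclusion $\{\varphi_i \mid i=1,\dots,\kappa\} \subseteq V(J' + \langle T\rangle)$, I would argue directly. Each $\Phi_i \in V(\mathfrak{J}')$ cancels every $g \in J'$ viewed in $\SS[\bX]$. Since $\Phi_i$ is bounded, its coordinates lie in $\KKbar[[T^{1/N}]]$ for some integer $N \geq 1$, so $g(T, \Phi_i)$ lies in $\KKbar[[T^{1/N}]]$ and its constant term is $g(0, \varphi_i) = 0$. Combined with the trivial vanishing of $T$ at $(0, \varphi_i)$, this shows $\varphi_i \in V(J' + \langle T \rangle)$.

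For the reverse inclusion, take $\bx \in V(J' + \langle T\rangle)$, so that $(0, \bx) \in V(J') = V(P_1)\cup\cdots\cup V(P_t)$; hence $(0, \bx) \in V(P_i)$ for some $i \in \{1, \dots, t\}$. By construction, $V(P_i)$ is an irreducible curve whose projection $\pi_T$ onto the $T$-axis is dominant, so $\KKbar(V(P_i))$ is a finite algebraic extension of $\KKbar(T)$ and therefore embeds into the algebraic closure $\SS$. By the Newton--Puiseux theorem applied to the local ring of $V(P_i)$ at $(0, \bx)$, at least one such embedding corresponds to an analytic branch through $(0, \bx)$, yielding a bounded tuple $\Psi \in \SS^n$ whose coordinates lie in $\KKbar[[T^{1/N}]]$, cancels $P_i$ (hence $J'$), and satisfies $\lim_0(\Psi) = \bx$. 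Since $\Psi \in V(\mathfrak{J}') = \{\Phi_1, \dots, \Phi_{c'}\}$ is bounded, it must equal some $\Phi_j$ with $j \in \{1, \dots, \kappa\}$, so $\bx = \lim_0(\Phi_j) = \varphi_j$.

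The main obstacle is the Puiseux-lifting step in the second inclusion: one must verify, using Newton--Puiseux theory, that every point $(0, \bx) \in V(P_i)$ arises as the limit at $T = 0$ of at least one bounded branch of $V(P_i)$ parametrized over $\SS$. This exploits the fact that $V(P_i)$ is a one-dimensional irreducible variety with dominant projection onto $T$, so that near $(0, \bx)$ it is analytically a union of finitely many branches each parametrized by a fractional power series in $T$ with coefficients in $\KKbar$; the finite algebraic extension $\KKbar(V(P_i))/\KKbar(T)$ then guarantees that each such branch corresponds to an embedding into $\SS$, i.e., to a point of $V(\mathfrak{J}')$.
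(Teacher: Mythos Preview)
Your proof is correct. The forward inclusion matches the paper's argument exactly. For the reverse inclusion, however, you take a genuinely different route.

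The paper proves $V(J'+\langle T\rangle)\subseteq\{\varphi_1,\dots,\varphi_\kappa\}$ by a Chow-form argument: it introduces indeterminates $T_1,\dots,T_n$, projects the curve $V(J')$ to the plane via $(\tau,x_1,\dots,x_n)\mapsto(\tau,T_1x_1+\cdots+T_nx_n)$, and takes a squarefree defining polynomial $C\in\KKbar[T_1,\dots,T_n][T,T_0]$ of the image. Using~\cite[Proposition~1]{Schost03}, $C/\ell$ (with $\ell$ the leading coefficient in $T_0$) factors over $\SS$ as $\prod_i(T_0-T_1\Phi_{i,1}-\cdots-T_n\Phi_{i,n})$. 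A valuation bookkeeping then lets one set $T=0$ termwise, yielding $C(0,T_0)=s\prod_{i\le\kappa}(T_0-T_1\varphi_{i,1}-\cdots-T_n\varphi_{i,n})$ for some nonzero $s$; since any $\bx\in V(J'+\langle T\rangle)$ makes $T_1x_1+\cdots+T_nx_n$ a root of $C(0,T_0)$, it must coincide with some $\varphi_i$.

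Your approach is more direct and geometric: you invoke Newton--Puiseux to lift $(0,\bx)\in V(P_i)$ to a bounded Puiseux parametrization, then identify that parametrization among the $\Phi_j$. This avoids the auxiliary plane-curve construction and the cited result from~\cite{Schost03}, at the cost of appealing to the existence of analytic branches of a space curve through a given point (a standard fact, but one that deserves a precise reference or a short justification via normalization and places of the function field). The paper's route, on the other hand, fits naturally with the parametrization machinery already used elsewhere in the algorithm and makes the role of the bounded versus unbounded $\Phi_i$ explicit through the valuation computation on the factored Chow form.
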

\begin{proof}
  Let $(s_1,\dots,s_h)$ be generators of the ideal $J'$ in
  $\KKbar[T,\bX]$; they also generate $\mathfrak{J}'$ in
  $\KKbar(T)[\bX]$. Then, the polynomials
  $s_{0,i}=s_i(0,\bX) \in \KKbar[\bX]$, for $i=1,\dots,h$, are such
  that
  $J'+\langle T\rangle = \langle T,s_{0,1},\dots,s_{0,h} \rangle$.
  Consider $i \le \kappa$, and the corresponding vector of series
  $\Phi_i$. We know that for $j=1,\dots,h$, we have $s_j(\Phi_i)=0$.
  Since all elements involved have non-negative valuation, we can take
  the coefficient of degree $0$ in $T$ in this equality and deduce
  $s_{0,j}(\varphi_i)=0$, as claimed. Hence, each $\varphi_i$, for
  $i \le \kappa$, is in $V(J' + \langle T \rangle)$.

  Conversely, take indeterminates $T_1,\dots,T_n$, and let $\LL$ be
  the algebraic closure of the field $\KKbar(T_1,\dots,T_n)$; let
  ${\cal C}\subset{\LL}{}^{n+1}$ be the zero-set of the ideal $J'\cdot
  \LL[T,\bX]$ and consider the projection ${\cal C} \to {\LL}{}^2$
  defined by $(\tau,x_1,\dots,x_n)\mapsto (\tau,T_1 x_1 + \cdots + T_n
  x_n)$. The Zariski closure ${\cal S}$ of the image of this mapping
  is a hypersurface, that is, a plane curve.  Since the ideal $J'$ is generated by polynomials
  with coefficients in $\KKbar$, one deduces that ${\cal S}$ admits a squarefree
  defining equation in $\KKbar(T_1,\dots,T_n)[T,T_0]$.

  Consider such a polynomial, say $C$, and assume without loss of
  generality that $C$ belongs to 
  $\KKbar[T_1,\dots,T_n][T,T_0]$. Because ${\cal C}$ admits no irreducible
  component lying above $T=\tau$, for any $\tau$ in $\KKbar$, $C$
  admits no factor in $\KKbar[T]$; thus, $C(0,T_0)$ is non-zero.

  Let $\ell \in \KKbar[T_1,\dots,T_n,T]$ be the leading coefficient of
  $C$ with respect to $T_0$. Proposition~1 in~\cite{Schost03} proves
  that $C/\ell$, seen in $\KKbar(T_1,\dots,T_n,T)[T_0] \subset
  \LL(T)[T_0]$, is the minimal polynomial of $T_1 X_1 + \cdots +
  T_n X_n$ in $\LL(T)[\bX]/\sqrt{J'}\cdot \LL(T)[\bX]$. The latter ideal
  is also the extension of $\sqrt{\mathfrak{J}'}$ to $\LL(T)[\bX]$, 
  so $C/\ell$ factors as
  $$\frac C\ell = \prod_{1\le i \le c'}(T_0-T_1 \Phi_{i,1} - \cdots - T_n \Phi_{i,n})$$
  in $\LL' [T_0]$ where $\LL'$ is the generalized Power
  series ring in $T$ with coefficients in $\LL$.  This gives the
  equality
  $$C =\ell \prod_{1\le i \le  c'}(T_0-T_1 \Phi_{i,1} - \cdots - T_n
  \Phi_{i,n})$$ over $\SS[T_1,\dots,T_n,T_0]$. 

  Let us extend the valuation $\nu$ on $\SS$ to
  $\SS[T_1,\dots,T_n,T_0]$ in the direct manner, by setting
  $\nu(\sum_\alpha f_\alpha T_0^{\alpha_0} \cdots T_n^{\alpha_n}) =
  \min_\alpha \nu(f_\alpha)$. The fact that $C$ has no factor in
  $\KKbar[T]$ implies that $\nu(C)=0$. Using Gauss' Lemma, we see that
  the valuation of the right-hand side is $\nu(\ell) + \sum_{\kappa <
    i \le c}\mu_i$, with $\mu_i= \nu(\Phi_i)$ for all $i$; note that
  $\mu_i < 0$ for $i > \kappa$. Thus, we can rewrite
  $$C =\left ({T}^{-\nu(\ell)} \ell\right ) 
  \prod_{1 \le i \le \kappa}(T_0-T_1 \Phi_{i,1} - \cdots - T_n  \Phi_{i,n} )
  \prod_{\kappa < i \le c'} ({T}^{-\mu_i}T_0-{T}^{-\mu_i}T_1 \Phi_{i,1} - \cdots - {T}^{-\mu_i}T_n  \Phi_{i,n} ),$$
  where all terms appearing above have non-negative valuation.
  As a result, we can take the coefficient of ${T}^0$ term-wise,
  and obtain
  $$C(0,T_0) = s \prod_{1 \le i \le \kappa}(T_0-T_1 \varphi_{i,1} -
  \cdots - T_n \varphi_{i,n} ),$$ where $s$ is in $\KKbar[T_1,\dots,T_n]$;
  note that $s \ne 0$, since $C(0,T_0)$ is non-zero.
 By construction of $C$, for any
  $\bx=(x_1,\dots,x_n)$ in $V(J'+\langle T \rangle)$, $T_1 x_1 + \cdots + T_n x_n$
  cancels $C(0,T_0)$, so $\bx$ must be one of
  $\varphi_1,\dots,\varphi_{\kappa}$.
\end{proof}

To conclude the proof of Proposition~\ref{prop:degree_fiber}, 
we now assume that property $\assG(0)$ holds.

\begin{lemma}
   $\Phi_1,\dots,\Phi_{c'}$ are bounded; equivalently, $\kappa=c'$.
\end{lemma}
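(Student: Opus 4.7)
The plan is to argue by contradiction: I would assume some $\Phi_i$ has negative valuation $\mu_i := \nu(\Phi_i) < 0$ and extract from it a nontrivial common zero of the leading forms of the $b_{0,k}$, contradicting $\assG_2(0)$. The natural rescaling is $\Psi_i := T^{-\mu_i}\Phi_i$, which by construction is bounded and has at least one coordinate of valuation $0$; hence $\psi_i := \lim_0(\Psi_i) \in \KKbar{}^n$ is well-defined and nonzero.

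The heart of the argument is to track valuations in each equation $b_k(T,\Phi_i)=0$. Writing $d_k := \deg_\bX(b_k)$ and expanding
\[ b_k(T,\bX) \;=\; \sum_{|\alpha|\le d_k} a_{k,\alpha}(T)\,\bX^\alpha, \]
I would substitute $\bX = T^{\mu_i}\Psi_i$ into $b_k(T,\Phi_i)=0$ and multiply through by $T^{-\mu_i d_k}$, obtaining
\[ 0 \;=\; \sum_{|\alpha|\le d_k} a_{k,\alpha}(T)\,T^{\mu_i(|\alpha|-d_k)}\,\Psi_i^\alpha. \]
Since $\mu_i<0$ and $|\alpha|\le d_k$, every exponent $\mu_i(|\alpha|-d_k)$ is nonnegative, with equality exactly when $|\alpha|=d_k$. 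Hence every term has nonnegative valuation and one may pass to the $T \to 0$ limit; only the top-degree terms survive, leaving
\[ 0 \;=\; \sum_{|\alpha|=d_k} a_{k,\alpha}(0)\,\psi_i^\alpha. \]
The crucial point is that by $\assG_1(0)$ we have $\deg_\bX(b_{0,k}) = d_k$ as well, so the right-hand side is exactly the highest-degree homogeneous component of $b_{0,k}$, that is, $b_{0,k}^H(0,\psi_i)$.

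Since this holds for every $k = 1,\dots,m$, the nonzero vector $\psi_i$ would be a common zero of the forms $b_{0,k}^H(0,\bX)$, contradicting $\assG_2(0)$; therefore no $\Phi_i$ can be unbounded, and $\kappa = c'$. I do not expect any single step to be the bottleneck; rather, the main care is in the combined bookkeeping, namely (i) choosing the right rescaling exponent $-\mu_i$, (ii) checking that the shift by $T^{-\mu_i d_k}$ makes every term have nonnegative valuation while keeping the top-degree coefficient of valuation $0$, and (iii) invoking $\assG_1(0)$ at the right moment so that the specialization at $T=0$ really matches the homogeneous leading form ruled out by $\assG_2(0)$.
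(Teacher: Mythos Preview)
Your proof is correct and follows essentially the same approach as the paper. Both argue by contradiction, rescale $\Phi_i$ by $T^{-\mu_i}$ to a bounded vector with nonzero limit $\psi_i$, and then use $\assG_1(0)$ to identify the valuation-zero part of each equation $b_k(\Phi_i)=0$ with the leading homogeneous form $b_{0,k}^H(0,\cdot)$ evaluated at $\psi_i$, contradicting $\assG_2(0)$; the only cosmetic difference is that the paper phrases the computation via the homogenization $b_k^H$ and the splitting $b_k=b_{0,k}+T\tilde b_k$, whereas you expand $b_k$ monomial-by-monomial in $\bX$ and track the exponent $\mu_i(|\alpha|-d_k)$ directly.
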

\begin{proof}
  We want to prove that $\Phi_1,\dots,\Phi_{c'}$ are bounded. Without
  loss of generality, one can assume that they are all non-zero (a
  zero vector is bounded).

  For $i=1,\dots,c'$, write
  $\Phi_i=1/T^{e_i} (\Psi_{i,1},\dots,\Psi_{i,n})$, for a vector
  $(\Psi_{i,1},\dots,\Psi_{i,n})$ of generalized power  series of valuation zero,
  that is, such that all $\Psi_{i,j}$ are bounded and
  $(\psi_{i,1},\dots,\psi_{i,n})=\lim_0(\Psi_{i,1},\dots,\Psi_{i,n})$
  is non-zero. Hence, $e_i=-\nu(\Phi_i)$, and we have to prove that
  $e_i \le 0$.  By way of contradiction, we assume that $e_i > 0$.

  The series $\Phi_i$ cancels $b_1,\dots,b_m$. For $k=1,\dots,m$, let
  $b_k^H \in \KKbar[T][X_0,\bX]$ be the homogenization of $b_k$ with
  respect to $\bX$. From the equality
  $b_k^H(T^{e_i},\Psi_{i,1},\dots,\Psi_{i,n})= T^{e_i}b_k(\Phi_i)$, we
  deduce that $b_k^H(T^{e_i},\Psi_{i,1},\dots,\Psi_{i,n})=0$ for all
  $k$. We can write $b_k = b_{0,k} + T \tilde b_k$, for some
  polynomial $\tilde b_k$ in $\KKbar[T,\bX]$, and $\assG_1(0)$ implies
  that $\deg_\bX(\tilde b_k) \le \deg_\bX(b_{0,k})$. As a result, the
  homogenizations (with respect to $\bX$) of $b_{k},b_{0,k}$ and
  $\tilde b_k$ satisfy a relation of the form
  $b^H_k = b_{0,k}^H + X_0^{\delta_k} T \tilde b^H_k$, for some
  $\delta_k \ge 0$. This implies the equality
  $$b_{0,k}^H(T^{e_i},\Psi_{i,1},\dots,\Psi_{i,n}) + T^{\delta_k
    e_i+1}\tilde b_k^H(T^{e_i},\Psi_{i,1},\dots,\Psi_{i,n})=0.$$
  The second term has positive valuation, so that
  $b_{0,k}^H(T^{e_i},\Psi_{i,1},\dots,\Psi_{i,n})$ has positive
  valuation as well. Taking the coefficient of $T^0$, this means 
  that $b_{0,k}^H(0,\psi_{i,1},\dots,\psi_{i,n})=0$ (since $e_i > 0$), which implies 
  that $(\psi_{i,1},\dots,\psi_{i,n})=(0,\dots,0)$, in view of $\assG_2(0)$.
  This however contradicts the definition of $(\psi_{i,1},\dots,\psi_{i,n})$.
\end{proof}

\begin{lemma}\label{lemma:Jprimerad}
  The ideal $\frak{J}'$ is radical; equivalently, $c'=c$.
\end{lemma}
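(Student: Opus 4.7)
The plan is to reduce the equality $c = c'$ to a Jacobian-criterion argument at each of the $c'$ points $\Phi_i$, after first identifying $\mathfrak{J}'$ with the extension $\mathfrak{J}$ of $J$ to $\SS[\bX]$.

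First I would prove that the extension $\mathfrak{J}''$ of $J'' = Q_{t+1} \cap \cdots \cap Q_r$ to $\SS[\bX]$ is the unit ideal. Each associated prime $P_i$ of $J''$, for $i \in \{t+1,\ldots,r\}$, satisfies $\pi_T(V(P_i)) = \{\tau_i\}$ for some $\tau_i \in \KKbar$: for $i \in \{t+1,\ldots,s\}$ this holds by the definition of $t$, while for any embedded prime $i > s$ the inclusion $V(P_i) \subsetneq V(P_j)$ for some minimal prime $P_j$ forces $V(P_i)$ to be zero-dimensional. Hence $T - \tau_i \in P_i$ for all $i > t$, so a suitable power of $\prod_{i > t}(T - \tau_i)$ lies in $J''$, and this element is a unit in $\SS[\bX]$. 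Combined with the flat-base-change identity $\mathfrak{J} = \mathfrak{J}' \cap \mathfrak{J}''$ (Corollary~3.4 in~\cite{AtMc}), this yields $\mathfrak{J} = \mathfrak{J}'$, so $V(\mathfrak{J}')$ is exactly the Puiseux-series zero set of $\bB$.

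Next I would fix $i \in \{1,\ldots,c'\}$ and show that $\Phi_i$ has local multiplicity one in $V(\mathfrak{J}')$. By the previous lemma, $\Phi_i$ is bounded and $\varphi_i = \lim_0 \Phi_i$ lies in $\KKbar^n$. Taking the constant-in-$T$ coefficient of $\bB(T, \Phi_i) = 0$ gives $\varphi_i \in V(\bB_0)$. Assumption $\assG_3(0)$ ensures that $\langle \bB_0 \rangle$ is radical and zero-dimensional, so by the Jacobian criterion some $n \times n$ minor $\Delta$ of $\jac_\bX(\bB_0)(\varphi_i)$ is nonzero. The corresponding $n \times n$ minor of $\jac_\bX(\bB)(\Phi_i)$ has bounded entries in $\SS$ and specializes at $T = 0$ to $\Delta$; it is therefore a nonzero element of $\SS$, so $\jac_\bX(\bB)(\Phi_i)$ has rank $n$. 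A formal implicit function theorem argument in $\SS[[X_1 - \Phi_{i,1}, \ldots, X_n - \Phi_{i,n}]]$ then produces $n$ of the $b_k$'s that generate the maximal ideal at $\Phi_i$ in the completion, so the local ring of $\mathfrak{J}' = \mathfrak{J}$ at $\Phi_i$ is $\SS$ and $\Phi_i$ contributes multiplicity one to $c$. Summing over $i$ gives $c = c'$.

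I expect the main obstacle to be the first step: the case analysis distinguishing the minimal primes of $J''$ from the embedded ones is where the geometric separation between $J'$ and $J''$ is really used. Once $\mathfrak{J}' = \mathfrak{J}$ is in hand, the remainder is routine, relying only on the Jacobian criterion and the fact that the rank of a matrix with bounded entries can only drop under specialization at $T = 0$.
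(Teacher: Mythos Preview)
Your Step~2 is essentially the paper's argument and is correct: the Jacobian of $\bB_0$ has full rank $n$ at $\varphi_i$ by $\assG_2(0)$ and $\assG_3(0)$, the corresponding minor of $\jac_\bX(\bB)$ at $\Phi_i$ is a bounded Puiseux series with nonzero constant term, hence the Jacobian criterion applies at $\Phi_i$.

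Your Step~1, however, contains a genuine error. The definition of $t$ says that $V(P_1),\dots,V(P_t)$ are the one-dimensional components of $V(J)$ with dense $\pi_T$-image. For $i\in\{t+1,\dots,s\}$ the component $V(P_i)$ may therefore have dimension $\ge 2$, and nothing prevents such a component from dominating the $T$-axis; in that case $P_i\cap\KKbar[T]=\{0\}$ and its extension to $\SS[\bX]$ is a proper ideal of positive height $\le n-1$, not the unit ideal. Your embedded-prime argument has the same defect: an embedded $P_i$ contained in some $V(P_j)$ with $\dim V(P_j)\ge 2$ need not be zero-dimensional. So the conclusion $\mathfrak J''=\langle 1\rangle$ (equivalently $\mathfrak J=\mathfrak J'$) is not justified, and under the bare hypotheses $\assA_1,\assA_2,\assG(0)$ it can indeed fail.

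The good news is that Step~1 is unnecessary. You only need the trivial inclusion $\mathfrak J\subset\mathfrak J'$, which gives a surjection
\[
(\SS[\bX]/\mathfrak J)_{\mathfrak m_{\Phi_i}}\twoheadrightarrow(\SS[\bX]/\mathfrak J')_{\mathfrak m_{\Phi_i}}.
\]
Your Jacobian computation in Step~2 shows the left-hand side is a field; the right-hand side is then a nonzero quotient of a field (nonzero because $\Phi_i\in V(\mathfrak J')$), hence itself a field. This is exactly how the paper phrases it (``or equivalently that the localization of $\SS[\bX]/\mathfrak J$ at $\mathfrak m_{\Phi_i}$ is a field''), and it bypasses any global comparison of $\mathfrak J$ and $\mathfrak J'$. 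Drop Step~1 and your proof is complete.
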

\begin{proof}
  We know that $\frak{J}'$ has dimension zero
  (Lemma~\ref{lemma:dimJprime}), so it is enough to prove that for
  $i=1,\dots,c'$, the localization of
  $\SS[\bX]/\frak{J}'$ at the
  maximal ideal $\mathfrak{m}_{\Phi_i}$ is a field, or equi\-valently
  that the localization of
  $\SS[\bX]/\frak{J}$ at
  $\mathfrak{m}_{\Phi_i}$ is a field.  Recall that
  $\SS$ is algebraically closed, hence a perfect field. By the Jacobian
  criterion~\cite[Theorem~16.19.b]{Eisenbud95}, this is the case if
  and only if the Jacobian matrix of $\bB$ with respect to $\bX$ has
  full rank $n$ at $\Phi_i$. We know that $\varphi_i=\lim_0(\Phi_i)$
  is a root of $\bB_0$ (Lemma~\ref{lemma:Z1}), and the Jacobian
  criterion conversely implies that since the ideal
  $\langle \bB_0 \rangle$ is radical (by assumption $\assG_3(0)$) and
  zero-dimensional (by assumption $\assG_2(0)$), the Jacobian matrix
  of $\bB_0(\bX)=\bB(0,\bX)$ has full rank $n$ at $\varphi_i$. Since
  this matrix is the limit at zero of the Jacobian matrix of $\bB$
  with respect to $\bX$, taken at $\Phi_i$, the latter must have full
  rank $n$, and our claim that $\frak{J}'$ is radical is proved.
\end{proof}

To finish the proof of Proposition~\ref{prop:degree_fiber}, we have to
establish that $V(\bB_0)$ consists of exactly $c$ solutions. First,
since $V(\bB_0)$ is finite, Lemma~\ref{lemma:vPi} implies that $\bx$
is in $V(\bB_0)$ if and only if $(0,\bx)$ is in $V(J' + \langle
T\rangle)$. Next, remark that the two previous lemma taken together
imply that $c=\kappa$; thus, in view of Lemma~\ref{lemma:Z1}, to
conclude, it is enough to prove that for $i,i'$ in $\{1,\dots,c\}$,
with $i \ne i'$, we have $\varphi_i \ne \varphi_{i'}$.

Suppose to the contrary that $\varphi_i = \varphi_{i'}$. We know that
the Jacobian matrix of $\bB_0$ has full rank $n$ at $\varphi_i$; up to
reindexing, we assume that rows $1,\dots,n$ correspond to a maximal
non-zero minor. Let $\bB'=(b_1,\dots,b_n)$.

Let $z=\nu(\Phi_i-\Phi_{i'})$; since $\varphi_i = \varphi_{i'}$, we
have $z > 0$; it is finite else we would have $\Phi_i=\Phi_{i'}$ which
contradicts $i\neq i'$. We can thus write $\Phi_i=f + T^z \delta_i$
and $\Phi_{i'}=f + T^z \delta_{i'}$, for some vectors of bounded
series $f, \delta_i, \delta_{i'}$ such that all terms in $f$ have
valuation less than~$z$; in addition,
$\lim_0(\delta_i) \ne \lim_0(\delta_{i'})$. Write the Taylor expansion
of $\bB'$ at $f$ as
$$\bB'(\Phi_i) = \bB'(f) + \jac_f(\bB',\bX) T^z \delta_i + T^{2z} r_i =0$$
and
$$\bB'(\Phi_{i'}) = \bB'(f) + \jac_f(\bB',\bX) T^z \delta_{i'} + T^{2z}
r_{i'} =0,$$ for some vectors of bounded series $r_i,r_{i'}$.  By
subtraction and division by $T^z$, we obtain
$\jac_f(\bB',\bX) (\delta_i-\delta_{i'}) = T^z r$, for some vector of
bounded series $r$.  Since $\jac_f(\bB',\bX)$ is invertible, this
further gives $\delta_i-\delta_{i'} = T^z r'$, where again $r'$ is a
vector of bounded series.  However, by construction the left-hand side
has valuation zero, while the right-hand side has positive valuation
(since $z > 0$). Hence, we derived a contradiction to our assumption
that $\varphi_i = \varphi_{i'}$. The proof of
Proposition~\ref{prop:degree_fiber} is complete. (Although we do not
need it now, the linearization used above also implies that all
$\Phi_i$ are actually power series.)

We end this section with the proof that $e\geq c$. 

\begin{lemma}\label{lemma:e-geq-c}
Under the above notations and assumptions, the inequality $e\geq c$ holds.   
\end{lemma}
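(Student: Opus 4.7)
The plan is to recognize $c$ as a generic-fiber count attached to the dominant one-dimensional components of $V(J)$, and then bound it by their total geometric degree via a B\'ezout-type estimate.

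First, I would invoke Lemma~\ref{lemma:Jprimerad}: under the standing hypotheses (in particular $\assG(0)$, which is part of $\assD_1$), $\mathfrak{J}'$ is radical. By faithful flatness of $\SS$ over $\KKbar(T)$, this implies that $\widetilde{J'} = J'\cdot\KKbar(T)[\bX]$ is radical as well. In the primary decomposition $\widetilde{J'}=\widetilde{Q_1}\cap\cdots\cap\widetilde{Q_t}$, each $\widetilde{Q_i}$ is then equal to $\widetilde{P_i}$. Contracting back along the localization $\KKbar[T,\bX]\hookrightarrow\KKbar(T)[\bX]$, and using that each $P_i$ with $i\in\{1,\dots,t\}$ has trivial intersection with $\KKbar[T]$ (since $\pi_T(V(P_i))$ is dense in $\KKbar$), one has $\widetilde{Q_i}\cap\KKbar[T,\bX]=Q_i$ and similarly $\widetilde{P_i}\cap\KKbar[T,\bX]=P_i$; hence $Q_i=P_i$ for all $i=1,\dots,t$. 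Consequently $J'=P_1\cap\cdots\cap P_t$, and
\[
 c \;=\; \dim_{\SS}\!\left(\SS[\bX]/\mathfrak{J}'\right) \;=\; \dim_{\KKbar(T)}\!\left(\KKbar(T)[\bX]/\widetilde{J'}\right) \;=\; \sum_{i=1}^{t}\dim_{\KKbar(T)}\!\left(\KKbar(T)[\bX]/\widetilde{P_i}\right).
\]

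Next, I would interpret each summand geometrically. Since $\widetilde{P_i}$ is a maximal ideal of $\KKbar(T)[\bX]$, $\dim_{\KKbar(T)}(\KKbar(T)[\bX]/\widetilde{P_i})$ equals the degree of the dominant projection $\pi_T|_{V(P_i)}\colon V(P_i)\to\KKbar$, that is, the cardinality of a generic fiber of $\pi_T$ on $V(P_i)$. Because this generic fiber is obtained by intersecting the irreducible curve $V(P_i)\subset\KKbar{}^{n+1}$ with the hyperplane $\{T=\tau\}$, and $V(P_i)\not\subset\{T=\tau\}$ by dominance, a classical B\'ezout argument shows that such an intersection has at most $\deg(V(P_i))$ points, giving
\[
 \dim_{\KKbar(T)}\!\left(\KKbar(T)[\bX]/\widetilde{P_i}\right) \;=\; \deg\!\left(\pi_T|_{V(P_i)}\right) \;\leq\; \deg(V(P_i)).
\]

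Summing these inequalities and noting that $V(P_1)\cup\cdots\cup V(P_t)$ is a union of irreducible one-dimensional components of $V(J)$, assumption $\assD_2$ yields
\[
 c \;\leq\; \sum_{i=1}^{t}\deg(V(P_i)) \;=\; \deg\!\left(V(P_1)\cup\cdots\cup V(P_t)\right) \;\leq\; e,
\]
which is the desired inequality. The only step that will require real care is the B\'ezout bound $\deg(\pi_T|_{V(P_i)})\leq\deg(V(P_i))$, i.e.\ that the cardinality of a generic coordinate fiber of an irreducible curve is bounded by its intrinsic degree; this is classical but deserves an explicit mention. Everything else is bookkeeping on the primary decomposition already set up earlier in Section~\ref{sec:homotopy}.
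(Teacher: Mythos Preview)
Your argument is correct, but it takes a different route from the paper. The paper works at the special fiber $T=0$: by Proposition~\ref{prop:degree_fiber} and $\assG(0)$, the set $V(\bB_0)$ consists of exactly $c$ points; by Lemma~\ref{lemma:vPi} these all lie in $V(J'_0)$; and any fiber of the curve $V(J')$ has at most $\deg(V(J'))\le e$ points. This is a three-line argument that reuses conclusions already drawn earlier in the section.

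You instead work at the generic fiber: you use Lemma~\ref{lemma:Jprimerad} to deduce that $\mathfrak{J}'$, and hence $\widetilde{J'}$ and $J'$, are radical, so that $c$ splits as the sum of the degrees of the projections $\pi_T|_{V(P_i)}$, each of which is bounded by $\deg(V(P_i))$ via B\'ezout. This is sound, and the contraction step $\widetilde{Q_i}\cap\KKbar[T,\bX]=Q_i$ is justified since $Q_i\subset P_i$ and $P_i\cap\KKbar[T]=\{0\}$. The trade-off is that you invoke more algebraic machinery (faithful flatness, behaviour of primary components under localization) to reach the same inequality; the paper's version needs none of that because at $T=0$ one already knows the fiber has exactly $c$ reduced points, so it suffices to bound the cardinality of a single specific fiber of the curve.
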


\begin{proof}
  By definition of the integer $e$ given in ${\assD}_2$, and of the
  ideal $J'$, $e$ is greater than or equal to the degree of $V(J')$,
  which is an algebraic curve.

  The degree of this curve is greater than or equal to the cardinality
  of any fiber $V(J'_\tau)$; in particular, we have
  \[
  \sharp V(J'_0) \leq \deg(V(J')) \leq e.
  \]

  Besides, Proposition~\ref{prop:degree_fiber} establishes that the
  number of isolated points of $V(\bB_0)$ equals $c$ (because the
  ideal generated by $\langle \bB, T\rangle$ is radical,
  multiplicities are equal to $1$). By Lemma~\ref{lemma:vPi}, all
  these points lie in $V(J'_0)$ which allows us to deduce $c\leq e$.
\end{proof}


\subsection{Proofs of Propositions~\ref{prop:compute_isolated}
  and~\ref{prop:compute_regular}} \label{proof:prop2}

Let $\scrR_0 =((w_{0},v_{0,1},\dots,v_{0,n}),\lambda)$ be a
zero-dimensional parametrization of $V(\bB_{0})$ obtained by means of
assumption $\assD_1$, with $q_0$ and all $v_{0,j}$ in $\KK[Y]$. Note
that the degree of $w_0$ is the integer $c$.

\paragraph{Decomposing $\scrR_0$.}
We start by decomposing $\scrR_0$ into finitely many zero-dimensional
parametrizations
$\scrR_{0,j}=((w_{0,j},v_{0,j,1},\dots,v_{0,j,n}),\lambda)_{1\le j\le
  t}$, all with coefficients in $\KK$, such that for $j$ in
$\{1,\dots,t\}$, there exist $\bi_j=(i_{j,1},\dots,i_{j,n})$ such that
the Jacobian matrix of $(b_{0,i})_{i \in \bi_j}$ has full rank $n$ at
$\bx$, for all $\bx$ in $Z(\scrR_{0,j})$.

If $w_0$ were irreducible, we would simply evaluate the Jacobian
matrix of $\bB_0$ at the point $(v_{0,1}/w_0',\dots,v_{0,n}/w_0')$,
which has coordinates in the field $\LL=\KK[Y]/\langle w_0 \rangle$,
and find a non-zero minor of size $n$ in this matrix. It takes
$O(n \sigma)$ operations in $\LL$ to compute this Jacobian matrix, and
$O(mn^2)$ operations in $\LL$ to find an invertible minor, e.g. using
Gaussian elimination. The total time, under the assumption that $w_0$
is irreducible, is thus $O(mn^2 + n\sigma)$ operations in $\LL$, that is,
$\softO(c (mn^2 + n\sigma) )$ operations in $\KK$.

When $w_0$ is not irreducible, $\LL=\KK[Y]/\langle w_0 \rangle$ is a
product of fields. We can still apply the same process as in the
irreducible case; if the algorithm goes through, we have obtained our
answer. In general, one workaround would be to factor $w_0$, but we do
not want our runtime to depend on the cost of factoring polynomials
(else our analysis would depend on the bit size of the data when
$\KK=\mathbb{Q}$). Hence, we will use {\em dynamic evaluation
  techniques}, as in~\cite{D5}. Indeed, the only issue that may arise
is that we attempt to invert a zero-divisor. If this is the case, it
means we have found a non-trivial factor $r_0$ of $w_0$: we can then
replace $\scrR_0$ by two new zero-dimensional parametrizations,
$\scrR'_0=((r_0,(v_{0,1}/s_0) \bmod r_0,\dots,(v_{0,n}/s_0)\bmod
r_0),\lambda)$
and
$\scrR''_0=((s_0,(v_{0,1}/r_0) \bmod s_0,\dots,(v_{0,n}/ r_0)\bmod
s_0),\lambda)$,
with $s_0=w_0/r_0$, that define a partition of $Z(\scrR_0)$ into the
subsets $Z(\scrR'_0)$ and $Z(\scrR''_0)$ where $r_0$ vanishes, resp.\
is non-zero.

We can then start over again, from $\scrR'_0$ and $\scrR''_0$
independently. Overall, in the worst case, this splitting process
induces a extra factor $O(c)$ in the runtime compared to the case
where $w_0$ is irreducible, for a total of $\softO( c^2 (mn^2 + n\sigma) )$
operations in $\KK$.

\paragraph{Lifting power series and rational reconstruction.}
For $j=1,\dots,t$, we can then apply Newton iteration to the system
$(b_i)_{i \in \bi_j}$ to lift
$\scrR_{0,j}=((w_{0,j},v_{0,j,1},\dots,v_{0,j,n}),\lambda)$ into a
zero-dimensional parametrization
$\scrR_{j}=((w_{j},v_{j,1},\dots,v_{j,n}),\lambda)$ with coefficients
in $\KK[[T]]/\langle T^{2e}\rangle$, for $e$ as in ${\assD}_2$. 

As explained in~\cite[Section~2.2]{SaSc16}, using the algorithm
of~\cite{GiLeSa01}, this can be done using $\softO(c\,e (\sigma+n^2)n)$
operations in $\KK$.  Using the Chinese Remainder Theorem, we can
combine all $\scrR_{j}$ into a single zero-dimensional parametrization
$\scrR$ with coefficients in $\KK[[T]]/\langle T^{2e}\rangle$, since
for $j\ne j'$, $w_{0,j}$ and $w_{0,j'}$ generate the unit ideal in
$\KK[[T]]/\langle T^{2e}\rangle$; this takes time 
$\softO(c\,e\,n)$.

Using the notation of the previous subsection, the zeros of $\scrR$ in
$\KKbar[[T]]/\langle T^{2e}\rangle$ are the truncations of the power
series roots $\Phi_1,\dots,\Phi_c$ of $\mathfrak{J}'$. Since $V(J')$ has
degree at most $e$, knowing $\scrR$ at precision $2e$ allows us to
reconstruct a zero-dimensional parametrization $\scrS$ with
coefficients in $\KK(\tau)$ such that $Z(\scrS)=V(\mathfrak{J}')$,
with all coefficients having numerator and denominator of degree at
most $e$~\cite[Theorem~1]{Schost03}.  This is done by applying
rational function reconstruction to all coefficients of $\scrR$, as
in~\cite{Schost03}, and takes time $\softO(c\,e\,n)$.

All in all, the total cost of this step is $\softO(c\,e\,n (\sigma+n^2))$.

\paragraph{A finite set containing the isolated points of $V(\bC)$.}
As we did in the previous subsection for $T=0$, we let
$\Phi'_1,\dots,\Phi'_c$ be the roots of $\mathfrak{J}'$ in the field
of generalized power series in $T'$ with coefficients in $\KKbar$ at
$T=1$, with $T'=T-1$. Without loss of generality, we assume that
$\Phi'_1,\dots,\Phi'_{\kappa'}$ are bounded, and
$\Phi'_{\kappa'+1},\dots,\Phi'_c$ are not, for some $\kappa'$ in
$\{0,\dots,c\}$, and we define $\varphi'_1,\dots,\varphi'_{\kappa'}$
by $\varphi'_i=\lim_0(\Phi'_i)\in\KKbar{}^n$ for $i=1,\dots,\kappa'$.
By Lemma~\ref{lemma:Z1},
$V(J' + \langle T-1\rangle) = \{ \varphi'_i \mid i=1,\dots,\kappa'\}$.

We can now specify our requirements on the linear form $\lambda$.
Following~\cite{RRS} and~\cite{SaSc16}, we ask that $\lambda$ is a
{\em well-separating element}, that is:
\begin{enumerate}
\item $\lambda$ is separating for $V(\mathfrak{J}')=\{\Phi'_1,\dots,\Phi'_c\}$;
\item $\lambda$ is separating for $V(J' + \langle T-1\rangle) = \{ \varphi'_1,\dots,\varphi'_{\kappa'}\}$.
\item $\nu(\lambda(\Phi_i)) = \mu_i$ for all $i=1,\dots,c$, where $\nu$ denotes
 the $T'$-adic valuation.
\end{enumerate}
Applying Lemma~14 in~\cite[Section 3]{SaSc16}, these conditions are
satisfied for a generic choice of $\lambda$. When this is the case,
Lemma~4.4 in~\cite{RRS} shows how to recover a zero-dimensional
parametrization $\scrR_1=((w_1,v_{1,1},\dots,v_{1,n}),\lambda)$ with
coefficients in $\KK$ for the limit set
$V(J' + \langle T-1\rangle) =\{\varphi'_i \mid i=1,\dots,{\kappa'}\}$
starting from the previously computed rational parametrization
$\scrS$, in time $\softO(c\,e\,n)$.

When the chosen form is not generic enough, the algorithm may fail, or
output a parametrization of a subset of the zero-dimensional set we
aim to compute. We refer to \cite[Remark 14]{SaSc16} for a discussion
on probabilistic aspects.

\paragraph{Cleaning.}
Finally, summing all the previous costs, one performs
$$
\softO(c^2 (mn^2+n\sigma) + c\, e\,n (\sigma+n^2) )
$$
operations in $\KK$ for the first three steps (decomposition of
$\scrR_0$, lifting and rational reconstruction and getting a finite
set containing the isolated points of $V(\bC)$).

Let us first show how to prove Proposition~\ref{prop:compute_isolated}.
Lemma~\ref{lemma:vPi} implies that for any isolated solution $\bx$ of
$\bC$, $(1,\bx)$ is in $V(J' + \langle T-1\rangle)$, so in a second
time, we discard from $V(J' + \langle T-1\rangle)$ those points that
do not correspond to isolated points of $V(\bC)$. All such points
belong to a positive-dimensional component of $V(\bC)$. Hence, we can
use the algorithm of Section~\ref{sec:isolated}. By
Proposition~\ref{prop:degree_fiber}, we can take $c$ as an upper bound
on the multiplicity of isolated solutions of $\bC$.

Using the same dynamic evaluation techniques as in the first paragraph
above, we can use the algorithm of Section~\ref{sec:isolated} as if
$Z(\scrR_1)$ were an irreducible variety, with an overhead $\softO(c)$
to account for the cost of operations in $\KK[Y]/\langle w_1 \rangle$.
Since the number of splittings is bounded by $c$ also, the total
overhead is $\softO(c^2)$.  The runtime deduced from
Proposition~\ref{prop:testisolated} is then
$$\softO(c^6 n^4  +  c^5 m n^2  +c^6 n \sigma )$$ operations in~$\KK$. Adding all
costs seen so far, we prove
Proposition~\ref{prop:compute_isolated}. The resulting algorithm,
which we simply name $\mathsf{Homotopy}$, is described hereafter.

\begin{algorithm}
\caption{$\mathsf{Homotopy}(\Gamma,\scrR)$}
{\bf Input}: a straight-line program $\Gamma$ of length $\sigma$ that computes $\bB \in \KK[T,\bX]^m$\\
\textcolor{white}{{\bf Input}:} a zero-dimensional parametrization $\scrR$ of the system $\bB_0$\\
{\bf Output}: a zero-dimensional parametrization of the isolated points of $V(\bC)$, with $\bC=\bB_1$
\begin{enumerate}
  \setlength\itemsep{0em}
\item decompose $\scrR_0$ into $(\scrR_{0,j})_{1 \leq j \leq t}$\\
$\text{\sf{cost:~}} \softO(c^2 (mn^2 + n\sigma) )$
\item lift $(\scrR_{0,j})_{1 \leq j \leq t}$ to $(\scrR_j)_{1\leq j \leq t}$ with 
  coefficients in $\KK[[T]]/\langle T^{2e}\rangle$\\
$\text{\sf{cost:~}} \softO(c\,e (\sigma+n^2)n)$
\item combine $(\scrR_j)_{1 \leq j \leq t}$ into  $\scrR$ with coefficients in $\KK[[T]]/\langle T^{2e}\rangle$\\
$\text{\sf{cost:~}} \softO(c\,e\,n)$
\item compute a zero-dimensional parametrization $\scrS$ with coefficients in $\KK(T)$ from $\scrR$\\
$\text{\sf{cost:~}} \softO(c\,e\,n)$
\item deduce a zero-dimensional parametrization $\scrR_1$ with coefficients in $\KK$ from $\scrS$\\
$\text{\sf{cost:~}} \softO(c\,e\,n)$
\item\label{step:homot:final} remove from $Z(\scrR_1)$ points that are not isolated in $V(\bC)$ \\
  $\text{\sf{cost:~}} \softO(c^6 n^4  + c^5 m n^2   + c^6 n \sigma )$
\end{enumerate}
\label{DetSys}
\end{algorithm}

The only difference to prove Proposition~\ref{prop:compute_regular} is
that we now need to discard from $V(J' + \langle T-1\rangle)$ those
points at which the Jacobian matrix associated to $\bC$ is not full
rank. Doing that is easier than discarding those points which are not
isolated. It suffices to construct a straight-line program evaluating
that Jacobian matrix; this yields a straight-line program of length
$\sigma'\in O(n \, \sigma)$. Next, one evaluates this matrix modulo
$w_1$, as done previously when we were decomposing $\scrR_0$, and use
Gaussian elimination modulo $w_1$ to identify divisors of $w_1$ that
need to be removed. The overall cost is similar to that of decomposing
$\scrR_0$, that is, $\softO(c^2(mn^2+n\sigma))$ operations in $\KK$.
The final cleaning step is done using Algorithm $\mathsf{Clean}$ of
\cite{GiLeSa01} whose cost is dominated by the previous computations.

All in all, the total cost is
$$
\softO(c^2(mn^2+n\sigma) + c\, e\,n (\sigma+n^2)  )
$$
operations in $\KK$. Taking into account the inequality
$e\geq c$ (Lemma~\ref{lemma:e-geq-c}) this simplifies as 
$$
\softO(c^2 mn^2  + c\, e\,n (\sigma+n^2) ),
$$
which ends the proof of Proposition~\ref{prop:compute_regular}. In the
sequel, the resulting algorithm is called
$\mathsf{Homotopy\_simple}$. It differs from Algorithm
$\mathsf{Homotopy}$ at Step \ref{step:homot:final} where the cleaning step we
just described replaces the one of $\mathsf{Homotopy}$.


\section{Properties of determinantal ideals}\label{sec:check}

The following sections will show how to apply the algorithms of the
previous section to Problems~\eqref{problem2} and~\eqref{problem3}, by
applying Proposition~\ref{prop:compute_isolated}
(resp. Proposition~\ref{prop:compute_regular}) to suitable
deformations of our input systems. This proposition requires several
assumptions to hold: some (noted $\assA_1$ and $\assA_2$; see
Section~\ref{sec:homotopy}) are related to the deformed system as a
whole, while the others ($\assG_1$ to $\assG_3$) involve properties at
the starting point of the homotopy ($T=0$). In this section, we prove
that a large variety of systems satisfy $\assA_1$ and $\assA_2$.

Let $T$ and $\bX=(X_1,\dots,X_n)$ be variables, let $J$ be an ideal in
$\KKbar[T,\bX]$, and let us recall properties $\assA_1$ and $\assA_2$: 
\begin{description}[leftmargin=*]
\item[$\assA_1.$] Any irreducible component of $V(J) \subset
  \KKbar{}^{n+1}$ has dimension at least one.
\item[$\assA_2.$] For any maximal ideal $\m \subset\KKbar[T,\bX]$,
  if the localization $J_\m \subset \KKbar[T,\bX]_\m$ has height $n$,
  then it is unmixed (that is, all associated primes have height $n$).
\end{description}

We pointed out in the previous section that when $J$ is generated by
$n$ polynomials, the fact that these properties hold is well-known. To
study the case of maximal minors of a polynomial matrix, we will use
the following results, taken from~\cite[Section~6]{EN62}. Let $R$ be a
Cohen-Macaulay ring and let $I$ be the ideal generated by all
$p$-minors of a $p\times q$ matrix $\mF \in R^{p\times q}$, with
$p \le q$. Then:
\begin{itemize}
\item if $I \ne R$, then the height of $I$ is at most $q-p+1$;
\item if $I$ has height $q-p+1$, then $I$ is unmixed (all associated
  primes have height $q-p+1$).
\end{itemize}

Let then $G=(g_1,\dots,g_s)$ be polynomials in $\KKbar[T,\bX]$, with $s
\le n$, and let $\mF$ be a polynomial matrix in $\KKbar[T,\bX]^{p \times
  q}$, with $p \le q$. We define $J = I_p(\mF) + \langle
g_1,\dots,g_s\rangle,$ that is $J$ is the ideal in $\KKbar[T,\bX]$
generated by all $p$-minors of $\mF$, together with the polynomials
$G$.

\begin{proposition}\label{prop:KH1H2}
  If $n=q-p+s+1$, the ideal $J$ satisfies $\assA_1$ and $\assA_2$.
\end{proposition}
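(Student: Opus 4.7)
The plan is to deduce both properties from the Eagon-Northcott bounds recalled in the excerpt, exploiting that the ambient ring $R = \KKbar[T,\bX]$ is regular (hence Cohen-Macaulay) and that its localizations share this property. Write $\mathfrak g = \langle g_1,\dots,g_s\rangle$ and $I = I_p(\mF)$, so $J = \mathfrak g + I$. Beyond Eagon-Northcott, the main tool I invoke is Serre's intersection inequality in a regular ring: $\text{ht}(K_1 + K_2) \le \text{ht}(K_1) + \text{ht}(K_2)$ for arbitrary ideals $K_1, K_2$.

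For $\assA_1$, let $P$ be a minimal prime of $J$; the localization $J_P$ is then $PR_P$-primary, so $\text{ht}(P) = \text{ht}(J_P)$. In the regular local ring $R_P$, combining Serre's inequality with Krull's Hauptidealsatz $\text{ht}(\mathfrak g_P) \le s$ and the Eagon-Northcott bound $\text{ht}(I_P) \le q-p+1$ (applied in the Cohen-Macaulay ring $R_P$, the ideal $I_P$ being proper since $J_P$ is) yields
\[
\text{ht}(P) \;=\; \text{ht}(J_P) \;\le\; \text{ht}(\mathfrak g_P) + \text{ht}(I_P) \;\le\; s + (q-p+1) \;=\; n.
\]
Since $\dim R = n+1$, this says exactly that every irreducible component of $V(J)$ has dimension at least one.

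For $\assA_2$, fix a maximal ideal $\m$ with $\text{ht}(J_\m) = n$. Running the same chain of inequalities in $R_\m$ forces all of them to be equalities, so in particular $\text{ht}(\mathfrak g_\m) = s$. Since $R_\m$ is Cohen-Macaulay, the sequence $g_1,\dots,g_s$ is then regular in $R_\m$, so $R' := R_\m/\mathfrak g_\m$ is Cohen-Macaulay of dimension $q-p+2$. The image $\bar I$ of $I$ in $R'$ is still the ideal of $p$-minors of a $p\times q$ matrix, and by height additivity modulo a regular sequence in a Cohen-Macaulay ring, $\text{ht}_{R'}(\bar I) = \text{ht}(J_\m) - s = q-p+1$. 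The second conclusion of Eagon-Northcott applied in the Cohen-Macaulay ring $R'$ then yields that $\bar I$ is unmixed: every associated prime of $\bar I$ has height $q-p+1$ in $R'$.

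To finish, I transfer unmixedness back to $R_\m$. Every associated prime of $J_\m$ contains $\mathfrak g_\m$, so the bijection $P\leftrightarrow P/\mathfrak g_\m$ identifies $\mathrm{Ass}_{R_\m}(R_\m/J_\m)$ with $\mathrm{Ass}_{R'}(R'/\bar I)$. In the Cohen-Macaulay ring $R_\m$, for any prime containing the regular sequence $\mathfrak g_\m$ one has $\text{ht}(P) = s + \text{ht}_{R'}(P/\mathfrak g_\m)$, which equals $s + (q-p+1) = n$; this is exactly $\assA_2$. The main point requiring care is the height bookkeeping across the quotient by the regular sequence $\mathfrak g_\m$: the additivity formula, together with the bijection of associated primes, are standard facts in Cohen-Macaulay local algebra, but must be invoked cleanly so that the heights and primes line up on either side of the quotient.
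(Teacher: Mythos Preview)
Your proof is correct, but it is organized in the mirror image of the paper's argument, and it leans on a heavier tool in one place.

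For $\assA_1$, the paper proceeds in two elementary steps: first Eagon--Northcott gives that every component of $V(I_p(\mF))$ has dimension at least $(n+1)-(q-p+1)=s+1$, and then Krull's theorem, applied to the $s$ further equations $G$, brings this down to at least $1$. You compress this into a single application of Serre's intersection inequality $\text{ht}(I+K)\le\text{ht}(I)+\text{ht}(K)$ in a regular local ring. Your route is shorter to write but invokes a genuinely deeper fact; the paper's two-step version stays within Krull and Eagon--Northcott.

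For $\assA_2$, the two proofs swap the order of the quotients. The paper first shows (via a Krull-based dimension count) that $\langle\bB\rangle_\m$ has height exactly $q-p+1$, invokes~\cite[Theorem~18.18]{Eisenbud95} to conclude that $\KKbar[T,\bX]_\m/\langle\bB\rangle_\m$ is Cohen--Macaulay, and then applies Macaulay's unmixedness theorem to the $s$ generators $G$ in that quotient. You instead use the equality case of Serre's inequality to force $\text{ht}(\mathfrak g_\m)=s$, deduce that $g_1,\dots,g_s$ is a regular sequence so that $R'=R_\m/\mathfrak g_\m$ is Cohen--Macaulay, and then apply the Eagon--Northcott unmixedness statement to the determinantal ideal $\bar I$ in $R'$. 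The height additivity and the bijection of associated primes across the regular-sequence quotient that you invoke are indeed standard, and your bookkeeping is clean. Either ordering works; the paper's has the mild advantage that the height of the determinantal part is obtained by a direct geometric argument rather than via Serre, so the whole proof stays at the level of Krull plus the two Eagon--Northcott facts quoted in the section.
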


The proof occupies the rest of this section.  Let $\bB=M_p(\mF)$, the
set of all $p$-minors of~$\mF$, and let $V_1,\dots,V_s$ be the
$\KKbar$-irreducible components of $V(J)\subset \KKbar{}^n$.  We prove
in the next paragraph that $\dim(V_i) \ge (n+1) -(q-p+1)$ holds for
all $i$. Of course, we can assume that $V(J)\ne \emptyset$, so that $J
\ne \KKbar[T,\bX]$, otherwise the proposition itself would be vacuously true.

First, remark that for a point $\bx$ in $V(J)\subset \KKbar{}^{n+1}$,
and writing $\m \subset \KKbar[T,\bX]$ for the maximal ideal at $\bx$,
the height of $J_\m$ in $\KKbar[T,\bX]_\m$ is equal to $(n+1)-\max\{
\dim(V_i) \mid 1 \le i \le s, \bx \in V_i\}$. For $i=1,\dots,s$, let
then $\bx_i$ be a point in $V_i$ that does not belong to any other
$V_{i'}$, $i' \ne i$, and let $\m_i$ be the corresponding maximal
ideal; then, the previous equality becomes ${\rm
  height}(J_{\m_i})=(n+1)-\dim(V_i)$.  Applying the first item
mentioned above in $\KKbar[T,\bX]_{\m_i}$ (which is Cohen-Macaulay),
we deduce that $(n+1)-\dim(V_i) \le q-p+1$, that is, $\dim(V_i) \ge
(n+1) -(q-p+1)$.

Notice that we can rewrite $(n+1)-(q-p+1)$ as $s+1$.  Since $G$
consists of $s$ polynomials, all irreducible components of $V(J)$ must
have dimension at least $1$, by Krull's theorem;  property
$\assA_1$ follows.

We next prove $\assA_2$. Let $J_\m=Q_1 \cap \cdots \cap Q_t$ be an
irredundant primary decomposition of $J_\m$ in $\KKbar[T,\bX]_\m$, and
let $P_1,\dots,P_t$ be the corresponding primes; we assume that the
height of $J_\m$ is $n$, and our goal is to prove that all $P_i$'s
have height~$n$.

Of course, we can restrict to an ideal $\m$ containing $J$; $\m$ is
then the maximal ideal at a point $\x \in \KKbar{}^{n+1}$ that belongs
to $V(J)$. The height of the localization
$J_\m \subset \KKbar[T,\bX]_\m$ can be rewritten as
$(n+1)-\dim(V_\x)$, where $V_\x$ is the union of the irreducible
components of $V(J)$ passing through $\x$. Our assumption in $\assA_2$
is that the height of $J_\m$ is $n$, that is, that
$\dim(V_\x)=1$. Thus, every irreducible component of $V(J)$ containing
$\x$ has dimension~$1$.

Let $W$ be an irreducible component of $V(\bB)$ containing $\x$.  We
claim that $\dim(W)=s+1$. Indeed, we mentioned in the first paragraph
that $\dim(W) \ge s+1$. If $\dim(W) > s+1$, then by Krull's theorem,
every irreducible component of $W \cap V(G)$ has dimension greater
than $1$; since $W \cap V(G)$ is a subset of $V(J)$ and contains $\x$,
we have reached a contradiction. Now, the fact that $\dim(W)=s+1$ for
any irreducible component of $V(\bB)$ containing $\x$ means that
$\langle \bB \rangle_\m$ has height $n-s=q-p+1$.  As a
result,~\cite[Theorem~18.18]{Eisenbud95} shows that
$\KKbar[T,\bX]_\m/\langle \bB \rangle_\m$ is Cohen-Macaulay.

For an ideal $I \subset \KKbar[T,\bX]_\m$, we denote by $\bar I$ its
image modulo $\langle \bB \rangle_\m$.  By the remarks
following~\cite[Theorem~IV.5.9]{ZaSa58},
$\bar Q_1 \cap \cdots \cap \bar Q_t$ is an irredundant primary
decomposition of $\bar J_\m$ in
$\KK[T,\bX]_\m/\langle \bB \rangle_\m$, with associated primes
$\bar P_1,\dots,\bar P_t$. In addition, if we let $P_1,\dots,P_u$ be
the minimal primes of $J_\m$, for some $s \le t$,
$\bar P_1,\dots,\bar P_u$ are the minimal primes of $\bar J_\m$.

Our assumption says that $P_1,\dots,P_u$ have height $n$. Because
$\KKbar[T,\bX]_\m/\langle \bB \rangle_\m$ is local and Cohen-Macaulay, for any
$i \le t$, we have 
$$\dim(\KKbar[T,\bX]_\m/\langle \bB \rangle_\m)=\dim((\KKbar[T,\bX]_\m/\langle \bB \rangle_\m) / \bar P_i) + {\rm height}(\bar P_i)$$
by~\cite[Theorem~17.4(i)]{Matsumura86}.
The factor ring $(\KKbar[T,\bX]_\m/\langle \bB \rangle_\m) / \bar P_i$ is simply
$\KKbar[T,\bX]_\m/P_i$, so this can be rewritten as
$$s+1 = \dim(\KKbar[T,\bX]_\m/P_i) + {\rm height}(\bar P_i).$$ For $i\le
u$, we have $\dim(\KKbar[T,\bX]_\m/P_i)=1$, so that ${\rm height}(\bar
P_i)=s$; for $i > u$, the height of $\bar P_i$ is necessarily
$s+1$. Because $\bar P_1,\dots,\bar P_u$ are the minimal primes of
$\bar J_\m$, the height of $\bar J_\m$ is thus $s$ as well.

The ideal $\bar J_\m$ is generated in $\KKbar[T,\bX]_\m/\langle \bB
\rangle_\m$ by $G=(g_1,\dots,g_s)$. Since $\KKbar[T,\bX]_\m/\langle
\bB \rangle_\m$ is Cohen-Macaulay, $\bar J_\m$ is unmixed, that is,
$u=t$.  As a result, $Q_1 \cap \cdots \cap Q_u$ is an irredundant
primary decomposition of $J_\m$, and $J_\m$ is unmixed.


\section{The column-degree homotopy}\label{sec:columndegree}

We can now prove the first half of our results, dealing with the column
degree structure of our matrices. As input, we are given a matrix
$\mF =[f_{i,j}]\in \KK[X_1,\dots,X_n]^{p \times q}$ and polynomials
$G=(g_1,\dots,g_s)$ in $\KK[X_1,\dots,X_n]$, with $p \leq q$ and
$n = q-p+s+1$.  We want to compute the isolated points (or the simple
 points) of $\VpFG{p}{\mF}{G}$, with
$$\VpFG{p}{\mF}{G} = \{\bx \in \KKbar{}^n \mid  \mathrm{rank}(\mF({\bx})) < p
\text{~and~} g_1(\bx)=\cdots=g_s(\bx)=0\}.$$

In this section, we design an algorithm for these both tasks whose
cost depends on the column degrees
$\delta_1=\cdeg(\mF,1),\dots,\delta_q=\cdeg(\mF,q)$; note in
particular that with this notation, $\deg(f_{i,j}) \leq \delta_j$
holds for all $i,j$.  We will also write
$\gamma_1=\deg(g_1),\dots,\gamma_s=\deg(g_s)$.

We point out that (in the case where there are no polynomials $G$),
the construction used in this section was already in the appendix
of~\cite{NieRan09}, where it was used to bound the number of solutions
of determinantal systems (as we mentioned in the introduction).

Recall that for $k\geq 0$, $E_k(\delta_1,\dots,\delta_q)$ denotes the
elementary symmetric polynomial of degree $k$ in
$(\delta_1, \ldots, \delta_q)$. 

\begin{proposition}\label{prop:coldeg}
  Suppose that the matrix $\mF \in \KK[X_1,\dots,X_n]^{p \times q}$
  and the polynomials $G=(g_1,\dots,g_s)$ in $\KK[X_1,\dots,X_n]$ are
  given by a straight-line program of length $\sigma$. Then, the sum
  of the multiplicities of the isolated points of $\VpFG{p}{\mF}{G}$
  are at most
  $c=\gamma_1\cdots\gamma_sE_{n-s}(\delta_1, \ldots, \delta_q)$.

  Assume that all $\gamma_i$'s and $\delta_j$'s are at least equal to
  $1$, and let
  $e=(\gamma_1+1)\cdots(\gamma_s+1) E_{n-s}(\delta_1+1, \ldots,
  \delta_q+1)$, $\gamma = \max(\gamma_1, \ldots, \gamma_s)$ and
  $\delta = \max(\delta_1, \ldots, \delta_q)$.
  Then, there exists a randomized algorithm that computes these isolated
  points
  $$
  \softO\left ({q \choose p} c(e+c^5 )(\sigma +\gamma + q
    \delta)\right)
  $$
  operations in $\KK$.
\end{proposition}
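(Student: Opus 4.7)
The plan is to apply Proposition~\ref{prop:compute_isolated} to a tailored homotopy. Following the blueprint of the introduction, I set
$$\mU = (1-T)\mL + T\mF \in \KK[T,\bX]^{p \times q} \quad\text{and}\quad V = (1-T)K + TG \in \KK[T,\bX]^s$$
for a start matrix $\mL$ and a start system $K=(k_1,\dots,k_s)$ to be specified, and consider the ideal $J = I_p(\mU) + \langle V \rangle$ in $\KKbar[T,\bX]$. The fiber at $T=1$ recovers $\VpFG{p}{\mF}{G}$, while the fiber at $T=0$ is an easy start system that seeds the homotopy. As noted just after Theorem~\ref{theo:2}, we may assume without loss of generality that every $\delta_j$ and $\gamma_i$ is at least $1$.

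The heart of the construction is the design of $\mL$ and $K$. I take $\cdeg(\mL,j)=\delta_j$ for each $j$ and $\deg(k_i)=\gamma_i$ for each $i$, choosing each $\ell_{i,j}$ as a product of $\delta_j$ random affine linear forms in $\bX$ and each $k_i$ as a product of $\gamma_i$ random affine linear forms. A Nie-Ranestad style genericity argument (as cited in the introduction) shows that, for a generic such choice, $\bB_0 = M_p(\mL) \cup K$ defines a radical zero-dimensional ideal whose $\KKbar$-variety consists of exactly $c = \gamma_1\cdots\gamma_s\, E_{n-s}(\delta_1,\dots,\delta_q)$ distinct points, attaining the Nie-Ranestad bound with equality. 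By construction the column-degree profile of $\mL$ matches that of $\mF$, and the leading part at infinity (obtained by setting the affine translations of the linear factors to zero) vanishes only at the origin, so all three of $\assG_1(0)$, $\assG_2(0)$, $\assG_3(0)$ hold. The factored form of the $\ell_{i,j}$'s and $k_i$'s then furnishes an explicit zero-dimensional parametrization $\scrR_0$ of $V(\bB_0)$ at cost polynomial in $c$, verifying $\assD_1$.

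I next verify the remaining hypotheses. Since $\mU$ is a $p \times q$ matrix over $\KK[T,\bX]$, $V$ consists of $s$ polynomials, and $n = q-p+s+1$, properties $\assA_1$ and $\assA_2$ for $J$ are supplied by Proposition~\ref{prop:KH1H2} applied in the ambient space $\KKbar^{n+1}$. Combined with $\assG(0)$, Proposition~\ref{prop:degree_fiber} immediately yields the first conclusion of Proposition~\ref{prop:coldeg}: the sum of multiplicities of isolated points of the fiber at $T=1$ is at most $c$. For $\assD_2$, the column degrees of $\mU$ in the variables $(T,\bX)$ are bounded by $\delta_j+1$ and each component of $V$ has degree at most $\gamma_i+1$, so applying the same Nie-Ranestad style bound in $\KKbar^{n+1}$ shows that the one-dimensional part of $V(J)$ has degree at most $e = (\gamma_1+1)\cdots(\gamma_s+1)\, E_{n-s}(\delta_1+1,\dots,\delta_q+1)$. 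For $\assD_3$, the matrix $\mL$ is built from $\bX$ by a straight-line program of length $O(q\delta)$ using its factored form, and $K$ costs $O(\gamma)$ extra operations, so the SLP length for $\mU$ and $V$ is $\sigma' = \sigma + O(q\delta + \gamma)$. Expanding the $\binom{q}{p}$ minors of $\mU$ from its entries contributes an additional $O(\binom{q}{p}\,p^3)$ overhead.

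Finally, plugging $m = O(\binom{q}{p})$, the updated $\sigma'$, and the values of $c$, $e$ and $\scrR_0$ into the cost bound of Proposition~\ref{prop:compute_isolated} gives, after absorbing lower-order terms, the claimed runtime $\softO\bigl(\binom{q}{p}\, c\,(e+c^5)\,(\sigma + \gamma + q\delta)\bigr)$. The chief difficulty in executing this plan rigorously is proving that the random start pair $(\mL,K)$ simultaneously satisfies $\assG_1(0)$, $\assG_2(0)$ and $\assG_3(0)$, attains the Nie-Ranestad bound with equality, and respects the prescribed column-degree structure; this is a transversality statement whose proof requires careful bookkeeping over the parameter space of admissible start data, and it is precisely here that the randomized character of the final algorithm originates.
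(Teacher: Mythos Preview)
Your overall homotopy strategy is correct and matches the paper's, but there is a genuine gap in the construction of the start matrix $\mL$, and it is precisely the gap you flag at the end as ``the chief difficulty.''

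You propose to take each entry $\ell_{i,j}$ as an \emph{independent} product of $\delta_j$ random linear forms. With this choice, the $p$-minors of $\mL$ are sums (via Leibniz) of products of linear forms and do not themselves factor. Consequently, your claim that ``the factored form \dots\ furnishes an explicit zero-dimensional parametrization $\scrR_0$ of $V(\bB_0)$'' is unsubstantiated: there is no direct way to read off the solutions of the start system, and verifying $\assG_2(0)$ and $\assG_3(0)$ becomes a genuine transversality problem (compare the much more involved analysis the paper needs for the row-degree case in Sections~\ref{sec:prel-row}--\ref{sec:rowdegree}, where entries \emph{are} independent products of linear forms). Invoking a ``Nie--Ranestad style genericity argument'' does not help here: their results concern generic homogeneous entries, not products of linear forms, and in any case do not supply radicality or an explicit parametrization.

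The paper's construction sidesteps this entirely with a Vandermonde trick: take $\ell_{i,j} = j^i\lambda_j$, where $\lambda_j$ is a \emph{single} product of $\delta_j$ random linear forms used for the whole $j$th column. Then every $p$-minor of $\mL$ equals a nonzero Vandermonde constant times $\lambda_{j_1}\cdots\lambda_{j_p}$, hence is itself a product of linear forms. This makes $\assG_1(0)$, $\assG_2(0)$, $\assG_3(0)$ directly checkable by elementary combinatorics (a short lemma characterizes when all such products vanish), yields the exact count $c = \gamma_1\cdots\gamma_s\,E_{n-s}(\delta_1,\dots,\delta_q)$, and gives each start solution as the unique solution of an $n\times n$ linear system, so $\scrR_0$ costs $O(cn^3)$. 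A secondary issue: your bound on $e$ via ``applying the same bound in $\KKbar^{n+1}$'' is not quite right either; the paper instead intersects $V(\bB)$ with a generic hyperplane, eliminates $T$ linearly, and applies the already-proved multiplicity bound to the resulting system in $n$ variables with column degrees $\delta_j+1$ and equation degrees $\gamma_i+1$.
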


The next proposition states a better complexity estimate when one only
computes simple points of $\VpFG{p}{\mF}{G}$.
\begin{proposition}\label{prop:coldeg_simple}
  Reusing the notations introduced above, 
  there exists a randomized algorithm that computes the simple points of $\VpFG{p}{\mF}{G}$ using
  $$ 
  \softO\left (   {q \choose p}c \,e (\sigma+\gamma+q\delta)) \right )
  $$ 
  operations in $\KK$.
\end{proposition}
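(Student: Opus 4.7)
The plan is to re-use the construction and analysis that establish Proposition~\ref{prop:coldeg} up to the very last step, and then call $\mathsf{Homotopy\_simple}$ (Proposition~\ref{prop:compute_regular}) in place of $\mathsf{Homotopy}$. The setup of the start system, the verifications of $\assA_1,\assA_2$, of $\assG_1(0),\assG_2(0),\assG_3(0)$, and of $\assD_1,\assD_2,\assD_3$, together with the bounds on the integers $c$ and $e$, are identical; only the final cost estimate changes, and that is where the improved exponent of $c$ is gained.

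Concretely, I would build a start matrix $\mL\in\KK[\bX]^{p\times q}$ whose column degrees coincide with those of $\mF$, together with polynomials $K=(k_1,\dots,k_s)$ of respective degrees $\gamma_1,\dots,\gamma_s$, following the construction sketched at the start of Section~\ref{sec:columndegree} (in the spirit of the appendix of~\cite{NieRan09}). The construction is arranged so that $\VpFG{p}{\mL}{K}$ is a finite set of exactly $c=\gamma_1\cdots\gamma_s\,E_{n-s}(\delta_1,\dots,\delta_q)$ simple $\KK$-points admitting a cheap zero-dimensional parametrization $\scrR_0$. I would then form the homotopy
$$\mU(T,\bX)=(1-T)\mL+T\,\mF,\qquad V(T,\bX)=(1-T)K+T\,G,$$
whose associated ideal $J=I_p(\mU)+\langle V\rangle\subset\KKbar[T,\bX]$ fits the hypotheses of Proposition~\ref{prop:KH1H2} (in $n$ $\bX$-variables with the extra parameter $T$), giving $\assA_1$ and $\assA_2$. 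The assumptions $\assG_i(0)$ follow from the construction of $(\mL,K)$, $\assD_1$ is the parametrization $\scrR_0$, and $\assD_3$ is the straight-line program describing the affine $T$-interpolation.

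For $\assD_2$, the degree $e$ of the one-dimensional components of $V(J)$ is bounded by re-applying Theorem~\ref{theo:1} with all column degrees of $\mU$ and all $\deg(g_i)$ raised by one to absorb the $T$-dependence, yielding $e\leq(\gamma_1+1)\cdots(\gamma_s+1)\,E_{n-s}(\delta_1+1,\dots,\delta_q+1)$. A straight-line program for $\bB=(M_p(\mU),V)$ of length $\sigma'=\softO(\binom{q}{p}(\sigma+q\delta+\gamma))$ is obtained by evaluating $\mL,K,\mF,G$ in $\sigma+O(q\delta+\gamma)$ steps, combining into $\mU,V$ in $O(pq+s)$ extra steps, and expanding the $\binom{q}{p}$ maximal minors. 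Feeding these values into the cost of Proposition~\ref{prop:compute_regular}, namely $\softO(c^2 m n^2+c e n(\sigma'+n^2))$ with $m=\binom{q}{p}+s$, and using $e\geq c$ from Lemma~\ref{lemma:e-geq-c} together with $n\leq q$, the total collapses to $\softO(\binom{q}{p}\,c\,e\,(\sigma+\gamma+q\delta))$, which is exactly the claimed bound.

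The main obstacle is the design of $(\mL,K)$: it must respect the column-degree pattern $(\delta_1,\dots,\delta_q)$ exactly, must realize the tight count $c$, and must admit an easy zero-dimensional parametrization. This is not automatic because $n$ need not equal $q$, so a naive one-variable-per-column construction fails; the workable construction is the one underlying Proposition~\ref{prop:coldeg}, which we inherit here. Everything else is either a direct application of Proposition~\ref{prop:KH1H2}, a reuse of Theorem~\ref{theo:1} with shifted degrees, or routine straight-line program bookkeeping.
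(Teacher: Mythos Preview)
Your proposal is correct and follows essentially the same route as the paper: reuse the entire column-degree homotopy construction from Proposition~\ref{prop:coldeg} (start system, verification of $\assA_1,\assA_2$ and $\assG_i(0)$, the bounds on $c$ and $e$, the straight-line program for $\bB$), and replace only the final call to $\mathsf{Homotopy}$ by $\mathsf{Homotopy\_simple}$, invoking Proposition~\ref{prop:compute_regular} for the cost.

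One small correction in your final simplification: the polynomial-in-$n$ factors are absorbed into the $\softO(\ )$ not via $n\le q$ (which need not hold, since $n=q-p+s+1$), but because $e\ge 2^n$ under the assumption that all $\gamma_i,\delta_j\ge 1$, so that $n\in\softO(e)$. Also, your reference to ``re-applying Theorem~\ref{theo:1}'' for $\assD_2$ is slightly circular; the paper instead intersects $V(\bB)$ with a generic hyperplane and applies Proposition~\ref{prop:degree_fiber} to the resulting system, whose column degrees have each increased by one (Lemma~\ref{lemma:columndegree:e_estimate}). These are cosmetic points; the strategy is the paper's.
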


These propositions establish the first half of
Theorems~\ref{theo:1},~\ref{theo:2} and~\ref{theo:3}.

\medskip

We use the algorithms of Section~\ref{sec:homotopy}. To match the
notation of that section, we let $\bC=(c_1,\dots,c_{s},\dots,c_m)$ be
polynomials defined as follows: $(c_1,\dots,c_{s})=(g_1,\dots,g_s)$,
and $(c_{s+1},\dots,c_{m})$ are the $p$-minors of $\mF$, so that
$m=s+{q \choose p}$. Thus, $\VpFG{p}{\mF}{G}$ is the zero-set
of~$\bC$.

Using the degrees $\gamma_1,\dots,\gamma_s$ and
$\delta_1,\dots,\delta_q$, we construct a polynomial matrix
$\mL \in \KK[\bX]^{p \times q}$, and polynomials $M=(m_1,\dots,m_s)$
in $\KK[\bX]$, to use as a starting point for the homotopy
algorithm. For any $1 \leq j \leq q$ and $1 \leq k \leq \delta_j$, let
us define
$$\lambda_{j,k} = \lambda_{j,k,0} + \sum_{\ell =
  1}^{n}\lambda_{j,k,\ell}X_\ell,$$ where all $\lambda_{j,k,\ell}$ are
random elements in $\KK$. Then, for $j=1,\dots,q$, we define
$$\lambda_j = \prod_{k=1}^{\delta_j}\lambda_{j,k},$$
and we let  $\mL$ be the matrix
\begin{align}\label{eqdef:col}
\mL = 
\left( \begin{matrix}
\lambda_1 & 2\lambda_2 & \cdots & q\lambda_{q}\\
\lambda_1 & 2^2\lambda_2 & \cdots & q^2\lambda_q\\
\vdots & \vdots &  & \vdots \\
\lambda_1 & 2^p\lambda_2 & \cdots & q^p\lambda_q
\end{matrix} \right) \in \KK[\bX]^{p\times q}.
\end{align}
For $i=1,\dots,s$ and $k=1,\dots,\gamma_i$, let us further define
$$\mu_{i,k} =  \mu_{i,k,0} + \sum_{\ell = 1}^{n}\mu_{i,k,\ell}X_\ell,$$ where
all $\mu_{i,k,\ell}$ are random elements in $\KK$; then, we let
$$a_i=\prod_{k=1}^{\gamma_i} \mu_{i,k}.$$ We can thus define the
system of equations $\bA=(a_1,\dots,a_s,\dots,a_m)$, with $a_i$ as
above for $i=1,\dots,s$, and where $(a_{s+1},\dots,a_{m})$ are the
$p$-minors of $\mL$ (taken in the same order as those in the system
$\bC$).

Let $T$ be a new variable and define the matrix
\mbox{$\mU=(1-T)\cdot \mL + T \cdot \mF \in \KK[T,\bX]^{p\times q}$}.
We let $\bB$ be the polynomials in
$\KK[T,\bX]$ given by $\bB=(b_1,\dots,b_s,\dots,b_m)$, where
\begin{itemize}
\item $b_i=(1-T) a_i + T g_i$ for $i=1,\dots,s$
\item $(b_{s+1},\dots,b_{m})$ are the $p$-minors of $\mU$, taken in
  the same order as those in $\bC$.
\end{itemize}
We can then define $J$ as the ideal generated by $\bB$ in
$\KKbar[T,\bX]$. Using the notation of Section~\ref{sec:homotopy}, we
see that $\bB_0=\bA$ and $\bB_1=\bC$. Having in mind to apply
Proposition~\ref{prop:compute_isolated}
(resp.\ Proposition~\ref{prop:compute_regular}) to compute the isolated
points (resp. simple points) of $V(\bC)=\VpFG{p}{\mF}{G}$, we
now verify that all required assumptions are satisfied.

\paragraph{Properties $\assA_1$ and $\assA_2$.}
These follow from Proposition~\ref{prop:KH1H2}.

\paragraph{Property $\assG_1(0)$.} We have to prove that for $i=1,\dots,m$,
$\deg_\bX(b_i)=\deg_\bX(a_i)$. 

For $i=1,\dots,s$, this amounts to proving that $\deg_\bX((1-T) a_i +
T g_i)=\deg_\bX(a_i)$. The latter is by construction equal to
$\gamma_i$. The former is at most $\gamma_i$ (since $b_i$ is the sum
of two polynomials of degree $\gamma_i$ in $\bX$), but since
evaluating $T$ at $0$ in $b_i$ gives us $g_i$, its degree in $\bX$
must be exactly $\gamma_i$.

To each index $i=s+1,\dots,m$ corresponds a sequence
$\bj_i=(j_{i,1},\dots,j_{i,p})$ such that $b_i$ and $a_i$ are the
minors built with columns indexed by $\bj_i$ in respectively
$\mU=(1-T)\cdot \mL + T \cdot \mF$ and $\mL$. In view of the shape of
$\mL$, the polynomial $a_i$ is equal to $c_i\lambda_{j_{i,1}}\cdots
\lambda_{j_{i,p}}$, with
$$c_i = \left | 
\begin{matrix}
j_{i,1} & j_{i,2} & \cdots & j_{i,p}\\
j_{i,1}^2 & j_{i,2}^2 & \cdots & j_{i,p}^2\\
\vdots & \vdots &  & \vdots \\
j_{i,1}^p & j_{i,2}^p & \cdots & j_{i,p}^p
\end{matrix}
\right |.$$
Because $\KK$ has characteristic zero,
 $c_i$ is a non-zero constant, so that $a_i$ has degree $\delta_{j_{i,1}} +
\cdots + \delta_{j_{i,p}}$.  Since the columns
$(j_{i,1},\dots,j_{i,p})$ of $U$ have respective degrees at most
$(\delta_{j_{i,1}},\dots,\delta_{j_{i,p}})$, $b_i$ has degree at most
$\delta_{j_{i,1}} + \cdots + \delta_{j_{i,p}}$. However, evaluating
$T$ at $0$ in $b_i$ gives us back the polynomial $a_i$, so $b_i$ must
have degree exactly $\delta_{j_{i,1}} + \cdots + \delta_{j_{i,p}}$.

\paragraph{Property $\assG_2(0)$.} We have to prove that the homogenization
of the system $\bA$ has no root at infinity. Thus, let $X_0$ be a new
variable, and let $\bA^H=(a_1^H,\dots,a_m^H)$ be the homogenization
of $\bA$. For $i=1,\dots,s$, we have
$$a_i^H=\prod_{k=1}^{\gamma_i} \mu^H_{i,k} \quad\text{with}\quad \mu^H_{i,k}=(\mu_{i,k,0}X_0 + \sum_{\ell = 1}^{n}\mu_{i,k,\ell}X_\ell),$$
whereas for $i=s+1,\dots,m$, 
$$a_i^H=c_i \lambda^H_{j_{i,1}}\ldots \lambda^H_{j_{i,p}}, \quad \text{~for~} \bj_i=(j_{i,1},\dots,j_{i,p}) \text{~as above},$$
where for $j=1,\dots,q$ we set 
$\lambda^H_j = \prod_{k=1}^{\delta_j}\lambda^H_{j,k}$,
with
$$\lambda^H_{j,k}=\lambda_{j,k,0}X_0 + \sum_{\ell = 1}^{n}\lambda_{j,k,\ell}X_\ell.$$
To prove  $\assG_2(0)$, we start by writing down all projective
solutions of this system (this will be of use below), before adding
the constraint $X_0=0$.

Since all $a_i^H$ are products of linear forms, we find the solutions
of $\bA^H$ by setting some of these linear forms to zero. In order to
cancel $a_1^H,\dots,a_s^H$, we choose indices $\bu=(u_1,\dots,u_s)$,
with $u_1\in\{1,\dots,\gamma_1\}$, \dots,
$u_s\in\{1,\dots,\gamma_s\}$, and we consider the equations 
$$\mu^H_{i,u_i}=0, \quad \text{~that is,~} \quad \mu_{i,u_i,0}X_0 + \sum_{\ell = 1}^{n}\mu_{i,u_i,\ell}X_\ell =0,$$ for $i=1,\dots,s$.
In what follows, we fix such an $\bu$.
Then, for a generic choice of coefficients $\mu_{i,k,\ell}$, these equations
are equivalent to
$$X_{n-s+1}=\Phi_{n-s+1,\bu}(X_0,\dots,X_{n-s}),\dots,X_{n}=\Phi_{n,\bu}(X_0,\dots,X_{n-s}),$$
for some homogeneous linear forms
$\Phi_{n-s+1,\bu},\dots,\Phi_{n,\bu}$.  After applying this
substitution, for all $j=1,\dots,q$, $\lambda^H_j$ can be rewritten as
$$\lambda^H_{j,\bu}=\prod_{k=1}^{\delta_j}\lambda^H_{j,k,\bu},$$
where 
$$\lambda^H_{j,k,\bu}=\lambda_{j,k,0}X_0 + \sum_{\ell =
  1}^{n-s}\lambda_{j,k,\ell}X_\ell + \sum_{\ell =
  n-s+1}^{n}\lambda_{j,k,\ell}
\Phi_{\ell,\bu}(X_0,\dots,X_{n-s}).$$ Then,
$\bx=(x_0,\dots,x_n)$ cancels $a^H_{s+1},\dots,a^H_m$ if and only if
$\bx'=(x_0,\dots,x_{n-s})$ cancels the product
$\lambda^H_{j_1,\bu}\cdots \lambda^H_{j_p,\bu},$ for any choice of $p$ columns
$\bj=(j_1,\dots,j_p)$.

\begin{lemma}
  For $\bx'$ in $\P^{n-s}(\KKbar)$, the products
  $\lambda^H_{j_1,\bu}(\bx')\cdots \lambda^H_{j_p,\bu}(\bx')$
  vanish for all choices of columns $\bj=(j_1,\dots,j_p)$ if and only
  if there exists $\{j_1,\dots,j_{n-s}\} \subset \{1,\dots,q\}$ such 
  that $\lambda^H_{j_1,\bu}(\bx')=\cdots=\lambda^H_{j_{n-s},\bu}(\bx')=0$.
\end{lemma}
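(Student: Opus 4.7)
The plan is to observe that the lemma is essentially a combinatorial pigeonhole argument that does not actually require any information about the specific structure of the linear forms $\lambda^H_{j,\bu}$; it relies only on the numerical identity $n-s = q-p+1$, which follows from the standing hypothesis $n = q-p+s+1$. So the approach is to reformulate the condition in terms of the set of indices $j$ at which $\lambda^H_{j,\bu}(\bx')$ vanishes, and then exploit the dual size relation between $p$-subsets and $(q-p+1)$-subsets of $\{1,\dots,q\}$.

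Concretely, I would introduce
\[
T(\bx') = \bigl\{ j \in \{1,\dots,q\} \mid \lambda^H_{j,\bu}(\bx') = 0 \bigr\}
\]
and translate the hypothesis. The product $\lambda^H_{j_1,\bu}(\bx')\cdots \lambda^H_{j_p,\bu}(\bx')$ vanishes if and only if $\{j_1,\dots,j_p\} \cap T(\bx') \neq \emptyset$ (since $\KKbar$ is a field, a product of scalars is zero iff one factor is zero). So the condition that all $p$-fold products vanish is equivalent to the statement that every $p$-subset of $\{1,\dots,q\}$ meets $T(\bx')$.

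For the reverse direction I would assume $|T(\bx')| \geq n-s = q-p+1$. Then for any $p$-subset $\{j_1,\dots,j_p\} \subset \{1,\dots,q\}$ we have $p + |T(\bx')| \geq p + (q-p+1) = q+1 > q$, so $\{j_1,\dots,j_p\} \cap T(\bx')$ is non-empty by pigeonhole, and hence the product vanishes. For the forward direction I would argue by contrapositive: if no $(n-s)$-subset of indices gives a simultaneous zero, then $|T(\bx')| < q-p+1$, that is $|T(\bx')| \leq q-p$, so the complement of $T(\bx')$ in $\{1,\dots,q\}$ has cardinality at least $p$. Choosing any $p$ indices from this complement yields a $p$-subset along which all linear forms are non-zero at $\bx'$, producing a non-vanishing product and contradicting the hypothesis.

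There is no serious obstacle; the only step worth pausing on is confirming that the pigeonhole count uses the exact equality $n-s = q-p+1$ coming from the setting $n = q-p+s+1$, so that the threshold $n-s$ in the statement is precisely the one forced by the combinatorics. Everything else reduces to the observation that a product in an integral domain vanishes iff some factor does.
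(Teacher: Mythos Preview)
Your proof is correct and is essentially the same argument as the paper's: both reduce the statement to the observation that all $p$-fold products of the scalars $\lambda^H_{1,\bu}(\bx'),\dots,\lambda^H_{q,\bu}(\bx')$ vanish if and only if at least $q-p+1=n-s$ of these scalars are zero. The only cosmetic difference is that the paper encodes this count as the total degree of the auxiliary polynomial $(1+\lambda^H_{1,\bu}(\bx')Y_1)\cdots(1+\lambda^H_{q,\bu}(\bx')Y_q)$ in new variables $Y_1,\dots,Y_q$, whereas you argue directly by pigeonhole on the complement of your set $T(\bx')$.
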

\begin{proof}
  Take an arbitrary representative $\bx^*$ of $\bx'$ in
  $\KKbar{}^{n+1}$, and consider the polynomial 
  $(1+\lambda^H_{1,\bu}(\bx^*)Y_1) \cdots (1+\lambda^H_{q,\bu}(\bx^*)Y_q),$
  for new variables $Y_1,\dots,Y_q$. The products
  $\lambda^H_{j_1,\bu}(\bx^*)\cdots \lambda^H_{j_p,\bu}(\bx^*)$ are all zero
  if and only if this polynomial has degree less than $p$, that is, if
  and only if $q-p+1=n-s$ terms among
  $\lambda^H_{1,\bu}(\bx^*),\dots,\lambda^H_{q,\bu}(\bx^*)$ vanish.
\end{proof}

For a given $\bu$ and generic coefficients $\lambda_{j,k,\ell}$ and $\mu_{i,k,\ell}$,
 the linear forms $\lambda^H_{j,k,\bu}$ are all pairwise distinct, so
the condition of the lemma holds if and only if there exist
$\bj=\{j_1,\dots,j_{n-s}\} \subset \{1,\dots,q\}$ and
$\bv=(v_1,\dots,v_{n-s})$, with $v_k$ in $\{1,\dots,\delta_k\}$ for all
$k$, such that $\lambda^H_{j_k,v_k,\bu}(\bx')=0$ 
for $k=1,\dots,n-s$.

This implies that for a fixed $\bu$, the possible values of $\bx'=(x_0,\dots,x_{n-s}) \in \P^{n-s}(\KKbar)$ are
determined as solutions of a linear system of size $n-s$. For a
generic choice of the coefficients $\lambda_{j,k,\ell}$ and
$\mu_{i,k,\ell}$, none of these points satisfies $X_0=0$, so that
$\assG_2(0)$ holds.

\paragraph{Property $\assG_3(0)$.} From $\assG_2(0)$, we know that
the projective variety defined by $\bA^\mH$ has no point at infinity,
so it is finite; as a result, the affine algebraic set defined by
$\bA$ is finite as well. In addition, all the affine solutions to
$\bA$ are obtained by setting $X_0=1$ in the projective solutions of
$\bA^\mH$. In other words, they are obtained by choosing indices
$\bu=(u_1,\dots,u_s)$ with $u_k$ in $\{1,\dots,\gamma_k\}$ for all $k$,
column indices $\bj=(j_1,\dots,j_{n-s})$, and
$\bv=(v_1,\dots,v_{n-s})$, with $v_k$ in $\{1,\dots,\delta_k\}$
for all $k$, solving the affine linear system
$$\lambda_{j_1,v_1,\bu}(X_1,\dots,X_{n-s})=\cdots=\lambda_{j_{n-s},v_{n-s},\bu}(X_1,\dots,X_{n-s})=0$$ 
and using the expressions
$$X_{n-s+1}=\phi_{n-s+1,\bu}(X_1,\dots,X_{n-s}),\dots,X_{n}=\phi_{n,\bu}(X_1,\dots,X_{n-s}),$$
where $\phi_{k,\bu}(X_1,\dots,X_{n-s})=\Phi_{n-s+1,\bu}(1,X_1,\dots,X_{n-s})$
for all $k$.
To prove that the ideal generated by $\bA$ is radical, we prove that
at any point as described above, the Jacobian matrix of $\bA$ with
respect to $X_1,\dots,X_n$ has full rank.

Let thus $\bu$, $\bj$ and $\bv$ be as above, let $\bx \in \KKbar{}^n$
be the corresponding point in $V(\bA)$, and consider equations
$(a_1,\dots,a_s)$ first. Each such equation is a product of linear forms
such as $a_i=\prod_{k=1}^{\gamma_i} \mu_{i,k}$, with $\mu_{i,u_i}(\bx)=0$.
Since the coefficients $\mu_{i,k,\ell}$ are chosen generically, for
$i=1,\dots,s$ and $k \ne u_i$, $\mu_{i,k}(\bx)$ is non-zero; as a
result, in the local ring at $\bx$, the polynomials $(a_1,\dots,a_s)$
are equal (up to units) to the linear forms
$(\mu_{1,u_1},\dots,\mu_{s,u_s})$.

Next, we consider the $p$-minors of $\mL$; in what follows, we 
write $\bx'=(x_1,\dots,x_{n-s})$. Our starting point is that due to 
the genericity of the coefficients $\lambda_{j,k,\ell}$, since 
$$\lambda_{j_1,v_1,\bu}=\cdots=\lambda_{j_{n-s},v_{n-s},\bu}=0$$
only admits $\bx'$ as a solution,
none of the other linear forms $\lambda_{j,k,\bu}$ vanishes at $\bx'$.
Equivalently, none of the other linear forms $\lambda_{j,k}$ vanishes at $\bx$.

Recall that $n=q-p+s+1$, so that $n-s = q-(p-1)$. Hence, there are
exactly $p-1$ columns of $\mL$ not indexed by $\bj=(j_1,\dots,j_{n-s})$; call
them $\bj'=(j'_1,\dots,j'_{p-1})$. We can then consider the 
products
$$ \lambda_{j_1} \lambda_{j'_1} \cdots \lambda_{j'_{p-1}},\dots, \lambda_{j_{n-s}}
\lambda_{j'_1} \cdots \lambda_{j'_{p-1}};$$ each of them (up to a non-zero
constant) is a $p$-minor of $\mL$, so they appear as elements in the
sequence $(a_{s+1},\dots,a_m)$, say as
$(a_{e_1},\dots,a_{e_{n-s}})$. By the remark of the previous
paragraph, in the local ring at $\bx$, up to non-zero constants, these
polynomials are respectively equal to the linear forms
$\lambda_{j_1,v_1},\dots,\lambda_{j_{n-s},v_{n-s}}$.  

To summarize, we have found that the linear equations
$(\mu_{1,u_1},\dots,\mu_{s,u_s})$ and
$(\lambda_{j_1,v_1},\dots,\lambda_{j_{n-s},v_{n-s}})$ belong to the
ideal $\langle \bA \rangle_\m$, where $\m$ is the maximal ideal at
$\bx$. As a result, the Jacobian matrix of $\bA$ must be invertible
at $\bx$, and $\assG_3(0)$ holds.

\medskip

At this stage, we have established all assumptions necessary to apply
Proposition~\ref{prop:degree_fiber}. Since $\bB$ satisfies
$\assA_1,\assA_2$ and $\bA=\bB_0$ satisfies $\assG_1,\assG_2,\assG_3$,
we deduce that the sum of the multiplicities of the isolated solutions
of $\bC=\bB_1$ is at most $c$, where $c$ is the number of solutions of
$\bA$. 

\begin{lemma}\label{lemma:column:c_estimate}
  Under the above assumptions, $c=\gamma_1\cdots \gamma_s
  E_{n-s}(\delta_1, \ldots, \delta_q)$.
\end{lemma}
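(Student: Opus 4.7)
The plan is to count the elements of $V(\bA)$ directly, using the product structure of the $a_i$'s. Since we have already verified $\assG_2(0)$ and $\assG_3(0)$, the ideal $\langle \bA\rangle$ is zero-dimensional and radical in $\KKbar[\bX]$, so $c=\dim_{\KKbar}\KKbar[\bX]/\langle \bA\rangle$ equals the cardinality of $V(\bA)$. Thus it suffices to enumerate the points of $V(\bA)$.

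First I would reuse the parametrization of $V(\bA)$ already extracted in the verification of $\assG_3(0)$: a point $\bx\in V(\bA)$ is determined by a triple $(\bu,\bj,\bv)$ consisting of
\begin{itemize}
\item $\bu=(u_1,\dots,u_s)$ with $u_i\in\{1,\dots,\gamma_i\}$, selecting one linear factor $\mu_{i,u_i}$ of each $a_i$ ($i\le s$) that vanishes at $\bx$;
\item $\bj=\{j_1<\cdots<j_{n-s}\}\subset\{1,\dots,q\}$, the set of column indices for which $\lambda_{j_k}(\bx)=0$ (recall that the shape of $\mL$ forces the $p$-minors to be, up to nonzero constants, products of the $\lambda_j$'s; so the vanishing of all $p$-minors is equivalent to the vanishing of at least $n-s=q-p+1$ of the $\lambda_j$'s);
\item $\bv=(v_1,\dots,v_{n-s})$ with $v_k\in\{1,\dots,\delta_{j_k}\}$, selecting one linear factor $\lambda_{j_k,v_k}$ of each $\lambda_{j_k}$ that vanishes at $\bx$.
\end{itemize}
Once this bijection is in place, the counting is immediate: the number of choices for $\bu$ is $\gamma_1\cdots\gamma_s$, and the number of choices for $(\bj,\bv)$ is
\[
\sum_{1\le j_1<\cdots<j_{n-s}\le q}\delta_{j_1}\cdots\delta_{j_{n-s}} \;=\; E_{n-s}(\delta_1,\dots,\delta_q),
\]
which is exactly the claimed value of $c$.

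Next I would justify that the above is a genuine bijection, not just a surjection. On the ``onto'' side, to each triple $(\bu,\bj,\bv)$ I associate the unique solution $\bx(\bu,\bj,\bv)$ of the $n\times n$ affine linear system
\[
\mu_{i,u_i}(\bx)=0\ (1\le i\le s),\qquad \lambda_{j_k,v_k}(\bx)=0\ (1\le k\le n-s),
\]
which, by genericity of the coefficients $\mu_{i,k,\ell}$ and $\lambda_{j,k,\ell}$, has a unique solution (this is exactly the system already produced in the proof of $\assG_3(0)$, after substituting out $X_{n-s+1},\dots,X_n$). On the injectivity side, generic choice of the linear forms ensures that at $\bx(\bu,\bj,\bv)$ no $\mu_{i,k}$ with $k\neq u_i$ and no $\lambda_{j,k}$ outside the chosen factors vanishes. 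This pins down $\bu$, $\bj$, $\bv$ uniquely from $\bx$ and also shows distinct triples yield distinct points.

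The main obstacle is this genericity bookkeeping: one must check that the Zariski open set of $(\mu_{i,k,\ell},\lambda_{j,k,\ell})$ on which (i) every such $n\times n$ linear system is nonsingular, (ii) no accidental coincidences between the vanishing loci of distinct $\mu_{i,k}$ or $\lambda_{j,k}$ occur, and (iii) the resulting solutions satisfy $X_0\neq 0$ (so we really are in the affine chart), is nonempty. All three are standard: each condition is the non-vanishing of a fixed polynomial in the parameters, and one exhibits one instance (e.g.\ taking the $\lambda_{j,k}$ to be $n-s$ algebraically independent generic affine forms and similarly for the $\mu_{i,k}$) where the condition holds. Combining the bijection with the count above then yields $c=\gamma_1\cdots\gamma_s\,E_{n-s}(\delta_1,\dots,\delta_q)$, as claimed.
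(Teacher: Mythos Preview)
Your proof is correct and follows exactly the same approach as the paper: count the triples $(\bu,\bj,\bv)$ that parametrize $V(\bA)$, obtaining $\gamma_1\cdots\gamma_s$ choices for $\bu$ and $E_{n-s}(\delta_1,\dots,\delta_q)$ choices for $(\bj,\bv)$. The paper's proof is a two-sentence version of yours, since the bijection and genericity details you spell out were already established in the preceding verifications of $\assG_2(0)$ and $\assG_3(0)$.
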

\begin{proof}
  To estimate $c$, note first that there are $\gamma_1\cdots
  \gamma_s$ choices of $\bu$. For each choice of $\bu$, there are
  $E_{n-s}(\delta_1, \ldots, \delta_q)$ ways to
  choose $\bj$ and $\bv$, where $E_{n-s}$ denotes the elementary
  symmetric polynomial of degree $n-s$.   
\end{proof}
This proves the first part of Proposition~\ref{prop:coldeg}.
We can now inspect assumptions $\assD_1,\dots,\assD_4$, which are
needed to apply the algorithms of
Propositions~\ref{prop:compute_isolated} and~\ref{prop:compute_regular}. For the cost analysis 
below, as in Theorem~\ref{theo:2}, we assume that 
all $\gamma_i$'s and $\delta_j$'s are at least equal to $1$.

\paragraph{Property $\assD_1$.} We know that $\assG_1(0),\assG_2(0),\assG_3(0)$
hold, so we are going to compute a zero-dimensional parametrization of
$V(\bA)$.  We do this by following the description of the solutions of
$\bA$ given in the previous paragraph: for any choice of indices
$\bu$, $\bj$ and $\bv$ as above, the corresponding point $\bx \in
\KKbar{}^n$ in $V(\bA)$ can be found by solving the linear system of
size $n$ given by $(\mu_{1,u_1},\dots,\mu_{s,u_s})$ and
$(\lambda_{j_1,v_1},\dots,\lambda_{j_{n-s},v_{n-s}})$, so in time
$O(n^3)$. We repeat this procedure $c$ times, using a total of $O(c
n^3)$ operations in $\KK$.

Knowing all the points in $V(\bA)$, we can construct a zero-dimensional
parametrization $\scrR_0$ such that $Z(\scrR_0)=V(\bA)$ in time
$O\tilde{~}(c n)$ by means of fast interpolation~\cite[Chapter~10]{GaGe03}.
(Note that for practical purposes, we may modify the algorithm
of Propositions~\ref{prop:compute_isolated} and~\ref{prop:compute_regular} to take into account the
fact that all points in $V(\bA)$ are in $\KK^n$.)

Hence the total cost here is in $O(c n^3)$ operations in $\KK$..

\paragraph{Property $\assD_2$.} Next, we need to determine an upper bound 
$e$ on the degree of the curve $V(J')$, where $J'$ is the union of the
one-dimensional irreducible components of $V(\bB) \subset
\KKbar{}^{n+1}$ whose projection on the $T$-axis is dense.

\begin{lemma}\label{lemma:columndegree:e_estimate}
  Under the above assumptions and notation, $e$ is bounded above by
  $(\gamma_1+1)\cdots(\gamma_s+1) E_{n-s}(\delta_1+1, \ldots, \delta_q+1)$.
\end{lemma}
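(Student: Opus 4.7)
The plan is to bound $e$ by reducing to a zero-dimensional count via intersection with a generic hyperplane, and then invoking the same combinatorial counting strategy used for $c$ in Lemma~\ref{lemma:column:c_estimate}, but one dimension higher. Since $V(J')\subset \KKbar^{n+1}$ has dimension one, its degree equals the cardinality (with multiplicity) of $V(J')\cap V(\ell)$ for a generic affine linear form $\ell\in \KKbar[T,\bX]$. As $V(J')\subseteq V(J)$ and the intersection with a generic hyperplane makes these points isolated in $V(J)\cap V(\ell)$, the degree $\deg(V(J'))$ is bounded above by the number of isolated solutions of $J+\langle\ell\rangle$ in $\KKbar^{n+1}$.

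To bound the latter, I would set up a second homotopy with a new parameter $U$ connecting a tractable starting system $\tilde{\bA}$ to the augmented target $\bB + \langle\ell\rangle$. The starting system $\tilde{\bA}$ is built by exactly the same recipe as $\bA$ in Section~\ref{sec:columndegree}, but in the enlarged polynomial ring $\KK[T,X_1,\dots,X_n]$: each $b_i$ (for $i\leq s$) is replaced by a product of $\gamma_i+1$ generic affine linear forms in $(T,\bX)$, the form $\ell$ is replaced by another generic linear form in $(T,\bX)$, and $\mU$ is replaced by a matrix $\tilde{\mL}\in \KK[T,\bX]^{p\times q}$ with the structure of~\eqref{eqdef:col} but whose column-$j$ entries involve products of $\delta_j+1$ generic linear forms in $(T,\bX)$. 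Entry-by-entry, the total degrees match those of $\bB + \langle\ell\rangle$.

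Viewing $U$ as the homotopy parameter (in the role that $T$ plays in Section~\ref{sec:homotopy}), I would check $\assA_1$ and $\assA_2$ for the full auxiliary system via Proposition~\ref{prop:KH1H2} applied with $n+1$ non-parameter variables and $s+1$ ordinary polynomials, noting that the required relation $(n+1)=q-p+(s+1)+1$ coincides with the standing hypothesis $n=q-p+s+1$. Properties $\assG_1(0),\assG_2(0),\assG_3(0)$ at $U=0$ follow by repeating the generic-coefficient, no-root-at-infinity, and Jacobian arguments already carried out in this section, one variable higher. Proposition~\ref{prop:degree_fiber} then bounds the number of isolated solutions of $\bB+\langle\ell\rangle$ by the number of solutions of $\tilde{\bA}$, and the combinatorial count of Lemma~\ref{lemma:column:c_estimate} applied in $n+1$ variables yields
\[
(\gamma_1+1)\cdots(\gamma_s+1)\cdot 1 \cdot E_{(n+1)-(s+1)}(\delta_1+1,\dots,\delta_q+1)\;=\;(\gamma_1+1)\cdots(\gamma_s+1)\,E_{n-s}(\delta_1+1,\dots,\delta_q+1),
\]
which is the claimed bound.

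The main obstacle is purely bookkeeping: Proposition~\ref{prop:KH1H2}, the three $\assG_i(0)$ properties, and the counting step all have to be re-verified one variable higher, and in particular the genericity argument ensuring that $\tilde{\bA}$ has no root at infinity after including the $T$-direction must be repeated carefully. No fundamentally new idea is required, but the degree arithmetic must align so that the substitutions $\delta_j\mapsto \delta_j+1$ and $\gamma_i\mapsto \gamma_i+1$ correspond exactly to promoting $T$ to a full variable in the degree count.
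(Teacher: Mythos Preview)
Your argument is correct, but the paper takes a shorter route that avoids the second homotopy parameter $U$ entirely. Instead of working in $n+1$ variables $(T,\bX)$ and building a new deformation, the paper uses the generic hyperplane equation $h_0 + h_1 X_1 + \cdots + h_n X_n + h_{n+1} T = 0$ to \emph{eliminate} $T$: since $h_{n+1}\ne 0$ generically, one writes $T = \eta(X_1,\dots,X_n)$ for a degree-one polynomial $\eta$ and substitutes into $\bB$. This yields an $n$-variable system of exactly the same determinantal shape as the original input, namely the $p$-minors of $\nu = (1-\eta)\,\mL + \eta\,\mF$ together with $s$ extra equations $\beta_i = (1-\eta)a_i + \eta g_i$, but with column degrees $\delta_j+1$ and equation degrees $\gamma_i+1$. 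The first half of Proposition~\ref{prop:coldeg} --- already established at this point in the text --- then applies as a black box and gives the bound $(\gamma_1+1)\cdots(\gamma_s+1)\,E_{n-s}(\delta_1+1,\dots,\delta_q+1)$ directly.

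What your approach buys is conceptual uniformity: you never leave the framework of Proposition~\ref{prop:degree_fiber}, and the substitution $\delta_j\mapsto\delta_j+1$, $\gamma_i\mapsto\gamma_i+1$ is visibly forced by promoting $T$ to a genuine variable. What the paper's approach buys is economy: no re-verification of $\assA_1,\assA_2,\assG_1,\assG_2,\assG_3$ is needed, because after eliminating $T$ one is back to an instance of the statement just proved.
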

\begin{proof}
  Let us write $V(\bB)=V(J') \cup V' \cup V''$, where $V''$ is the
  union of the other components of dimension one of $V(\bB)$ and $V''$
  is the union of the components of higher dimension (by $\assA_1$,
  $V(\bB)$ has no isolated point), and let $H$ be a generic hyperplane
  in coordinates $T,X_1,\dots,X_n$. Then, $(V(J') \cup V') \cap V(H)$
  is a finite set consisting of $\deg(V(J')) + \deg(V')$ points,
  whereas $V'' \cap V(H)$ consists only on components of positive
  dimension; these two sets are disjoint. Thus, we can take for $e$
  the number of isolated points of $V(\bB)\cap V(H)$.

The hyperplane $H$ is defined by an equation
$h_0 + h_1 X_1 + \cdots + h_{n}X_{n} + h_{n+1} T=0$. This equation
allows us to rewrite $T$ as
$\eta(X_1,\dots,X_n)=-(h_0 + h_1 X_1 + \cdots +
h_{n}X_{n})/h_{n+1}$;
the points in $V(\bB)\cap V(H)$ are thus in one-to-one correspondence
with the solutions of the system
$(\beta_1,\dots,\beta_s,\beta_{s+1},\dots,\beta_m)$, where
$\beta_i=(1-\eta) a_i + \eta g_i$, for $i=1,\dots,s$, and
$\beta_{s+1},\dots,\beta_m$ are the $p$-minors of the matrix
${\nu}=(1-\eta)\, \mL + \eta \, \mF $.  Now, the polynomials
$(\beta_1,\dots,\beta_s)$ have respective degrees at most
$(\gamma_1+1),\dots,(\gamma_s+1)$, and the column degrees of ${\nu}$
are $\delta_1+1,\dots,\delta_q+1$.

We can then apply Proposition~\ref{prop:degree_fiber}, which shows we
can take for $e$ the integer $(\gamma_1+1)\cdots(\gamma_s+1)
E_{n-s}(\delta_1+1, \ldots, \delta_q+1)$.  
\end{proof}

\paragraph{Property $\assD_3$.} Finally, we need to give an estimate on
the size of a straight-line program that computes the polynomials
$\bB=(b_1,\dots,b_m)$, assuming that we are given a straight-line
program $\Gamma$ of size $\sigma$ that computes polynomials $G=(g_1,\dots,g_s)$ and
the entries of $\mF$.

First, we estimate the complexity of computing the polynomials
$(b_1,\dots,b_s)$. For $i \le s$, the $i$th polynomial $b_i$ is equal
to $(1-T)a_i + T g_i$, where $a_i$ is a product of $\gamma_i$ linear
forms in $n$ variables. This polynomial can be computed in $O(n
\gamma_i)$ operations in $\KK$, hence for a total of $O(n
(\gamma_1+\cdots+\gamma_s))$ operations for $(a_1,\dots,a_s)$, and $O(\sigma+n (\gamma_1+\cdots+\gamma_s))$ for $(b_1,\dots,b_s)$. 

The polynomials $(b_{s+1},\dots,b_m)$ are the $p$-minors of
$\mU=(1-T)\cdot\mL+T\cdot\mF$.  The polynomials $\lambda_1,\dots,\lambda_q$ can
be computed in $O(n (\delta_1+\cdots+\delta_q))$ operations, so that
the entries of $\mU$ can be computed in $O(\sigma +
n(\delta_1+\cdots+\delta_q))$ operations. From that, all $p$-minors of
$\mU$ can be deduced in $O({q \choose p} n^3)$ further steps.  To
summarize, all polynomials in $\bB$ can be computed by a straight-line
program $\Gamma'$ of size $\sigma'=O(\sigma + {q \choose p} n^3 +n(\gamma_1+\cdots+\gamma_s+\delta_1+\cdots+\delta_q))$.

\paragraph{Completing the cost analysis.}
We can then apply Proposition~\ref{prop:compute_isolated},
whose runtime is $\softO(c^5 m n^2  + c(e+c^5) n(\sigma' + n^3))$ operations
in~$\KK$; since $m \le n + {q \choose p}$, this can be simplified as
$$\softO\left (c(e+c^5) n \left(\sigma + {q \choose p} n^3
+n(\gamma_1+\cdots+\gamma_s+\delta_1+\cdots+\delta_q)\right)\right).$$
Since $s \leq n$, $\gamma = \max(\gamma_1, \ldots, \gamma_s)$ and
$\delta = \max(\delta_1, \ldots, \delta_q)$, our bound becomes
$$\softO\left (c(e+c^5) n(\sigma + {q \choose p}n^3+n^2\gamma + nq
  \delta)\right).$$ This can also be rewritten as
$$\softO\left (c(e+c^5 )(\sigma + {q \choose p}n^3+n^2\gamma + nq
  \delta)\right),$$
since one easily checks that $e \ge 2^n$ (because by assumption we
have $\gamma_i\geq 1 $ and $\delta_i\geq 1$), so that
$n \in \softO(e)$. A last factorization shows that the bound can be
simplified to
\[
  \softO\left ({q \choose p}  c(e+c^5 )n^3(\sigma +\gamma + q
    \delta)\right).
\]
Using again  that $n\le\log_2(e)$, we can omit the factor
$n^3$ from the $\softO(\ )$, and we conclude the proof of Proposition~\ref{prop:coldeg}.
The resulting algorithm, called $\mathsf{ColumnDegree}$, is described
herafter.

\begin{algorithm}[h]
\caption{$\mathsf{ColumnDegree}(\Gamma)$}
{\bf Input}: a straight-line program $\Gamma$ of length $\sigma$ that computes 
\begin{itemize}  
\setlength\itemsep{0em}
\item $F \in \KK[X_1, \ldots, X_n]^{p \times q}$ with $\deg(f_{i,j}) \leq \delta_j$ for all $j$ and $p \le q$
\item polynomials $G = (g_1, \ldots, g_s)$ in $\KK[X_1, \ldots, X_n]$, with $n=q-p+s+1$
\end{itemize}
{\bf Output}: a zero-dimensional parametrization of the isolated points of $\VpFG{p}{\mF}{G}$
\begin{enumerate}\setlength\itemsep{0em}
\item for any sequence $\bu=(u_1,\dots,u_s)$, with $u_j \in \{1,\dots,\gamma_j\}$ for all $j$
\begin{enumerate}\setlength\itemsep{0em}
\item for any subsequence $\bj=(j_1,\dots,j_{n-s})$ of $(1,\dots,q)$
\begin{enumerate}\setlength\itemsep{0em}
\item for any sequence $\bv=(v_1,\dots,v_{n-s})$, with $v_k$ in $\{1,\dots,\delta_k\}$ for all $k$
\begin{enumerate}\setlength\itemsep{0em}
 \item compute a zero-dimensional parametrization $\scrR_{\bi,\bj,\bv}$ of the solution of the system 
$$\mu_{1,u_1}=\cdots=\mu_{s,u_s}=\lambda_{j_1,v_1}=\cdots=\lambda_{j_{n-s},v_{n-s}}=0$$

\hfill $\text{\sf{cost:~}} O(cn^3)$, with $c=\gamma_1\cdots\gamma_s E_{n-s}(\delta_1,\dots,\delta_q)$
\end{enumerate}
\end{enumerate}
\end{enumerate}
\item combine all $(\scrR_{\bu,\bj,\bv})_{\bu,\bj,\bv}$ into a zero-dimensional parametrization $\scrR$

  \hfill $\text{\sf{cost:~}} \softO(cn)$

\item construct a straight-line program $\Gamma'$ that computes all polynomials $\bB$

\hfill length of $\Gamma'$ is $\sigma'=O(\sigma + {q \choose p} n^3 + n
(\alpha_1+\cdots+\alpha_p) + n(\gamma_1 + \cdots + \gamma_s))$

\item return $\mathsf{Homotopy}(\Gamma',\scrR)$ 

\hfill $\text{\sf{cost:~}} \softO\left (c^5 m n^2 + c(e+c^5)n (\sigma' + n^3)\right)$, 

\hfill with $e=(\gamma_1+1)\cdots(\gamma_s+1) E_{n-s}(\delta_1+1,\dots,\delta_q+1)$
\end{enumerate}
\label{ColHom}
\end{algorithm}

Finally, to prove Proposition~\ref{prop:coldeg_simple}, we rely on the
algorithm called $\mathsf{ColumnDegree\_simple}$, which differs from
$\mathsf{ColumnDegree}$, only at the last step where Algorithm
$\mathsf{Homotopy\_simple}$ is called instead of $\mathsf{Homotopy}$.
Hence, one applies Proposition~\ref{prop:compute_regular}, which yields a
runtime $\softO(c^2\,m \,n^2  + c\,e\,n (\sigma'+n^2) )$ operations
in $\KK$. Using again $m \leq n+ \binom{q}{p}\leq n\binom{q}{p}$,
$\sigma'=O(\sigma + {q \choose p} n^3 +n(n\gamma+q\delta))$, we
obtain as a bound
\[
  \softO\left (
    {q \choose p}\, c^2\,n^3 +  c\, e\,n (\sigma+ {q \choose p}n^3 \, +n^2\gamma +nq \delta)
  \right ),
\]
which we simplify as
\[
 \softO\left (   {q \choose p}c\, e\,n^4 (\sigma+\gamma+q\delta) \right ),
\]
taking into account that $c \le e$.
Since $e \ge 2^n$, the term $n^4$ can be absorbed in 
the $\softO(\, )$.
This concludes the proof of Proposition~\ref{prop:coldeg_simple}. 


\section{Preliminaries for the row-degree homotopy}\label{sec:prel-row}

In this section, we work with two families of matrices of size $p
\times q$, with $p \le q$, and with entries that are polynomials in
$n=q-p+1$ variables; we prove several properties that will be used 
in our row-degree homotopy algorithm. Let $\balpha=(\alpha_1,\dots,\alpha_p)$
be positive integers. The matrices we consider are
\begin{align}\label{eqdef:type2}
\mM^H= \left( \begin{matrix}
\lambda^H_{1,1} & \lambda^H_{1,2} & \cdots & \lambda^H_{1, q}\\
 \lambda^H_{2,1} &  \lambda^H_{2,2} & \cdots & \lambda^H_{2, q}\\
 \vdots & & & \vdots\\
 \lambda^H_{p,1} &  \lambda^H_{p,2}& \cdots & \lambda^H_{p, q}
\end{matrix} \right),
\end{align}
and matrices of a more specialized kind of the form
\begin{align}\label{eqdef:type1}
\mN^H= \left( \begin{matrix}
\lambda^H_{1,1} & 0 & \cdots & 0 & \lambda^H_{1,p+1} & \cdots & \lambda^H_{1, q}\\
0 & \lambda^H_{2,2} & \cdots & 0 & \lambda^H_{2,p+1} & \cdots & \lambda^H_{2, q}\\
\vdots & \vdots & \ddots & \vdots & \vdots & \ddots & \vdots\\
0 & 0 & \cdots & \lambda^H_{p,p} & \lambda^H_{p,p+1} & \cdots & \lambda^H_{p, q}
\end{matrix} \right),
\end{align}
where the ${}^H$ superscript indicates that all entries are
homogenous.  In both cases, for all $i,j$, the entry $\lambda^H_{i,j}$
is a product of $\alpha_i$ homogeneous linear forms in $n+1$ variables
$X_0,\dots,X_n$ with coefficients in $\KK$ (except when
$\lambda^H_{i,j}$ is explicitly set to zero in the second case), that
is, $\lambda^H_{i,j}=\prod_{k=1}^{\alpha_i} \lambda^H_{i,j,k}$.  

We are interested in describing the projective algebraic sets defined
in $\P^n(\KKbar)$ by the $p$-minors of $\mN^H$ and $\mM^H$ (note that
these minors are all homogeous). In the rest of this section, if
$\mA^H$ is a matrix with polynomial entries that are homogeneous in
$X_0,\dots,X_n$, we use the notation $V_t(\mA^H)$ to denote the
projective set defined by its $t$-minors in $\P^n(\KKbar)$, for any
$t\ge 1$ (we use the same notation for affine algebraic sets in those
cases when the entries of our matrices are polynomials in $X_1,\dots,X_n$;
this should cause no confusion).

\begin{proposition}\label{lemma:appendix}
  For generic choices of the coefficients of the linear forms
  $\lambda^H_{i,j,k}$, the following holds:
  \begin{itemize}
  \item the projective algebraic sets $V_p(\mM^H)$ and $V_p(\mN^H)$
    have no solution at infinity (that is, with $X_0=0$);
  \item the Jacobian matrices of $I_p(\mM^H)$ and $I_p(\mN^H)$ with
    respect to $(X_0,\dots,X_n)$ have rank $n$ at every point of the
    above sets.
\end{itemize}
\end{proposition}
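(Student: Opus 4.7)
The plan is to prove both conclusions of the proposition first for the more structured matrix $\mN^H$ by a direct stratified analysis using its block-diagonal shape on the first $p$ columns, and then transfer the result to $\mM^H$ by a semicontinuity argument on the parameter space of coefficients $\lambda^H_{i,j,k,\ell}$.

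For $\mN^H$, I would observe that $\bx \in V_p(\mN^H)$ iff the rows of $\mN^H(\bx)$ admit a non-trivial linear relation $\bc^\top \mN^H(\bx) = 0$. Projecting this onto the first $p$ columns yields $c_j \lambda^H_{j,j}(\bx) = 0$, so the support $S := \{j : c_j \ne 0\}$ consists entirely of indices with $\lambda^H_{j,j}(\bx) = 0$; conversely, such a $\bc$ exists iff the $|S|\times(q-p)$ submatrix $(\lambda^H_{j,\ell}(\bx))_{j \in S,\,\ell>p}$ has rank $<|S|$. This stratifies $V_p(\mN^H)$ by non-empty $S \subseteq \{1,\dots,p\}$: the $S$-stratum is cut out by $|S|$ hypersurface equations together with a rank-drop condition of expected codimension $q-p-|S|+1$, for a total expected codimension $n$ in $\P^n$. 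By the classical determinantal codimension result already recalled at the start of Section~\ref{sec:check} (cf.~\cite[Section~6]{EN62}), for generic coefficients this bound is attained, so each stratum is $0$-dimensional; restricting to $\{X_0=0\}$ reduces the ambient dimension to $n-1$ while keeping the expected codimension $n$, which forces the restriction to be empty, establishing assertion~(i) for $\mN^H$.

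For assertion~(ii) on $\mN^H$, I would work with the universal family $\widetilde{V} \subset \mathbb{A}^N \times \P^n$ over the parameter space $\mathbb{A}^N$ of coefficients: the fiber analysis above shows $\widetilde{V}$ is equidimensional of dimension $N$, and exhibiting one smooth point in one fiber suffices for generic smoothness of $\widetilde{V}$ and hence of the generic fiber. Such a point is easy to produce in the $|S|=1$ stratum: at $\bx$ where row $i_0$ of $\mN^H(\bx)$ vanishes completely, the $n=1+(q-p)$ $p$-minors obtained either from the first $p$ columns, or from those columns by swapping column $i_0$ with a column $j>p$, each have (generically) non-zero gradient at $\bx$---proportional to the gradient of the single vanishing linear factor of $\lambda^H_{i_0,i_0}$ or $\lambda^H_{i_0,j}$ involved---whereas all other $p$-minors vanish to second order. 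These $n$ gradients are the gradients of $n$ linear forms that are linearly independent for generic coefficients, so the Jacobian has rank $n$.

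Finally, the statement for $\mM^H$ would follow by semicontinuity: both assertions are Zariski-open on the $\mM^H$ parameter space, and $\mN^H$ lies in its closure, being obtained from $\mM^H$ by setting to zero all coefficients of one chosen linear factor of each $\lambda^H_{i,j}$ for $i \ne j$ and $j \le p$ (which makes that factor, and hence the entire product $\lambda^H_{i,j}$, identically zero). Since the good locus for $\mM^H$ has non-empty intersection with this $\mN^H$-stratum, it is Zariski-dense in the full parameter space. The main obstacle is the Jacobian condition at the higher strata $|S|\geq 2$ of $V_p(\mN^H)$, which do contribute points generically (the total count $S_n(\alpha_1,\dots,\alpha_p)$ given by Nie--Ranestad~\cite{NieRan09} exceeds the $|S|=1$ count $\sum_{i_0}\alpha_{i_0}^n$); at those points the equations are genuinely determinantal and a direct Jacobian computation is unwieldy, but the universal-family smoothness argument avoids it, provided one verifies that $\widetilde{V}$ has a single smooth component passing through the explicit $|S|=1$ point chosen above.
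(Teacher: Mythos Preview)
Your proviso in the last sentence is exactly where the argument breaks, and it is not a technicality: the universal family $\widetilde{V}$ over the $\mN^H$-parameter space is \emph{never} irreducible once some $\alpha_i\ge 2$. Already for $p=q=1$, $\alpha_1=2$, the family $\{(\Lambda,\bx):\lambda^H_{1,1,1}(\bx)\,\lambda^H_{1,1,2}(\bx)=0\}$ has two components. In general the components of $\widetilde{V}$ are indexed by the data $(\bi,\br)$ of Lemma~\ref{lemma:union} (together with the analogous choice of factors in the associated submatrix), and Remark~\ref{remark:disjoint} shows that these pieces are disjoint in every fibre; hence they lie on distinct irreducible components. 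Your explicit $|S|=1$ point witnesses smoothness of \emph{one} such component, but says nothing about the components carrying the $|S|\ge 2$ points, so generic smoothness of $\widetilde{V}\to\mathbb{A}^N$ along that single component does not yield the Jacobian condition at those other points. A symmetry permuting the factors $\lambda^H_{i,j,k}$ over $k$ identifies components with the same $\bi$ but different $\br$; it does not relate components with different $|\bi|$. The same reducibility already occurs for the $\mM^H$-family (take $p=1$), so moving the argument there does not help.

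This is precisely the difficulty the paper's proof is designed to overcome, and it does so by induction on $q$ rather than by a global smoothness argument. For each stratum $(\bi,\br)$ of $V_p(\mathfrak{N}^H_{\balpha,q})$ with $|\bi|=\kappa$, the paper exhibits $\kappa$ elements $b\cdot\mathfrak{l}^H_{i_k,i_k,r_k}$ in the minor ideal whose gradients give $\kappa$ independent directions, and then invokes the induction hypothesis $\assI_4(\balpha_\bi,n-1)$ on the $\kappa\times(n-1)$ matrix $\mathfrak{M}^H_{\balpha_\bi,n-1}$ obtained after eliminating those $\kappa$ variables to supply the remaining $n-\kappa$ directions. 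The auxiliary properties $\assI_1,\assJ_1$ and $\assI_3$ are needed to guarantee that the unit $b$ in these products is indeed non-zero at $\tilde\bx$. This is the ``unwieldy'' computation you hoped to avoid; it cannot be bypassed. Your argument for assertion~(i) has a parallel but milder gap: the Eagon--Northcott bound is an \emph{upper} bound on codimension, so it gives no information when the expected dimension is negative; showing emptiness at infinity again requires the inductive mechanism the paper sets up through $\assI_2$ and $\assJ_2$.
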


The bulk of this section is devoted to prove this proposition.  Our
strategy is to work all along with linear forms with indeterminate
coefficients, and establish the properties we want in this context.
Explicitly, we prove below properties called
$\assI_2(\balpha,q),\assJ_2(\balpha,q)$ and
$\assI_4(\balpha,q),\assJ_4(\balpha,q),$ which establish the
proposition.  In what follows, for any ring $R$ and any matrix
$\mM \in R^{p\times q}$, if $S$ is a subsequence of $(1,\dots,p)$ and
$T$ a subsequence of $(1,\dots,q)$, $\mM_{S,T}$ is the submatrix of
$\mM$ obtained by keeping rows indexed by $S$ and columns indexed by
$T$. We also call this the $(S,T)$-submatrix of $\mM$.

Let thus ${\cal A}=q(n+1)(\alpha_1 + \cdots + \alpha_p)$; this is the
number of coefficients needed to define homogeneous linear forms
$\lambda^H_{i,j,k}$ in $X_0,\dots,X_n$, for $i=1,\dots,p$,
$j=1,\dots,q$ and $k=1,\dots,\alpha_i$. If needed, we will write
${\cal A}={\cal A}(\balpha,q)$ to make the dependency in $\balpha$ and
$q$ explicit.  Let then $\mathfrak{L}$ be the sequence of ${\cal A}$
indeterminates $\mathfrak{L}=(\mathfrak{l}_{i,j,k,r})$, for $i,j,k$ as
above and $r=0,\dots,n$, and define
$$\mathfrak{l}^H_{i,j,k} = \mathfrak{l}_{i,j,k,0}X_0 + \mathfrak{l}_{i,j,k,1} X_1 +\cdots + \mathfrak{l}_{i,j,k,n} X_n,$$
as well as 
$$\mathfrak{l}^H_{i,j} = \mathfrak{l}^H_{i,j,1} \cdots \mathfrak{l}^H_{i,j,\alpha_i} \in \KK[\mathfrak{L}][\tilde\bX],$$
with $\tilde\bX=(X_0,X_1,\dots,X_n)$. We can then define the
matrix
\begin{align}\label{eq:matM}
\mathfrak{M}^H_{\balpha,q}=\left [\begin{matrix}
\mathfrak{l}^H_{1,1} & \cdots & \mathfrak{l}^H_{1,q}\\
 \vdots & & \vdots\\
\mathfrak{l}^H_{p,1} & \cdots & \mathfrak{l}^H_{p,q}
  \end{matrix}\right ]\in \KK[\mathfrak{L}][\tilde\bX]^{p\times q}.  
\end{align}
Remark that for all $i,j$, the $(i,j)$-th entry of
$\mathfrak{M}^H_{\balpha,q}$ has degree $\alpha_i$ in $\tilde \bX$;
this matrix is thus the ``generic'' model of the matrix $\mM^H$ seen previously.

Given $\Lambda=(\lambda_{i,j,k,r})\in \KKbar{}^{\cal A}$, for any polynomial
$\mathfrak{F}$ in $\KK(\mathfrak{L})[\tilde \bX]$, we write
$\mathfrak{F}(\Lambda,\tilde\bX)$ for the polynomial obtained by
evaluating $\mathfrak{l}_{i,j,k,r}$ at $\lambda_{i,j,k,r}$, for all
indices $i,j,k,r$ as above, as long as no denominator vanishes through
this evaluation; the notation extends to polynomial matrices. More
generally, for a field $\LL$ containing $\KK$, and $\Lambda$ in $\LL^{\cal A}$, the
notation $\mathfrak{F}(\Lambda,\tilde\bX)$ is defined similarly.

Let next ${\cal A}'=n(n+1)(\alpha_1+\cdots+\alpha_p)$; as above, 
we will write ${\cal A}'={\cal A}'(\balpha,q)$ when needed. Let
$\mathfrak{L}'\subset \mathfrak{L}$ be the sequence of ${\cal A}'$
indeterminates $\mathfrak{L}'=(\mathfrak{l}_{i,j,k,r})$, for indices
$i,j,k,r$ as follows: $i$ is in $\{1,\dots,p\}$, $j$ is in
$\{i,p+1,\dots,q\}$, and as previously, $k$ is in
$\{1,\dots,\alpha_i\}$ and $r$ is in $\{0,\dots,n\}$. Remark that the
polynomials $\mathfrak{l}^H_{i,j}$, for $i,j$ as above, are in
$\KK[\mathfrak{L}'][\tilde\bX] \subset \KK[\mathfrak{L}][\tilde \bX]$,
and allow us to define
\begin{align}\label{eq:matMprime}
\mathfrak{N}^H_{\balpha,q}=\left [\begin{matrix} \mathfrak{l}^H_{1,1} & 0 & 0
    &\mathfrak{l}^H_{1,p+1} & \cdots & \mathfrak{l}^H_{1,q}\\ \vdots & \ddots &
    \vdots & \vdots & & \vdots\\ 0&0& \mathfrak{l}^H_{p,p}
    &\mathfrak{l}^H_{p,p+1} & \cdots & \mathfrak{l}^H_{p,q}
  \end{matrix}\right ]\in \KK[\mathfrak{L}'][\tilde\bX]^{p\times q}.
\end{align}
For $\Lambda' \in \KKbar{}^{{\cal A}'}$ and $\mathfrak{F}$ in 
$\KK(\mathfrak{L}')[\tilde\bX]$, the notation
$\mathfrak{F}(\Lambda',\tilde\bX)$ is defined as in the case of
polynomials over $\KK(\mathfrak{L})$ described previously.

\subsection{Setting up the recurrences}
The basic idea behind the proofs below is the following: to prove that
a property such as rank-deficiency holds for a matrix $\mathfrak{M}^H_{\balpha,q}$, we prove
that it holds for a matrix of the form $\mathfrak{N}^H_{\balpha,q}$,
and use an openness property. To prove that property for the
latter matrices, we proceed by induction, relying on the presence of
the left-hand diagonal block. Indeed, for a matrix such as
$\mathfrak{N}^H_{\balpha,q}$ to be rank-deficient at $\tilde\bx \in
\P^n(\KKGpbar)$, at least one of
$\mathfrak{l}^H_{1,1},\dots,\mathfrak{l}^H_{p,p}$ must vanish at
$\tilde\bx$.

Suppose for instance that
$\mathfrak{l}^H_{1,1}(\tilde\bx)=\mathfrak{l}^H_{2,2}(\tilde\bx)=0$,
while all other terms are non-zero. Then, the
$((1,2),(p+1,\dots,q))$-submatrix of
${\mathfrak{N}^H_{\balpha,q}}(\tilde\bx)$ itself must be
rank-deficient.  The constraints
$\mathfrak{l}^H_{1,1}(\tilde\bx)=\mathfrak{l}^H_{2,2}(\tilde\bx)=0$
give us two linear equations, which allow us to eliminate two
coordinates of $\tilde\bx$, say $X_{n-1}$ and $X_n$. We can perform
the corresponding substitution in the above submatrix, and we are left
with a matrix of size $2 \times (n-1)$ that is of the form
$\mathfrak{M}^H_{(\alpha_1,\alpha_2),n-1}(\mathfrak{H},(X_0,\dots,X_{n-2}))$,
with entries depending on $X_0,\dots,X_{n-2}$, for some vector of
coefficients $\mathfrak{H}$ obtained through the elimination of
$X_{n-1}$ and $X_n$. We can then invoke our induction assumption on
the latter matrix.

To formalize this process, for a subsequence
$\bi=(i_1,\dots,i_\kappa)$ of $(1,\dots,p)$, we call the
$(\bi,(p+1,\dots,q))$-submatrix of $\mathfrak{N}^H_{\balpha,q}$ the submatrix of
$\mathfrak{N}^H_{\balpha,q}$ {\em associated} to $\bi$; it consists of the rows of
$\mathfrak{N}^H_{\balpha,q}$ indexed by $\bi$ and columns $p+1,\dots,q$.
For such an $\bi$, we let $R_\bi$ be the set of
all tuples $\br=(r_1,\dots,r_\kappa)$, with $r_1$ in
$\{1,\dots,\alpha_{i_1}\}$, \dots, $r_\kappa$ in
$\{1,\dots,\alpha_{i_\kappa}\}$; for any $k$ in $\{1,\dots,\kappa\}$,
$r_k$ will be the index of the factor $\mathfrak{l}^H_{i_k,i_k,r_k}$
of $\mathfrak{l}^H_{i_k,i_k}$ we cancel. For given $\bi$ and $\br$,
we will let $\mathfrak{L}'_{\bi,\br} \subset \mathfrak{L}'$ be the
indeterminates corresponding to the coefficients of
$\mathfrak{l}^H_{i_1,i_1,r_1},\dots,\mathfrak{l}^H_{i_\kappa,i_\kappa,r_\kappa}$, and
of all entries $\mathfrak{l}^H_{i_1,p+1},\dots,\mathfrak{l}^H_{i_\kappa,q}$
of the submatrix associated to $\bi$ in $\mathfrak{N}^H_{\balpha,q}$.

By Gaussian elimination, we can rewrite the homogeneous linear
equations
$\mathfrak{l}^H_{i_1,i_1,r_1}=\dots=\mathfrak{l}^H_{i_\kappa,i_\kappa,r_\kappa}=0$ as
\begin{align}\label{eq:f_sr}
X_{n-\kappa+1}=\mathfrak{f}_{n-\kappa+1,\bi,\br}(X_0,\dots,X_{n-\kappa}),\dots,X_{n}=\mathfrak{f}_{n,\bi,\br}(X_0,\dots,X_{n-\kappa}),  
\end{align}
for some homogeneous linear forms
$\mathfrak{f}_{n-\kappa+1,\bi,\br},\dots,\mathfrak{f}_{n,\bi,\br}$ of
$(X_0,\dots,X_{n-\kappa})$ with coefficients in
$\KK(\mathfrak{L}'_{\bi,\br})$. Applying this substitution in the
entries of the submatrix of ${\mathfrak{N}^H_{\balpha,q}}$ associated
to $\bi$ gives us the $\kappa \times (n-1)$ matrix
$\mathfrak{M}^H_{\balpha_\bi,n-1}(\mathfrak{H}_{\bi,\br},\tilde\bX')$,
with $\balpha_\bi=(\alpha_{i_1},\dots,\alpha_{i_\kappa})$, whose
entries are products of homogeneous linear forms in
$\tilde\bX'=(X_0,\dots,X_{n-\kappa})$, and where $\mathfrak{H}_{\bi,\br}$ is a
vector of ${\cal A}(\balpha_\bi,n-1)$ elements in $\KK(\mathfrak{L}'_{\bi,\br})$.

The main result we will use in this section is the following lemma,
which summarizes how the above process allows us to describe the
projective zero-set of $t$-minors of $\mathfrak{N}^H_{\balpha,q}$, for
any $t \le p$. This will be the basis of several recursions.
\begin{lemma}\label{lemma:union}
  For $t$ in $\{1,\dots,p\}$, $V_t(\mathfrak{N}^H_{\balpha,q}) \subset \P^{n}(\KKGpbar)$ is the
  union of the sets
 \begin{align}\label{eq:union}
 \left \{(\tilde\bx',\mathfrak{f}_{n-\kappa+1,\bi,\br}(\tilde\bx'),\dots,\mathfrak{f}_{n,\bi,\br}(\tilde\bx')) \mid \tilde\bx' \in
  V_{\kappa-(p-t)}(\mathfrak{M}^H_{\balpha_\bi,n-1}(\mathfrak{H}_{\bi,\br},\tilde\bX')) \subset \P^{n-\kappa}(\KKGpbar)\right \},   
 \end{align}
 for $\bi=(i_1,\dots,i_\kappa)$ of length $\kappa \in \{p-t+1,\dots,\min(p,n-1)\}$ and $\br$ in $R_\bi$,
 and with $\tilde\bX'=(X_0,\dots,X_{n-\kappa})$, together with
 $$\left \{
 (1,\mathfrak{f}_{1,\bi,\br}(1),\dots,\mathfrak{f}_{n,\bi,\br}(1))\right
 \}$$ if $t=p$ and $n \le p$, with $\bi=(i_1,\dots,i_n)$ and $\br$ in $R_\bi$.
\end{lemma}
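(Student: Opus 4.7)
The plan is to describe $V_t(\mathfrak{N}^H_{\balpha,q}) \subset \P^n(\KKGpbar)$ by stratifying its points according to which diagonal entries of $\mathfrak{N}^H_{\balpha,q}$ vanish, then reducing each stratum to a rank problem on a smaller full matrix of type $\mathfrak{M}^H$. Fix $\tilde\bx \in \P^n(\KKGpbar)$ and let $\bi = (i_1, \dots, i_\kappa)$ be the subsequence of $(1,\dots,p)$ consisting of those $i$ with $\mathfrak{l}^H_{i,i}(\tilde\bx) = 0$. Because the first $p$ columns of $\mathfrak{N}^H_{\balpha,q}$ form the diagonal block $\mathrm{diag}(\mathfrak{l}^H_{1,1}, \dots, \mathfrak{l}^H_{p,p})$, every row indexed by $\{1,\dots,p\}\setminus\bi$ carries a nonzero pivot in a distinct one of these columns and is linearly independent of the remaining rows; hence
\[
\mathrm{rank}\bigl(\mathfrak{N}^H_{\balpha,q}(\tilde\bx)\bigr) = (p-\kappa) + \mathrm{rank}\bigl(\mathfrak{N}^H_{\balpha,q}(\tilde\bx)_{\bi,\{p+1,\dots,q\}}\bigr).
\]
The condition $\mathrm{rank}<t$ is therefore equivalent to $\kappa \ge p-t+1$ together with the residual $\kappa\times(n-1)$ submatrix having rank less than $\kappa-(p-t)$.

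Next, each $\mathfrak{l}^H_{i_k,i_k} = \prod_{r=1}^{\alpha_{i_k}}\mathfrak{l}^H_{i_k,i_k,r}$ vanishes at $\tilde\bx$ iff at least one factor $\mathfrak{l}^H_{i_k,i_k,r_k}(\tilde\bx)$ vanishes; collecting such choices yields a tuple $\br \in R_\bi$. The resulting $\kappa$ homogeneous linear equations involve pairwise independent indeterminate coefficients, hence have rank $\kappa$, and Gaussian elimination furnishes the expressions~\eqref{eq:f_sr} for $X_{n-\kappa+1},\dots,X_n$ in terms of $\tilde\bX' = (X_0,\dots,X_{n-\kappa})$, with coefficients in $\KK(\mathfrak{L}'_{\bi,\br})$. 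Their solution locus in $\P^n(\KKGpbar)$ is thereby identified with $\P^{n-\kappa}(\KKGpbar)$ via $\tilde\bX'$ when $\kappa \le n-1$, and with a single projective point when $\kappa = n$.

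Substituting $X_{n-\kappa+j} = \mathfrak{f}_{n-\kappa+j,\bi,\br}(\tilde\bX')$ into every linear factor of each entry $\mathfrak{l}^H_{i_k,j}$ of the residual submatrix (for $k=1,\dots,\kappa$ and $j \in\{p+1,\dots,q\}$) preserves it as a homogeneous linear form, now in $\tilde\bX'$ with coefficients rational in $\mathfrak{L}'_{\bi,\br}$; multiplying the $\alpha_{i_k}$ factors gives each $(k,j)$-entry as a product of $\alpha_{i_k}$ such linear forms. Bundling the new coefficients into a vector $\mathfrak{H}_{\bi,\br}$ of length ${\cal A}(\balpha_\bi,n-1)$, the residual is literally $\mathfrak{M}^H_{\balpha_\bi,n-1}(\mathfrak{H}_{\bi,\br},\tilde\bX')$, and the rank condition from the first paragraph becomes $\tilde\bx' \in V_{\kappa-(p-t)}(\mathfrak{M}^H_{\balpha_\bi,n-1}(\mathfrak{H}_{\bi,\br},\tilde\bX'))$. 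Taking the union over all admissible $(\bi,\br)$ with $\kappa \le n-1$ reproduces the main family in~\eqref{eq:union}.

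It remains to treat the boundary $\kappa = n$, possible only when $n \le p$. Here the $n$ linear equations pin $\tilde\bx$ to the single projective point $(1 : \mathfrak{f}_{1,\bi,\br}(1) : \cdots : \mathfrak{f}_{n,\bi,\br}(1))$, and the residual is an $n\times(n-1)$ matrix whose rank is at most $n-1$; the condition $\mathrm{rank} < n-(p-t)$ is then automatic exactly when $t = p$, which accounts for the second piece of the union. For $t < p$ the same point generically fails the condition (the indeterminate coefficients make the residual attain its maximal rank $n-1$), so no extra contribution is needed. The main delicacy will be the substitution step: confirming that $\mathfrak{H}_{\bi,\br}$ lives in $\KK(\mathfrak{L}'_{\bi,\br})$ with precisely the combinatorial shape required by $\mathfrak{M}^H_{\balpha_\bi,n-1}$, so that the reduced object is literally of the prescribed type and the lemma feeds cleanly into the inductive arguments needed for Proposition~\ref{lemma:appendix}.
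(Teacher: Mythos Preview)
Your argument follows the paper's proof closely: stratify by the set $\bi$ of vanishing diagonal entries, reduce the rank condition to the $\kappa\times(n-1)$ residual block, pick a linear factor for each vanishing diagonal to get $\br$, and eliminate $\kappa$ variables by Gaussian elimination. Your explicit rank identity $\mathrm{rank} = (p-\kappa) + \mathrm{rank}(\text{residual})$ is a slightly cleaner way to state what the paper does.

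There is one point you assert without justification that the paper actually proves. In the $\kappa=n$ case (so $n\le p$), you claim that for $t<p$ ``the indeterminate coefficients make the residual attain its maximal rank $n-1$'', so that this stratum contributes nothing. This is not self-evident: the $n\times(n-1)$ residual is being evaluated at the specific point $\tilde\bx=(1,\mathfrak{f}_{1,\bi,\br}(1),\dots,\mathfrak{f}_{n,\bi,\br}(1))$, whose coordinates lie in $\KK(\mathfrak{L}'_{\bi,\br})$, while the residual entries are products of linear forms with coefficients in a disjoint set of indeterminates. The paper supplies the missing step by a specialization argument: setting the coefficients of the residual so that its first $n-1$ rows become $\mathrm{diag}(X_0^{\alpha_{i_1}},\dots,X_0^{\alpha_{i_{n-1}}})$ and the last row zero, one sees that the evaluation at $\tilde\bx$ (where $x_0=1$) has rank $n-1$; since rank is lower–semicontinuous in the coefficients, the generic (i.e.\ indeterminate) rank is $n-1$ as well. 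You should include this or an equivalent argument; the word ``generically'' alone does not carry it, since over $\KKGpbar$ the coefficients are fixed indeterminates, not parameters to be chosen.
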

\noindent We have to write a special case for $t=p$ and $n \le p$ in the last part of the lemma,
since taking $\bi=(i_1,\dots,i_n)$ of length $\kappa=n$
in~\eqref{eq:union} would lead to consider points in $\P^0(\KKGpbar)$.
\begin{proof}
  A point $\tilde\bx \in \P^n(\KKGpbar)$ belongs to
  $V_t(\mathfrak{N}^H_{\balpha,q})$ if and only if some diagonal terms
  of $\mathfrak{N}^H_{\balpha,q}$ vanish at $\tilde\bx$, say
  $\mathfrak{l}^H_{i_k,i_k}(\tilde\bx)=0$ for $k=1,\dots,\kappa$ (all
  other $\mathfrak{l}^H_{i,i}(\tilde\bx)$ being non-zero), and
  if the submatrix of $\mathfrak{N}^H_{\balpha,q}$ associated to
  $\bi=(i_1,\dots,i_\kappa)$ has rank less than $\kappa-(p-t)$ at $\tilde
  \bx$.  In particular, we must have
  $\kappa-(p-t) > 0$, that is, $\kappa \ge p-t+1$.

  For $k=1,\dots,\kappa$, $\mathfrak{l}^H_{i_k,i_k}(\tilde\bx)=0$ if
  and only if there exists $r_k$ in $\{1,\dots,\alpha_{i_k}\}$ such
  that $\mathfrak{l}^H_{i_k,i_k,r_k}(\tilde\bx)=0$. Thus, $\tilde\bx$
  is in $V_t(\mathfrak{N}^H_{\balpha,q})$ if and only if there exists
  a subsequence $\bi=(i_1,\dots,i_\kappa)$  of $(1,\dots,p)$, with
  $\kappa \ge p-t+1$, and $\br=(r_1,\dots,r_\kappa)$ in $R_\bi$ such that
  $\mathfrak{l}^H_{i_1,i_1,r_1}(\tilde\bx)=\cdots=\mathfrak{l}^H_{i_\kappa,i_\kappa,r_\kappa}(\tilde\bx)=0$
  and the submatrix of $\mathfrak{N}^H_{\balpha,q}$ associated to $\bi$
  has rank less than $\kappa-(p-t)$ at $\tilde\bx$.  

  Applying~\eqref{eq:f_sr}, we deduce that the coordinates $(x_0,\dots,x_n)$ 
  of $\tilde \bx$ satisfy
  \begin{align*}
    x_{n-\kappa+1}=\mathfrak{f}_{n-\kappa+1,\bi,\br}(\tilde\bx'),\dots,x_{n}=\mathfrak{f}_{n,\bi,\br}(\tilde\bx'),
  \end{align*}
  with $\tilde\bx'=(x_0,\dots,x_{n-\kappa})$.  In particular, $\kappa
  \le n$, since otherwise this linear system would have no solution
  (recall that the coefficients are algebraically independent
  indeterminates). Remark also that $\tilde\bx'$ is a well-defined
  element of $\P^{n-\kappa}(\KKGpbar)$, that is, it is not identically
  zero, since otherwise $\tilde\bx$ would vanish as well.

  For $\bi=(i_1,\dots,i_\kappa)$ with $\kappa \le n-1$, applying the above
  substitution in the submatrix of $\mathfrak{N}^H_{\balpha,q}$
  associated to $\bi$ (which has size $\kappa \times (n-1)$), the rank
  condition above becomes that
  $\mathfrak{M}^H_{\balpha_\bi,n-1}(\mathfrak{H}_{\bi,\br},\tilde\bX')$
  has rank less than $\kappa-(p-t)$ at $\tilde \bx'$, that is, $\tilde
  \bx'$ is in
  $V_{\kappa-(p-t)}(\mathfrak{M}^H_{\balpha_\bi,n-1}(\mathfrak{H}_{\bi,\br},\tilde\bX'))$.
  In this case, we are done.

  When $\kappa=n$, that is, $\bi=(i_1,\dots,i_n)$ (this can happen
  only if $n \le p$), the linear equations above determine $\tilde\bx$
  entirely; setting $x_0=1$, we obtain
  $x_{1}=\mathfrak{f}_{1,\bi,\br}(1),\dots,x_{n}=\mathfrak{f}_{n,\bi,\br}(1).$
  In this case, the submatrix of $\mathfrak{N}^H_{\balpha,q}$
  associated to $\bi$ has size $n \times (n-1)$. Using the
  specialization of the coefficients that sets the off-diagonal entry
  to $0$ and the $i$th diagonal entries to $X_0^{\alpha_i}$,
  $i=1,\dots,n-1$, we see that its evaluation at $\tilde\bx$ has rank
  $n-1$; as a result $\mathfrak{N}^H_{\balpha,q}$ has rank $p-1$ at
  $\tilde \bx$. Thus, we need to take $\kappa=n$ into account only if
  $t=p$, that is, if we are interested in the maximal minors; in this
  case, we have to take into account the point $\left \{
  (1,\mathfrak{f}_{1,\bi,\br}(1),\dots,\mathfrak{f}_{n,\bi,\br}(1))\right
  \}$.
\end{proof}

\subsection{Solutions with higher rank defect} 
We discuss here the case $t=p-1$.  We take parameters
$\balpha=(\alpha_1,\dots,\alpha_p)$ and $q$, with $2 \le p \le q$, and
we write ${\cal A}={\cal A}(\balpha,q)$ and ${\cal A}'={\cal A}'(\balpha,q)$; we will establish
the following properties.
\begin{description}[leftmargin=*]
\item[$\assI_1(\balpha,q).$] The projective algebraic set
  $V_{p-1}(\mathfrak{M}^H_{\balpha,q})\subset\P^n(\KKGbar)$ is empty.
\item[$\assJ_1(\balpha,q).$] The projective algebraic set
  $V_{p-1}(\mathfrak{N}^H_{\balpha,q})\subset\P^n(\KKGpbar)$ is empty.
\end{description}
The first step of the proof is to establish that for $\balpha$ and $q$
as above, $\assJ_1(\balpha,q)$ implies $\assI_1(\balpha,q)$. Let us
thus fix $\balpha$ and $q$.  Assumption $\assJ_1(\balpha,q)$ implies
that $V_{p-1}(\mathfrak{N}^H_{\balpha,q}(\Lambda',\tilde\bX))$ is
empty for a generic $\Lambda'$ in~$\KKbar{}^{{\cal A}'}$.  We will prove that
$V_{p-1}(\mathfrak{M}^H_{\balpha,q}(\Lambda,\tilde\bX))$ is empty for
a generic $\Lambda$ in~$\KKbar{}^{\cal A}$, which in turn establishes
$\assI_1(\balpha,q)$.

Consider the ideal $I_{p-1}(\mathfrak{M}^H_{\balpha,q})$ in the polynomial
ring $\KK[\mathfrak{L},\tilde\bX]$ in ${\cal A}+n+1$ variables. This ideal
defines an algebraic set $Z_{\balpha,q}$ in $\KKbar{}^{\cal A} \times
\P^n(\KKbar)$, and we let $\Delta_{\balpha,q} \subset \KKbar{}^{\cal A}$ be its
projection on the first factor: this is the set of all $\Lambda$ such that
$V_{p-1}(\mathfrak{M}^H_{\balpha,q}(\Lambda,\tilde\bX))$ is not empty. Because the source is
a projective space, $\Delta_{\balpha,q}$ is closed (so its complement is
open), and we just have to verify that it is not equal to the whole
$\KKbar{}^{\cal A}$. This follows readily from property $\assJ_1(\balpha,q)$,
which proves that generic matrices of the form
$\mathfrak{N}^H_{\balpha,q}(\Lambda',\tilde\bX)$ do not belong to $\Delta_{\balpha,q}$, so 
$\assI_1(\balpha,q)$ holds.

We finish the proof by induction. We first take $p=q$ and consider
$\assJ_1(\balpha,q)$.  In this case, $n=1$ and
$\mathfrak{N}^H_{\balpha,q}$ is a diagonal matrix, whose diagonal
entries are products of linear forms in $(X_0,X_1)$ with indeterminate
coefficients. Hence, no pair of entries $\mathfrak{N}^H_{\balpha,q}$
have any common solution in $\P^1(\KKGpbar)$, so the rank of
$\mathfrak{N}^H_{\balpha,q}$ is at least $p-1$ at any $\tilde\bx \in
\P^1(\KKGpbar)$. As a result, $\assJ_1(\balpha,p)$ holds, and so does
$\assI_1(\balpha,p)$, by the claim in the previous paragraph.

Consider next a pair $(\balpha,q)$, with
$\balpha=(\alpha_1,\dots,\alpha_p)$ and $2 \le p < q$, and suppose
that $\assI_1(\balpha',q')$ holds for all $(\balpha',q')$ with
$\balpha'=(\alpha'_1,\dots,\alpha'_{p'})$, $2 \le p' \le q'$, $p' \le p$ and $q'
< q$; we prove that $\assJ_1(\balpha,q)$ holds (as above, this will
also imply $\assI_1(\balpha,q)$).

Take $t=p-1$ in Lemma~\ref{lemma:union}. Then, the parameters
$(\kappa-(p-t),\balpha_\bi,n-1)$ used in each
expression~\eqref{eq:union} are of the form
$(\kappa-1,\balpha_\bi,n-1)$, with $2 \le \kappa \le \min(p,n-1)$.
Since the ${\cal A}(\balpha_\bi,n-1)$ entries of $\mathfrak{H}_{\bi,\br}$ are algebraically
independent over $\KK$, $\KK(\mathfrak{H}_{\bi,\br})$ is isomorphic to 
$\KK(\lambda_{u,j,k,r})$, for $u=1,\dots,\kappa$, 
$j=1,\dots,n-1$, $k=1,\dots,\alpha_{i_u}$ 
and $r=0,\dots,n-\kappa$, so that $V_{\kappa-1}(\mathfrak{M}^H_{\balpha_\bi,n-1}(\mathfrak{H}_{\bi,\br},\tilde\bX'))$
has the same cardinality as 
$V_{\kappa-1}(\mathfrak{M}^H_{\balpha_\bi,n-1})$.
As a result, since $\balpha_\bi$ has length $\kappa\ge 2$, and since we also
have $\kappa \le n-1$, $\kappa \le p$ and $n-1 < q$, we can apply the
induction hypothesis and deduce that all
$V_{\kappa-1}(\mathfrak{M}^H_{\balpha_\bi,n-1}(\mathfrak{H}_{\bi,\br},\tilde\bX'))$ appearing
in Lemma~\ref{lemma:union} are empty. This in turn implies that
$V_{p-1}(\mathfrak{N}^H_{\balpha,q})$ is empty, as claimed.

\subsection{Solutions at infinity} Next, we focus on the case $t=p$.
We take parameters $\balpha=(\alpha_1,\dots,\alpha_p)$ and $q$, with
$1 \le p \le q$, and we write ${\cal A}={\cal A}(\balpha,q)$ and
${\cal A}'={\cal A}'(\balpha,q)$; then, we prove the following
properties.
\begin{description}[leftmargin=*]
\item[$\assI_2(\balpha,q).$] The projective algebraic set
  $V_p(\mathfrak{M}^H_{\balpha,q}) \subset \P^n(\KKGbar)$ has no point
  satisfying $X_0=0$.
\item[$\assJ_2(\balpha,q).$] The projective algebraic set
  $V_p(\mathfrak{N}^H_{\balpha,q}) \subset \P^n(\KKGpbar)$ has no point
  satisfying $X_0=0$.
\end{description}
In particular, this implies that these sets are finite. We will prove
these properties as we did in the previous paragraph; the first step
is thus to establish that for $\balpha$ and $q$ as above,
$\assJ_2(\balpha,q)$ implies $\assI_2(\balpha,q)$.

Let us thus fix $\balpha$ and $q$, and assume that
$\assJ_2(\balpha,q)$ holds. We prove that
$V_p(\mathfrak{M}^H_{\balpha,q}(\Lambda,\tilde\bX))$ has no point at
infinity for a generic $\Lambda$ in $\KKbar{}^{\cal A}$; this will imply
$\assI_2(\balpha,q)$. Consider the ideal generated by
$I_{p}(\mathfrak{M}^H_{\balpha,q})$ and $X_0$ in the polynomial ring
$\KK[\mathfrak{L},\tilde\bX]$ in ${\cal A}+n+1$ variables. This ideal defines
an algebraic set $Z'_{\balpha,q}$ in $\KKbar{}^{\cal A} \times \P^n(\KKbar)$,
and we let $\Delta'_{\balpha,q} \subset \KKbar{}^{\cal A}$ be its projection
on the first factor: this is thus the set of all $\Lambda$ in
$\KKbar{}^{\cal A}$ such that
$V_p(\mathfrak{M}^H_{\balpha,q}(\Lambda,\tilde\bX))$ has a point at
infinity. Because the source is a projective space, $\Delta'_{\balpha,q}$ is
closed (so its complement is open), and we just have to verify that it
is not equal to the whole $\KKbar{}^{\cal A}$. This follows from property
$\assJ_2(\balpha,q)$, which implies that matrices of the form
$\mathfrak{N}^H_{\balpha,q}(\Lambda',\tilde\bX)$, for generic
$\Lambda'$ in $\KKbar{}^{{\cal A}'}$, do not belong to $\Delta'_{\balpha,q}$.

Again, we finish the proof by induction. We first take $p=q$, and we
prove that $\assJ_2(\balpha,q)$ holds ($\assI_2(\balpha,q)$ will follow,
by the previous paragraph). In this case, $n=1$ and
$\mathfrak{N}^H_{\balpha,q}$ is a diagonal matrix, whose diagonal entries
are products of homogeneous linear forms in $(X_0,X_1)$ with indeterminate
coefficients. Then, $\mathfrak{N}^H_{\balpha,q}$ has rank less than $p$ at
$\tilde\bx\in\P^1(\KKGpbar)$ if and only if one of the linear factors
of some diagonal term vanishes at $\tilde \bx$. None of these linear
forms has a projective root at infinity, so we are done.

Consider next a pair $(\balpha,q)$, with
$\balpha=(\alpha_1,\dots,\alpha_p)$ and $1 \le p \le q$ and suppose
that $\assI_2(\balpha',q')$ holds for all $(\balpha',q')$ with
$\balpha'=(\alpha'_1,\dots,\alpha'_{p'})$, $1 \le p' \le q'$, $p' \le p$ and $q'
< q$; we prove that $\assJ_2(\balpha,q)$ holds; as above, this will
imply $\assI_2(\balpha,q)$.

Take $t=p$ in Lemma~\ref{lemma:union}. We first deal with the last
contribution, corresponding to $\bi=(i_1,\dots,i_n)$, and thus
$\kappa=n$: by design, the corresponding point is not at infinity. For
the other contributions, the parameters $(\kappa-(p-t),\balpha_\bi,n-1)$
used in~\eqref{eq:union} are of the form $(\kappa,\balpha_\bi,n-1)$, with
$\balpha_\bi$ of length $\kappa \in \{1,\dots, \min(p,n-1)\}$; since all
conditions $1 \le \kappa \le n-1$, $\kappa \le p$ and $n-1 < q$ are satisfied,
we can invoke the induction assumption. Since the coefficients
$\mathfrak{H}_{\bi,\br}$ are algebraically independent, we deduce that
none of the projective sets
$V_\kappa(\mathfrak{M}^H_{\balpha_\bi,n-1}(\mathfrak{H}_{\bi,\br},\tilde\bX'))$ appearing in
Lemma~\ref{lemma:union} has any point with $X_0=0$. As a consequence,
$V_p(\mathfrak{N}^H_{\balpha,q})$ has no point at infinity either, as
claimed.


\subsection{Refining $\assI_1$} The following  is a strengthening of 
property $\assI_1$ above. That property asserts that for any
$\tilde\bx$ in $\P^n(\KKGbar)$, the $p \times q$ matrix
$\mathfrak{M}^H_{\balpha,q}(\tilde\bx)$ has rank at least $p-1$, so
that there exists a non-zero $(p-1)$-minor in this matrix.  We claim
that actually, each $(p-1)\times q$ submatrix of
$\mathfrak{M}^H_{\balpha,q}(\tilde\bx)$ has rank $p-1$.

To rephrase this, consider $\balpha=(\alpha_1,\dots,\alpha_p)$ and
$q$, with $1 \le p \le q$, together with a matrix
$\mathfrak{m}^H_{\balpha,q}$, built as $\mathfrak{M}^H_{\balpha,q}$
before, but using products of homogeneous linear forms in $(n-1)+1=q-p+1$
variables $X_0,\dots,X_{n-1}$, instead of $n+1$ variables
$X_0,\dots,X_n$. Such a matrix takes the form
\begin{align}\label{eq:matM2}
\mathfrak{m}^H_{\balpha,q}=\left [\begin{matrix}
\mathfrak{g}^H_{1,1} & \cdots & \mathfrak{g}^H_{1,q}\\
 \vdots & & \vdots\\
\mathfrak{g}^H_{p,1} & \cdots & \mathfrak{g}^H_{p,q}
  \end{matrix}\right ]\in \KK[\mathfrak{G}][X_0,\dots,X_{n-1}]^{p\times q},
\end{align}
with 
$$\mathfrak{g}^H_{i,j,k} = \frak{g}_{i,j,k,0}X_0 + \frak{g}_{i,j,k,1} X_1 +\cdots + \frak{g}_{i,j,k,n-1} X_{n-1},$$
and
$$\mathfrak{g}^H_{i,j} = \mathfrak{g}^H_{i,j,1} \cdots
\mathfrak{g}^H_{i,j,\alpha_i} \in \KK[\mathfrak{G}][X_0,\dots,X_{n-1}],$$
where $\mathfrak{G}=(\mathfrak{g}_{i,j,k,\ell})$ are indeterminates,
for $i=1,\dots,p$, $j=1,\dots,q$, $k=1,\dots,\alpha_i$ and
$\ell=0,\dots,n-1$; we let ${\cal B}=q n (\alpha_1+\cdots +\alpha_p)$ be the total
number of coefficients $\mathfrak{g}_{i,j,k,\ell}$ involved. In this context, the
following property could be proved by induction as in the other cases,
but a direct proof is available.

\begin{description}[leftmargin=*]
\item[$\assI_3(\balpha,q).$] The projective algebraic set
  $V_p(\mathfrak{m}^H_{\balpha,q}) \subset \P^{n-1}(\KKCbar)$ is empty.
\end{description}
To prove this property, take $\balpha=(\alpha_1,\dots,\alpha_p)$ and
$q$ as above. If $q=p$, we have $n=1$, so the $(i,j)$ entry of
$\mathfrak{m}^h_{\balpha,q}$ has the form
$\mathfrak{g}_{i,j,1,0}\cdots\mathfrak{g}_{i,j,\alpha_i,0}
X_0^{\alpha_i}$; hence, the determinant of this matrix is non-zero,
and the claim follows.

We can thus suppose $q > p$, so that $q-1 \ge p$.  Then, the
$((1,\dots,p),(1,\dots,q-1))$-submatrix of $\mathfrak{m}^H_{\balpha,q}$ is
of the form $\mathfrak{M}^H_{\balpha,q-1}$, with entries depending on
${\cal A}(\balpha,q-1)$ parameters.  Let $(c_i)_{i \in I}$ be the $p$-minors
of $\mathfrak{m}^H_{\balpha,q}$ built by taking $p-1$ of the first $q-1$
columns of $\mathfrak{m}^H_{\balpha,q}$, together with its last column.
Any such minor can be expanded along the last column as $c_i =
\mathfrak{g}^H_{1,q} c_{i,1} + \cdots + \mathfrak{g}^H_{p,q}
c_{i,p}$, where $\mathfrak{g}^H_{1,q},\dots,\mathfrak{g}^H_{p,q}$ are
the entries of the last column, and $c_{i,1},\dots,c_{i,p}$
are $(p-1)$-minors from $\mathfrak{M}^H_{\balpha,q-1}$. Remark that
$(c_{i,j})_{i \in I, 1 \le j \le p}$ are {\em all} $(p-1)$-minors
of $\mathfrak{M}^H_{\balpha,q-1}$ (if $p=1$, we have $I=\{1\}$ and
$c_1=\mathfrak{g}^H_{1,q}$, with $c_{1,1}=1$).

By $\assI_2(\balpha,q-1)$, we deduce that
$V_p(\mathfrak{M}^H_{\balpha,q-1}) \subset \P^{n-1}(\KKCbar)$ is
finite. For all other points $\tilde\bx$ in $\P^{n-1}(\KKCbar)$,
$\mathfrak{M}^H_{\balpha,q-1}$ has full rank $p$ at $\tilde\bx$, and
thus so does $\mathfrak{m}^H_{\balpha,q}$. Hence, we can focus on the
points in $V_p(\mathfrak{M}^H_{\balpha,q-1})$.  Consider a point
$\tilde\bx$ in this set; in particular, by $\assI_2(\balpha,q-1)$, we
can take its first coordinate $x_0$ equal to $1$. Using
$\assI_1(\balpha,q-1)$, together with our remark on the $(p-1)$-minors
of $\mathfrak{M}^H_{\balpha,q-1}$, we deduce that not all minors
$(c_{i,j})_{i \in I, 1 \le j \le p}$ vanish at $\tilde\bx$. Suppose
thus that $c_{i_0,j_0}(\tilde\bx) \ne 0$; we prove that
$c_{i_0}(\tilde\bx) \ne 0$, which is enough to conclude.

Let us split the ${\cal B}$ indeterminates $\mathfrak{G}$ into
$\mathfrak{G}_1$ and $\mathfrak{G}_2$, where $\mathfrak{G}_1$ has
cardinality ${\cal B}_1={\cal A}(\balpha,q-1)$ and corresponds to the coefficients
used in the entries
$\mathfrak{g}^H_{1,1},\dots,\mathfrak{g}^H_{p,q-1}$ in
$\mathfrak{M}^H_{\balpha,q}$, and $\mathfrak{G}_2$ of cardinality
${\cal B}_2={\cal B}-{\cal B}_1$ stands for the coefficients of the entries
$\mathfrak{g}^H_{1,q},\dots,\mathfrak{g}^H_{p,q}$ in the last column
of $\mathfrak{m}^H_{\balpha,q}$.  Let us further
write $$c_{i_0}(\tilde\bx)= \mathfrak{g}^H_{1,q}(\tilde\bx)
c_{i_0,1}(\tilde\bx) + \cdots + \mathfrak{g}^H_{p,q}(\tilde\bx)
c_{i_0,p}(\tilde\bx).$$ Since $V_p(\mathfrak{M}^H_{\balpha,q-1})$ is
finite, the coordinates of $\tilde\bx$ are algebraic over
$\KK(\mathfrak{G}_1)$.  Thus, since $x_0=1$, the polynomial
$\mathfrak{g}^H_{j_0,q}(\tilde\bx)\in
\overline{\KK(\mathfrak{G}_1)}[\mathfrak{G}_2]$ admits
$\mathfrak{g}_{j_0,q,1,0}\cdots \mathfrak{g}_{j_0,q,\alpha_{j_0},0}$ as a
specialization, by setting to zero all coefficients
$\mathfrak{g}_{j_0,q,k,\ell}$, for $k=1,\dots,\alpha_{j_0}$ and
$\ell=1,\dots,n-1$ (remark that these coefficients belong to $\mathfrak{G}_2$).  For $j \ne
j_0$, $\mathfrak{g}^H_{j,q}(\tilde\bx)\in
\overline{\KK(\mathfrak{G}_1)}[\mathfrak{G}_2]$ admits $0$ as a
specialization, by setting to zero all coefficients
$\mathfrak{g}_{j,q,k,\ell}$, for $k=1,\dots,\alpha_j$ and
$\ell=0,\dots,n-1$ (again, these coefficients belong to $\mathfrak{G}_2$).

The coefficients $c_{i_0,j}(\tilde\bx)$ are algebraic over
$\KK(\mathfrak{G}_1)$, so that $c_{i_0}(\tilde\bx)$ is in
$\overline{\KK(\mathfrak{G}_1)}[\mathfrak{G}_2]$. By the previous 
discussion, it admits
$$ \mathfrak{g}_{j_0,q,1,0}\cdots \mathfrak{g}_{j_0,q,\alpha_{j_0},0} c_{i_0,j_0}(\tilde\bx)$$ as a
specialization, which is non-zero. Thus,  $c_{i_0}(\tilde\bx)$ 
is non-zero, as claimed.

\subsection{Multiplicity of the solutions} 
The following is the last property we prove for matrices 
$\mathfrak{M}^H_{\balpha,q}$ and $\mathfrak{N}^H_{\balpha,q}$.
Again, we take parameters $\balpha=(\alpha_1,\dots,\alpha_p)$ and $q$,
with $1 \le p \le q$, and we write ${\cal A}={\cal A}(\balpha,q)$ and
${\cal A}'={\cal A}'(\balpha,q)$; we will establish the following.
\begin{description}[leftmargin=*]
\item[$\assI_4(\balpha,q).$] The Jacobian matrix of
  the $p$-minors of $\mathfrak{M}^H_{\balpha,q}$ with respect to
  $\tilde\bX=(X_0,\dots,X_n)$ has rank $n$ at all points in
  $V_p(\mathfrak{M}^H_{\balpha,q})$.
\item[$\assJ_4(\balpha,q).$] The Jacobian matrix of
  the $p$-minors of $\mathfrak{N}^H_{\balpha,q}$ with respect to
  $\tilde\bX=(X_0,\dots,X_n)$ has rank $n$ at all points in
  $V_p(\mathfrak{N}^H_{\balpha,q})$.
\end{description}
As for other proofs involving both $\mathfrak{M}^H_{\balpha,q}$ and
$\mathfrak{N}^H_{\balpha,q}$, we first show that $\assJ_4(\balpha,q)$
implies $\assI_4(\balpha,q)$.

We fix $\balpha$ and $q$, and we assume that $\assJ_4(\balpha,q)$
holds. Consider the ideal of the polynomial ring
$\KK[\mathfrak{L},\tilde\bX]$ in ${\cal A}+n+1$ variables generated by the
$p$-minors of $\mathfrak{M}^H_{\balpha,q}$, together with the
$n$-minors of the Jacobian matrix of these equations with respect to
$(X_0,\dots,X_n)$. This ideal defines an algebraic set
$Z''_{\balpha,q}$ in $\KKbar{}^{\cal A} \times \P^n(\KKbar)$, and we let
$\Delta''_{\balpha,q} \subset \KKbar{}^{\cal A}$ be its projection on the
first factor. By construction, for $\Lambda$ in
$\KKbar{}^{\cal A}-\Delta''_{\balpha,q}$, the Jacobian matrix of
$M_p(\mathfrak{M}^H_{\balpha,q}(\Lambda,\tilde\bX))$ has rank $n$ at
any $\tilde\bx$ in
$V_p(\mathfrak{M}^H_{\balpha,q}(\Lambda,\tilde\bX))$. As before,
because the source is a projective space, $\Delta''_{\balpha,q}$ is closed
(so its complement is open), and we just have to verify that it is not
equal to the whole $\KKbar{}^{\cal A}$. This follows from property
$\assJ_4(\balpha,q)$, which proves that generic matrices of the form
$\mathfrak{N}^H_{\balpha,q}$ do not belong to $\Delta''_{\balpha,q}$.

Again, we finish the proof by induction. We first take $p=q$, and we
prove that $\assJ_4(\balpha,q)$ holds ($\assI_4(\balpha,q)$ will
follow, by the previous paragraph). In this case, $n=1$ and
$\mathfrak{N}^H_{\balpha,q}$ is a diagonal matrix, whose diagonal
entries are products of homogeneous linear forms
$\mathfrak{l}^H_{i,i}$ depending on $(X_0,X_1)$ and with indeterminate
coefficients. The ideal $I_p(\mathfrak{N}^H_{\balpha,q})$ is generated
by the product of the terms $\mathfrak{l}^H_{i,i}$, which admits no
repeated factors; the conclusion follows.

Consider next a pair $(\balpha,q)$, with
$\balpha=(\alpha_1,\dots,\alpha_p)$ and $1 \le p \le q$ and suppose
that $\assI_4(\balpha',q')$ holds for all $(\balpha',q')$ with
$\balpha'=(\alpha'_1,\dots,\alpha'_{p'})$, $1 \le p' \le q'$, $p' \le
p$ and $q' < q$; we prove that $\assJ_4(\balpha,q)$ holds; this will
imply $\assI_4(\balpha,q)$.

We take $t=p$ in the formula of Lemma~\ref{lemma:union}, and we first
deal with the terms in~\eqref{eq:union}.  Thus, we choose a
subsequence $\bi=(i_1,\dots,i_\kappa)$ of $(1,\dots,p)$, with $1 \le \kappa\le
\min(p,n-1)$, and indices $\br=(r_1,\dots,r_\kappa)$, with $ 1\le r_k \le
\alpha_{i_k}$ for all $k$. We prove that the Jacobian matrix of the $p$-minors of $\mathfrak{N}^H_{\balpha,q}$
with respect to $\tilde\bX=(X_0,\dots,X_n)$ has rank $n$ at all points $\tilde\bx=(x_0,\dots,x_n)$ of
$V_p(\mathfrak{N}^H_{\balpha,q})$  such that
$\tilde\bx'=(x_0,\dots,x_{n-\kappa})$ is in
$V_\kappa(\mathfrak{M}^H_{\balpha_\bi,n-1}(\mathfrak{H}_{\bi,\br},\tilde\bX')) \subset
\P^{n-\kappa}(\KKGpbar)$, and such that
\begin{align}\label{eq:subsX}
  x_{n-\kappa+1}=\mathfrak{f}_{n-\kappa+1,\bi,\br}(\tilde\bx'),\dots,x_{n}=\mathfrak{f}_{n,\bi,\br}(\tilde\bx').
\end{align}
By Lemma~\ref{lemma:union}, taking all such $\tilde\bx$ into account,
for all $\bi$ and $\br$, will cover all points in
$V_p(\mathfrak{N}^H_{\balpha,q})$, up to the exception of those points
obtained from $\kappa=n$, which will admit a simpler treatment.
For simplicity, we continue the proof with $\bi=(1,\dots,\kappa)$, so
that we have $\balpha_\bi=(\alpha_1,\dots,\alpha_\kappa)$.  

We are going to exhibit some polynomials that belong to
$I_p(\mathfrak{N}^H_{\balpha,q})$, for which we can control the rank
of the Jacobian at $\tilde\bx$. First, we prove that for $i$ in
$\{1,\dots,\kappa\}$ and $r$ in $\{1,\dots,\alpha_i\}-\{r_i\}$, as
well as $i$ in $\{\kappa+1,\dots,p\}$ and $r$ in
$\{1,\dots,\alpha_i\}$, the value $\mathfrak{l}^H_{i,i,r}(\tilde\bx)$
is non-zero.  We subdivide the indeterminates $\mathfrak{L}'$ into
$\mathfrak{L}'_{\bi,\br}$ and $\mathfrak{L}''_{\bi,\br}$, where
$\mathfrak{L}'_{\bi,\br}$ corresponds to the coefficients involved in
$\mathfrak{l}^H_{i,i,r_i}$, for $i=1,\dots,\kappa$, and in the
submatrix of $\mathfrak{N}^H_{\balpha,q}$ associated to $\bi$, and
$\mathfrak{L}''_{\bi,\br}$ are the other coordinates.  By
$\assI_2(\balpha_s,n-1)$,
$V_\kappa(\mathfrak{M}^H_{\balpha_\bi,n-1}(\mathfrak{H}_{\bi,\br},\tilde\bX'))$
is finite; as a result, since all entries of $\mathfrak{H}_{\bi,\br}$ 
are in $\KK(\mathfrak{L}'_{\bi,\br})$,
all coordinates of $\tilde\bx$ are algebraic
over $\KK(\mathfrak{L}'_{\bi,\br})$. For $i,r$ as above, the
coefficients of the equation
$$\mathfrak{l}^H_{i,i,r} = \frak{l}_{i,i,r,0} X_0+ \frak{l}_{i,i,r,1}
X_1 +\cdots + \frak{l}_{i,i,r,n} X_n$$ are
in  $\KK(\mathfrak{L}''_{\bi,\br})$, thus algebraically independent
over the field of definition of $\tilde\bx$, so that $\mathfrak{l}^H_{i,i,r}(\tilde\bx)$
is non-zero.

\begin{remark}\label{remark:disjoint}
  This implies in particular that the union in Lemma~\ref{lemma:union} is disjoint.
\end{remark}

In the following two paragraphs, assume $\kappa \ge 2$
and take $i$ in $\{1,\dots,\kappa\}$. We can then define
$\bi^*=(1,\dots,i-1,i+1,\dots,\kappa)$,
$\balpha^*=(\alpha_1,\dots,\alpha_{i-1},\alpha_{i+1},\dots,\alpha_\kappa)$,
and we call $\mathfrak{N}^H_i$ the submatrix of
$\mathfrak{N}^H_{\balpha,q}$ associated to $\bi^*$; this is a matrix
with $\kappa-1$ rows (indexed by $\bi^*$ in
$\mathfrak{N}^H_{\balpha,q}$) and $n-1$ columns (of indices $p+1,\dots,q$ 
in $\mathfrak{N}^H_{\balpha,q}$).

We prove that there exists a $(\kappa-1)$-minor $c_i$ of
$\mathfrak{N}^H_i$ such that $c_i(\tilde\bx)\ne 0$.  Let indeed
$\mathfrak{m}^H_i$ be the matrix obtained by applying the
substitution~\eqref{eq:subsX} in $\mathfrak{N}^H_i$. This matrix 
has $\kappa-1$ rows and $n-1$ columns; its entries are products of
linear forms in $(n-\kappa)+1$ variables $X_0,\dots,X_{n-\kappa}$,
with coefficients that are algebraically
independent over $\KK$. We can thus apply $\assI_3(\balpha^*,n-1)$ to
$\mathfrak{m}_{i}^H$, and deduce that this matrix has full rank $\kappa-1$
at $\tilde\bx'$.  Thus, $\mathfrak{N}^H_i$ has rank $\kappa-1$ at
$\tilde\bx$, from which the existence of the minor $c_i$ follows.
If $\kappa=1$, we define $c_1=1$.

We next deduce that for $i$ in $\{1,\dots,\kappa\}$, there exists a
polynomial of the form $b_{i} \mathfrak{l}^H_{i,i,r_i}$ in the ideal
$I_p(\mathfrak{N}^H_{\balpha,q})$, with $b_{i}(\tilde\bx)\ne
0$. Indeed, we consider the $p$-minor of $\mathfrak{N}^H_{\balpha,q}$
obtained by taking the columns $i$,$\kappa+1,\dots,p$, and all
$\kappa-1$ columns in the $(\kappa-1)$-minor $c_i$ (if $\kappa=1$,
there is no need to consider such columns). Using the factorization
$$\mathfrak{l}^H_{i,i} = \beta_i \mathfrak{l}^H_{i,i,r_i},\quad\text{with}\quad
\beta_i=\mathfrak{l}^H_{i,i,1}\cdots \mathfrak{l}^H_{i,i,r_i-1}\mathfrak{l}^H_{i,i,r_i+1}\cdots \mathfrak{l}^H_{i,i,\alpha_i},$$
that minor evaluates to 
$$b_i \mathfrak{l}^H_{i,i,r_i}\quad\text{with}\quad b_i = \beta_i
\mathfrak{l}^H_{\kappa+1,\kappa+1}\cdots \mathfrak{l}^H_{p,p}c_i.$$ Hence, $b_i\,
\mathfrak{l}^H_{i,i,r_i}$ belongs to $I_p(\mathfrak{N}^H_{\balpha,q})$, and by
the discussion of the three previous paragraphs, $b_i(\tilde\bx)\ne 0$,
as claimed. In what follows, we write $b=b_1 \cdots b_\kappa$,
so that $b(\tilde\bx) \ne 0$ and $b\, \mathfrak{l}^H_{i,i,r_i}$ 
is in $I_p(\mathfrak{N}^H_{\balpha,q})$. 
This in turn implies
that all polynomials
$$
b(X_{n-\kappa+1}-\mathfrak{f}_{n-\kappa+1,\bi,\br}(\tilde\bX')),\dots,b(X_{n}-\mathfrak{f}_{n,\bi,\br}(\tilde\bX'))
$$ are in $I_p(\mathfrak{N}^H_{\balpha,q})$ as well.

Similarly, for every $\kappa$-minor $\eta$ of the submatrix of
$\mathfrak{N}^H_{\balpha,q}$ associated to $\bi$, the
polynomial  $\mathfrak{l}^H_{\kappa+1,\kappa+1}\cdots \mathfrak{l}^H_{p,p}\, \eta$ belongs to
$I_p(\mathfrak{N}^H_{\balpha,q})$. Thus, $b \,\eta$ is in $I_p(\mathfrak{N}^H_{\balpha,q})$
as well.

As a result, the polynomial $b\, \eta(\tilde\bX',\mathfrak{f}_{n-\kappa+1,\bi,\br}(\tilde\bX'),\dots,\mathfrak{f}_{n,\bi,\br}(\tilde\bX'))$
belongs to $I_p(\mathfrak{N}^H_{\balpha,q})$. Now, 
$\gamma=\eta(\tilde\bX',\mathfrak{f}_{n-\kappa+1,\bi,\br}(\tilde\bX'),\dots,\mathfrak{f}_{n,\bi,\br}(\tilde\bX'))$
is one of the $\kappa$-minors of
$\mathfrak{M}^H_{\balpha_\bi,n-1}(\mathfrak{H}_{\bi,\br},\tilde\bX')$,
and all $\kappa$-minors of this matrix are obtained this way.
To summarize, we have proved that 
$$b\, \mathfrak{l}^H_{1,1,r_1},\dots,b\,
\mathfrak{l}^H_{\kappa,\kappa,r_\kappa} \quad\text{and}\quad b\,
\gamma, \text{~for all $\kappa$-minors $\gamma$ of
  $\mathfrak{M}^H_{\balpha_\bi,n-1}(\mathfrak{H}_{\bi,\br},\tilde\bX')$}$$
are in $I_p(\mathfrak{N}^H_{\balpha,q})$, with $b(\tilde\bx) \ne
0$. The Jacobian matrix of these polynomials at $\tilde\bx$ is, up to
the non-zero constant $b(\tilde\bx)$, equal to that of
$\mathfrak{l}^H_{1,1,r_1},\dots,
\mathfrak{l}^H_{\kappa,\kappa,r_\kappa}$ (which is simply a matrix of
constants), and of all $\kappa$-minors $\gamma$. Using our induction
assumption, we know that the Jacobian matrix of the ideal of
$\kappa$-minors $\gamma$ with respect to $\tilde\bX'$ has rank
$n-\kappa$ at $\tilde\bx'$. As a result, the larger Jacobian matrix of
all equations above has rank $n$ at $\tilde\bx$, as claimed.

It remains to deal with the case $\kappa=n$, for $n \le p$; as above,
we may simplify the discussion by assuming that $\bi=(1,\dots,n)$. In
this case, the discussion is simpler: proceeding as above, but dealing
only with the polynomials
$\mathfrak{l}^H_{1,1},\dots,\mathfrak{l}^H_{n,n}$, we obtain the fact
that equations of the form $b\, \mathfrak{l}^H_{1,1,r_1},\dots,b\,
\mathfrak{l}^H_{n,n,r_n}$ belong to $I_p(\mathfrak{N}^H_{\balpha,q})$,
with $b(\tilde\bx) \ne 0$. The conclusion follows directly.


\subsection{An algorithm}

We conclude this section with an algorithm
$\mathsf{RowDegreeDiagonal}$, that applies the decomposition in
Lemma~\ref{lemma:union}, in the case $t=p$, to non-homogeneous
matrices. Indeed, while homogeneity is used at several steps in the
proof (and will be needed again when we apply this result), our main
algorithm deals with matrices without a homogeneous structure.  Thus
we will consider a matrix $\mN$ as in~\eqref{eqdef:type1}, but with
$X_0=1$. Explicitly, we have
\begin{align}\label{eqdef:type1aff}
\mN= \left( \begin{matrix}
\lambda_{1,1} & 0 & \cdots & 0 & \lambda_{1,p+1} & \cdots & \lambda_{1, q}\\
0 & \lambda_{2,2} & \cdots & 0 & \lambda_{2,p+1} & \cdots & \lambda_{2, q}\\
\vdots & \vdots & \ddots & \vdots & \vdots & \ddots & \vdots\\
0 & 0 & \cdots & \lambda_{p,p} & \lambda_{p,p+1} & \cdots & \lambda_{p, q}
\end{matrix} \right),
\end{align}
where for all $i,j$, $\lambda_{i,j}$ is the product of $\alpha_i$
linear forms $(\lambda_{i,j,k})_{1 \le k \le \alpha_i}$ with
coefficients in $\KK$, in variables $X_1,\dots,X_n$.  By
Proposition~\ref{lemma:appendix}, we deduce that for a generic choice
of the coefficients of these linear forms,
$V_p(\mN) \subset \KKbar{}^n$ is a finite set, whose structure is
given by Lemma~\ref{lemma:union}. Besides, oberve that all points of
$V_p(\mN)$ are simple and isolated (this is an immediate consequence of
the second assertion of Proposition~\ref{lemma:appendix}).

Algorithm $\mathsf{RowDegreeDiagonal}$ below takes as input the linear
forms $(\lambda_{i,j,k})$ and computes a zero-dimensional
parametrization of $V_p(\mN)$. In the algorithm, we assume the
existence of a subroutine $\mathsf{RowDegree\_simple}(\Gamma)$ which
takes as input a straight-line program $\Gamma$ that computes a
polynomial matrix $\mF$ and a system of equations $G$, and solves
Problem~\eqref{problem3} for this input using a row-degree homotopy. We
give such an algorithm in the next section. We denote by
$T_{\rm row}(\sigma,\bgamma,\balpha,q)$ the time spent by
$\mathsf{RowDegree\_simple}(\Gamma)$ on input a straight-line program
of length $\sigma$ that computes $\mF$ with row degrees
$\balpha=(\alpha_1,\dots,\alpha_p)$ and $q$ columns, and
$G=(g_1,\dots,g_s)$ of degrees $\bgamma=(\gamma_1,\dots,\gamma_s)$.

We will be use here the particular case of subroutine
$\mathsf{RowDegree\_simple}$ where the input matrix has the form
\begin{align}\label{eqdef:type2aff}
  \mM= \left( \begin{matrix}
    \lambda_{1,1} & \lambda_{1,2} & \cdots & \lambda_{1, q}\\
    \lambda_{2,1} &  \lambda_{2,2} & \cdots & \lambda_{2, q}\\
    \vdots & & & \vdots\\
    \lambda_{p,1} &  \lambda_{p,2}& \cdots & \lambda_{p, q}
  \end{matrix} \right),
\end{align}
where for all $i,j$, $\lambda_{i,j}$ is the product of $\alpha_i$ (non
necessarily homogeneous) linear forms $(\lambda_{i,j,k})_{1 \le k \le
  \alpha_i}$ in $n$ variables $X_1,\dots,X_n$. In this case, each
entry $\lambda_{i,j}$ can be computed in $O(n \alpha_i)$ operations in
$\KK$, so that the whole matrix $\mM$ can be computed by a
straight-line program of length $O(nq(\alpha_1+\cdots+\alpha_p))$. In
this case there are no additional equations $G$, so we denote the cost
of Algorithm {\sf RowDegree\_simple} for such input by
\begin{align}
T_{M,{\rm row}}(\balpha,q)=T_{\rm row}(nq(\alpha_1+\cdots+\alpha_p),(),\balpha,q).
\end{align}
We conclude this section with the cost 
analysis of Algorithm $\mathsf{RowDegreeDiagonal}$.
\begin{algorithm}[!t]
\caption{$\mathsf{RowDegreeDiagonal}((\lambda_{i,j,k})_{i,j,k})$}
{\bf Input}: linear forms $(\lambda_{i,j,k})_{i,j,k}$ making up 
the entries of 
$\mN \in \KK[X_1, \ldots, X_n]^{p \times q}$ as in~\eqref{eqdef:type1aff}, with $p \leq q$ and $n = q-p+1$\\
{\bf Output}: a zero-dimensional parametrization $\scrR$ of $V_p(\mN)$
\begin{enumerate}
\item for any subsequence $\bi = (i_1, \ldots, i_\kappa)$ of $(1, \ldots, p)$ with $1 \leq \kappa \leq\min(n-1,p)$
  \begin{enumerate}
  \item for any sequence $\br = (r_1, \ldots, r_\kappa)$, with $r_k$ in 
    $\{1,\dots,\alpha_k\}$ for all $k$
    \begin{enumerate}
    \item apply Gaussian elimination to the system 
      $\lambda_{i_1,i_1,r_1}=\dots=\lambda_{i_\kappa,i_\kappa,r_\kappa}=0$
      to rewrite $(X_{n-\kappa+1}, \ldots, X_n)$ as linear forms $(f_{j,\bi,\br})_{n-\kappa+1 \le j \le n}$ in $(X_1, \ldots, X_{n-\kappa})$. 

\hfill $\text{\sf{cost:~}} O(\sum_{\bi,\br} n^3)$
    \item\label{step:constSLP} construct a straight-line program $\Gamma_{\bi,\br}$ that computes the matrix $\mM_{\bi,\br} \in \KK[X_1, \dots, X_{n-\kappa}]^{\kappa \times (n-1)}$ obtained
      by substituting $(f_{j,\bi,\br})_{n-\kappa+1 \le j \le n}$ into $\mN_{\bi,(p+1, \ldots, q)}$. The length of $\Gamma_{\bi,\br}$ is $O(\kappa n(\alpha_{i_1}+\cdots+\alpha_{i_\kappa}))$.

\hfill $\text{\sf{cost:~}} O(\sum_{\bi,\br} (\alpha_{i_1} + \cdots + \alpha_{i_\kappa}) n^3)$
    \item $\scrR_{\bi,\br}' \gets$ $\mathsf{RowDegree\_simple}(\Gamma_{\bi,\br})$  (points have coordinates $(X_1, \ldots, X_{n-\kappa})$)

\hfill $\text{\sf{cost:~}} \sum_{\bi,\br} T_{M,{\rm row}}((\alpha_{i_1},\dots,\alpha_{i_\kappa}), n-1)$
    \item\label{step:substdiag} deduce $\scrR_{\bi,\br}$ from $\scrR_{\bi,\br}'$ by adding the expressions for $(X_{n-\kappa+1}, \ldots, X_n)$

\hfill      $\text{\sf{cost:~}} O(\sum_{\bi,\br}  \rc_{\bi,\br} n^2)$
  \end{enumerate}
  \end{enumerate}

\item if $n \le p$, for any subsequence $\bi = (i_1, \ldots, i_n)$ of $(1, \ldots, p)$
  \begin{enumerate}
  \item for any sequence $\br = (r_1, \ldots, r_n)$, with $r_k \in \{1,\dots,\alpha_k\}$ for all $k$
    \begin{enumerate}
    \item let $\bx_{\bi,\br}$ be the solution of the system $\lambda_{i_1,i_1,r_1}=\dots=\lambda_{i_n,i_n,r_n}=0$

\hfill      $\text{\sf{cost:~}} O(\sum_{\bi,\br}n^3)$
    \item create a zero-dimensional parametrization $\scrR_{\bi,\br}$ such that $Z(\scrR_{\bi,\br})=\{\bx_{\bi,\br}\}$

\hfill      $\text{\sf{cost:~}} O(\sum_{\bi,\br}n)$
  \end{enumerate}
\end{enumerate}
\item combine all $(\scrR_{\bi,\br})_{\bi,\br}$ into the output $\scrR$

\hfill  $\text{\sf{cost:~}} \softO(\sum_{\bi,\br}  \rc_{\bi,\br} n)$
\end{enumerate}
\label{Row}
\end{algorithm}

\begin{lemma}\label{lemma:rowdegreediagonal}
  Let $S_n(\alpha_1,\dots,\alpha_p)$ be the degree $n$ complete
  symmetric function of $(\alpha_1,\dots,\alpha_p)$.  The total cost of
  $\mathsf{RowDegreeDiagonal}((\lambda_{i,j,k})_{i,j,k})$
  is $$\sum_{\substack{\bi=(i_1,\dots,i_\kappa)\\ \kappa\le\min(n-1,p)}}
  \alpha_{i_1} \cdots \alpha_{i_\kappa} T_{M,{\rm
      row}}((\alpha_{i_1},\dots,\alpha_{i_\kappa}), n-1) + \softO\left (n^3(\rc + S_n(\alpha_1,\dots,\alpha_p))  \right),$$
  where $\rc$ is the cardinality of $V_p(\mN)$.
\end{lemma}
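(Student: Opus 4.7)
The plan is to sum the per-iteration costs listed in the pseudocode of Algorithm~\ref{Row} and split them into the recursive piece (which matches the first summation in the lemma) and a non-recursive residual that I will bound by $\softO(n^3(\rc + S_n(\alpha_1,\dots,\alpha_p)))$.

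First I would handle the recursive calls at step 1(a)iii. For a fixed subsequence $\bi = (i_1,\dots,i_\kappa)$, the admissible sequences $\br$ number exactly $\alpha_{i_1}\cdots\alpha_{i_\kappa}$, and each gives a call of cost $T_{M,{\rm row}}((\alpha_{i_1},\dots,\alpha_{i_\kappa}), n-1)$. Summing over $\br$ yields the coefficient $\alpha_{i_1}\cdots\alpha_{i_\kappa}$ in front of $T_{M,{\rm row}}$, and aggregating over all $\bi$ and $\kappa$ reproduces the first summand of the claimed bound.

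For the non-recursive costs I would bundle the combination operations separately. By Remark~\ref{remark:disjoint}, the decomposition in Lemma~\ref{lemma:union} is disjoint, so $\sum_{(\bi,\br)}\rc_{\bi,\br} \le \rc$; hence the substitution at step 1(a)iv and the final merge at step 3 contribute at most $\softO(\rc n^2)$, which is already inside the target bound. That leaves the Gaussian elimination at step 1(a)i (of cost $O(n^3)$ per pair) and the straight-line program construction at step 1(a)ii (of cost $O((\alpha_{i_1}+\cdots+\alpha_{i_\kappa})n^3)$ per pair). I would reduce the accounting of both to two symmetric-function estimates, valid when $\alpha_i \ge 1$:
\[
  \sum_{\kappa=1}^{\min(n-1,p)} E_\kappa(\alpha_1,\dots,\alpha_p) \le S_n(\alpha_1,\dots,\alpha_p),
\]
\[
  \sum_{\kappa=1}^{\min(n-1,p)} \sum_{|\bi|=\kappa} \alpha_{i_1}\cdots\alpha_{i_\kappa}(\alpha_{i_1}+\cdots+\alpha_{i_\kappa}) \le S_n(\alpha_1,\dots,\alpha_p).
\]

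The hard part will be these two inequalities, each of which requires exhibiting an injection of the monomials on the left into distinct monomials of $S_n(\alpha_1,\dots,\alpha_p)$ whose value dominates the source. For the first, I would send the multilinear monomial $\prod_{i \in S}\alpha_i$ with $|S|=\kappa$ to the degree-$n$ monomial obtained by multiplying the factor indexed by $\min S$ by $\alpha_{\min S}^{n-\kappa}$; the source is recoverable from the support of the image, and the image dominates the source since $\alpha_i \ge 1$. For the second, I would first expand $\sum_{|\bi|=\kappa}\alpha_{i_1}\cdots\alpha_{i_\kappa}(\alpha_{i_1}+\cdots+\alpha_{i_\kappa}) = \sum_{i=1}^p \alpha_i^2\, E_{\kappa-1}(\alpha_1,\dots,\widehat{\alpha_i},\dots,\alpha_p)$ to rewrite the sum as a sum of distinct monomials of shape $\alpha_i^2\alpha_{j_1}\cdots\alpha_{j_{\kappa-1}}$ of degree $\kappa+1$, and then embed each into $S_n(\alpha_1,\dots,\alpha_p)$ by promoting the squared exponent from $2$ to $n-\kappa+1$. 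Both embeddings are injective precisely because $\kappa \le n-1$, which keeps the promoted exponent strictly larger than the others. Combining, the costs of steps 1(a)i and 1(a)ii each fit into $O(n^3 S_n(\alpha_1,\dots,\alpha_p))$, and the step~2 contribution (only present when $n \le p$) is $O(n^3 E_n(\alpha_1,\dots,\alpha_p)) \le O(n^3 S_n(\alpha_1,\dots,\alpha_p))$. Adding everything up gives the claimed total.
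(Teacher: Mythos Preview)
Your proposal is correct and follows essentially the same approach as the paper: decompose the total cost into the recursive calls (giving the first summand) and the residual bookkeeping, use Remark~\ref{remark:disjoint} to bound $\sum_{\bi,\br}\rc_{\bi,\br}\le\rc$, and bound the sum $\sum_{\bi}\alpha_{i_1}\cdots\alpha_{i_\kappa}(\alpha_{i_1}+\cdots+\alpha_{i_\kappa})$ by $S_n(\alpha_1,\dots,\alpha_p)$. The only differences are cosmetic: the paper absorbs the $O(n^3)$ Gaussian-elimination cost at step~1(a)i into the $O((\alpha_{i_1}+\cdots+\alpha_{i_\kappa})n^3)$ cost of step~1(a)ii (since each $\alpha_i\ge 1$), so it never states your first inequality separately; and the paper merely \emph{asserts} the bound $\sum_{\bi}\alpha_{i_1}\cdots\alpha_{i_\kappa}(\alpha_{i_1}+\cdots+\alpha_{i_\kappa})\le S_n$ without proof, whereas you supply an explicit monomial injection, which is a welcome addition.
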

\begin{proof}
  The cost reported at each step in the pseudo-code is the total amount
  of time spent there, over all iterations (the sums in the first loop
  are for $\kappa \le \min(n-1,p)$, the ones in the second loop for
  $\kappa=n$ if $n \le p$). Several steps are straightforward to
  analyze; we briefly comment on a few others.

  Step~\ref{step:constSLP} uses the linear forms
  $(f_{j,\bi,\br})_{n-\kappa+1 \le j \le n}$ to construct a
  straight-line program $\Gamma_{\bi,\br}$ that computes the entries
  of $\mM_{\bi,\br}$. This is done by computing the coefficients
  of the linear forms in $(X_1,\dots,X_{n-\kappa})$ obtained after 
  substitution. Each linear form requires a matrix-vector product 
  with a matrix of size $(n-\kappa) \times n$, for $O(n^2)$ operations,
  whence a total of $O((\alpha_{i_1} + \cdots +\alpha_{i_\kappa})n^3)$
  for all entries.

  Step~\ref{step:substdiag} consists in adding $\kappa$ coordinates
  $(X_{n-\kappa+1},\dots,X_n)$ to a zero-dimensional parametrization
  in variables $X_1,\dots,X_{n-\kappa}$, where
  $(X_{n-\kappa+1},\dots,X_n)$ are known as linear forms
  $(f_{j,\bi,\br})_{n-\kappa+1 \le j \le n}$ in $(X_1, \ldots,
  X_{n-\kappa})$: this is done by means of a matrix product in size
  $(\kappa \times n-\kappa)$ by $(n-\kappa \times \rc_{\bi,\br})$, where
  $\rc_{\bi,\br}$ is the cardinality of $V_\kappa(\mM_{\bi,\br})$, for
  $\bi=(i_1,\dots,i_\kappa)$.  The cost is thus $O(\rc_{\bi,\br}n^2 )$;
  the sum of these costs is thus $O(\rc n^2 )$, since the sum of all
  $\rc_{\bi,\br}$ is equal to $\rc$ by Remark~\ref{remark:disjoint}.

  The combination in the last step is done by fast Chinese Remaindering,
  in quasi-linear time $\softO(\sum_{\bi,\br} \rc_{\bi,\br}  n)$,
  which is $\softO(\rc n)$. Thus, the total runtime is 
  \begin{multline*}
 \sum_{\substack{\bi=(i_1,\dots,i_\kappa)\\ \br=(r_1,\dots,r_\kappa)\\\kappa\le\min(n-1,p)}}
T_{M,{\rm row}}((\alpha_{i_1},\dots,\alpha_{i_\kappa}), n-1)\\
+\softO\left (\rc n^2  + \sum_{\substack{\bi=(i_1,\dots,i_\kappa)\\ \br=(r_1,\dots,r_\kappa)\\\kappa\le\min(n-1,p)}}
  (\alpha_{i_1} + \cdots +\alpha_{i_\kappa}) n^3 
  +  \sum_{\substack{\bi=(i_1,\dots,i_n)\\ \br=(r_1,\dots,r_n)}} n^3\right).    
  \end{multline*}
  The costs reported in the sums do not depend on $\br$, so that 
  this can be rewritten as 
  \begin{multline*}
 \sum_{\substack{\bi=(i_1,\dots,i_\kappa)\\ \kappa\le\min(n-1,p)}}
  \alpha_{i_1}  \cdots \alpha_{i_\kappa}T_{M,{\rm row}}((\alpha_{i_1},\dots,\alpha_{i_\kappa}), n-1)\\
+\softO\left ( \rc n^2 + \sum_{\substack{\bi=(i_1,\dots,i_\kappa)\\ \kappa\le\min(n-1,p)}}
    \alpha_{i_1}  \cdots \alpha_{i_\kappa} (\alpha_{i_1} + \cdots +\alpha_{i_\kappa}) n^3 
  +  \sum_{\substack{\bi=(i_1,\dots,i_n)}   }\alpha_{i_1}  \cdots \alpha_{i_n} n^3\right).    
  \end{multline*}
  The final simplification comes from noting that $\sum_{\bi}
  \alpha_{i_1} \cdots \alpha_{i_\kappa}(\alpha_{i_1} + \cdots
  +\alpha_{i_\kappa})$, for $\bi$ a subsequence of $(1,\dots,p)$ of
  length $\kappa\le\min(n-1,p)$, is bounded from above by
  $S_n(\alpha_1,\dots,\alpha_p)$. The same holds for the second
  sum (which is empty if $n >p$).
\end{proof}


\section{The row-degree homotopy}\label{sec:rowdegree}

We now give algorithms to solve Problems~\eqref{problem2}
and~\eqref{problem3} whose runtime will depend on the row-degrees of the
input matrix $\mF$. These algorithms are more complex than the ones in
Section~\ref{sec:columndegree}, due to their recursive nature. This
boils down to the fact that the start system we use for the homotopy
must itself be solved by means of several homotopies of smaller size,
along the lines of the discussion in the previous section.

Again, we are given a matrix
$\mF =[f_{i,j}]\in \KK[X_1,\dots,X_n]^{p \times q}$ and polynomials
$G=(g_1,\dots,g_s)$ in $\KK[X_1,\dots,X_n]$, with $p \leq q$ and
$n = q-p+s+1$, and we want to compute the isolated points (or the
simple points) of $\VpFG{p}{\mF}{G}$, with
$$\VpFG{p}{\mF}{G} = \{\bx \in \KKbar{}^n \mid \mathrm{rank}(\mF({\bx})) < p
\text{~and~} g_1(\bx)=\cdots=g_s(\bx)=0\}.$$

We are now interested in designing algorithms for computing the
isolated points or the simple points of $\VpFG{p}{\mF}{G}$
whose cost depends on the row degrees
$\alpha_1=\rdeg(\mF,1),\dots,\alpha_p=\rdeg(\mF,p)$; with this
notation, $\deg(f_{i,j}) \leq \alpha_i$ holds for all $i,j$. As in
Section~\ref{sec:columndegree}, we write
$\gamma_1=\deg(g_1),\dots,\gamma_s=\deg(g_s)$ and we let
$\alpha = \max(\alpha_1, \ldots, \alpha_p)$ and
$\gamma = \max(\gamma_1, \ldots, \gamma_s)$.
We start by stating our first result on computing the isolated
points of $\VpFG{p}{\mF}{G}$. Recall in what follows that 
where $S_{n-s}$ is the complete homogeneous symmetric 
function of degree $n-s$.

\begin{proposition}\label{prop:rowdegree}
  Suppose that the matrix $\mF \in \KK[X_1,\dots,X_n]^{p \times q}$
  and the polynomials $G=(g_1,\dots,g_s)$ in $\KK[X_1,\dots,X_n]$ are
  given by a straight-line program of length $\sigma$.  Then, the multiplicities of
  the isolated points of $\VpFG{p}{\mF}{G}$ are at most
  $\rc=\gamma_1\cdots\gamma_s S_{n-s}(\alpha_1, \ldots, \alpha_p)$.
  
  Assume that  all $\gamma_i$'s and $\alpha_j$'s are at least equal to $1$, 
  and let   $\re=(\gamma_1+1)\cdots(\gamma_s+1) S_{n-s}(\alpha_1+1, \ldots,
  \alpha_p+1)$, $\alpha = \max(\alpha_1, \ldots, \alpha_p)$ and
  $\gamma = \max(\gamma_1, \ldots, \gamma_s)$. Then, there
  exists a randomized algorithm that computes the isolated points of
  $\VpFG{p}{\mF}{G}$ using
  $$\softO\left( {q \choose p} \rc (\re+\rc^5 )  (\sigma + \gamma+ p \alpha)  \right)$$
  operations in $\KK$.
\end{proposition}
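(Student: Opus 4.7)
My plan is to follow the same blueprint as the column-degree proof of Proposition~\ref{prop:coldeg}, but use a start matrix whose rows (rather than columns) are products of linear forms, exploiting the machinery developed in Section~\ref{sec:prel-row}. Specifically, I would take $\mL\in\KK[\bX]^{p\times q}$ of the form~\eqref{eqdef:type1aff}: for $i=1,\dots,p$, $j\in\{i\}\cup\{p+1,\dots,q\}$ and $k=1,\dots,\alpha_i$, pick random affine forms $\lambda_{i,j,k}\in\KK[X_1,\dots,X_n]$, set $\lambda_{i,j}=\prod_k\lambda_{i,j,k}$, and put $\lambda_{i,i}$ on the diagonal block, $\lambda_{i,j}$ in the remaining columns, and zeros elsewhere. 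For the polynomial part I would define $a_i=\prod_{k=1}^{\gamma_i}\mu_{i,k}$ as in Section~\ref{sec:columndegree}, and then form the deformation $\mU=(1-T)\mL+T\mF$ and $b_i=(1-T)a_i+Tg_i$, letting $\bB$ collect the $g_i$'s extended and the $p$-minors of $\mU$, with $\bA=\bB_0$.

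Next I would verify the assumptions of Proposition~\ref{prop:compute_isolated}. Assumptions $\assA_1$ and $\assA_2$ follow from Proposition~\ref{prop:KH1H2}. Property $\assG_1(0)$ requires that the deformation preserves the row-degrees of $\mL$; this is a routine degree inspection, using that each row $i$ of $\mF$ has degree at most $\alpha_i$ (which equals the row-degree of $\mL$) and that evaluating $T=0$ recovers $\mL$ exactly. The nontrivial part is $\assG_2(0)$ and $\assG_3(0)$: I would eliminate the $s$ linear factors coming from the $a_i$'s via generic coefficients, substituting $X_{n-s+1},\dots,X_n$ by linear forms in $X_1,\dots,X_{n-s}$, reducing the problem to showing the corresponding projective $p$-minor system of a specialized matrix of the shape $\mN^H$ in~\eqref{eqdef:type1} (now with $n$ replaced by $n-s$) has no points at infinity and a Jacobian of full rank at each of its zeros. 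This is exactly properties $\assJ_2(\balpha,q)$ and $\assJ_4(\balpha,q)$ established in Proposition~\ref{lemma:appendix}, so $\assG_2(0)$ and $\assG_3(0)$ follow.

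Then I would count the solutions of $\bA$, which by the result above equals the integer $c$ in Proposition~\ref{prop:degree_fiber}. Using the recursive description of $V_p(\mN)$ in Lemma~\ref{lemma:union}, a double induction on $p$ and $q$ shows that $\#V_p(\mN)=S_{n-s}(\alpha_1,\dots,\alpha_p)$ after the linear elimination, and multiplying by the $\gamma_1\cdots\gamma_s$ choices coming from the $a_i$'s yields $c=\gamma_1\cdots\gamma_s\,S_{n-s}(\alpha_1,\dots,\alpha_p)=\rc$. This establishes the multiplicity bound. For $\re$ I would mimic Lemma~\ref{lemma:columndegree:e_estimate}: a generic hyperplane section of $V(J')$ allows substituting $T$ by an affine form $\eta(X_1,\dots,X_n)$, turning the degree count into the number of isolated solutions of a system whose row-degrees and $g$-degrees are all bumped by $1$, and re-applying the multiplicity bound just proved gives $\re=(\gamma_1+1)\cdots(\gamma_s+1)\,S_{n-s}(\alpha_1+1,\dots,\alpha_p+1)$.

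Finally, for the cost I would compute $\scrR_0$ by calling $\mathsf{RowDegreeDiagonal}$ on $\mL$ (with a separate direct treatment for the factors $a_i$), whose cost by Lemma~\ref{lemma:rowdegreediagonal} is controlled by $\softO(\rc\,n^3)$ plus the recursive calls to $\mathsf{RowDegree\_simple}$; then feed $\scrR_0$ to $\mathsf{Homotopy}$ with a straight-line program for $\bB$ of length $O(\sigma+\binom{q}{p}n^3+n(\alpha_1+\cdots+\alpha_p)+n(\gamma_1+\cdots+\gamma_s))$. Substituting into the runtime from Proposition~\ref{prop:compute_isolated} and using $n\leq\log_2(\re)$, $s\leq n$, and $p\alpha\geq\alpha_1+\cdots+\alpha_p$ yields the stated $\softO(\binom{q}{p}\,\rc(\re+\rc^5)(\sigma+\gamma+p\alpha))$. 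The main obstacle I anticipate is the last one: the algorithm $\mathsf{RowDegreeDiagonal}$ recursively invokes the very algorithm we are designing on matrices of smaller size, so the cost analysis must be carried out by induction on $(p,q)$, and one must verify that the recursion does not blow up, i.e.\ that the sum $\sum_{\bi,\br}\alpha_{i_1}\cdots\alpha_{i_\kappa}T_{M,\mathrm{row}}(\balpha_\bi,n-1)$ telescopes into a quantity dominated by the target bound. This is the delicate combinatorial step, relying on telescoping sums of complete symmetric functions akin to $\sum_\bi\alpha_{i_1}\cdots\alpha_{i_\kappa}S_{\ldots}\leq S_n(\alpha_1,\dots,\alpha_p)$.
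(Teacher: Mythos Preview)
Your proposal is correct and follows essentially the same approach as the paper: the start matrix $\mN$ of the form~\eqref{eqdef:type1aff}, the verification of $\assA_1,\assA_2,\assG_1(0),\assG_2(0),\assG_3(0)$ via Proposition~\ref{prop:KH1H2} and Proposition~\ref{lemma:appendix} after eliminating the $s$ linear factors, the count $\rc=\gamma_1\cdots\gamma_s S_{n-s}(\alpha_1,\dots,\alpha_p)$ via the recursive decomposition of Lemma~\ref{lemma:union}, the bound on $\re$ by hyperplane section, and the recursive call to $\mathsf{RowDegreeDiagonal}$ feeding into $\mathsf{Homotopy}$ are exactly what the paper does in Section~\ref{sec:rowdegree}. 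You have also correctly pinpointed the one delicate step, namely unrolling the recursion for $T_{M,\mathrm{row}}$ and bounding it by a telescoping sum of complete symmetric functions, which the paper handles in Lemma~\ref{lemma:TMrow}.
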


We state now the complexity result for computing the simple
points of $\VpFG{p}{\mF}{G}$.

\begin{proposition}\label{prop:rowdegree_simple}
  Reusing the notations introduced above, there
  exists a randomized algorithm that computes the simple
  points of $\VpFG{p}{\mF}{G}$ using
  $$\softO\left({q \choose p}  \rc \re       (\sigma + \gamma+ p \alpha)  \right)$$
  operations in $\KK$.
\end{proposition}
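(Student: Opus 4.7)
The plan is to mirror the passage from Proposition~\ref{prop:coldeg} to Proposition~\ref{prop:coldeg_simple}, simply transplanted into the row-degree setup that was used to prove Proposition~\ref{prop:rowdegree}. Concretely, I would introduce a variant $\mathsf{RowDegree\_simple}$ of the algorithm $\mathsf{RowDegree}$ underlying Proposition~\ref{prop:rowdegree}, changing only the final step: $\mathsf{Homotopy\_simple}$ replaces $\mathsf{Homotopy}$. All preparatory work—construction of the start matrix $\mL$ with the block structure of Section~\ref{sec:prel-row} and of a start system $K=(k_1,\dots,k_s)$ whose entries are products of generic linear forms; verification of $\assA_1,\assA_2$ via Proposition~\ref{prop:KH1H2}; verification of $\assG_1(0),\assG_2(0),\assG_3(0)$ via Proposition~\ref{lemma:appendix}; the count $\rc=\gamma_1\cdots\gamma_s S_{n-s}(\alpha_1,\ldots,\alpha_p)$ of solutions of the start system; the bound $\re=(\gamma_1+1)\cdots(\gamma_s+1)S_{n-s}(\alpha_1+1,\ldots,\alpha_p+1)$ on the degree of the homotopy curve (via a generic hyperplane slice and a second application of Proposition~\ref{prop:degree_fiber}); the recursive resolution of the start system through $\mathsf{RowDegreeDiagonal}$; and the length $\sigma'$ of a straight-line program evaluating $\bB$—is reused verbatim from the proof of Proposition~\ref{prop:rowdegree}.

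Because the final homotopy step now uses $\mathsf{Homotopy\_simple}$, Proposition~\ref{prop:compute_regular} governs its cost, bounding it by $\softO(\rc^2\, m\,n^2 + \rc\,\re\,n(\sigma'+n^2))$ operations in $\KK$ in place of the $\softO(\rc^5 m n^2 + \rc(\re+\rc^5)n(\sigma'+n^3))$ estimate of Proposition~\ref{prop:compute_isolated}. Substituting $m \le n+\binom{q}{p}\le n\binom{q}{p}$ and $\sigma'=O(\sigma+\binom{q}{p}n^3+n(n\gamma+p\alpha))$ (with $s\le n$, $\gamma=\max_i\gamma_i$, $\alpha=\max_j\alpha_j$), and combining with $\rc\le\re$, one obtains a bound of the form
\[
\softO\!\left(\binom{q}{p}\rc\,\re\,n^4(\sigma+\gamma+p\alpha)\right).
\]
Finally, since $\re\ge 2^n$ (as $\gamma_i,\alpha_j\ge 1$), the factor $n^4$ can be absorbed into the $\softO(\cdot)$, yielding the announced estimate.

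The main delicate point is verifying that the cost of the recursive call to $\mathsf{RowDegreeDiagonal}$ fits within this budget, since that algorithm itself invokes $\mathsf{RowDegree\_simple}$ on smaller matrices with no side equations. By Lemma~\ref{lemma:rowdegreediagonal}, its cost expands as $\sum_{\bi=(i_1,\ldots,i_\kappa)}\alpha_{i_1}\cdots\alpha_{i_\kappa}\,T_{M,\mathrm{row}}((\alpha_{i_1},\ldots,\alpha_{i_\kappa}),n-1)$ over subsequences of $(1,\ldots,p)$ of length $\kappa\le\min(n-1,p)$, plus lower-order overhead of order $\softO(n^3(\rc+S_n(\alpha_1,\ldots,\alpha_p)))$. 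The recursion is well-founded because each recursive call strictly decreases $q$, and an induction on $q$ combined with the elementary identities relating the $\alpha_{i_1}\cdots\alpha_{i_\kappa}$ to the complete symmetric functions $S_{n-s}$ shows that the inductive contributions telescope into the claimed bound, so the start-system resolution is dominated by the outer homotopy step and no additional terms arise.
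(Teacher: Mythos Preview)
Your proposal is correct and follows essentially the same approach as the paper. The one difference worth flagging is the order: the paper defines and analyzes $\mathsf{RowDegree\_simple}$ \emph{first} (resolving the $T_{M,\mathrm{row}}$ recursion explicitly in Lemma~\ref{lemma:TMrow}) and only then derives $\mathsf{RowDegree}$ by swapping in $\mathsf{Homotopy}$---the reverse of your framing---which is the natural order since $\mathsf{RowDegreeDiagonal}$ already calls $\mathsf{RowDegree\_simple}$ on smaller inputs.
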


These propositions complete the proofs of Theorems~\ref{theo:1},
\ref{theo:2} and~\ref{theo:3}. 


\subsection{Setting up the homotopy}

We are again going to rely on the algorithm of
Section~\ref{sec:homotopy}. As in Section~\ref{sec:columndegree}, we
let $\bC=(c_1,\dots,c_s,c_{s+1}\dots,c_m)$ be such that
$(c_1,\dots,c_s)=(g_1,\dots,g_s)$ and $(c_{s+1},\dots,c_m)$ are the
$p$-minors of $\mF$. Our main concern is to design a sequence of
polynomials $\bB=(b_1,\dots,b_s,\dots,b_m)$ in $\KK[T,\bX]$ such that
$\bC=\bB_1$, such that we can solve efficiently the system
$\bA=\bB_0$, and such that $\bB$ has the same degree profile as our
target system $\bC$.

The polynomials $(b_1,\dots,b_s)$ are defined as in Section~\ref{sec:columndegree}, letting $a_i$ be a product of $\gamma_i$ linear forms
$\mu_{i,k}$ with randomly chosen coefficients, of the form
\begin{align}\label{eqdef:ai}
a_i=\prod_{k=1}^{\gamma_i} \mu_{i,k},\quad\text{with}\quad
\mu_{i,k}=\mu_{i,k,0} + \sum_{\ell=1}^n \mu_{i,k,\ell}X_\ell
\end{align}
and writing $b_i=(1-T)a_i + T g_i$ for $i=1,\dots,s$. The difference
will lie in the construction of the start matrix used in the homotopy. The
construction  presented in
Section~\ref{sec:columndegree} does not carry over if we want to take
row degrees into account. Instead, we use a deformation that cancels
out many off-diagonal terms; following the construction in the
previous section, we define $\mN$ as in~\eqref{eqdef:type1aff}, that is
\begin{align*}
\mN = \left( \begin{matrix}
\lambda_{1,1} & 0 & \cdots & 0 & \lambda_{1,p+1} & \cdots & \lambda_{1, q}\\
0 & \lambda_{2,2} & \cdots & 0 & \lambda_{2,p+1} & \cdots & \lambda_{2, q}\\
\vdots & \vdots & \ddots & \vdots & \vdots & \ddots & \vdots\\
0 & 0 & \cdots & \lambda_{p,p} & \lambda_{p,p+1} & \cdots & \lambda_{p, q}
\end{matrix} \right), 
\end{align*} 
where for all $i,j$, $\lambda_{i,j}$ is a product of $\alpha_i$ linear
forms with random coefficients in $\KK$, of the form
$$\lambda_{i,j}= \prod_{k=1}^{\alpha_i}\lambda_{i,j,k},
\quad\text{with}\quad
\lambda_{i,j,k} =\lambda_{i,j,k,0} + \sum_{\ell=1}^n \lambda_{i,j,k,\ell}X_\ell.
$$ Then, we define $(b_{s+1},\dots,b_m)$ as the
$p$-minors of $\mU=(1-T)\cdot\mN + T\cdot \mF$, and $\bB=(b_1,\dots,b_m)$.
The polynomials $(a_{s+1},\dots,a_m)$ are defined as the $p$-minors of
$\mN$, so that $\bA=\bB_0$; on the other hand, we also have
$\bC=\bB_1$.  Our next step is to prove that all assumptions of
Propositions~\ref{prop:degree_fiber}~\ref{prop:compute_isolated}
and~\ref{prop:compute_regular}
are satisfied for $\bB$ and $\bA=\bB_0$, as long as the coefficients 
of $a_1,\dots,a_s$ and $\mN$ are chosen generically.

\paragraph{Properties $\assA_1$ and $\assA_2$}
These follow from Proposition~\ref{prop:KH1H2}.

\paragraph{Property $\assG_1(0)$.} We have to prove that for $i=1,\dots,m$,
$\deg_\bX(b_i)=\deg_\bX(a_i)$. We already established it in
Section~\ref{sec:columndegree} for indices $i=1,\dots,s$. For
$i=s+1,\dots,m$, we can readily see that the degree of $b_i$ in $\bX$ is at most
$\alpha_1 + \cdots + \alpha_p$, so it suffices to prove that the
degree of all $p$-minors $(a_{s+1},\dots,a_m)$ of $\mN$ is $\alpha_1 +
\cdots + \alpha_p$.

Indeed, any $p$-minor of $\mN$ is of the form $\lambda_{i_1,i_1}
\cdots \lambda_{i_\kappa,i_\kappa} \zeta$, for some sequence
$\bi=(i_1,\dots,i_\kappa) \subset (1,\dots,p)$ of length $\kappa \in
\{0,\dots,p\}$ and some $(p-\kappa)$-minor $\zeta$ of $\mN_{\bi
,(p+1,\dots,q)}$.  Since the entries of $\mN_{\bi
,(p+1,\dots,q)}$ are products of linear form with randomly
chosen coefficients $(\lambda_{i,j,k,\ell})$, for a generic choice of
these coefficients, the determinant $\zeta$ has degree
$\sum_{i' \notin \bi} \alpha_{i'}$. As a result, the corresponding
$p$-minor of $\mN$ has degree $\alpha_1 + \cdots + \alpha_p$, as
claimed.

\paragraph{Property $\assG_2(0)$.} Next, we prove that the system $\bA=\bB_0$ has no solution 
at infinity. As in Section~\ref{sec:columndegree}, we introduce a homogenization
variable $X_0$, and we consider the system $\bA^H=(a_1^H,\dots,a_s^H,\dots,a_m^H)$ obtained 
by homogenizing all equations in $\bA$. Thus we have
$$a_i^H=\prod_{k=1}^{\gamma_i} \mu^H_{i,k} \quad\text{with}\quad \mu^H_{i,k}=\mu_{i,k,0}X_0 + \sum_{\ell = 1}^{n}\mu_{i,k,\ell}X_\ell$$
for $i=1,\dots,s$, whereas $a_{s+1}^H,\dots,a_m^H$ are the $p$-minors of the matrix
\begin{align*}
\mN^H = \left( \begin{matrix}
\lambda^H_{1,1} & 0 & \cdots & 0 & \lambda^H_{1,p+1} & \cdots & \lambda^H_{1, q}\\
0 & \lambda^H_{2,2} & \cdots & 0 & \lambda^H_{2,p+1} & \cdots & \lambda^H_{2, q}\\
\vdots & \vdots & \ddots & \vdots & \vdots & \ddots & \vdots\\
0 & 0 & \cdots & \lambda^H_{p,p} & \lambda^H_{p,p+1} & \cdots & \lambda^H_{p, q}
\end{matrix} \right),  
\end{align*}
where $\lambda^H_{i,k}$ is the homogenization of
$\lambda_{i,j}$. (This latter property requires genericity of the
coefficients of the linear forms $\lambda^H_{i,k}$; it is enough that 
each $p$-minor of $\mN$ have degree $\alpha_1 + \cdots + \alpha_p$.)

The solutions of $\bA^H$ in $\P^n(\KKbar)$ are found by first solving
the equations $(a^H_1,\dots,a^H_s)$. As in Section~\ref{sec:columndegree}, all
$a_i^H$ are products of linear forms, so any solution of
$(a^H_1,\dots,a^H_s)$ is obtained by setting some of these linear forms
to zero (at least one for each $i=1,\dots,s$). We choose indices $\bu=(u_1,\dots,u_s)$, with
$u_1\in\{1,\dots,\gamma_1\}$, \dots, $u_s\in\{1,\dots,\gamma_s\}$, and
we solve
$$\mu^H_{i,u_i}=0, \quad \text{~that is,~} \quad \mu_{i,u_i,0}X_0 + \sum_{\ell = 1}^{n}\mu_{i,u_i,\ell}X_\ell =0,$$ for $i=1,\dots,s$.
In what follows, we fix such an $\bu$.
Then, for a generic choice of coefficients $\mu_{i,k,\ell}$, these equations
are equivalent to
$$X_{n-s+1}=\Phi_{n-s+1,\bu}(X_0,\dots,X_{n-s}),\dots,X_{n}=\Phi_{n,\bu}(X_0,\dots,X_{n-s}),$$
for some homogeneous linear forms $\Phi_{n-s+1,\bu},\dots,\Phi_{n,\bu}$.
After applying this substitution, for all $i,j$,
$\mN^H$ can be rewritten as 
\begin{align*}
 \mN^H_\bu = \left( \begin{matrix}
\lambda^H_{1,1,\bu} & 0 & \cdots & 0 & \lambda^H_{1,p+1,\bu} & \cdots & \lambda^H_{1, q,\bu}\\
0 & \lambda^H_{2,2,\bu} & \cdots & 0 & \lambda^H_{2,p+1,\bu} & \cdots & \lambda^H_{2, q,\bu}\\
\vdots & \vdots & \ddots & \vdots & \vdots & \ddots & \vdots\\
0 & 0 & \cdots & \lambda^H_{p,p,\bu} & \lambda^H_{p,p+1,\bu} & \cdots & \lambda^H_{p, q,\bu}
\end{matrix} \right),
\end{align*}
with
$$\lambda^H_{i,j,\bu}=\prod_{k=1}^{\alpha_i}\lambda^H_{i,j,k,\bu},
\quad\text{and}\quad \lambda^H_{i,j,k,\bu}=\sum_{\ell =
  0}^{n-s}\lambda_{i,j,k,\ell}X_\ell + \sum_{\ell =
  n-s+1}^{n}\lambda_{i,j,k,\ell}
\Phi_{\ell,\bu}(X_0,\dots,X_{n-s}).$$ 

Remark that the entries of $\mN^H_\bu$ are products of homogeneous
linear forms in $(n-s)+1$ variables $(X_0,\dots,X_{n-s})$, so that
this matrix has the form seen in~\eqref{eqdef:type1}. As a result, for
a generic choice of the coefficients $\mu_{i,k,\ell}$ and
$\lambda_{i,j,k,\ell}$, the first item in Lemma~\ref{lemma:appendix}
implies that there is no projective solution to $I_p(\mN^H_\bu)$
satisfying $X_0=0$. Taking into account all possible choices of $\bu$,
we deduce that there is no projective solution to $\bA^H$ satisfying
$X_0=0$, and  $\assG_2(0)$ is proved.

\paragraph{Property $\assG_3(0)$.} Finally, we have to prove that the Jacobian
matrix of $\bA$ has full rank $n$ at any point in $V(\bA) \subset
\KKbar{}^n$. Let thus $\bx=(x_1,\dots,x_n)$ be in $V(\bA)$; in
particular, $\tilde \bx=(1,x_1,\dots,x_n)$ is a projective solution of
$\bA^H$.  Thus, there exists $\bu=(u_1,\dots,u_s)$ as above such that
$$x_{n-s+1}=\phi_{n-s+1,\bu}(x_1\dots,x_{n-s}),\dots,x_{n}=\phi_{n,\bu}(x_1,\dots,x_{n-s}),$$
where $\phi_{k,\bu}(X_1,\dots,X_{n-s})=\Phi_{k,\bu}(1,X_1,\dots,X_{n-s})$ for 
all $k$, and such that
and $\mN^H_\bu$ has rank less than $p$ at $\tilde\bx'=(1,x_1,\dots,x_{n-s})$.  The second item of
Lemma~\ref{lemma:appendix} shows that the Jacobian matrix of
$M_p(\mN^H_\bu)$ with respect to $X_0,\dots,X_{n-s}$ has rank $n-s$
at $\tilde \bx'$. Since the first coordinate of $\tilde\bx'$ is non-zero,
and all generating polynomials of $I_p(\mN^H_\bu)$ are homogeneous,
Euler's relation implies that the Jacobian matrix of $M_p(\mN_\bu)$
with respect to $X_1,\dots,X_{n-s}$
has full rank $n-s$ at $\bx'=(x_1,\dots,x_{n-s})$, where  
\begin{align}\label{eqdef:Lu}
 \mN_\bu = \left( \begin{matrix}
\lambda_{1,1,\bu} & 0 & \cdots & 0 & \lambda_{1,p+1,\bu} & \cdots & \lambda_{1, q,\bu}\\
0 & \lambda_{2,2,\bu} & \cdots & 0 & \lambda_{2,p+1,\bu} & \cdots & \lambda_{2, q,\bu}\\
\vdots & \vdots & \ddots & \vdots & \vdots & \ddots & \vdots\\
0 & 0 & \cdots & \lambda_{p,p,\bu} & \lambda_{p,p+1,\bu} & \cdots & \lambda_{p, q,\bu}
\end{matrix} \right),
\end{align}
with
$$\lambda_{i,j,\bu}=\prod_{k=1}^{\alpha_i}\lambda_{i,j,k,\bu},
\quad\text{and}\quad \lambda_{i,j,k,\bu}=\lambda_{i,j,k,0}+\sum_{\ell
  = 1}^{n-s}\lambda_{i,j,k,\ell}X_\ell + \sum_{\ell =
  n-s+1}^{n}\lambda_{i,j,k,\ell}
\phi_{\ell,\bu}(X_1,\dots,X_{n-s}).$$ We now prove
that the Jacobian matrix of $\bA$ with respect to $X_1,\dots,X_n$ has
full rank at $\bx$.

The first step is similar to what we did in Section~\ref{sec:columndegree}.  For
$i=1,\dots,s$, $a_i$ is a product of linear forms of the form
$a_i=\prod_{k=1}^{\gamma_i} \mu_{i,k}$, with $\mu_{i,u_i}(\bx)=0$.
Since the coefficients $\mu_{i,k,\ell}$ are chosen generically, for
$i=1,\dots,s$ and $k \ne u_i$, $\mu_{i,k}(\bx)$ is non-zero; as a
result, in the local ring at $\bx$, the polynomials $(a_1,\dots,a_s)$
are equal (up to units) to the linear forms
$(\mu_{1,u_1},\dots,\mu_{s,u_s})$. This further implies that
\begin{align}\label{eq:subst}
X_{n-s+1}-\phi_{n-s+1,\bu}(X_1,\dots,X_{n-s}),\dots,X_{n}-\phi_{n,\bu}(X_1,\dots,X_{n-s})
\end{align}
belong to the ideal generated by $(a_1,\dots,a_s)$ in the local 
ring at $\bx$.

Next, we consider the $p$-minors $(a_{s+1},\dots,a_m)$ of $\mN$. Let
$\zeta \in\KK[X_1,\dots,X_n]$ be a $p$-minor of $\mN$, and let
$\zeta_\bu \in \KK[X_1,\dots,X_{n-s}]$ be the polynomial obtained
after applying the substitution in~\eqref{eq:subst} in $\mN$. Since
$\zeta$ and all polynomials in~\eqref{eq:subst} are in the ideal
$\langle \bA \rangle \cdot \mathcal{O}_\bx$, the polynomial
$\zeta_\bu$ is in this ideal as well. Now, note that $\zeta_\bu$ is a
$p$-minor of $\mN_\bu$ as defined in~\eqref{eqdef:Lu}, and that all
its $p$-minors are obtained this way. We pointed out above that the
Jacobian matrix of these equations with respect to $X_1,\dots,X_{n-s}$
has full rank $n-s$ at $\bx'$. As a result, taking all $\zeta_\bu$ into account,
together with the equations in~\eqref{eq:subst}, we obtain 
a family of polynomials in $\langle \bA \rangle \cdot \mathcal{O}_\bx$ 
whose Jacobian matrix has rank $n$ at $\bx$, and $\assG_3(0)$ is proved.

\medskip

In view of the previous paragraphs,
we can then apply Proposition~\ref{prop:degree_fiber}. Since $\bB$
satisfies $\assA_1,\assA_2$ and $\bA=\bB_0$ satisfies
$\assG_1,\assG_2,\assG_3$, we deduce that the sum of the
multiplicities of the isolated solutions of $\bC=\bB_1$ is at most
$\rc$, where $\rc$ is the number of solutions of $\bA$. Our next step
is to establish the value of $\rc$. This is done in Corollary~\ref{coro:complete}
below, which proves the first claim in Proposition~\ref{prop:rowdegree}.
Recall below that $S_t$ is the degree $t$ complete symmetric
function, for $t \ge 0$.

\begin{lemma}
  Let $\balpha=(\alpha_1,\dots,\alpha_p)$ be positive integers.  and
  let $S_{t}(\alpha_1,\dots,\alpha_p)$ be the complete symmetric
  function of degree $t$ in $\alpha_1,\dots,\alpha_p$. For generic $p
  \times q$ matrices $\mN$ as in~\eqref{eqdef:type1aff} or $\mM$ as
  in~\eqref{eqdef:type2aff}, with entries in $t=q-p+1$ variables,
$V_p(\mN)$ and $V_p(\mM)$ have
  cardinality $S_{t}(\alpha_1,\dots,\alpha_p)$.
\end{lemma}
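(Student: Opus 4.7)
The plan is to establish the count for $\mN$ first by induction on $q$, using Lemma~\ref{lemma:union} as the recursive step, and then to transfer the count to $\mM$ via a constancy argument on the parameter space. As a preliminary reduction, Proposition~\ref{lemma:appendix} guarantees that for generic coefficients the projective sets $V_p(\mN^H)$ and $V_p(\mM^H)$ inside $\P^t(\KKbar)$ are finite, disjoint from $X_0 = 0$, and consist of simple points; setting $X_0 = 1$ thus bijects them with the affine sets $V_p(\mN)$ and $V_p(\mM)$, so it suffices to count the projective versions.

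For $\mN$ I would induct on $q$. The base case $p = q$ gives $t = 1$: $\mN$ is diagonal and $\det(\mN) = \lambda_{1,1}\cdots\lambda_{p,p}$ is a univariate polynomial of degree $\alpha_1 + \cdots + \alpha_p = S_1(\balpha)$ with distinct simple roots generically. For $q > p$, apply Lemma~\ref{lemma:union} at $t = p$; by Remark~\ref{remark:disjoint} the decomposition is a disjoint union, so each subsequence $\bi = (i_1,\dots,i_\kappa)$ of length $\kappa \in \{1,\ldots,\min(p,t-1)\}$ together with $\br \in R_\bi$ (of which there are $\alpha_{i_1}\cdots\alpha_{i_\kappa}$) contributes $|V_\kappa(\mathfrak{M}^H_{\balpha_\bi, t-1})|$ points, and, when $t \le p$, a single point per $(\bi,\br)$ with $|\bi| = t$. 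The inductive hypothesis for $\mM$ at $q' = t-1 < q$ gives $|V_\kappa(\mathfrak{M}^H_{\balpha_\bi, t-1})| = S_{t-\kappa}(\balpha_\bi)$ (with the convention $S_0 = 1$ covering the boundary), yielding
\[
|V_p(\mN)| \;=\; \sum_{\kappa=1}^{\min(p,t)} \; \sum_{\substack{\bi \subseteq (1,\dots,p)\\ |\bi|=\kappa}} \alpha_{i_1}\cdots\alpha_{i_\kappa} \, S_{t-\kappa}(\alpha_{i_1},\dots,\alpha_{i_\kappa}).
\]
The right-hand side equals $S_t(\balpha)$ by a bookkeeping identity: group monomials of $S_t(\balpha) = \sum_{e_1+\cdots+e_p = t}\alpha_1^{e_1}\cdots\alpha_p^{e_p}$ by their support $\bi = \{j : e_j > 0\}$ of size $\kappa$, then substitute $e_j = 1 + e'_j$ on $\bi$ to biject the support-$\bi$ contributions with monomials of $\alpha_{i_1}\cdots\alpha_{i_\kappa}\, S_{t-\kappa}(\balpha_\bi)$. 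This proves $|V_p(\mN)| = S_t(\balpha)$.

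Finally I would transfer the count to $\mM$. The $\mN$-family embeds as a closed subvariety of the $\mM$-coefficient space $\KKbar^{\mathcal{A}}$ (by forcing the required off-diagonal entries in the first $p$ columns to vanish). By $\assI_2$ and $\assI_4$, there is a Zariski dense open set $U \subseteq \KKbar^{\mathcal{A}}$ over which the $p$-minors of $\mathfrak{M}^H_{\balpha, q}$ cut out a finite reduced subscheme of $\P^t(\KKbar)$; the corresponding incidence variety projects flatly to $U$, so the fiber cardinality is locally constant on $U$, and since $\KKbar^{\mathcal{A}}$ is irreducible the open set $U$ is connected and the cardinality is globally constant there. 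By $\assJ_2$ and $\assJ_4$ a generic $\mN$ lies in $U$, so $|V_p(\mM)| = |V_p(\mN)| = S_t(\balpha)$ for generic $\mM$. The main obstacle I anticipate is this last transfer step: the flatness/connectedness argument relies on the fibers over $U$ being both finite and reduced, which is precisely what $\assI_4$ supplies via the Jacobian criterion, so given the preparatory work of Section~\ref{sec:prel-row} the step is conceptually clean but deserves care in citing the relevant flat-family fact.
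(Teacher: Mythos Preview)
Your argument is correct and follows the same inductive skeleton as the paper: reduce to the projective setting via Proposition~\ref{lemma:appendix}, handle the base case $p=q$ directly, and for the inductive step invoke Lemma~\ref{lemma:union} with Remark~\ref{remark:disjoint} to express $|V_p(\mN)|$ as the sum $\sum_{\bi}\alpha_{i_1}\cdots\alpha_{i_\kappa}\,S_{t-\kappa}(\balpha_\bi)$, which you then identify with $S_t(\balpha)$ by grouping monomials according to their support.

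The one genuine difference is in the transfer from $\mN$ to $\mM$. You argue externally: the incidence variety over the open locus $U\subset\KKbar^{\cal A}$ where the minors cut out a finite reduced scheme is finite \'etale over $U$ (by properness of $\P^t$ and the Jacobian condition $\assI_4$), hence has locally---and by irreducibility of $U$, globally---constant fibre cardinality; a generic $\mN$ lands in $U$ by $\assJ_2,\assJ_4$. The paper instead recycles its own homotopy machinery: it builds the one-parameter family $(1-T)\cdot\mN + T\cdot\mM$, checks $\assA_1,\assA_2$ and $\assG(0),\assG(1)$, and invokes Proposition~\ref{prop:degree_fiber} at both $T=0$ and $T=1$ to conclude that the two cardinalities coincide. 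Your route is the more standard algebraic-geometry argument and is perfectly valid here; the paper's route has the virtue of being entirely self-contained within the framework already set up, at the cost of re-verifying the $\assG$ properties at $T=1$ (which amounts to re-invoking Proposition~\ref{lemma:appendix} for $\mM$, exactly as you do implicitly). Either way, the inductive hypothesis you need is the $\mM$-statement at strictly smaller $q$, so the transfer step must be part of the induction package at each level---you have this structured correctly.
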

\begin{proof}
  First, let us show that if the claim holds for $\mN$ in size $p
  \times q$, it holds for $\mM$ as well. To this effect, we set up a
  homotopy between $\mN$ and $\mM$, by considering the matrix
  $(1-T)\cdot\mN + T\cdot\mM$.  The discussion in the previous paragraphs shows
  that (for generic choices of the coefficients) this matrix satisfies
  Properties $\assA_1$ and $\assA_2$, together with
  $\assG_1(0),\assG_2(0),\assG_3(0)$.  We claim that
  $\assG_1(1),\assG_2(1),\assG_3(1)$ hold as well: the degree property
  in $\assG_1(1)$ is proved as we did for $\assG_1(0)$, and
  $\assG_2(1),\assG_3(1)$ are restatements of
  Lemma~\ref{lemma:appendix}. As a result, we can apply
  Proposition~\ref{prop:degree_fiber} to the specializations of
  $(1-T)\cdot\mN + T\cdot\mM$ at $T=0$ and $T=1$, and conclude that $V_p(\mN)$
  and $V_p(\mM)$ have the same cardinality, for generic choices of
  the coefficients of~$\mN$~and~$\mM$.

  We finish the proof by induction. If $p=q$, then $t=1$, $\mN$ is diagonal, and
  its determinant has degree $\alpha_1 + \cdots + \alpha_p =
  S_1(\alpha_1,\dots,\alpha_p)$, so our claim holds for $\mN$ (and
  thus for $\mM$). Suppose now that the claim is true  for all $p'\le p$ and all $q' < q$ with $p'
  \le q'$ and for all choices
  of degrees $(\alpha_1,\dots,\alpha_{p'})$. Following Algorithm~{\sf RowDegreeDiagonal}
  (which is essentially a restatement of Lemma~\ref{lemma:union}) and Remark~\ref{remark:disjoint} 
  (which states that the corresponding union is disjoint), we obtain
  $$
  |V_p(\mN)| = \sum_{\substack{\bi=(i_1,\dots,i_\kappa)\\\br=(r_1,\dots,r_\kappa)}}  |V_\kappa(\mM_{\bi,\br})|,
  $$ for all subsequences $\bi=(i_1,\dots,i_\kappa)$ of length $\kappa
  \in \{1,\dots,\min(t-1,p)\}$ and $\br=(r_1,\dots,r_\kappa)$, with $r_k
  \in \{1,\dots,\alpha_k\}$ for all $k$, and where 
  matrix $\mM_{\bi,\br}$
  has $\kappa \le p$ rows and $t-1 < q$ columns, with row degrees $(\alpha_{i_1},\dots,\alpha_{i_p})$;
  in particular, we can apply our induction assumption to such matrices.
  In addition, if $t \le p$, we should take into account one extra point 
  for each subsequence $(i_1,\dots,i_t)$ of $(1,\dots,p)$. Altogether, 
  we obtain 
  $$
  |V_p(\mN)| = \sum_{\substack{\bi=(i_1,\dots,i_\kappa),\\\br=(r_1,\dots,r_\kappa)}}  S_{t-\kappa}(\alpha_{i_1},\dots,\alpha_{i_\kappa}),
  $$
  for $\kappa \in \{1,\dots,\min(t,p)\}$, since $S_0=1$.
  For any given $\bi=(i_1,\dots,i_\kappa)$, there are $\alpha_{i_1}\cdots \alpha_{i_\kappa}$ 
  choices of indices $\br$, so that we have
  $$
  |V_p(\mN)| = \sum_{\bi=(i_1,\dots,i_\kappa)} \alpha_{i_1}\cdots \alpha_{i_\kappa} S_{t-\kappa}(\alpha_{i_1},\dots,\alpha_{i_\kappa}),
  $$
  for $\bi=(i_1,\dots,i_\kappa)$ subsequence of $(1,\dots,p)$ with $\kappa \in \{1,\dots,\min(t,p)\}$.
  The latter sum is precisely $S_t(\alpha_1,\dots,\alpha_p)$, so we are done.
\end{proof}

\begin{corollary}\label{coro:complete}
  For a generic choice of coefficients $\mu_{i,k,\ell}$ and
  $\lambda_{i,j,k,\ell}$, the cardinality $\rc$ of the algebraic set
  $V(\bA)$ is $\gamma_1 \cdots \gamma_s S_{n-s}(\alpha_1,\dots,\alpha_p)$.
\end{corollary}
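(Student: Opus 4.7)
The plan is to reduce the count of $V(\bA)$ to the preceding lemma, by exploiting the product structure of the first $s$ equations exactly as in the verification of $\assG_2(0)$ and $\assG_3(0)$ above. First, observe that $a_i=\prod_{k=1}^{\gamma_i}\mu_{i,k}$ vanishes at $\bx\in\KKbar{}^n$ if and only if at least one linear factor $\mu_{i,u_i}$ does; thus any $\bx\in V(\bA)$ determines, for each $i=1,\dots,s$, at least one index $u_i\in\{1,\dots,\gamma_i\}$ with $\mu_{i,u_i}(\bx)=0$. For each tuple $\bu=(u_1,\dots,u_s)$ (there are $\gamma_1\cdots\gamma_s$ of them), denote by $W_\bu\subset V(\bA)$ the set of $\bx$ satisfying $\mu_{i,u_i}(\bx)=0$ for all $i$ together with the $p$-minor equations.

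Fix $\bu$. By genericity of the coefficients $\mu_{i,k,\ell}$, the system $\mu_{1,u_1}=\cdots=\mu_{s,u_s}=0$ is a non-degenerate affine linear system of $s$ equations in $n$ variables; it can be solved for $X_{n-s+1},\dots,X_n$ as linear forms $\phi_{k,\bu}(X_1,\dots,X_{n-s})$, exactly as in the proof of $\assG_2(0)$. Performing this substitution in $\mN$ yields the $p\times q$ matrix $\mN_\bu$ (cf.~\eqref{eqdef:Lu}), whose entries are products of $\alpha_i$ linear forms in the $n-s$ variables $(X_1,\dots,X_{n-s})$. Since $n-s=q-p+1$, $\mN_\bu$ has exactly the shape~\eqref{eqdef:type1aff} considered in the preceding lemma, and for a generic choice of the coefficients $\lambda_{i,j,k,\ell}$ (which survives after the linear substitution, since $\bu$ involves only the $\mu$-parameters), the lemma gives
\[
|V_p(\mN_\bu)|=S_{n-s}(\alpha_1,\dots,\alpha_p).
\]
The bijection $\bx'\mapsto(\bx',\phi_{n-s+1,\bu}(\bx'),\dots,\phi_{n,\bu}(\bx'))$ then shows $|W_\bu|=S_{n-s}(\alpha_1,\dots,\alpha_p)$.

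The main obstacle is to show the $W_\bu$ are pairwise disjoint, so that $|V(\bA)|=\sum_\bu|W_\bu|$. Suppose $\bx\in W_\bu\cap W_{\bu'}$ with $\bu\ne\bu'$. Then there exists $i$ with $u_i\ne u'_i$, and both $\mu_{i,u_i}(\bx)=0$ and $\mu_{i,u'_i}(\bx)=0$. Adding this extra independent linear equation to $\mu_{1,u_1}=\cdots=\mu_{s,u_s}=0$ cuts out an affine space of dimension $n-s-1$, and the pullback of $V_p(\mN)$ to this smaller space is, after eliminating the $s+1$ coordinates, the zero-set of the $p$-minors of a $p\times q$ matrix of the form~\eqref{eqdef:type1aff} in only $n-s-1=q-p$ variables. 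A $p\times q$ matrix of maximal minors in $q-p$ variables has generically empty zero-set (this is property $\assI_2$ specialized to one fewer variable, or equivalently the first lemma of Section~\ref{sec:prel-row} applied with the dimension count shifted by one). So for a generic choice of all coefficients, $W_\bu\cap W_{\bu'}=\emptyset$. Summing over the $\gamma_1\cdots\gamma_s$ tuples $\bu$ yields
\[
\rc=|V(\bA)|=\gamma_1\cdots\gamma_s\,S_{n-s}(\alpha_1,\dots,\alpha_p),
\]
as claimed.
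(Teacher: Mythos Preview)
Your proof is correct and follows the same overall strategy as the paper: decompose $V(\bA)$ into the pieces $W_\bu$ (the paper calls them $V_\bu$), count each piece via the preceding lemma, and show the pieces are pairwise disjoint.

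The only noteworthy difference is in the disjointness argument. The paper simply invokes the observation, already made while checking $\assG_3(0)$, that for $\bx\in V_\bu$ and $k\ne u_i$ one has $\mu_{i,k}(\bx)\ne 0$ generically; this is immediate because the coordinates of $\bx$ are algebraic over the subfield generated by the $\mu_{i,u_i,\ell}$'s and the $\lambda$'s, while the coefficients of $\mu_{i,k}$ are independent of that subfield. Your route---imposing an extra linear equation, dropping to $q-p$ variables, and appealing to the generic emptiness of $V_p$ in that dimension---is also valid, and is essentially the content of $\assJ_2$ (restricting $\mN^H_\bu$ to a generic hyperplane), though your pointer to ``$\assI_2$ specialized to one fewer variable'' is slightly imprecise since you are working with the diagonal-block form. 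Either way, both arguments amount to the same fact: finitely many points determined by one set of parameters are generically missed by a linear form with independent coefficients.
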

\begin{proof}
  For a sequence $\bu=(u_1,\dots,u_s)$ as above, let $V_\bu$ be the
  subset of $V(\bA)$ consisting of all those points $\bx$ such that
  $\mu_{i,u_i}(\bx)=0$ for all $i$. Remark first that the sets $V_\bu$
  are (generically) pairwise disjoint: we pointed out above that for
  $\bx$ in $V_\bu$, any index $i$ and any $k \ne u_i$,
  $\mu_{i,k}(\bx)$ is non-zero.
  
  Let us thus fix $\bu=(u_1,\dots,u_s)$. The cardinality of $V_\bu$ is
  equal to the number of points in $V_p(\mN_\bu)$; this is a
  polynomial matrix of size $p \times q$, with entries that are
  products of linear forms in $n-s=q-p+1$ variables and with row
  degrees $\alpha_1,\dots,\alpha_p$. The previous lemma then shows
  that for any $\bu$, for generic choices of the coefficients, $V_\bu$
  has cardinality $S_{n-s}(\alpha_1,\dots,\alpha_p)$; the conclusion
  follows.
\end{proof}


\subsection{Towards the homotopy algorithms}

Since $\assA_1$, $\assA_2$, $\assG_1(0)$, $\assG_2(0)$ and
$\assG_3(0)$ hold, we are going to apply
Proposition~\ref{prop:compute_regular} to first compute the simple
points in $\VpFG{p}{\mF}{G}$. Indeed, the resulting algorithm
$\mathsf{RowDegree\_simple}$ is used by the
$\mathsf{RowDegreeDiagonal}$ of the previous section, which itself
will be used to compute the isolated points of $\VpFG{p}{\mF}{G}$, by
means of Proposition~\ref{prop:compute_isolated}.

In the following paragraphs, we discuss the required properties
$\assD_1,\assD_2,\assD_3$ in this context. In what follows, we assume
that we are given a straight-line program $\Gamma$ of length $\sigma$
that computes the input matrix $\mF$ and the input equations $G$.
Besides, we also assume that all $\gamma_i$'s and $\alpha_j$'s are at
least equal to $1$.

\paragraph{Property $\assD_1$.}
To perform the homotopy, we need a zero-dimensional parametrization of
$V(\bA)$. We now describe how to obtain it; the process is based on
Algorithm {\sf RowDegreeDiagonal} given in the previous section, and
makes up the first two steps in Algorithm ${\sf RowDegree\_simple}$.

As a preliminary, we construct a straight-line program $\Delta$ that
computes the entries of~$\mN$: for all $i,j$, $\Delta$ computes and multiplies
the values of the $\alpha_i$ linear forms involved in $\lambda_{i,j}$
using $O(n \alpha_i)$ step. Its total length is $\sigma_\mN=O(n^2
(\alpha_1+\cdots+\alpha_p))$, which is $O(n^2 p \alpha)$,
with $\alpha = \max(\alpha_1, \ldots, \alpha_p)$.

Then, for any sequence $\bu=(u_1,\dots,u_s)$, with $u_j$ in
$\{1,\dots,\gamma_j\}$ for all $j$, we start by solving the equations
$\mu_{1,u_1} = \cdots = \mu_{s,u_s}=0$, to express
$(X_{n-s+1},\dots,X_n)$ as linear forms
$(\phi_{n-s+1,\bu},\dots,\phi_{n,\bu})$ in $(X_1,\dots,X_{n-s})$; this
takes a total of $O(\gamma_1 \cdots \gamma_s n^3 )$ operations in $\KK$.

From this, we deduce a straight-line program $\Delta_\bu$ that
computes the entries of matrix $\mN_\bu$ from~\eqref{eqdef:Lu}: it
simply consists in $\Delta$, to which we add $O(n^2)$ operations that
evaluate $(\phi_{n-s+1,\bu},\dots,\phi_{n,\bu})$. Given $\Delta_\bu$, we 
can then apply Algorithm {\sf RowDegreeDiagonal} to compute 
a zero-dimensional parametrization $\scrR'_\bu$ of $V_p(\mN_\bu)$.
The number of points $\rc$ in the output is $S_{n-s}(\alpha_1,\dots,\alpha_p)$
(Corollary~\ref{coro:complete}),
so by Lemma~\ref{lemma:rowdegreediagonal}, Algorithm {\sf RowDegreeDiagonal} takes time 
\begin{eqnarray}
T=\sum_{\substack{\bi=(i_1,\dots,i_\kappa)\\ \kappa\le\min(n-s-1,p)}}
\alpha_{i_1} \cdots \alpha_{i_\kappa} T_{M,{\rm row}}((\alpha_{i_1},\dots,\alpha_{i_\kappa}), n-s-1)
+
\softO( S_{n-s}(\alpha_1,\dots,\alpha_p) n^3).  \label{eqdef:T}
\end{eqnarray}
Since there are $\gamma_1 \cdots \gamma_s$ choices of $\bu$, 
the total cost is $\gamma_1 \cdots \gamma_s T$.

The next stage consists in adding to each $\scrR'_\bu$, which involves
only variables $(X_1,\dots,X_{n-s})$, the expressions of
$(X_{n-s+1},\dots,X_n)$ obtained from
$(\phi_{n-s+1,\bu},\dots,\phi_{n,\bu})$. As in the analysis of
Algorithm {\sf RowDegreeDiagonal}, the total runtime is $O(\gamma_1
\cdots \gamma_s S_{n-s}(\alpha_1,\dots,\alpha_p)  n^2)=O(\rc n^2)$. Finally, we
combine the resulting parametrizations $(\scrR_\bu)_\bu$ into a single
parametrization $\scrR$ using Chinese Remaindering, in time
$\softO(\gamma_1 \cdots \gamma_s  S_{n-s}(\alpha_1,\dots,\alpha_p) n)=\softO(\rc n)$.

Altogether, the overall time spent in computing the zero-dimensional
parametrization $\scrR$ of $V(\bA)$ is 
\begin{multline}\label{eq:VbA}
\gamma_1 \cdots \gamma_s T +\softO( \rc n^3)  \\  = \gamma_1 \cdots \gamma_s\sum_{\substack{\bi=(i_1,\dots,i_\kappa)\\ \kappa\le\min(n-s-1,p)}}
\alpha_{i_1} \cdots \alpha_{i_\kappa} T_{M,{\rm row}}((\alpha_{i_1},\dots,\alpha_{i_\kappa}), n-s-1)
+\softO( \rc n^3).
\end{multline}

\paragraph{Property $\assD_2$.} Next, we need to determine an upper bound 
$\re$ on the degree of the curve $V(J')$, where $J'$ is the union of
the one-dimensional irreducible components of $V(\bB) \subset
\KKbar{}^{n+1}$ whose projection on the $T$-axis is dense. Proceeding
as in Lemma~\ref{lemma:columndegree:e_estimate} of
Section~\ref{sec:columndegree}, we can take for $\re$ the integer
$(\gamma_1+1)\cdots(\gamma_s+1) S_{n-s}(\alpha_1
+1,\dots,\alpha_p+1)$.

\paragraph{Property $\assD_3$.} Finally, we need to give an estimate
on the size of a straight-line program that computes the polynomials
$\bB=(b_1,\dots,b_m)$, assuming that we are given a straight-line
program $\Gamma$ of size $\sigma$ that computes polynomials
$G=(g_1,\dots,g_s)$ and the entries of $\mF$. We already defined a
straight-line program $\Delta$ of size $\sigma_\mN$ that computes all
entries of $\mN$; for an extra $O({q \choose p} n^3)$ operations, we
can compute all entries of $\mU=(1-T)\cdot\mN+T\cdot\mF$ and all
$p$-minors $(b_{s+1},\dots,b_m)$ of this matrix.  Adding an extra
$O(n(\gamma_1 + \cdots + \gamma_s))\in O(n^2 \gamma)$ operations
(with $\gamma = \max(\gamma_1, \ldots, \gamma_s)$), we can also compute
all polynomials $(a_1,\dots,a_s)$, and thus $(b_1,\dots,b_s)$.

Altogether, we have obtained a straight-line program $\Gamma'$ that
computes $\bB=(b_1,\dots,b_m)$ using
$\sigma'=\sigma + \sigma_\mN + O({q \choose p} n^3 + n^2 \gamma)=\sigma +O( {q \choose p} n^3 + n^2 p \alpha+ n^2\gamma)$
operations. 


\subsection{Algorithm $\mathsf{RowDegree\_simple}$}

Algorithm $\mathsf{RowDegree\_simple}$ that we deduce from the above
discussion is given hereafter. We indicate in the pseudo-code the
arithmetic costs for intermediate steps.

\begin{algorithm}[!t]
\caption{$\mathsf{RowDegree\_simple}(\Gamma)$}
{\bf Input}:  a straight-line program $\Gamma$ of length $\sigma$ that computes
 $F \in \KK[X_1, \ldots, X_n]^{p \times q}$ with $\deg(f_{i,j}) \leq \alpha_i$
and $G = (g_1, \ldots, g_s)$ in $\KK[X_1, \ldots, X_n]$ with $p \leq q$, $n = q-p+s+1$\\
{\bf Output}: a zero-dimensional parametrization of the isolated points of $\VpFG{p}{\mF}{G}$
\begin{enumerate}
\item construct a straight-line program $\Delta$ that computes
  $\mN \in \KK[X_1,\dots,X_n]^{p \times q}$ as
  in~\eqref{eqdef:type1aff}
  
\hfill length of $\Delta$ is $O(n^2 p \alpha)$
\item for any sequence $\bu=(u_1,\dots,u_s)$, with $u_j \in \{1,\dots,\gamma_j\}$ for all $j$
\begin{enumerate}
\item apply Gaussian elimination to the system of linear forms
  $\mu_{1,u_1}=\cdots=\mu_{s,u_s}=0$ given at \eqref{eqdef:ai} to
  rewrite $(X_{n-s+1}, \ldots, X_n)$ as linear forms
  $(\phi_{k,\bu})_{n-s+1 \le k \le n}$ in $(X_1,\dots,X_{n-s})$
  
\hfill  $\text{\sf{cost:~}} O(\gamma_1 \cdots \gamma_s n^3)$

\item construct a straight-line program $\Delta_{\bu}$ that computes the matrix 
  $\mN_\bu  \in \KK[X_1, \dots, X_{n-s}]^{p \times q}$ obtained
  by substituting $(\phi_{k,\bu})_{n-s+1 \le k \le n}$ into $\mN$
  
  \hfill length of $\Delta_{\bu}$ is $O(n^2 p\alpha)$

\item $\scrR'_\bu \gets \mathsf{RowDegreeDiagonal}(\Gamma_\bu)$ (points have coordinates $(X_1,\dots,X_{n-s})$
  
\hfill  $\text{\sf{cost:~}} \gamma_1 \cdots \gamma_s T$, for $T$ as in~\eqref{eqdef:T}

\item deduce $\scrR_\bu$ from $\scrR'_\bu$ by adding the expressions for $(X_{n-s+1},\dots,X_n)$
  
\hfill  $\text{\sf{cost:~}} O( \rc n^2)$, with $\rc= \gamma_1 \cdots \gamma_s S_{n-s}(\alpha_1,\dots,\alpha_p)$
\end{enumerate}
\item combine all $\scrR_\bu$ into $\scrR$

  \hfill $\text{\sf{cost:~}} \softO( \rc n)$

\item construct a straight-line program $\Gamma'$ that computes all polynomials $\bB$

  \hfill length of $\Gamma'$ is $\sigma'=O(\sigma + {q \choose p} n^3 + n^2 p \alpha + n^2\gamma)$

\item\label{step:final:rowdegreesimple} return $\mathsf{Homotopy\_simple}(\Gamma',\scrR)$

  \hfill $\text{\sf{cost:~}} \softO( \rc^2 m n^2 + \rc \re n (\sigma'  + n^2))$, 
with $\re=(\gamma_1+1) \cdots (\gamma_s+1) S_{n-s}(\alpha_1+1,\dots,\alpha_p+1)$
\end{enumerate}
\label{RowHom_simple}
\end{algorithm}

All cost estimates were given in the previous subsection and are
summarized in~\eqref{eq:VbA}, save for that of the last step. To
estimate its complexity, we apply
Proposition~\ref{prop:compute_regular}, which gives a runtime of
$\softO(\rc^2 m n^2 + \rc \re n (\sigma' + n^2))$ operations in~$\KK$
for the cost of calling the homotopy subroutine at the last step of
Algorithm {\sf RowDegree\_simple}.  Now, we write
$\sigma'+ n^2 = \sigma +O( {q \choose p} n^3 + n^2 p \alpha + n^2
\gamma)$,
for which we use the upper bound
${q \choose p} n^3(\sigma + p\alpha + \gamma)$ (recall
$\alpha=\max(\alpha_1, \ldots, \alpha_p)$ and
$\gamma=\max(\gamma_1, \ldots, \gamma_s)$).  This gives the upper
bound
\[
\softO\left (\rc^2 m n^2 + \rc\re n {q \choose p} n^3(\sigma + p\alpha + \gamma )\right ).
\]
Using the inequalities $\rc \le \re$ and $m\leq n + {q \choose p} \le n {q \choose p}$,
we see that the second term in the sum is dominant.
Thus, the bound for the cost of Algorithm {\sf Homotopy\_simple} becomes
\[
\softO\left ( \rc\re {q \choose p} n^4(\sigma + p\alpha + \gamma) \right ).
\]
Hence, the total cost of the algorithm is
\begin{align*}
  T_{\rm row}(\sigma,\bgamma,\balpha,q)= \gamma_1 \cdots\gamma_s T +
  \softO\left (  {q \choose p} n^4 \rc  \re ( \sigma + p \alpha +\gamma )  \right ),  
\end{align*}
with $T$ as in~\eqref{eqdef:T}. Since $\re \ge 2^n$ (because
$\alpha_i\geq 1$ and $\gamma_i\geq 1$ by assumption), this becomes
\begin{align}\label{eq:recT}
  T_{\rm row}(\sigma,\bgamma,\balpha,q)= \gamma_1 \cdots\gamma_s T +
  \softO\left (  {q \choose p}  \rc  \re ( \sigma + p \alpha +\gamma )  \right ).
\end{align}

This will now allow us to give an estimate on $T_{M,{\rm row}}$ by
solving a few recurrence relations. Recall that $T_{M,{\rm row}}$ describes
the case where $s=0$, so that $\gamma_1 \cdots \gamma_s=1$, and $\mM$
is a $p \times q$ input matrix as in~\eqref{eqdef:type2aff}. In this
case, we can take $\sigma=O((q-p) q(\alpha_1 + \cdots + \alpha_p)) \in
O((q-p)pq\alpha)$; following our convention in the previous section,
the runtime $T_{\rm row}(\sigma,(),(\alpha_1,\dots,\alpha_p),q)$ is
then written $T_{M,{\rm row}}((\alpha_1,\dots,\alpha_p),q)$.

\begin{lemma}\label{lemma:TMrow}
  One can take
  $$T_{M,{\rm row}}((\alpha_1,\dots,\alpha_p),q)= \softO\left ({q
    \choose p} S_{q-p+1}(\alpha_1,\dots,\alpha_p) S_{q-p+1}(\alpha_1+1,\dots,\alpha_p+1) pq \alpha \right), $$ with
 $\alpha =\max(\alpha_1,\dots,\alpha_p)$.
\end{lemma}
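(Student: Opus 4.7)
The plan is to proceed by induction on $q$. The base case $q = p$ is direct: then $n = q - p + 1 = 1$, the first loop of Algorithm $\mathsf{RowDegreeDiagonal}$ is empty (since $\min(n-1, p) = 0$), only the second loop runs (producing $S_1(\balpha) = \alpha_1 + \cdots + \alpha_p$ points in $\softO(S_1(\balpha))$ time), and the subsequent call to $\mathsf{Homotopy\_simple}$ in $\mathsf{RowDegree\_simple}$ dominates with cost $\softO(\rc\,\re\,p^2\alpha) = \softO\bigl(\tbinom{p}{p}\,S_1(\balpha)\,S_1(\balpha+1)\,p\cdot p\cdot\alpha\bigr)$, matching the claim at $q=p$.

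For the inductive step, I specialize the recurrence~\eqref{eq:recT} to $s = 0$ (so $\gamma_1 \cdots \gamma_s = 1$ and $\gamma = 0$) and to the input size $\sigma = O(nq(\alpha_1 + \cdots + \alpha_p))$ from the definition of $T_{M,{\rm row}}$. Using $\re \ge 2^n$ (valid because $\alpha_i \ge 1$ for all $i$), the factor $n$ arising from $\sigma$ is absorbed into $\softO(\cdot)$, so the recurrence reduces to
\[
T_{M,{\rm row}}(\balpha, q) = \sum_{\substack{\bi = (i_1, \ldots, i_\kappa) \subset (1,\ldots,p)\\ 1 \le \kappa \le \min(q-p,\, p)}} \alpha_{i_1} \cdots \alpha_{i_\kappa}\,T_{M,{\rm row}}(\balpha_\bi,\, q - p) \ +\ \softO\!\left(\tbinom{q}{p}\,\rc\,\re\,pq\alpha\right),
\]
with $\rc = S_{q-p+1}(\balpha)$ and $\re = S_{q-p+1}(\balpha+1)$. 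The non-recursive summand already matches the target. For the recursive summand, I apply the induction hypothesis to each $T_{M,{\rm row}}(\balpha_\bi, q - p)$ (legitimate since $q - p < q$), bound $\binom{q-p}{\kappa} \le \binom{q}{\kappa} \le \binom{q}{p}$ for every $\kappa \le \min(q-p, p)$ (an easy unimodality check, splitting the cases $p \le q/2$ and $p > q/2$), and replace $\kappa$, $q-p$, and $\max_{j\in\bi}\alpha_j$ by $p$, $q$, and $\alpha$ respectively. This reduces the recursive contribution, up to a factor $\softO(\binom{q}{p}\,pq\alpha)$, to bounding
\[
\Sigma := \sum_{\bi,\kappa} \alpha_{i_1} \cdots \alpha_{i_\kappa}\,S_{q-p-\kappa+1}(\balpha_\bi)\,S_{q-p-\kappa+1}(\balpha_\bi+1)
\]
by $S_{q-p+1}(\balpha)\,S_{q-p+1}(\balpha+1)$.

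To control $\Sigma$ I use two ingredients. First, since $\balpha + 1 \ge 2 \cdot \mathbf{1}$, complete symmetric functions are non-decreasing both in the degree and in the multiset of arguments, so $S_{q-p-\kappa+1}(\balpha_\bi + 1) \le S_{q-p+1}(\balpha + 1)$ uniformly in $\bi,\kappa$; factoring this out reduces matters to bounding $\sum_{\bi,\kappa} \alpha_{i_1}\cdots \alpha_{i_\kappa}\,S_{q-p-\kappa+1}(\balpha_\bi)$. Second, this latter sum is precisely what is computed in the proof of the lemma preceding Corollary~\ref{coro:complete} (there as a cardinality count for $V_p(\mN)$), which gives the identity
\[
\sum_{\substack{\bi = (i_1,\ldots,i_\kappa)\\ 0 \le \kappa \le \min(q-p+1,\, p)}} \alpha_{i_1} \cdots \alpha_{i_\kappa}\,S_{q-p+1-\kappa}(\balpha_\bi) \;=\; S_{q-p+1}(\balpha),
\]
and my sum is just a truncation of the left-hand side, hence at most $S_{q-p+1}(\balpha)$. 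Combining, $\Sigma \le S_{q-p+1}(\balpha)\,S_{q-p+1}(\balpha+1)$ and the recursive contribution is $\softO\!\bigl(\tbinom{q}{p}\,S_{q-p+1}(\balpha)\,S_{q-p+1}(\balpha+1)\,pq\alpha\bigr)$, completing the induction. The main obstacle to watch out for is the binomial factor $\binom{q-p}{\kappa}$ produced by the induction hypothesis: without the unimodality reduction $\binom{q-p}{\kappa} \le \binom{q}{p}$, this factor could conceivably exceed the target and spoil the bound; everything else is careful bookkeeping with the symmetric-function identity above.
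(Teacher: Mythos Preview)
Your approach is essentially the same as the paper's: both specialize the recurrence~\eqref{eq:recT} to $s=0$, bound the ``coefficient'' factors $\binom{q-p}{\kappa}$, $\kappa$, $(q-p)$, $\max_j \alpha_{i_j}$ uniformly by their root values $\binom{q}{p}$, $p$, $q$, $\alpha$, and then use the identity $\sum_{\bi} \alpha_{i_1}\cdots\alpha_{i_\kappa}\, S_{q-p+1-\kappa}(\balpha_\bi) = S_{q-p+1}(\balpha)$ to collapse the recursive sum. The paper packages this slightly differently: it writes the non-recursive term as $S_{q-p+1}(\balpha)\cdot C(\balpha,q)$, observes that $C$ is maximal at the root of the recursion tree, replaces $C$ by a single constant $K$, solves the resulting exact recurrence to get $T_{M,{\rm row}} \le (q-p+1)\,S_{q-p+1}(\balpha)\,K$, and finally absorbs the factor $(q-p+1)$ into the $\softO$.

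One bookkeeping subtlety in your version: you induct directly on the $\softO$-statement, showing that the recursive and non-recursive contributions are each $\softO(\text{target})$ and concluding that their sum is too. Formally this is correct, but the implied constant can double at every level of the induction; across depth $q-p+1$ this would introduce an uncontrolled $2^{q-p+1}$-type factor into the hidden constant, which is not polylogarithmic in $\re$. The paper's device of fixing a single $K$ once and solving the exact recurrence avoids this; equivalently, you could strengthen your induction hypothesis to carry an explicit linear factor, proving $T_{M,{\rm row}}(\balpha',q') \le (q'-p'+1)\cdot(\text{target for }(\balpha',q'))$ at each stage, and absorb the final $(q-p+1)$ into $\softO$ only at the end. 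With that adjustment your argument is complete and coincides with the paper's.
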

\begin{proof}
  Taking into account that $\gamma=1$, Equation~\eqref{eq:recT},
  combined with the definition of $T$ in~\eqref{eqdef:T}, gives the
  recursion
  \begin{multline}
  T_{M,{\rm row}}((\alpha_1,\dots,\alpha_p),q)=
\sum_{\substack{\bi=(i_1,\dots,i_\kappa)\\ \kappa\le\min(q-p,p)}}
\alpha_{i_1} \cdots \alpha_{i_\kappa} T_{M,{\rm row}}((\alpha_{i_1},\dots,\alpha_{i_\kappa}), q-p)
\\ + \softO\Big(
 {q \choose p} S_{q-p+1}(\alpha_1,\dots,\alpha_p) S_{q-p+1}(\alpha_1+1,\dots,\alpha_p+1) p q  \alpha )
\Big);
  \end{multline}
notice that 
a factor $(q-p)$ disappeared from the last term, since
it can be absorbed in the logarithmic factors in the $\softO(\ )$.
Let us rewrite the second summand as 
$S_{q-p+1}(\alpha_1,\dots,\alpha_p) C((\alpha_1,\dots,\alpha_p),q)$,
with
\[
C((\alpha_1,\dots,\alpha_p),q)= \softO\Big(
 {q \choose p} S_{q-p+1}(\alpha_1+1,\dots,\alpha_p+1) p q  \alpha 
\Big).
\]
This term is at its maximum at
  the root of the recursion tree.  Thus, we can find an upper bound on
  $T_{M,{\rm row}}$ by finding a solution to the recurrence
  \begin{align}\label{rec:TrowC}
  T_{M,{\rm row}}((\alpha_1,\dots,\alpha_p),q)&=
\sum_{\substack{\bi=(i_1,\dots,i_\kappa)\\ \kappa\le\min(q-p,p)}}
\alpha_{i_1} \cdots \alpha_{i_\kappa} T_{M,{\rm row}}((\alpha_{i_1},\dots,\alpha_{i_\kappa}), q-p)\\
&+ S_{q-p+1}(\alpha_1,\dots,\alpha_p) K,
  \end{align}
  for some constant $K$, and replacing $K$ by $
 \softO\Big(
 {q \choose p} S_{q-p+1}(\alpha_1+1,\dots,\alpha_p+1) p q  \alpha 
\Big). $  Now, a quick induction shows
that the   solution of~\eqref{rec:TrowC} satisfies 
$$T_{M,{\rm row}} \le   (q-p+1) S_{q-p+1} 
(\alpha_1,\dots,\alpha_p)K,$$  and the conclusion follows.
\end{proof}

We can then take the expression given in this lemma, and combine it
with the definition of $T$ given in~\eqref{eqdef:T}. Using  the fact that
$n-s-1 = q-p$, we have
\begin{align*}
T&=\sum_{\substack{\bi=(i_1,\dots,i_\kappa)\\ \kappa\le\min(q-p,p)}}
\alpha_{i_1} \cdots \alpha_{i_\kappa} T_{M,{\rm row}}((\alpha_{i_1},\dots,\alpha_{i_\kappa}), q-p)
+
\softO( S_{q-p+1}(\alpha_1,\dots,\alpha_p) n^3).
\end{align*}
Using the previous lemma,
we obtain that a term such as $T_{M,{\rm row}}((\alpha_{i_1},\dots,\alpha_{i_\kappa}),q-p)$
is
$$
 \softO\left ({q-p \choose \kappa} 
S_{q-p+1-\kappa}(\alpha_{i_1},\dots,\alpha_{i_\kappa}) 
S_{q-p+1-\kappa}(\alpha_{i_1}+1,\dots,\alpha_{i_\kappa}+1) \kappa (q-p) \alpha \right ).$$
As in the proof of the previous lemma,  we rewrite this expression by
factoring out the first complete function, as
$S_{q-p+1-\kappa}(\alpha_{i_1},\dots,\alpha_{i_\kappa}) D(\alpha_{i_1},\dots,\alpha_{i_\kappa},p,q),$ with
$$
D(\alpha_{i_1},\dots,\alpha_{i_\kappa},p,q)= \softO\left ({q-p \choose \kappa} 
 S_{q-p+1-\kappa}(\alpha_{i_1}+1,\dots,\alpha_{i_\kappa}+1) \kappa (q-p) \alpha \right ).$$
Now, we use the fact that (for the values of $\kappa$ that show up
in the sum), we have
\begin{align*}
{q-p \choose \kappa} & \le {q \choose p}\\
S_{q-p+1-\kappa}(\alpha_{i_1}+1,\dots,\alpha_{i_\kappa}+1) & \le S_{q-p+1}(\alpha_{1}+1,\dots,\alpha_{p}+1).
\end{align*}
Thus, 
$$
D(\alpha_{i_1},\dots,\alpha_{i_\kappa},p,q)= \softO\left ( 
{q \choose p}  S_{q-p+1}(\alpha_{1}+1,\dots,\alpha_{p}+1) p(q-p)\alpha\right ),$$
independently of the choice of $\alpha_{i_1},\dots,\alpha_{i_\kappa}$.
The sum in the definition of $T$ becomes
\[
\left(\sum_{\substack{\bi=(i_1,\dots,i_\kappa)\\ \kappa\le\min(q-p,p)}}
\alpha_{i_1} \cdots \alpha_{i_\kappa} 
S_{q-p+1-\kappa}(\alpha_{i_1},\dots,\alpha_{i_\kappa})\right)
\softO\left ( 
{q \choose p}  S_{q-p+1}(\alpha_{1}+1,\dots,\alpha_{p}+1) p(q-p)\alpha\right ),
\]
or equivalently
\[
\softO\left ( 
{q \choose p}  S_{q-p+1}(\alpha_{1},\dots,\alpha_{p}) S_{q-p+1}(\alpha_{1}+1,\dots,\alpha_{p}+1)p(q-p)\alpha \right ).
\]
The value of $T$ we infer from this is
\[
\softO\left ( 
{q \choose p}  S_{q-p+1}(\alpha_{1},\dots,\alpha_{p}) S_{q-p+1}(\alpha_{1}+1,\dots,\alpha_{p}+1)p(q-p)\alpha 
+  S_{q-p+1}(\alpha_{1},\dots,\alpha_{p}) n^3\right ).
\]
We inject this value in the runtime
analysis~\eqref{eq:recT}. Terms 
such as $(q-p)$ or $n^3$ are polylogarithmic in $\re$; removing them,
 the first-hand term $\gamma_1 \cdots \gamma_s T$ in~\eqref{eq:recT} is then
bounded above by the second one, so that the runtime is simply
\begin{align}\label{eq:recTT}
  T_{\rm row}(\sigma,\bgamma,\balpha,q)=  \softO\left (  {q \choose p}  \rc  \re ( \sigma + p \alpha +\gamma )  \right ).
\end{align}
This establishes Proposition~\ref{prop:rowdegree_simple}.


\subsection{Algorithm $\mathsf{RowDegree}$}

Algorithm $\mathsf{RowDegree}$ is similar to
$\mathsf{RowDegree\_simple}$: the only difference consists in calling
Algorithm $\mathsf{Homotopy}$ from
Proposition~\ref{prop:compute_isolated} at the last step
\eqref{step:final:rowdegreesimple}, instead of
$\mathsf{Homotopy\_simple}$.

The cost of Algorithm $\mathsf{Homotopy}$ is
$\softO(\rc^5 m n^2  + \rc(\re+\rc^5) n(\sigma' + n^3))$.
Using the facts that $\sigma'= 
\sigma +O( {q \choose p} n^3 + n^2 p \alpha+ n^2\gamma)$,
and that $n$ is in $\softO(\re)$,
we rewrite this as 
$\softO(\rc^5 m n^2  + \rc(\re+\rc^5 ) {q \choose p}(\sigma  +  p \alpha+ \gamma))$.
Then, we use the inequality $m \le  {q \choose p} n$,
which gives $\rc^5 m n^2 \le (\re+\rc^5)  {q \choose p} n^3$;
hence the first term can be neglected, and the runtime of 
$\mathsf{Homotopy}$ is thus
\[
\softO\left( \rc(\re+\rc^5 ) {q \choose p}(\sigma  +  p \alpha+ \gamma)\right).
\]
The costs of all other steps are the same as those in
$\mathsf{RowDegree\_simple}$, and the analysis in the previous section
shows that can be neglected. As a result, the bound given above 
holds for the whole algorithm, and Proposition~\ref{prop:rowdegree}
is proved.

\paragraph*{Acknowledgments.} J.D. Hauenstein is supported by Sloan
Research Fellowship BR2014-110 TR14 and NSF grant
ACI-1460032. \'E. Schost is supported by an NSERC Discovery
Grant. M. Safey El Din and T.X. Vu are supported by the
ANR-17-CE40-0009 GALOP project and the GAMMA project funded by
PGMO/FMJH.

\vspace{-0.5cm}

\bibliographystyle{plain} \bibliography{roadmap}

\end{document}